\newtheorem{theorem}{Theorem}
\newtheorem{lemma}[theorem]{Lemma}
\newtheorem{corollary}[theorem]{Corollary}
\newtheorem{definition}[theorem]{Definition}
\newtheorem{protocol}[theorem]{Protocol}
\newcommand{\lp}{\left}
\newcommand{\rp}{\right}
\newcommand{\be}{\begin{eqnarray}}
\newcommand{\ee}{\end{eqnarray}}
\newcommand{\beq}{\begin{equation}}
\newcommand{\eeq}{\end{equation}}
\newcommand{\ba}{\begin{array}}
\newcommand{\ea}{\end{array}}
\newcommand{\braket}[2]{{\langle{#1}|{#2}\rangle}}
\newcommand{\ket}[1]{{|{#1}\rangle}}
\newcommand{\bra}[1]{{\langle{#1}|}}
\newcommand{\mypmatrix}[1]{\begin{pmatrix}#1\end{pmatrix}}
\DeclareMathOperator{\Tr}{Tr}
\DeclareMathOperator{\Rot}{Rot}
\DeclareMathOperator{\vspan}{span}
\DeclareMathOperator{\Pos}{Pos}
\DeclareMathOperator{\dist}{dist}
\DeclareMathOperator{\eig}{eig}
\DeclareMathOperator{\Prob}{Prob}
\DeclareMathOperator{\closure}{closure}
\newcommand{\HA}{{\mathcal{A}}}
\newcommand{\HB}{{\mathcal{B}}}
\newcommand{\HAP}{{\mathcal{A'}}}
\newcommand{\HBP}{{\mathcal{B'}}}
\newcommand{\HAB}{{\mathcal{\bar A}}}
\newcommand{\HM}{{\mathcal{M}}}
\newcommand{\HH}{{\mathcal{H}}}
\newcommand{\HHP}{{\mathcal{H'}}}
\newcommand{\HHB}{{\mathcal{\bar{H}}}}
\newcommand{\R}{\mathbb{R}}
\newcommand{\C}{\mathbb{C}}
\newcommand{\Z}{\mathbb{Z}}
\begin{document}


\title{Quantum weak coin flipping with arbitrarily small bias}

\author{Carlos Mochon\thanks{Perimeter Institute for Theoretical Physics,
cmochon@perimeterinstitute.ca}}

\date{November 26, 2007}

\maketitle

\begin{abstract}
``God does not play dice. He flips coins instead.'' And though for some
reason He has denied us quantum bit commitment. And though for some reason
he has even denied us strong coin flipping. He has, in His infinite mercy,
granted us quantum weak coin flipping so that we too may flip coins.

Instructions for the flipping of coins are contained herein. But be warned!
Only those who have mastered Kitaev's formalism relating coin flipping and
operator monotone functions may succeed. For those foolhardy enough to even
try, a complete tutorial is included.
\end{abstract}

\setcounter{tocdepth}{2}
{\small \tableofcontents}

\newpage
\section{Introduction}

It is time again for a sacrifice to the gods. Alice and Bob are highly
pious and would both like the honor of being the victim. Flipping a coin to
choose among them allows the gods to pick the worthier candidate for this
life changing experience. To keep the unworthy candidate from desecrating
the winner, the coin flip must be carried out at a distance and in a manner
that prevents cheating. Very roughly speaking, this is the problem known as
\textit{coin flipping by telephone} \cite{Blum}. Quantum coin flipping is a
variant of the problem where the participants are allowed to communicate
using quantum information.

Why is quantum coin flipping interesting/important/useful? First of all, it
is conceivably possible that someday somewhere someone will want to
determine something by flipping a coin with a faraway partner, and that for
some reason both will have access to quantum computers. The location of QIP
2050 could very well be determined in this fashion.

Secondly, coin flipping belongs to a class of cryptographic protocols known
as secure two-party computations. These arise naturally when two people
wish to collaborate but don't completely trust one another. Sadly, the
impossibility of quantum bit commitment \cite{May96,Lo:1998pn} shows that
quantum information is incapable of solving many of the problems in this
area. On the other hand, the possibility of quantum weak coin flipping
shows that quantum information may yet have untapped potential. Among the
most promising open areas is secure computation with cheat detection
\cite{Aharonov00,Hardy99} which may be better explored with the techniques
in this paper. At a minimum, the standard implementation of bit commitment
with cheat detection uses quantum weak coin flipping as a subroutine, so
improvements in the latter offer (modest) improvements in the former.

Finally, coin flipping is interesting because it appears to be hard, at
least relative to other cryptographic tasks such as key distribution
\cite{Wie83,BB84}. Of course, it is not our intent to belittle 
the discovery of key distribution, whose authors had to invent many of the
foundations of quantum information along the way. But a savvy student
today, familiar with the field of quantum information, would likely have no
trouble in constructing a key distribution protocol. Most reasonable
protocols appear to work. Not so with coin flipping, where most obvious
protocols appear to fail.

A cynical reader may argue that hardness is relative and may simply be a
consequence of having formulated the problem in the wrong language. But
this is exactly our third point: that the difficulty of coin flipping is
really an opportunity to develop a formalism in which such problems are
(relatively) easily solvable.

This new formalism, which is the cornerstone of the protocols in this
paper, was developed by Kitaev \cite{Kit04} and can be used to relate
coin flipping (and many other quantum games) to the theory of convex cones
and operator monotone functions. From this perspective, the value of the
present coin flipping result is that it provides the first demonstration
of the power of Kitaev's formalism.

We note that the formalism relating coin-flipping and operator monotone
functions is an extension of Kitaev's original formalism which was used in
proving a lower bound on strong coin flipping \cite{Kitaev}. When we need
to distinguish them, we shall refer to them respectively as Kitaev's second
and first coin flipping formalisms.

We will delay the formal definition of coin flipping to
Section~\ref{sec:def} and the history and prior work on the problem to
Section~\ref{sec:hist}. Instead, we shall give below an informal
description of the new formalism. We shall focus more on what the finished
formalism looks like rather than on how to relate it to the more
traditional notions of quantum states and unitaries (a topic which will be
covered at length later in the paper).

In its simplest form Kitaev's formalism can be described as a sequence of
configurations, each of which consists of a few marked points on the
plane. The points are restricted to the closure of the first quadrant
(i.e., have non-negative coordinates) and each point carries a positive
weight, which we call a probability.

Two successive configurations can only differ by points on a single
vertical or horizontal line. The rule is that the total probability on the
line must be conserved (though the total number of points can change) and
that for every $\lambda\in(0,\infty)$ we must satisfy
\be
\sum_z \frac{\lambda z}{\lambda + z}p_z\leq \sum_{z'} 
\frac{\lambda z'}{\lambda + z'}p_{z'},
\label{eq:constraint}
\ee
\noindent
where the left hand side is a sum over points before the transition and the
right hand side is a sum over points after the transition. The variable $z$
is respectively the $x$ coordinate for transitions occurring on a
horizontal line or the $y$ coordinate for transitions occurring on a
vertical line.  The numbers $p_z$ are just the probabilities associated to
each point. An example of such a sequence is given in Fig.~\ref{fig:intro}.

\begin{figure}[tb]
\begin{center}
\unitlength = 50pt
\begin{picture}(2.5,2.3)(-0.5,-0.5)
\put(0,0){\line(1,0){1.7}}
\put(0,0){\line(0,1){1.7}}
\put(1,-0.1){\makebox(0,0)[t]{$1$}}
\put(-0.1,1){\makebox(0,0)[r]{$1$}}
\put(1.15,0.15){\makebox(0,0)[b]{$\frac{1}{2}$}}
\put(0.15,1.2){\makebox(0,0)[l]{$\frac{1}{2}$}}
\thicklines
\put(1,0){\makebox(0,0){$\bullet$}}
\put(0,1){\makebox(0,0){$\bullet$}}
\end{picture}
\qquad
\begin{picture}(2.5,2.3)(-0.5,-0.5)
\put(0,0){\line(1,0){1.7}}
\put(0,0){\line(0,1){1.7}}
\put(1,0){\line(0,-1){0.05}}
\put(1,-0.1){\makebox(0,0)[t]{$1$}}
\put(-0.1,1){\makebox(0,0)[r]{$1$}}
\put(1.15,1.2){\makebox(0,0)[l]{$\frac{1}{2}$}}
\put(0.15,1.2){\makebox(0,0)[l]{$\frac{1}{2}$}}
\thicklines
\put(1,1){\makebox(0,0){$\bullet$}}
\put(0,1){\makebox(0,0){$\bullet$}}
\end{picture}
\qquad
\begin{picture}(2.5,2.3)(-0.5,-0.5)
\put(0,0){\line(1,0){1.7}}
\put(0,0){\line(0,1){1.7}}
\put(0.5,0){\line(0,-1){0.05}}
\put(0,1){\line(-1,0){0.05}}
\put(0.5,-0.1){\makebox(0,0)[t]{$1/2$}}
\put(-0.1,1){\makebox(0,0)[r]{$1$}}
\put(0.65,1.15){\makebox(0,0)[l]{$1$}}
\thicklines
\put(0.5,1){\makebox(0,0){$\bullet$}}
\end{picture}
\caption{A point game sequence with three configurations. 
Numbers outside the axes label location and numbers inside the axes label
probability. The sequence corresponds to a protocol with $P_A^*=1$ and
$P_B^*=1/2$, that is, a protocol where Alice flips a coin and announces the
outcome.}
\label{fig:intro}
\end{center}
\end{figure}
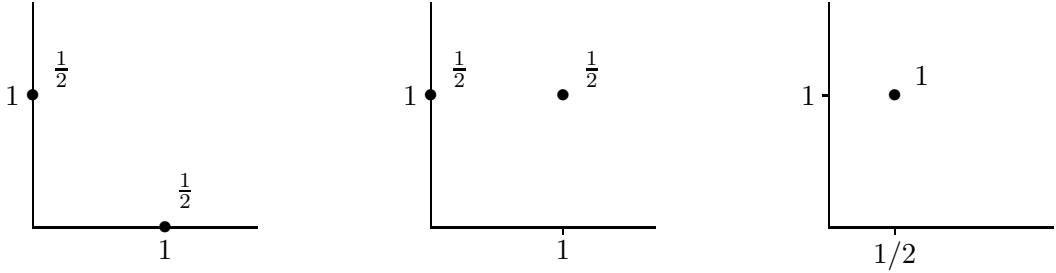

The boundary conditions of the sequence are as follows: The stating
configuration always contains two points, each carrying probability one
half, with one point located at $x=1$ and $y=0$ and the other point at
$x=0$ and $y=1$. The final configuration must contain a single point, which
carries unit probability (as required by conservation of probability).

Each of these sequences can be translated into coin-flipping protocols such
that the amount of cheating allowed is bounded by the location of the final
point. In particular, if the final point is located at $(x,y)$ then the
resulting protocol will satisfy $P_A^*\leq y$ and $P_B^*\leq x$, and hence
the bias is bounded by $\max(x,y)-1/2$.

We call these sequences ``point games'' and they are completely equivalent
to standard protocols described by unitaries. There exists constructive
mappings from point games to standard protocols and vice versa. The optimal
coin-flipping protocol can be constructed and tightly bounded by a point
game. Hence, rather than searching for optimal protocols, one can
equivalently search for optimal point games. These point games are
formalized in Section~\ref{sec:Kit} and examples are given in
Section~\ref{sec:examples}.

The configurations above are roughly related to standard semidefinite
programing objects as follows: The $x$ coordinates are the eigenvalues of
the dual SDP operators on Alice's Hilbert space, the $y$ coordinates are
eigenvalues of the dual SDP operators on Bob's Hilbert space, and the
weights are the probabilities assigned by the honest state to each of these
eigenspaces.

The obscure condition of Eq.~(\ref{eq:constraint}) can best be understood
if we describe the points on the line before and after the transition by
functions $p(z),p'(z):[0,\infty)\rightarrow[0,\infty)$ with finite support.
We then are essentially requiring that $p'(z)-p(z)$ belong to the cone dual
to the set of operator monotone functions with domain $[0,\infty)$. For
those unfamiliar with operator monotone functions, their definition and a
few properties are discussed later in the paper.

We note an unusual convention that was used above and throughout most of
the paper: the description of point games follows a \textbf{reverse time
convention} where the final measurement occurs at $t=0$ and the initial
state preparation occurs at $t=n>0$. The motivation for this will become
clear as the formalism is developed.

Kitaev further simplified his formalism so that an entire point game can be
described by a single pair of functions $h(x,y)$ and $v(x,y)$ that take
real values and have finite support. The main constraint is that on every
horizontal line of $h(x,y)$ and every vertical line of $v(x,y)$, the sum of
the weights must be zero, and the weighted average of $\frac{\lambda
z}{\lambda + z}$ must be non-negative for every
$\lambda\in(0,\infty)$. Furthermore, $h(x,y)+v(x,y)$ must be zero
everywhere except at three points: $(1,0)$ and $(0,1)$ where it has value
$-1/2$, and a third point $(x,y)$, where it has value $1$, and which is the
equivalent of the final point of the original point games. This variant of
point games is described in Section~\ref{sec:Kit2}.

A simple example can be constructed using Fig.~\ref{fig:introlad16}. The
labeled points with outgoing horizontal arrows appear in $h(x,y)$ with
negative sign, whereas those with incoming horizontal arrows appear in
$h(x,y)$ with positive sign. Similarly for $v(x,y)$ and the vertical
arrows. The final point at $(\frac{2}{3},\frac{2}{3})$ appears in both
functions with positive coefficient and magnitude $1/2$. That means that
the point game corresponds to a protocol with $P_A^*=P_B^*=2/3$, or bias
$1/6$, and is a variant of the author's previous best protocol
\cite{me2005}.

\begin{figure}[tb]
\begin{center}
\unitlength = 30pt
\begin{picture}(11,10)(-1,-1)
\put(0,0){\line(1,0){9}}
\put(0,0){\line(0,1){9}}
\put(2,0){\line(0,-1){0.1}}
\put(4,0){\line(0,-1){0.1}}
\put(5,0){\line(0,-1){0.1}}
\put(6,0){\line(0,-1){0.1}}
\put(7,0){\line(0,-1){0.1}}
\put(8,0){\line(0,-1){0.1}}
\put(0,2){\line(-1,0){0.1}}
\put(0,4){\line(-1,0){0.1}}
\put(0,5){\line(-1,0){0.1}}
\put(0,6){\line(-1,0){0.1}}
\put(0,7){\line(-1,0){0.1}}
\put(0,8){\line(-1,0){0.1}}
\put(2,-0.2){\makebox(0,0)[t]{$\frac{2}{3}$}}
\put(3,-0.2){\makebox(0,0)[t]{$1$}}
\put(4,-0.2){\makebox(0,0)[t]{$\frac{4}{3}$}}
\put(5,-0.2){\makebox(0,0)[t]{$\frac{5}{3}$}}
\put(6,-0.2){\makebox(0,0)[t]{$2$}}
\put(7,-0.2){\makebox(0,0)[t]{$\frac{7}{3}$}}
\put(8,-0.2){\makebox(0,0)[t]{$\frac{8}{3}$}}
\put(-0.2,2){\makebox(0,0)[r]{$\frac{2}{3}$}}
\put(-0.2,3){\makebox(0,0)[r]{$1$}}
\put(-0.2,4){\makebox(0,0)[r]{$\frac{4}{3}$}}
\put(-0.2,5){\makebox(0,0)[r]{$\frac{5}{3}$}}
\put(-0.2,6){\makebox(0,0)[r]{$2$}}
\put(-0.2,7){\makebox(0,0)[r]{$\frac{7}{3}$}}
\put(-0.2,8){\makebox(0,0)[r]{$\frac{8}{3}$}}
\put(3.2,0.2){\makebox(0,0)[bl]{$\frac{1}{2}$}}
\put(0.2,3.2){\makebox(0,0)[bl]{$\frac{1}{2}$}}
\put(1.95,1.95){\makebox(0,0)[tr]{$\frac{1}{2}+\frac{1}{2}$}}
\put(4.1,1.9){\makebox(0,0)[tl]{$1$}}
\put(5.1,2.9){\makebox(0,0)[tl]{$1$}}
\put(6.1,3.9){\makebox(0,0)[tl]{$1$}}
\put(7.1,4.9){\makebox(0,0)[tl]{$1$}}
\put(1.9,4.1){\makebox(0,0)[br]{$1$}}
\put(2.9,5.1){\makebox(0,0)[br]{$1$}}
\put(3.9,6.1){\makebox(0,0)[br]{$1$}}
\put(4.9,7.1){\makebox(0,0)[br]{$1$}}
\put(2.88,2.1){\makebox(0,0)[br]{$\frac{3}{2}$}}
\put(2.1,2.88){\makebox(0,0)[tl]{$\frac{3}{2}$}}
\put(3.88,3.1){\makebox(0,0)[br]{$2$}}
\put(3.1,3.88){\makebox(0,0)[tl]{$2$}}
\put(4.88,4.1){\makebox(0,0)[br]{$2$}}
\put(4.1,4.88){\makebox(0,0)[tl]{$2$}}
\put(5.88,5.1){\makebox(0,0)[br]{$2$}}
\put(5.1,5.88){\makebox(0,0)[tl]{$2$}}
\put(6.88,6.1){\makebox(0,0)[br]{$2$}}
\put(6.1,6.88){\makebox(0,0)[tl]{$2$}}
\thicklines
\put(3,0){\vector(0,1){2}}
\put(0,3){\vector(1,0){2}}
\put(3,2){\vector(-1,0){1}}
\put(2,3){\vector(0,-1){1}}
\put(3,2){\vector(1,0){1}}
\put(2,3){\vector(0,1){1}}
\put(4,3){\vector(-1,0){2}}
\put(3,4){\vector(0,-1){2}}
\put(2,4){\vector(1,0){1}}
\put(4,2){\vector(0,1){1}}
\put(4,3){\vector(1,0){1}}
\put(3,4){\vector(0,1){1}}
\put(5,4){\vector(-1,0){2}}
\put(4,5){\vector(0,-1){2}}
\put(3,5){\vector(1,0){1}}
\put(5,3){\vector(0,1){1}}
\put(5,4){\vector(1,0){1}}
\put(4,5){\vector(0,1){1}}
\put(6,5){\vector(-1,0){2}}
\put(5,6){\vector(0,-1){2}}
\put(4,6){\vector(1,0){1}}
\put(6,4){\vector(0,1){1}}
\put(6,5){\vector(1,0){1}}
\put(5,6){\vector(0,1){1}}
\put(7,6){\vector(-1,0){2}}
\put(6,7){\vector(0,-1){2}}
\put(5,7){\vector(1,0){1}}
\put(7,5){\vector(0,1){1}}
\put(7,6){\vector(1,0){1}}
\put(6,7){\vector(0,1){1}}
\put(7,7){\vector(-1,0){1}}
\put(7,7){\vector(0,-1){1}}
\put(7.6,7.6){\makebox(0,0){$\cdot$}}
\put(7.7,7.7){\makebox(0,0){$\cdot$}}
\put(7.8,7.8){\makebox(0,0){$\cdot$}}
\put(3,0){\makebox(0,0){$\bullet$}}
\put(0,3){\makebox(0,0){$\bullet$}}
\end{picture}
\caption{A coin-flipping protocol with bias $1/6$.}
\label{fig:introlad16}
\end{center}
\end{figure}

The power of Kitaev's formalism is evident from the previous example as a
complete protocol can be described by a single
picture. Section~\ref{sec:zero} discusses in detail how to build and
analyze such structures. Among the issues addressed are how to truncate the
above infinite ladder so that the resulting figure has only a finite number
of points as required by our description of Kitaev's formalism.

To achieve zero bias in coin flipping, one can use similar constructions,
but with more complicated ladders heading off to infinity. In particular,
for every integer $k\geq 0$ we will build a protocol with
\be
P_A^*=P_B^*=\frac{k+1}{2k+1}
\ee
\noindent
(technically, for each $k$ we will have a family of protocols that will
converge to the above values). The case $k=0$ allows both players to
maximally cheat, the case $k=1$ is the author's bias $1/6$ protocol, and
the limit $k\rightarrow \infty$ achieves arbitrarily small bias. The
details of this construction can also be found in Section~\ref{sec:zero}.

All the new protocols are formulated in the language of Kitaev's
formalism. Sadly, mechanically transforming these protocols back into the
language of unitaries, while possible, does not lead to particularly simple
or efficient protocols (i.e., in terms of laboratory resources). Finding
easy to implement protocols with a small bias remains an interesting open
problem. A number of other open problems can be found at the end of
Section~\ref{sec:end}.

A first stab at finding good easy to implement protocols is given by
Appendix~\ref{sec:ddb}. The section transforms the author's original bias
$1/6$ protocol (which uses a number of qubits linear in the number of
messages) into a new form that uses constant space. In fact, the total
space needed is one qutrit for each of Alice and Bob, and one qubit used to
send messages.

The key idea is to use early measurements to prune states that are known to
be illegal. While in a theoretical sense measurements can always be delayed
to the last step, their frequent use can provide practical simplifications
(as is well know in key distribution). In fact, all the early measurements
are of the flying qubit in the computational basis.

The protocol in Appendix~\ref{sec:ddb} is described in the standard
language of unitaries, and an analysis is sketched using Kitaev's first
formalism. As a bonus, the resulting protocol is related to the ancient and
most holy game of Dip-Dip-Boom, also described therein.

Appendix~\ref{sec:strong} proves strong duality for coin flipping, which is
an important lemma needed for both Kitaev's first and second
formalisms. While strong duality does not hold in general semidefinite
programs, it does in most, and there exists a number of lemmas that provide
sufficient conditions. Unfortunately, some of the simplest lemmas do not
directly apply to coin-flipping. Instead, the appendix directly proves
strong duality using simple arguments from Euclidean geometry. While all
the ideas in this section are taken from standard textbooks, the
presentation is still somewhat clever and novel.

Finally, Appendix~\ref{sec:f2m} proves another mathematical lemma needed
for Kitaev's second formalism. It is the key step needed to turn the
functions that underlie the point games back into matrices out of which
states and unitaries can be constructed. Though some of the ideas are
potentially novel, mostly it deals with standard technical issues from the
theory of matrices.

As a final goody, Section~\ref{sec:cheat} includes a brief discussion and
example of how to extend the formalism to include cheat detection. Of
course, because weak coin-flipping can be achieved with arbitrarily small
bias, adding in cheat detection isn't particularly useful. However, similar
techniques may prove helpful in studying cheat detection for strong coin
flipping and other secure computation problems.

\textit{Author's note:} Sections~\ref{sec:Kit} and~\ref{sec:Kit2} are based
on my recollection of a couple of discussions with Kitaev and a subsequent
group meeting he gave (of which I sadly kept no written record). As I have
had to reconstruct some of the details, and as I have strived to move the
discussion to finite dimensional spaces, some of Kitaev's original elegance
has been replaced by a more pedantically constructive (and hopefully
pedagogical) approach. I claim no ownership of the main ideas in these
sections, though am happy to accept the blame for any errors in my write
up. You should also know that all the terms such as UBP, ``point game,''
and ``valid transitions'' are my own crazy invention, and are unlikely to
be familiar to those who have leaned Kitaev's formalism from other sources.

At this point, those familiar with the definition and history of coin
flipping may wish to skip ahead to Section~\ref{sec:Kit}. Good luck!

\subsection{\label{sec:def}Coin flipping defined}

Coin flipping is a formalization of the notion of flipping or tossing a
coin under the constraints that the participants are mutually distrustful
and far apart.

The two players involved in coin flipping, traditionally called Alice and
Bob, must agree on a single random bit which represents the outcome of the
coin flip. As Alice and Bob do not trust each other, nor anyone else, they
each want a protocol that prevents the other player from
cheating. Furthermore, because they are far apart, the protocol must be
implementable using only interaction over a communication device such as a
telephone.  The problem is known in the classical literature as
``coin flipping by telephone'' and was first posed by Manuel Blum in 1981
\cite{Blum}.

There are two variants of coin flipping. In the first variant, called weak
coin flipping, Alice and Bob each have a priori a desired coin outcome.
The outcomes can be labeled as ``Alice wins'' and ``Bob wins,'' and we
do not care if the players cheat in order to increase their own probability
of losing. In the second variant of coin flipping, called strong coin
flipping, there are no a priori desired outcomes and we wish to prevent
either player from biasing the coin in either direction. 

Obviously, strong coin flipping is at least as hard as weak coin flipping
and in general it is harder. However, this paper is mainly concerned with
weak coin flipping which we often simply refer to as ``coin flipping''.

To be more precise in our definition, weak coin flipping is a two-party
communication protocol that begins with a completely uncorrelated state and
ends with each of the participants outputting a single bit. We say that
Alice wins on outcome 0 and Bob wins on outcome 1. The requirements are:
\begin{enumerate}
\item When both players are honest, Alice's output is uniformly random and
equal to Bob's output.
\item If Alice is honest but Bob deviates from the protocol, then no matter
what Bob does, the probability that Alice outputs one (i.e., Bob wins) is
no greater than $P_B^*$.
\item Similarly, if Bob is honest but Alice deviates from the protocol,
then the maximum probability for Bob to declare Alice the winner is $P_A^*$.
\end{enumerate}

\noindent
The parameters $P_A^*$ and $P_B^*$ define the protocol. Ideally we want
$P_A^*=P_B^*=1/2$. Unfortunately, this is not always possible. We therefore
introduce the bias $\max(P_A^*,P_B^*)-1/2$ as a measure of the security of
the protocol. Our goal is to find a protocol with the smallest bias
possible.

Note that the protocol places no restrictions on the output of a cheating
player, as these are impossible to enforce. In particular, when one player
is cheating the outputs do not have to agree, and when both players are
cheating the protocol is not required to satisfy any properties. This also
means that if an honest player ever detects that their opponent has
deviated from the protocol (i.e., the other player stops sending messages
or sends messages of the wrong format) then the honest player can simply
declare victory rather than aborting. This will be an implicit rule in all
our weak coin-flipping protocols.

Occasionally, it is worth extending the definition of coin flipping to case
where the output is not uniformly random even when both players are
honest. In such a case we denote by $P_A$ the honest probability for Alice
to win and by $P_B=1-P_A$ the honest probability for Bob to win.

\subsubsection{Communication model}

It is not hard to see, that in a classical world, and without any further
assumptions, at least one player can guarantee victory. For instance, if
one of the players were in charge of flipping the coin, the other player
would have no way of verifying via a telephone that the outcome of the coin
is the one reported by the first player.

Coin flipping can be achieved in a classical setting by adding in certain
computational assumptions \cite{Blum}. However, some of these assumptions
will no longer be true once quantum computers become available. Coin
flipping can also be achieved in a relativistic setting \cite{Ken99}
if Alice and Bob's laboratories are assumed to satisfy certain spatial
arrangements. However, these requirements may not be optimal for today's
on-the-go coin flippers.

In this paper we shall focus on the quantum setting, where Alice and Bob
each have a quantum computer with as much memory as needed, and are
connected by a noiseless quantum channel. They are each allowed to do
anything allowed by the laws of quantum mechanics other than directly
manipulate their opponents qubits. The resulting protocols will have
information theoretic security.

Although at the moment such a setting seems impractical, if ever
quantum computers are built and are as widely available as classical
computers are today, then quantum weak coin flipping may become viable.

\subsubsection{On the starting state}

The starting state of coin-flipping protocols is by definition completely
uncorrelated, which means that Alice and Bob initially share neither
classical randomness nor quantum entanglement (though they do share a
common description of the protocol).

There are two good reasons for this definition. First, it is easy to see
that given a known maximally entangled pair of qubits, Alice and Bob could
obtain a correlated bit without even using communication. But the same
result can be obtained when starting with a uniformly distributed shared
classical random bit. Such protocols are trivial, and certainly do not
require the power of quantum mechanics. The purpose of coin flipping,
though, is to create these correlations.

Still, at first glance it would appear that by starting with correlated
states we can put the acquisition of randomness, or equivalently the
interaction with a third party, in the distant past. In such a model Alice
and Bob would buy a set of correlated bits from their supermarket and then
used them when needed. The problem is that they can now figure out the
outcomes of the coin flips before committing to them. It is not hard to
imagine that a cheater would have the power to order the sequence of events
that require a coin flip so that he wins on the important ones and loses
the less important ones. We would now have to worry about protecting the
ordering of events and that is a completely different problem.

A good one-shot coin-flipping protocol should not allow the players to
predict the outcome of the coin flip before the protocol has begun, and
enforcing this is the second reason that we require an uncorrelated stating
state.

It might be interesting to explore what happens when the no-correlation
requirement is weakened to a no-prior-knowledge-of-outcome requirement, but
that is beyond the scope of this paper.

\subsubsection{On the security guarantees}

The security model of coin flipping divides the universe into three parts:
Alice's laboratory, Bob's laboratory and the rest of the universe. We
assume that Alice and Bob each have exclusive and complete control over
their laboratories. Other than as a conduit for information between Alice
and Bob, the rest of the universe will not be touched by honest
players. However, a dishonest player may take control of anything outside
their opponent's laboratory, including the communication channel.

The security of the protocols therefore depends on the inability of a
cheater to tamper with their opponent's laboratory. What does this mean?
Abstractly, it can be defined as
\begin{enumerate}
\item
All quantum superoperators that a cheater can
apply must act as the identity on the part of the Hilbert space that is
located inside the laboratory (including the message space when
appropriate).
\item
All operations of honest players are performed flawlessly and without
interference by the cheater.
\item
An honest player can verify that an incoming message has the right
dimension and can abort otherwise.
\end{enumerate}
\noindent
In practice, however, this translates into requirements such as
\begin{enumerate}
\item
The magnetic shielding on the laboratory is good enough to prevent your
opponent from affecting your qubits.
\item
The grad student operating your machinery cannot be bribed to apply the
wrong operations.
\item
A nanobot cannot enter your laboratory though the communication channel.
\end{enumerate}
\noindent
As usual, the fact that a protocol is secure does not mean that it will
protect against the preceding attacks. The purpose of the security analysis
is to prove that the only way to cheat is to attack an opponents
laboratory, thereby guaranteeing that one's security is as good as the
security of one's laboratory.

\subsubsection{On the restriction to unitary operations}

It is customary when studying coin flipping to consider only protocols that
involve unitary operations with a single measurement at the end. It is also
customary to only consider cheating strategies that can be implemented
using unitary operations. 

Nevertheless, any bounds that are derived under such conditions apply to
the most general case which includes players that can use measurements,
superoperators and classical randomness, and protocols that employ
extra classical channels.

The above follows from two separate lemmas, which roughly can be stated as:
\begin{enumerate}
\item Given any protocol $P$ in the most general setting (including
measurements, classical channels, etc.) that has a maximum
bias $\epsilon$ under the most general cheating strategy (including,
measurements, superoperators, etc.) then there exists a second protocol
$P'$ that is specified using only unitary operations with a single
pair of measurements at the end and that also has a maximum
bias $\epsilon$ under the most general cheating strategy (including,
measurements, superoperators, etc.).
\item Given any protocol $P$ specified using only unitary operations with a
single pair of measurements at the end, and given any cheating strategy for
$P$ (which may include measurements, superoperators, etc) that achieves a
bias $\epsilon$, we can find a second cheating strategy for $P$ that also
achieves a bias $\epsilon$ but can be implemented using only unitary
operations.
\end{enumerate}

\noindent
Unfortunately neither of the above statements will be proven here, and the
proofs for the above statements are distributed among a number of published
papers, but good stating points are \cite{Lo:1998pn} and \cite{May96}.

The first statement implies that we need only consider protocols where
the honest actions can be described as a sequence of unitaries. The result
is important for proofs of lower-bounds on the bias, but is not needed for
the main result of this paper.

The reduction of the first statement also applies to protocols that
potentially have an infinite number of rounds, such as rock-paper-scissors,
where measurements are carried out at intermediate steps to determine if a
winner can be declared or if the protocol must go on. Such protocols are
dealt with by proving that there exist truncations that approximate the
original protocol arbitrarily well. In such a case, though, the resulting
bias will only come arbitrarily close to the original bias.

The second statement implies that in our search for the optimal cheating
strategy (or equivalently, in our attempts to upper bound the bias) we need
only consider unitary cheaters. Note that measurements are not even needed
in the final stage as we do not care about the output of dishonest players.

The basic idea of the proof of the second statement is that any allowed
quantum mechanical operation can be expressed as a unitary followed by the
discarding of some Hilbert subspace. If the cheater carries out his
strategy without ever discarding any such subspaces he will not reduce his
probability of victory and will only need to use unitary operations.

The second statement also holds when the honest protocol includes certain
projective measurements such as those used in this paper. In fact, we will
sketch a proof for this second statement in Section~\ref{sec:primal} as we
translate coin flipping into the language of semidefinite programing.

\subsection{\label{sec:hist}A brief history muddled by hindsight}

Roughly speaking, cryptography is composed of two fundamental problems: two
honest players try to complete a task without being disrupted by a third
malicious party (i.e., encrypted communication), and two mutually
distrustful players try to cooperate in a way that prevents the opposing
player from cheating, effectively simulating a trusted third party (i.e.,
choosing a common meeting time while keeping their schedules private).
The second case is commonly known as two-party secure computation.

In the mid 1980s, quantum information had a resounding success in the first
category by enabling key distribution with information theoretic security
\cite{Wie83,BB84}. Optimism was high, and it seemed like quantum
information would be able to solve all problems in the second category as
well. But after many failed attempts at producing protocols for a task
known as bit commitment, it was finally proven by Mayers, Lo and Chau
\cite{May96, Lo:1998pn} that secure quantum bit commitment was impossible.

One of the reasons people had focused on bit commitment is that it is a
powerful primitive from which all other two-party secure computation
protocols can be constructed \cite{Yao95}. Its impossibility means that all
other universal primitives for two-party secure computation must also be
unrealizable using quantum information \cite{Lo97}.

We note that when we speak of possible and impossible with regards to
quantum information we always mean with information theoretic security
(i.e., without placing bounds on the computational capacity of the
adversary). Classically all two-party secure-computation tasks can be
realized under certain complexity assumptions \cite{Yao82} but become
impossible if we demand information theoretic security. Surprisingly, most
multiparty secure-computations tasks can be done classically with
information theoretic security so long as all parties share private
pairwise communication channels and the number of cheating players is
bounded by some constant fraction (dependent on the exact model) of the
total number of players \cite{CCD88,BGW88,RB89}. Similar results hold
in the multiparty quantum case \cite{GGS02}.

Given the impossibility of quantum bit commitment, and the fact that most
multiparty problems can already be solved with classical information, the
new goal in the late 1990s became to find any two-party task that is
modestly interesting and can be realized with information theoretic
security using quantum information.

One of the problems that the literature converged on \cite{Goldenberg:1998bx}
was a quantum version of the problem of flipping a coin over the
telephone \cite{Blum}. Initially the focus was on strong coin flipping and
Ambainis \cite{Ambainis2002} and Spekkens and Rudolph \cite{Spekkens2001}
independently proposed protocols that achieve a bias of
$1/4$. Unfortunately, shortly thereafter Kitaev \cite{Kitaev} (see also
\cite{Ambainis2003}) proved a lower bound of $1/\sqrt{2}-1/2$ on the bias.

Research continued on weak coin flipping \cite{ker-nayak, Spekkens2002,
Spekkens2003} and the best known bias prior to the author's own work was
$1/\sqrt{2}-1/2\simeq 0.207$ by Spekkens and Rudolph \cite{Spekkens2002}.
The best lower bound was proven by Ambainis \cite{Ambainis2002} and states
that the number of messages must grow at least as $\Omega(\log \log
\frac{1}{\epsilon})$. In particular, it implies that no protocol with a
fixed number of messages can achieve an arbitrarily small bias.

At this point most known protocols used at most a few rounds of
communication. The first non-trivial many-round coin-flipping protocol was
published in \cite{me2004} by the author and achieved a bias of $0.192$ in
the limit of arbitrarily many messages. In subsequent work \cite{me2005} it
was shown that this protocol (and many of the good protocols known at the
time) were part of a large family of quantized classical public-coin
protocols. Furthermore, an analytic expression was given for the bias of
each protocol in the family, and the optimal protocol for each number of
messages was identified. Sadly, the best bias that can be achieved in this
family is $1/6$, and this only in the limit of arbitrarily many messages (a
new formulation of this bias $1/6$ protocol can be found in
Appendix~\ref{sec:ddb}, wherein we use early measurements to reduce the
space needed to run the protocol to a qutrit per player and a qubit for
messages).

Independently, Kitaev created a new formalism for studying two player
adversarial games such as coin flipping
\cite{Kit04}, which built on his earlier work \cite{Kitaev}. The formalism
describes the set of possible protocols as the dual to the cone of two
variables functions that are independently operator monotone in each
variable. Though the result was never published, we include in
Sections~\ref{sec:Kit} and \ref{sec:Kit2} a description of the
formalism. This formalism, which we shall refer to as Kitaev's second
coin-flipping formalism, is the crucial idea behind the results in the
present paper. A different extension of Kitaev's original formalism was
also proposed by Gutoski and Watrous \cite{GW07}.

Looking beyond coin flipping, there is the intriguing possibility that we
can still achieve most of protocols of two-party secure computation if we
are willing to loosen our requirements: instead of requiring that cheating
be impossible, we require that a cheater be caught with some non-zero
probability. Quantum protocols that satisfy such requirements for bit
commitment have already been constructed \cite{Aharonov00, Hardy99} though
the amount of potential cheat detection is known to be bounded
\cite{me2003-2}.

The possibility of some interesting quantum two-party protocols with
information theoretic security, plus many more protocols built using cheat
detection, may mean that ultimately quantum information will fulfill its
potential in the area of secure computation. But more work needs to be done
in this direction, and we hope that the results and techniques of the
present paper will be helpful.

\section{\label{sec:Kit}Kitaev's second coin-flipping formalism}

The goal of this section is to describe Kitaev's formalism which relates
coin flipping to the dual cone of a certain set of operator monotone
functions \cite{Kit04}.

The first step in the construction involves formalizing the problem of
coin flipping and proving the existence of certain upper bound certificates
for $P_A^*$ and $P_B^*$. This is done in Section~\ref{sec:kit1} and we
refer to the result as Kitaev's first coin-flipping formalism as most of
the material was used in the construction of the lower bounds on strong
coin flipping \cite{Kitaev}.

The next step, carried out in Section~\ref{sec:UBP}, involves using these
certificates to change the maximization over cheating strategies to a
minimization over certificates, and overall to transform the problem of
finding the best coin-flipping protocol into a minimization over objects we
call upper-bounded protocols (UBPs).

The third step involves stripping away most of the irrelevant information
of the UBPs to end up with a sequence of points moving around in the
plane. These ``point games'' are the main object of study of Kitaev's
second formalism. They come in two varieties: time dependent (which are
studied in Section~\ref{sec:TDPG}) and time independent (whose description
is delayed to Section~\ref{sec:TIPG}).

\subsection{\label{sec:kit1}Kitaev's first coin-flipping formalism}

The first goal is to formalize coin-flipping protocols using the standard
quantum communication model of a sequence of unitaries with measurements
delayed to the end.

\begin{definition}
A \textbf{coin-flipping protocol} consists of the following data
\begin{itemize}
\item $\HA$, $\HM$ and $\HB$, three finite-dimensional Hilbert spaces 
corresponding to Alice's qubits, the message channel and Bob's qubits 
respectively. We assume that each Hilbert space is equipped with a 
orthonormal basis of the form $\ket{0},\ket{1},\ket{2},\dots$ called 
the computational basis.
\item $n$, a positive integer describing the number of messages.\\
For simplicity we shall assume $n$ is even.
\item A tensor product initial state: 
$\ket{\psi_0} =
\ket{\psi_{A,0}}\otimes\ket{\psi_{M,0}}\otimes\ket{\psi_{B,0}}
\in\HA\otimes\HM\otimes\HB$.
\item A set of unitaries $U_1,\dots,U_n$ on $\HA\otimes\HM\otimes\HB$
of the form
\be
U_i = \begin{cases}
U_{A,i}\otimes I_\HB & \text{for $i$ odd,}\\
I_\HA\otimes U_{B,i} & \text{for $i$ even,}
\end{cases}
\ee
\noindent
where $U_{A,i}$ acts on $\HA\otimes\HM$ and $U_{B,i}$ acts on $\HM\otimes\HB$.
\item $\lp\{\Pi_{A,0},\Pi_{A,1}\rp\}$, a POVM on $\HA$.
\item $\lp\{\Pi_{B,0},\Pi_{B,1}\rp\}$, a POVM on $\HB$.
\end{itemize}
Furthermore, the above data must satisfy
\be
\Pi_{A,1} \otimes I_{\HM} \otimes \Pi_{B,0} \ket{\psi_n} =
\Pi_{A,0} \otimes I_{\HM} \otimes \Pi_{B,1} \ket{\psi_n} = 0,
\label{eq:povmreq}
\ee
\noindent
where $\ket{\psi_n} = U_{n} \cdots U_{1} \ket{\psi_0}$.
\end{definition}

Given the above data, the protocol is run as follows:
\begin{enumerate}
\item Alice starts with $\HA$ and Bob starts with $\HM\otimes\HB$.
They initialize their state to $\ket{\psi_0}$.
\item For $i=1$ to $n$:\\
If $i$ is odd Alice takes $\HM$ and applies $U_{A,i}$.\\
If $i$ is even Bob takes $\HM$ and applies $U_{B,i}$.
\item Alice measures $\HA$ with $\lp\{\Pi_{A,0},\Pi_{A,1}\rp\}$ 
and Bob measures $\HB$ with $\lp\{\Pi_{B,0},\Pi_{B,1}\rp\}$. 
They each output zero or one based on the outcome of the measurement.
\end{enumerate}

\noindent
When both Alice and Bob are honest, the above protocol starts off with the
state $\ket{\psi_0}$ and proceeds through the states
\be
\ket{\psi_i} = U_{i} \cdots U_{1} \ket{\psi_0}.
\label{eq:psik}
\ee
The final probabilities of winning are given by
\be
P_A &=& \lp | \Pi_{A,0} \otimes I_{\HM} \otimes \Pi_{B,0} \ket{\psi_n} \rp|^2 
= \Tr\lp[ \Pi_{B,0} \Tr_{\HA\otimes\HM} \ket{\psi_n}\bra{\psi_n} \rp],
\nonumber\\
P_B &=& \lp | \Pi_{A,1} \otimes I_{\HM} \otimes \Pi_{B,1} \ket{\psi_n} \rp|^2 
= \Tr\lp[ \Pi_{A,1} \Tr_{\HM\otimes\HB} \ket{\psi_n}\bra{\psi_n} \rp],
\label{eq:papb}
\ee
\noindent
where Eq.~(\ref{eq:povmreq}) guarantees the second
equalities above and the condition $P_A+P_B=1$. In general, we will also
want to impose $P_A=P_B=1/2$ to obtain a standard coin flip.

How many messages does the above protocol require? Traditionally, we have
one message after each unitary. We also need an initial message before the
first unitary so that Alice can get $\HM$. We will think of this as the
zeroth message. In total, we have $n+1$ messages. However, the first and
last message are somewhat odd: Alice never looks at the last message, so we
could have never sent it. Also, in principle, Alice could have started with
$\HM$ and initialized it herself, which would at most reduce Bob's cheating
power. So the whole protocol could be run with only $n-1$ messages.

However, the moments in time when the message qubits are flying between
Alice and Bob (all $n+1$ of them), mark particularly good times to examine
the state of our system. In particular, we are interest in the state of
$\HA$ and $\HB$ at these times which, when both players are honest, will be
\be
\sigma_{A,i} = \Tr_{\HM\otimes\HB}  \ket{\psi_i}\bra{\psi_i},
\qquad\qquad
\sigma_{B,i} = \Tr_{\HA\otimes\HM}  \ket{\psi_i}\bra{\psi_i},
\ee
\noindent
for $i=0,\dots,n$.

\subsubsection{\label{sec:primal}Primal SDP}

Now that we have formalized the protocol, we proceed with the formalization
of the optimization problem needed to find the maximum probabilities with
which the players can win by cheating. The resulting problems will be
semidefinite programs (SDPs).

We will study the case of Alice honest and Bob cheating (the other case
being nearly identical). As usual, we do not want to make any assumptions
about the operations that Bob does, or even the number of qubits that he
may be using, therefore we must focus entirely on the state of Alice's
qubits.

As Alice initializes her qubits independently from Bob, we know what their
state must be during the zeroth message:
\be
\rho_{A,0} = \ket{\psi_{A,0}}\bra{\psi_{A,0}}.
\label{eq:prim0}
\ee
Subsequently we shall lose track of their exact state, but we know by the
laws of quantum mechanics that they must satisfy certain requirements. The
simplest is that, since Bob cannot affect Alice's qubits, then during the
steps when Alice does nothing the state of the qubits cannot change:
\be
\rho_{A,i} = \rho_{A,i-1} \qquad\qquad \text{for $i$ even.}
\label{eq:primeven}
\ee
Note that this is true even if Bob performs a measurement as Alice will
not know the outcome, and therefore her mixed state description will still
be correct.

The more complicated case is the steps when Alice performs a unitary.
Let $\tilde \rho_{A,i}$ be the state of $\HA\otimes\HM$ immediately after
Alice receives the $i$th message (for $i$ even, of course). The laws of
quantum mechanics again require the consistency condition
\be
\Tr_\HM \tilde \rho_{A,i} = \rho_{A,i} \qquad\qquad \text{for $i$ even,}
\label{eq:primtilde}
\ee
\noindent
where $\tilde \rho_{A,i}$ is only restricted by the fact that Bob cannot
affect the state of $\HA$. Note also that the above equation holds valid
even if Bob uses his message to tell Alice the outcome of a previous
measurement.

Now when Alice applies her unitary and sends off $\HM$ she
will be left with the state
\be
\rho_{A,i} = \Tr_\HM\lp[ U_{A,i} \tilde\rho_{A,i-1} U_{A,i}^\dagger\rp]
\qquad\qquad\text{for $i$ odd.}
\label{eq:primodd}
\ee

Finally, Alice's output is determined entirely by the measurement of
$\rho_{A,n}$. In particular, Bob wins with probability
\be
P_{win} = \Tr\lp[ \Pi_{A,1} \rho_{A,n} \rp].
\ee

Now consider the maximization of the above quantity over density operators
$\rho_{A,0},\dots,\rho_{A,n}$ and
$\tilde\rho_{A,0},\dots,\tilde\rho_{A,n-2}$ subject to 
Eqs.~(\ref{eq:prim0},\ref{eq:primeven},\ref{eq:primtilde},\ref{eq:primodd}).
Because the optimal cheating strategy must satisfy the above conditions we
have
\be
P_B^* \leq \max \Tr\lp[ \Pi_{A,1} \rho_{A,n} \rp],
\ee
\noindent
where the maximum is taken subject to the above constraints.

The bound is also tight because any sequence of states consistent with the
above constraints can be achieved by Bob simply by maintaining the
purification of Alice's state. We sketch the proof: we inductively
construct a strategy for Bob that only uses unitaries so that the total
state will always be pure. Assume that Alice has $\rho_{A,i-1}$ (and the
total state is $\ket{\phi_{i-1}}$) and Bob wants to make her transition to
a given $\rho_{A,i}$ consistent with the above constraints.  If $i$ is even
this is trivial. If $i$ is odd, he must make sure to send the right message
so that Alice ends up with the appropriate $\tilde
\rho_{A,i-1}$. But let $\ket{\tilde \phi_{i-1}}$ be any 
purification of $\tilde \rho_{A,i-1}$ into $\HB$. Because the reduced
density operators on $\HA$ of both $\ket{\phi_{i-1}}$ and $\ket{\tilde
\phi_{i-1}}$ are the same, they are related by a unitary on $\HM\otimes\HB$
and by applying this unitary Bob will succeed in this step. By induction he
also succeeds in obtaining the entire sequence, as the base case for $i=0$
is trivial. We therefore have
\be
P_B^* = \max \Tr\lp[ \Pi_{A,1} \rho_{A,n} \rp].
\label{eq:primmax}
\ee
\noindent

\subsubsection{Dual SDP} 

In the last section we found a mathematical description for the problem of
computing $P_B^*$. Unfortunately, it is formulated as a maximization
problem whose solution is often difficult to find. It would be sufficient
for our purposes, though, to find an upper bound on $P_B^*$. Such upper
bounds can be constructed from the dual SDP.

In particular, in this section we will describe a set of simple-to-verify
certificates that prove upper bounds on $P_B^*$. These certificates are
known as dual feasible points.

The certificates will be a set of $n+1$ positive semidefinite operators
$Z_{A,0},\dots,Z_{A,n}$ on $\HA$ whose main property is
\be
\Tr[Z_{A,i-1} \rho_{A,i-1}] \geq \Tr[Z_{A,i} \rho_{A,i}]
\label{eq:dualprop}
\ee
for $i=1,\dots,n$ and for all $\rho_{A,0},\dots,\rho_{A,n}$ consistent
with the constraints of
Eqs.~(\ref{eq:prim0},\ref{eq:primeven},\ref{eq:primtilde},\ref{eq:primodd}).
Additionally, we require
\be
Z_{A,n}= \Pi_{A,1}.
\label{eq:zn}
\ee

Given a solution $\rho_{A,0}^*,\dots,\rho_{A,n}^*$ which attains the
maximum in Eq.~(\ref{eq:primmax}), we can use the above properties to write
\be
\bra{\psi_{A,0}} Z_{A,0} \ket{\psi_{A,0}} =
\Tr[Z_{A,0} \rho_{A,0}^*] \geq \Tr[Z_{A,n} \rho_{A,n}^*] = P_B^*,
\ee
\noindent
obtaining an upper bound on $P_B^*$. The crucial trick is that while we do
not know the complete optimal solution, we do know that
$\rho_{A,0}^*=\ket{\psi_{A,0}}\bra{\psi_{A,0}}$, which gives us a way of
computing the upper bound.

How do we enforce Eq.~(\ref{eq:dualprop})? We do it independently for each
transition: For $i$ odd, Eqs.~(\ref{eq:primtilde},\ref{eq:primodd}) give us
$\rho_{A,i-1} = \Tr_\HM \tilde \rho_{A,i-1}$ and $\rho_{A,i} = \Tr_\HM[
U_{A,i} \tilde\rho_{A,i-1} U_{A,i}^\dagger]$.  We are therefore trying to
impose
\be
\Tr\lp[ \lp(Z_{A,i-1}\otimes I_\HM\rp) \tilde \rho_{A,i-1} \rp]
\geq
\Tr\lp[ \lp(Z_{A,i}\otimes I_\HM\rp) U_{A,i} \tilde \rho_{A,i-1} 
U_{A,i}^\dagger\rp].
\ee
A sufficient condition (which is also necessary if $\tilde \rho_{A,i-1}$ is
arbitrary) is given by
\be
Z_{A,i-1}\otimes I_\HM \geq
U_{A,i}^\dagger \lp(Z_{A,i}\otimes I_\HM\rp) U_{A,i}
\qquad\qquad\text{for $i$ odd.}
\label{eq:zodd}
\ee
\noindent
In general, even when Bob is cheating, not all possible density operators
$\tilde \rho_{A,i-1}$ are attainable (otherwise he would have complete
control over Alice's qubits). Therefore, the above constraint could in
principle be overly stringent. However, we shall prove in the next section
that arbitrarily good certificates can be found even when when using the
above constraint.

Using a similar logic, when $i$ is even we have the relation $\rho_{A,i} =
\rho_{A,i-1}$ and so a sufficient condition on the dual variables is
$Z_{A,i-1} \geq Z_{A,i}$. However, from Alice's perspective these are just
dummy transitions. We introduced extra variables to mark the passage of
time during Bob's actions, but really we want to keep Alice's system
unchanged during these time steps. Therefore, we impose the more stringent
requirement on the dual variables
\be
Z_{A,i-1} = Z_{A,i}\qquad\qquad \text{for $i$ even.}
\label{eq:zeven}
\ee

We can summarize the above as follows:

\begin{definition}
Fix a coin-flipping protocol $P$. A set of positive semidefinite operators
$Z_{A,0},\dots,Z_{A,n}$ satisfying
Eqs.~(\ref{eq:zn},\ref{eq:zodd},\ref{eq:zeven}) is known as a 
\textbf{dual feasible point} (for the problem of cheating Bob given 
a protocol $P$).
\end{definition}

Our arguments above prove:

\begin{lemma}
A dual feasible point $Z_{A,0},\dots,Z_{A,n}$ for a coin-flipping protocol
$P$ constitutes a proof of the upper bound $\bra{\psi_{A,0}} Z_{A,0}
\ket{\psi_{A,0}}\geq P_B^*$.
\label{lemma:dual}
\end{lemma}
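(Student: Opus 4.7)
The plan is to show that the quantity $\Tr[Z_{A,i}\rho_{A,i}]$ is non-increasing in $i$ along any feasible primal trajectory, so that the endpoint values of the dual and primal tell the whole story. Let $\rho_{A,0}^*,\dots,\rho_{A,n}^*$ (together with the auxiliary $\tilde\rho_{A,i-1}^*$ for $i$ odd) be any feasible sequence for the primal SDP, for example one attaining the maximum in Eq.~(\ref{eq:primmax}). I would prove, by induction on $i$, the chain
\be
\bra{\psi_{A,0}} Z_{A,0} \ket{\psi_{A,0}}
= \Tr[Z_{A,0}\rho_{A,0}^*]
\geq \Tr[Z_{A,1}\rho_{A,1}^*]
\geq \cdots
\geq \Tr[Z_{A,n}\rho_{A,n}^*]
= \Tr[\Pi_{A,1}\rho_{A,n}^*] = P_B^*,
\ee
where the first equality uses Eq.~(\ref{eq:prim0}) and the last two uses Eq.~(\ref{eq:zn}) and Eq.~(\ref{eq:primmax}).

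The inductive step splits into two cases. For $i$ even, the primal constraint (\ref{eq:primeven}) gives $\rho_{A,i-1}^* = \rho_{A,i}^*$ and the dual constraint (\ref{eq:zeven}) gives $Z_{A,i-1}=Z_{A,i}$, so the inequality holds with equality. For $i$ odd, the dual constraint (\ref{eq:zodd}) is an operator inequality on $\HA\otimes\HM$; pairing it with the positive operator $\tilde\rho_{A,i-1}^*$ and taking the trace yields
\be
\Tr\lp[\lp(Z_{A,i-1}\otimes I_\HM\rp)\tilde\rho_{A,i-1}^*\rp]
\geq
\Tr\lp[\lp(Z_{A,i}\otimes I_\HM\rp) U_{A,i}\tilde\rho_{A,i-1}^* U_{A,i}^\dagger\rp].
\ee
Using the partial-trace identity $\Tr[(X\otimes I_\HM)Y] = \Tr[X\,\Tr_\HM Y]$ on both sides together with Eqs.~(\ref{eq:primtilde}) and~(\ref{eq:primodd}), the left-hand side collapses to $\Tr[Z_{A,i-1}\rho_{A,i-1}^*]$ and the right-hand side to $\Tr[Z_{A,i}\rho_{A,i}^*]$, closing the induction.

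There is essentially no obstacle here: the dual constraints were engineered in the preceding discussion precisely so that this telescoping works, and positivity of $\tilde\rho_{A,i-1}^*$ is what lets one turn the operator inequality into a scalar inequality. The only mild subtlety worth flagging in the write-up is that the argument applies to every feasible primal sequence simultaneously, so taking the maximum over such sequences on the right-hand side converts the bound $\bra{\psi_{A,0}} Z_{A,0}\ket{\psi_{A,0}} \geq \Tr[\Pi_{A,1}\rho_{A,n}^*]$ into the stated bound $\bra{\psi_{A,0}} Z_{A,0}\ket{\psi_{A,0}} \geq P_B^*$ via Eq.~(\ref{eq:primmax}).
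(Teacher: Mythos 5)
Your proof is correct and follows essentially the same route as the paper: the dual constraints (\ref{eq:zodd}) and (\ref{eq:zeven}) are paired with the primal constraints to make $\Tr[Z_{A,i}\rho_{A,i}]$ non-increasing, and the telescoped chain is anchored by $\rho_{A,0}=\ket{\psi_{A,0}}\bra{\psi_{A,0}}$ and $Z_{A,n}=\Pi_{A,1}$ together with Eq.~(\ref{eq:primmax}). The paper packages the monotonicity as property (\ref{eq:dualprop}) and then verifies it per transition exactly as you do, so there is nothing to add.
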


The importance of the above upper bounds is that the infimum over dual
feasible points actually equals $P_B^*$. In other words, there exist
arbitrarily good upper bound certificates. This result is known as strong
duality and is proven in Appendix~\ref{sec:strong}.

\subsection{\label{sec:UBP}Upper-Bounded Protocols}

Thus far we have studied Kitaev's first coin-flipping formalism. Given a
protocol it helps us find the optimal cheating strategies for Alice and Bob
by formulating these problems as convex optimizations. In general, however,
we do not have a fixed protocol that we want to study. Rather, we want to
identify the optimal protocol from the space of all possible
protocols. Kitaev's second coin-flipping formalism will help us formulate
this bigger problem as a convex optimization.

The goal is to compute the minimum (over all coin-flipping protocols) of
the maximum (over all cheating strategies for the given protocol) of the
bias.  Alternating minimizations and maximizations are often tricky, but we
can get rid of this problem by dualizing the inner maximization, that is,
by replacing the maximum over cheating strategies with a minimum over the
upper-bound certificates discussed in the last section. The goal becomes to
compute the minimum (over all coin-flipping protocols) of the minimum (over
all upper-bound certificates for the given protocol) of the bias.

But we can go a step further and pair up the protocols and upper bounds to
get a single mathematical object which we call an upper-bounded
protocol. The space of upper-bounded protocols includes bad protocols with
tight upper bounds, good protocols with loose upper bounds and even bad
protocols with loose upper bounds. But somewhere in this space is the
optimal protocol together with its optimal upper bound and by minimizing
the bias in this space we can find it (though, strictly speaking, we
must carry out an infimum not a minimum and will only arrive arbitrarily
close to optimality).

\begin{definition}
An \textbf{upper-bounded (coin-flipping) protocol}, or \textbf{UBP},
consists of a coin-flipping protocol together with two numbers $\beta$ and
$\alpha$, a set of positive semidefinite operators $Z_{A,0},\dots,Z_{A,n}$
defined on $\HA$ and a set of positive semidefinite operators
$Z_{B,0},\dots,Z_{B,n}$ defined on $\HB$ which satisfy the equations
\be
Z_{A,0}\ket{\psi_{A,0}}= \beta\ket{\psi_{A,0}}
&\qquad&
Z_{B,0}\ket{\psi_{B,0}}= \alpha\ket{\psi_{B,0}}
\nonumber\\
Z_{A,i-1}\otimes I_\HM \geq
U_{A,i}^\dagger \lp(Z_{A,i}\otimes I_\HM\rp) U_{A,i}
&&
Z_{B,i-1} = Z_{B,i}
\qquad\qquad\qquad\qquad\qquad\qquad
\text{($i$ odd)}
\nonumber\\
Z_{A,i-1}  = Z_{A,i}
&&
I_\HM\otimes Z_{B,i-1} \geq
U_{B,i}^\dagger \lp(I_\HM\otimes Z_{B,i}\rp) U_{B,i}
\qquad\text{($i$ even)}
\nonumber\\
Z_{A,n}=\Pi_{A,1}
&&
Z_{B,n}=\Pi_{B,0}
\ee
We shall refer to the pair $(\beta,\alpha)$ as the upper bound of the UBP.
\end{definition}

\begin{theorem}
A UBP satisfies
\be
P_B^*\leq \beta,\qquad\qquad
P_A^*\leq \alpha,
\ee
\noindent
where $P_B^*$ and $P_A^*$ are the optimal cheating probabilities of the
underlying protocol.
\label{thm:ubp}
\end{theorem}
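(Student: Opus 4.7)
The plan is to recognize that Theorem~\ref{thm:ubp} is essentially a repackaging of Lemma~\ref{lemma:dual} applied twice, once to bound cheating Bob and once to bound cheating Alice. The definition of a UBP bundles the coin-flipping protocol with two collections of positive semidefinite operators that are purpose-built to be dual feasible points for the two asymmetric problems.

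I would first show $P_B^*\le\beta$. The operators $Z_{A,0},\ldots,Z_{A,n}$ in the UBP satisfy $Z_{A,n}=\Pi_{A,1}$ (matching Eq.~(\ref{eq:zn})), $Z_{A,i-1}\otimes I_\HM \ge U_{A,i}^\dagger(Z_{A,i}\otimes I_\HM)U_{A,i}$ for $i$ odd (matching Eq.~(\ref{eq:zodd})), and $Z_{A,i-1}=Z_{A,i}$ for $i$ even (matching Eq.~(\ref{eq:zeven})). Hence they form a dual feasible point for the problem of cheating Bob, so by Lemma~\ref{lemma:dual} we get $\bra{\psi_{A,0}}Z_{A,0}\ket{\psi_{A,0}}\ge P_B^*$. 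The UBP also imposes the eigenvalue condition $Z_{A,0}\ket{\psi_{A,0}}=\beta\ket{\psi_{A,0}}$, which forces $\bra{\psi_{A,0}}Z_{A,0}\ket{\psi_{A,0}}=\beta$, and combining the two gives $P_B^*\le\beta$.

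Next, for $P_A^*\le\alpha$, I would invoke the symmetry between the two players. The entire derivation of Section~\ref{sec:kit1} (primal SDP, dual SDP, and Lemma~\ref{lemma:dual}) applies verbatim to the problem of cheating Alice provided one swaps the roles of $\HA$ and $\HB$, replaces $\Pi_{A,1}$ by $\Pi_{B,0}$, and interchanges the parities of $i$ (since Bob's unitaries $U_{B,i}$ occur for $i$ even rather than odd). Under this relabeling, the UBP conditions on $Z_{B,0},\ldots,Z_{B,n}$ are precisely the dual-feasibility conditions for bounding a cheating Alice: $Z_{B,n}=\Pi_{B,0}$, the operator inequality involves $U_{B,i}^\dagger(I_\HM\otimes Z_{B,i})U_{B,i}$ for $i$ even, and the dummy-transition equality $Z_{B,i-1}=Z_{B,i}$ holds for $i$ odd. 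Applying the mirrored version of Lemma~\ref{lemma:dual} and using $Z_{B,0}\ket{\psi_{B,0}}=\alpha\ket{\psi_{B,0}}$ yields $P_A^*\le\alpha$.

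There is no real obstacle here; the only mild nuisance is bookkeeping the parity swap and confirming that the proof of Lemma~\ref{lemma:dual} genuinely did not use any feature peculiar to Alice. Writing the theorem simply as a corollary of Lemma~\ref{lemma:dual} (and its Alice-vs-Bob mirror) therefore suffices, with no new analytic content required.
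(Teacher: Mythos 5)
Your proposal is correct and matches the paper's own argument: the $Z_{A,i}$ of a UBP satisfy exactly the dual-feasibility conditions of Eqs.~(\ref{eq:zn},\ref{eq:zodd},\ref{eq:zeven}), so Lemma~\ref{lemma:dual} together with the eigenvector condition $Z_{A,0}\ket{\psi_{A,0}}=\beta\ket{\psi_{A,0}}$ gives $P_B^*\leq\beta$, and the paper likewise dispatches $P_A^*\leq\alpha$ by the "nearly identical" mirrored argument you spell out.
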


Note the reverse order of $\beta$ and $\alpha$. That is because $\beta$,
which is the upper bound on Bob's cheating, must be computed from
quantities that involve operations on Alice's qubits. We will normally list
quantities defined or computed on $\HA$ before those from $\HB$.

The proof of the bound on $P_B^*$ from Theorem~\ref{thm:ubp} follows
directly from Lemma~\ref{lemma:dual}. We shall not prove the equivalent
bound on $P_A^*$ tough it follows from nearly identical arguments.

The main difference between the dual feasible points described in the
previous section and the one used in the definition of UBPs is that the
latter is more restrictive: rather than just setting $\beta =
\bra{\psi_{A,0}}Z_{A,0}\ket{\psi_{A,0}}$ we additionally require that
$\ket{\psi_{A,0}}$ be an eigenvector of $Z_{A,0}$.

Clearly these more restricted dual feasible points still yield the desired
upper bounds, proving the theorem. What we shall argue below, though, is
that we have not sacrificed anything by imposing this additional constraint.
Any upper bound that can be proven with the original certificates can be
proven with these more restricted certificates as well.

We use the fact that for every $\epsilon>0$ there exists a $\Lambda>0$ such
that
\be
\Big(\bra{\psi_{A,0}}Z_{A,0}\ket{\psi_{A,0}} + \epsilon\Big) 
\ket{\psi_{A,0}} \bra{\psi_{A,0}} + 
\Lambda \Big(I - \ket{\psi_{A,0}} \bra{\psi_{A,0}}\Big)
\geq
Z_{A,0}.
\ee
\noindent
The left hand side has $\ket{\psi_{A,0}}$ as an eigenvector as desired,
and by transitivity of inequalities it can be used as the new $Z_{A,0}$.
On Bob's side a similar construction can be used to replace both $Z_{B,0}$
and $Z_{B,1}$ while maintaining their equality. By taking $\epsilon$
arbitrarily small we can get arbitrarily close to the old dual feasible
points, and so the infimum over both sets will be the same.

In summary, we have defined our UBPs and argued that finding the infimum
over this set is equivalent to seeking the optimal coin-flipping protocol.
In particular, we have shown that:

\begin{theorem}
Let $f(\beta,\alpha):\R\times\R\rightarrow\R$ be a function such that
$f(\alpha',\beta')\geq f(\alpha,\beta)$ whenever $\alpha'\geq\alpha$ and
$\beta'\geq\beta$, then
\be
\inf_{\text{proto}} f(P_B^*,P_A^*) = \inf_{UBP} f(\beta,\alpha),
\ee
\noindent
where the left optimization is carried out over all coin-flipping protocols
and the right one is carried out over all upper-bounded protocols. In
particular, the optimal bias $f(\beta,\alpha)=\max(\beta,\alpha)-1/2$ can
be found by optimizing either side.
\label{thm:kitmain1}
\end{theorem}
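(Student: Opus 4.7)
The plan is to prove the two inequalities $\inf_{\text{proto}} f(P_B^*,P_A^*) \leq \inf_{UBP} f(\beta,\alpha)$ and its reverse separately. The forward direction is essentially a restatement of Theorem~\ref{thm:ubp}: starting from any UBP with upper bound $(\beta,\alpha)$ and underlying coin-flipping protocol $P$, the cheating probabilities of $P$ satisfy $P_B^* \leq \beta$ and $P_A^* \leq \alpha$, and monotonicity of $f$ then yields $f(P_B^*,P_A^*) \leq f(\beta,\alpha)$. Since $P$ is itself a valid coin-flipping protocol, $\inf_{\text{proto}} f \leq f(\beta,\alpha)$, and taking the infimum over UBPs gives the first inequality.

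For the reverse direction, I would assemble the pieces already laid out. Fix any protocol $P$ with cheating probabilities $(P_B^*,P_A^*)$ and any $\epsilon > 0$. Apply strong duality (Lemma~\ref{lemma:dual} together with Appendix~\ref{sec:strong}) to each of the two single-sided cheating problems. This produces dual feasible points $\{Z_{A,i}\}_{i=0}^n$ on $\HA$ satisfying Eqs.~(\ref{eq:zn},\ref{eq:zodd},\ref{eq:zeven}) with $\bra{\psi_{A,0}} Z_{A,0} \ket{\psi_{A,0}} \leq P_B^* + \epsilon$, and an analogous family $\{Z_{B,i}\}_{i=0}^n$ on $\HB$ with $\bra{\psi_{B,0}} Z_{B,0} \ket{\psi_{B,0}} \leq P_A^* + \epsilon$. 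These families already satisfy every UBP constraint except the eigenvector conditions on $Z_{A,0}$ and $Z_{B,0}$. Applying the $\Lambda$-construction spelled out in the paragraph immediately preceding the theorem—on Bob's side simultaneously to both $Z_{B,0}$ and $Z_{B,1}$, so as to preserve their equality—replaces these operators by ones having the required eigenvectors with eigenvalues at most $P_B^* + 2\epsilon$ and $P_A^* + 2\epsilon$ respectively. Because the construction only enlarges the operators in the PSD order, every inequality in which they appear is only loosened, and the result is a bona fide UBP with upper bound $(\beta,\alpha)$ componentwise at most $(P_B^* + 2\epsilon, P_A^* + 2\epsilon)$.

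Monotonicity of $f$ then gives $\inf_{UBP} f \leq f(\beta,\alpha) \leq f(P_B^* + 2\epsilon, P_A^* + 2\epsilon)$ for every protocol $P$ and every $\epsilon > 0$. Taking the infimum over protocols and sending $\epsilon \to 0$ completes the proof, provided $f$ is right-continuous in each variable; this is the case for the bias $f(\beta,\alpha) = \max(\beta,\alpha) - 1/2$ highlighted in the theorem. The main subtlety—rather than a real obstacle—is bookkeeping: one must verify that the $\Lambda$-construction does not interfere with dual feasibility further down the chain, and that strong duality can be invoked independently on the two sides to obtain a single UBP. Both points are immediate once one observes that the construction modifies only $Z_{A,0}$ (and $Z_{B,0}, Z_{B,1}$), and only by increasing them in the PSD order, so all remaining inequality constraints continue to hold.
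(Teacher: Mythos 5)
Your proposal is correct and takes essentially the same route as the paper: the forward inequality is Theorem~\ref{thm:ubp}, and the reverse is exactly the paper's argument of invoking strong duality on each side separately and then applying the $\Lambda$-perturbation $(\bra{\psi_{A,0}}Z_{A,0}\ket{\psi_{A,0}}+\epsilon)\ket{\psi_{A,0}}\bra{\psi_{A,0}}+\Lambda(I-\ket{\psi_{A,0}}\bra{\psi_{A,0}})$ (and its analogue on $Z_{B,0}=Z_{B,1}$) to enforce the eigenvector condition while only increasing the operators in the PSD order. Your right-continuity caveat on $f$ is a reasonable bit of extra care that the paper glosses over; it is harmless here since only infima are compared and the bias function of interest is continuous.
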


\subsubsection{Lower bounds and operator monotone functions}

We begin our study of upper-bounded protocols by showing how to place lower
bounds on the set of UBPs. Though we will not prove any new lower
bounds, the ideas presented in this section will motivate the constructions
of the next sections.

The main tool tool that we will be using is the following inequality
\be
\bra{\psi_{i-1}} Z_{A,i-1}\otimes I_\HM\otimes Z_{B,i-1} \ket{\psi_{i-1}}
\geq
\bra{\psi_{i}} Z_{A,i}\otimes I_\HM\otimes Z_{B,i} \ket{\psi_{i}}
\label{eq:lowerb}
\ee
\noindent
which is the bipartite equivalent of Eq.~(\ref{eq:dualprop}). The proof for
$i$ odd is that
\be
\bra{\psi_{i-1}} Z_{A,i-1}\otimes I_\HM\otimes Z_{B,i-1} \ket{\psi_{i-1}}
&\geq&
\bra{\psi_{i-1}} 
\big(U_{A,i}^\dagger \otimes I_\HB\big) 
Z_{A,i}\otimes I_\HM\otimes Z_{B,i} 
\big(U_{A,i} \otimes I_\HB\big) 
\ket{\psi_{i-1}}
\nonumber\\
&&=
\bra{\psi_{i}} Z_{A,i}\otimes I_\HM\otimes Z_{B,i} \ket{\psi_{i}}
\label{eq:lowerbproof}
\ee
\noindent
and the proof for $i$ even is nearly identical.

Iterating we obtain the inequality
\be
\beta\alpha =
\bra{\psi_{0}} Z_{A,0}\otimes I_\HM\otimes Z_{B,0} \ket{\psi_{0}}
&\geq&
\bra{\psi_{n}} Z_{A,n}\otimes I_\HM\otimes Z_{B,n} \ket{\psi_{n}}
\nonumber\\
&&= \bra{\psi_{n}} \Pi_{A,1}\otimes I_\HM\otimes \Pi_{B,0} \ket{\psi_{n}} =
0
\ee
\noindent
or equivalently $P_B^* P_A^* \geq 0$, which admittedly is rather
disappointing.

But there is hope. If we had been studying strong coin flipping and were
interested in the case when both Alice and Bob want to obtain the outcome
one, the above analysis would be correct given the minor change
$Z_{B,n}=\Pi_{B,1}$. In such case the above inequality would read $P_B^*
P_A^*\geq 1/2$ which is Kitaev's bound for strong coin flipping
\cite{Kitaev}.

For historical purposes we note that the results up to this point were
already part of Kitaev's first formalism. Pedagogically, though, it makes
more sense to call the optimizations given a fixed protocol the ``first
formalism'' and the optimizations over all protocols the ``second
formalism.''

Returning to the case of weak coin flipping, we can obtain better bounds 
by inserting operator monotone functions into the above inequalities.
An operator monotone function $f:[0,\infty)\rightarrow [0,\infty)$ is a
function that preserves the ordering of matrices. That is
\be
X\geq Y \Rightarrow f(X)\geq f(Y)
\ee
for all positive semidefinite operators $X$ and $Y$. The simplest example
of an operator monotone function is $f(z)=z$. Another example is $f(z)=1$.
Note that not all monotone functions are operator monotone. The classic
example is $f(z)=z^2$ which is monotone on the domain $[0,\infty)$ but is
not operator monotone. A few more facts about operator monotone functions
are collected in the next section.

How do we use the operator monotone functions? A moment ago we were
studying the expression $\bra{\psi_{i}} Z_{A,i}\otimes I_\HM\otimes Z_{B,i}
\ket{\psi_{i}}$. But we could just as well study the expression 
$\bra{\psi_{i}} Z_{A,i}\otimes I_\HM\otimes f(Z_{B,i}) \ket{\psi_{i}}$ for
any operator monotone function $f$. We could then prove an inequality
similar to Eq.~(\ref{eq:lowerb}), and iterating we would end up with the
condition $\beta f(\alpha) \geq P_B f(0)$, where $P_B$ is the honest
probability of Bob wining. Choosing $f(z)=1$ we can derive the bound
$P_B^*\geq P_B$, which at least has the potential of being saturated.

The next obvious step is to put operator monotone functions on both sides
and study expressions of the form $\bra{\psi_{i}} f(Z_{A,i})\otimes
I_\HM\otimes g(Z_{B,i}) \ket{\psi_{i}}$. But because at any time step only
one of $Z_{A,i}$ and $Z_{B,i}$ increases, we can do even better.

\begin{definition}
A \textbf{bi-operator monotone function} is a function
$f(x,y):[0,\infty)\times[0,\infty)\rightarrow[0,\infty)$ such that when one
of the variables is fixed, it acts as an operator monotone function in the
other variable.

More specifically, given $c\in[0,\infty)$ define $f(\underline
c,z):[0,\infty)\rightarrow[0,\infty)$ to be the function $z\rightarrow
f(c,z)$ (i.e., where the first argument has been fixed). Similarly, let
$f(z,\underline c):[0,\infty)\rightarrow[0,\infty)$ be the function
obtained by fixing the second argument. We say $f(x,y)$ is bi-operator
monotone if both $f(\underline c,z)$ and $f(z,\underline c)$ are operator
monotone for every $c\in[0,\infty)$.

Furthermore, given $f(x,y):[0,\infty)\times[0,\infty)\rightarrow[0,\infty)$ we
extend its definition to act on pairs of positive semidefinite operators as
follows: let $X=\sum_{i}x_i\ket{x_i}\bra{x_i}$ and
$Y=\sum_{i}y_i\ket{y_i}\bra{y_i}$ then
\be
f(X,Y) = \sum_{i,j} f(x_i,y_j) \ket{x_i}\bra{x_i}\otimes \ket{y_j}\bra{y_j}.
\ee
\end{definition}

Bi-operator monotone functions satisfy a few simple properties:
\begin{itemize}
\item If $Y'\geq Y$ then $f(X,Y')\geq f(X,Y)$.
\item If $U$ is unitary then $f(X,U Y U^\dagger)= \big(I\otimes U\big) f(X,Y)
\big(I\otimes U^\dagger\big)$.
\item When acting on a tripartite system $f(X,I\otimes Y)=f(X\otimes I,Y)$.
\end{itemize}

We are now ready to prove our most general inequality. Given a bi-operator
monotone function $f$ then we can write
\be
\bra{\psi_{i-1}} f(Z_{A,i-1},I_\HM\otimes
Z_{B,i-1}) \ket{\psi_{i-1}} \geq
\bra{\psi_{i}} f(Z_{A,i},I_\HM\otimes Z_{B,i}) \ket{\psi_{i}},
\ee
\noindent
whose proof is nearly identical to Eq.~(\ref{eq:lowerbproof}). Iterating,
we obtain the lemma:
\begin{lemma}
Given any bi-operator
monotone function $f$, we obtain a bound on coin-flipping protocols given by
\be
f(P_B^*,P_A^*) \geq
P_B f(1,0) + P_A f(0,1).
\ee
\end{lemma}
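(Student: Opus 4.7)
The plan is to iterate the bi-operator monotone inequality already displayed just above the lemma (call it the ``bipartite transfer inequality'') from $i=1$ to $i=n$, evaluate the resulting endpoints, and finally pass to a limit using strong duality. Concretely, chaining
\[
\bra{\psi_{i-1}} f\bigl(Z_{A,i-1},I_\HM\otimes Z_{B,i-1}\bigr)\ket{\psi_{i-1}}
\geq
\bra{\psi_{i}} f\bigl(Z_{A,i},I_\HM\otimes Z_{B,i}\bigr)\ket{\psi_{i}}
\]
for $i=1,\dots,n$ yields the single telescoped inequality
\[
\bra{\psi_0} f\bigl(Z_{A,0},I_\HM\otimes Z_{B,0}\bigr)\ket{\psi_0}
\geq
\bra{\psi_n} f\bigl(Z_{A,n},I_\HM\otimes Z_{B,n}\bigr)\ket{\psi_n}.
\]

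For the left-hand side, I would use the two eigenvalue conditions defining a UBP, $Z_{A,0}\ket{\psi_{A,0}}=\beta\ket{\psi_{A,0}}$ and $Z_{B,0}\ket{\psi_{B,0}}=\alpha\ket{\psi_{B,0}}$, together with the fact that $\ket{\psi_0}$ is a tensor product across $\HA\otimes\HM\otimes\HB$. Because the definition of $f$ on operators picks out the eigenvalue pair $(\beta,\alpha)$ on this product state, the left-hand side collapses to the scalar $f(\beta,\alpha)$.

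For the right-hand side, I would substitute $Z_{A,n}=\Pi_{A,1}$ and $Z_{B,n}=\Pi_{B,0}$, which are commuting projectors living on different tensor factors. Their joint spectral decomposition has four components labeled by eigenvalues $(x,y)\in\{0,1\}^2$, so
\[
f(\Pi_{A,1},I_\HM\otimes\Pi_{B,0})
= \sum_{x,y\in\{0,1\}} f(x,y)\, \Pi_{A,x}\otimes I_\HM\otimes \Pi_{B,\bar y},
\]
where $\bar y=1-y$ converts eigenvalue labels to POVM labels. The POVM requirement~(\ref{eq:povmreq}) kills the $(x,y)=(1,1)$ and $(0,0)$ terms on $\ket{\psi_n}$, while the remaining two terms evaluate, by~(\ref{eq:papb}), to $P_B f(1,0)+P_A f(0,1)$. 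Combining the two endpoints gives
\[
f(\beta,\alpha)\geq P_B f(1,0)+P_A f(0,1)
\]
for every UBP built on top of the fixed honest protocol.

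To finish, I would invoke Theorem~\ref{thm:kitmain1} (or more precisely the strong duality statement that the infimum of $\beta$ over UBPs equals $P_B^*$, and independently the infimum of $\alpha$ equals $P_A^*$) to produce a sequence of UBPs with $(\beta_k,\alpha_k)\to(P_B^*,P_A^*)$. Since every operator monotone function on $[0,\infty)$ is continuous, and $f$ is separately monotone and continuous in each variable, $f(\beta_k,\alpha_k)\to f(P_B^*,P_A^*)$, and the claimed inequality follows. The main technical point is the spectral manipulation at $i=n$: one must be careful that the two projectors commute so that $f$ really does decompose as a convex combination of their joint rank-one components, and that the cross terms are exactly the ones annihilated by the POVM consistency condition; everything else is routine telescoping and a limit argument.
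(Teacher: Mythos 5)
Your argument is correct and is essentially the paper's own proof: the paper likewise iterates the displayed bipartite inequality across the UBP, evaluates the $i=0$ endpoint to $f(\beta,\alpha)$ via the eigenvalue conditions on $Z_{A,0}$, $Z_{B,0}$ and the $i=n$ endpoint to $P_B f(1,0)+P_A f(0,1)$ via Eqs.~(\ref{eq:povmreq}) and~(\ref{eq:papb}), compressing all of this into ``Iterating, we obtain the lemma.'' Your explicit passage from $(\beta,\alpha)$ down to $(P_B^*,P_A^*)$ by strong duality plus continuity/monotonicity merely spells out a step the paper leaves implicit (just as it does in its special cases $f(x,y)=xy$ and $f=1$), so there is no gap.
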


Can we use bi-operator monotone functions to prove an interesting bound on
weak coin flipping? Certainly not if we believe that we can achieve
arbitrarily small bias. However we shall see that the spaces of bi-operator
monotone functions and coin-flipping protocols are essentially duals. If
there were no arbitrarily good protocols then we would be able to prove
that fact using the above lemma.

\subsubsection{Some more facts about operator monotone functions}

Operator monotone functions are well studied and a good reference on the
subject is \cite{Bhatia}. As they are also central to the task of
constructing coin-flipping protocols we collect here some of their most
important properties which will be used throughout the paper.

Our main interest are functions that map the set of positive semidefinite
operators to itself. Therefore, when not otherwise stated, we assume all
operator monotone functions have domain $[0,\infty)$ and range contained
in $[0,\infty)$. In general, though, operator monotone functions can be
defined on any real domain.

The space of operator monotone functions (on a fixed domain) forms a convex
cone. If $f(z)$ and $g(z)$ are operator monotone then so are
\be
a f(z) + b g(z)
\ee
\noindent
for any $a\geq0$ and $b\geq 0$. Another simple property is that if $f(z)$
is operator monotone on a domain $(a,b)$ then for any $c\in\R$ we have
$f(z-c)$ is operator monotone on $(a+c,b+c)$.

A very important function which is operator monotone on the domain
$(0,\infty)$ is $f(z)=-1/z$. The proof is given by $Y\geq X>0\Rightarrow
I\geq Y^{-1/2}X Y^{-1/2}\Rightarrow I\leq (Y^{-1/2}X
Y^{-1/2})^{-1}\Rightarrow I\leq Y^{1/2}X^{-1} Y^{1/2} \Rightarrow
-Y^{-1}\geq -X^{-1}$. By shifting and restricting the domain we get
\be
f(z)=-\frac{1}{\lambda+z}
\ee
which is operator monotone on $[0,\infty)$ for $\lambda\in(0,\infty)$,
though the range is negative. The range can be fixed by scaling and adding
in the constant function to get $1-\frac{\lambda}{\lambda+z} =
\frac{z}{\lambda+z}$.

In fact, the above functions together with $f(z)=z$ and $f(z)=1$ span the
extremal rays of the convex cone of operator monotone functions. More
precisely, every operator monotone function
$f:(0,\infty)\rightarrow[0,\infty)$ has a unique integral representation
\be
f(z) = c_1 + c_2 z + \int_{0}^{\infty} \frac{\lambda z}{\lambda + z} d
w(\lambda),
\ee
\noindent
where $c_1,c_2\in\R$ are non-negative and $d w(\lambda)$ is a positive measure
such that $\int_0^\infty \frac{\lambda}{1+\lambda}d w(\lambda) < \infty$.
In particular, they are infinitely differentiable.

The general case, where the domain is the interval $I=(a,b)\subset\R$, is
nearly identical but with the integral ranging over $-\lambda\in\R\setminus
I$. When the domain is the closed interval $[a,b]$ then functions must be
operator monotone on $(a,b)$ and monotone on $[a,b]$ so that $f(a)\leq
\lim_{z\rightarrow a^+} f(z)$ and $f(b)\geq
\lim_{z\rightarrow b^-} f(z)$.

\subsection{\label{sec:TDPG}Time Dependent Point Games}

In previous sections we have paired the honest probability distribution
$\sigma_{A,i}$ with the dual variable $Z_{A,i}$. We have also
paired the full honest state $\ket{\psi_i}$ with the operator
$Z_{A,i}\otimes I\otimes Z_{B,i}$. These pairings have led to interesting
results, but they can also become rapidly unwieldy because they contain too
much information, such as a choice of basis. The goal of this section
is to get rid of most of this excess information and strip the problem to a
bare minimum that still contains the essence of coin flipping.

The key idea for the following discussion is to use the honest state to
define a probability distribution over the eigenvalues of the dual SDP
variables. This idea is captured by the next definition.

\begin{definition}
Given $Z=\sum_{z\in\eig(Z)} z \Pi^{[z]}$, a positive semidefinite matrix
expressed as a sum of its eigenspaces, and $\sigma$, a second positive
semidefinite matrix defined on the same space, we define the function
{\boldmath $\Prob(Z,\sigma):[0,\infty)\rightarrow[0,\infty)$} 
as follows
\be
p(z)=\Prob(Z,\sigma)\quad\Rightarrow\quad
p(z) = 
\begin{cases}
\Tr[\Pi^{[z]} \sigma] & z\in \eig(Z),\\
0 & \text{otherwise.}
\end{cases}
\ee
Similarly, given a vector $\ket{\psi}$ instead of $\sigma$ we define
{\boldmath $\Prob(Z,\ket{\psi}):[0,\infty)\rightarrow[0,\infty)$} by
\be
p(z)=\Prob(Z,\ket{\psi})\quad\Rightarrow\quad
p(z) = 
\begin{cases}
\bra{\psi}\Pi^{[z]}\ket{\psi} & z\in \eig(Z),\\
0 & \text{otherwise.}
\end{cases}
\ee
\end{definition}
\noindent
Note that by construction
$\Prob(Z,\ket{\psi})\equiv\Prob(Z,\ket{\psi}\bra{\psi})$.

We think of the above functions $p(z)$ as belonging to the space of
functions $[0,\infty)\rightarrow[0,\infty)$ with finite support. The
motivation for the construction is that given any function with arbitrary
support $f(z):[0,\infty)\rightarrow\R$ we have
\be
p(z)=\Prob(Z,\sigma)\quad\Rightarrow\quad
\sum_z p(z) f(z) = \Tr[\sigma f(Z)],
\ee
\noindent
where the sum on the left is over the finite support of $p(z)$. 

A similar construction can be used for the bipartite case. Take
$Z_A=\sum_{z_A} z_A \Pi_{A}^{[z_A]}$ on $\HA$, $Z_B=\sum_{z_B} z_B
\Pi_{B}^{[z_B]}$ on $\HB$ and $\ket{\psi}$ on $\HA\otimes\HM\otimes\HB$ and
combine them to form the two-variable function
\be
p(z_A,z_B) =
\begin{cases}
\bra{\psi}\Pi_{A}^{[z_A]}\otimes I_{\HM} \otimes \Pi_{B}^{[z_B]} \ket{\psi} &
z_A\in\eig(Z_A)\text{ and }z_B\in\eig(Z_B),\\
0 & \text{otherwise,}
\end{cases}
\ee
\noindent
which we will denote by $\Prob(Z_A,Z_B,\ket{\psi})$.

Often it will be useful to use an algebraic notation when describing
functions with finite support. For single-variable functions $p(z)$ with
finite support we introduce a basis $\{[z_i]\}$ of functions that take
the value one at $z_i$ and are zero everywhere else. For instance, a function
with two nonzero values $p(z_1)=c_1$ and $p(z_2)=c_2$ can be written as
\be
p = c_1 [z_1] + c_2 [z_2].
\ee
\noindent
Similarly, for the bipartite we use $\{[x,y]\}$ as basis elements for the
functions $p(x,y)$ with finite support. For instance, the initial state 
$\Prob(Z_{A,0},Z_{B,0},\ket{\psi_0})$ of a UBP always has the form
\be
1[\beta,\alpha].
\label{eq:tn}
\ee

On the other hand, because $Z_{A,n}=\Pi_{A,1}$ and $Z_{B,n}=\Pi_{B,0}$ are
just the projections onto the opposing player's winning space, the final
state $\Prob(Z_{A,n},Z_{B,n},\ket{\psi_n})$ of a UBP always has the form
\be
P_B[1,0] + P_A[0,1]
\label{eq:t0}
\ee
as can be verified from Eq.~(\ref{eq:papb}). Note, however, that
the label $0$ on $\Pi_{B,0}$ refers to the coin outcome and not the
associated eigenvalue. Because $Z_{B,n}=1\Pi_{B,0}+0\Pi_{B,1}$, the
projector onto the one eigenvalue of $Z_{B,n}$ is given by
$\Pi_{B}^{[1]}=\Pi_{B,0}$ and similarly $\Pi_{B}^{[0]}=\Pi_{B,1}$.

In fact, given a UBP we can compute for every $i$ the functions
$\Prob(Z_{A,i},Z_{B,i},\ket{\psi_i})$, which allows us to visualize the UBP
as a movie of sorts where at each time step there is a finite set of points
in the plane. The points move around between time steps according to some
rules which we will determine in a moment. We call these sequences
\textit{point games} and we shall show that they contain all the important
information about UBPs.

One important convention that we introduce, though, is that point games are
always described in \textbf{reverse time order}. Henceforth, we will refer
to Eq.~(\ref{eq:t0}) as the first or $t=0$ point configuration whereas
Eq.~(\ref{eq:tn}) will be referred to as the last or $t=n$ point
configuration. More generally, the point configuration at $t=i$ will be
constructed from the operators $Z_{A,n-i}$, $Z_{B,n-i}$ and
$\ket{\psi_{n-i}}$.

The motivation for reversing the time order is as follows: first we get to
begin at a known starting configuration such as $0.5[1,0]+0.5[0,1]$ (for
the main case of interest $P_A=P_B=1/2$). From there, we can move the
points around following the rules of point games until they merge into a
single point at some location $[\beta,\alpha]$. We can do this without
fixing in advance the number of steps, but rather using the existence of a
single point as an end condition.  The sequence of moves will then encode a
UBP with upper bound $(\beta,\alpha)$.

So what are these rules for moving points around? Let begin with the
one-variable case and examine a transition between $p_i(z)$, constructed
from $Z_{A,n-i}$ and $\sigma_{A,n-i}$, and $p_{i+1}(z)$, constructed from
$Z_{A,n-i-1}$ and $\sigma_{A,n-i-1}$. A necessary condition is
\be
\sum_z p_i(z) f(z) \leq \sum_z p_{i+1}(z) f(z)
\ee
\noindent
for every operator monotone function $f$, where again the sums range over the
finite supports of the respective probability distributions.

The condition is trivially necessary on the time transitions when Bob acts
and Alice does nothing because then $p_i=p_{i+1}$. To prove that the condition
is necessary for the non-trivial transitions recall the usual relation
$Z_{A,n-i-1}\otimes I_\HM \geq U_{A,n-i}^\dagger (Z_{A,n-i}\otimes I_\HM )
U_{A,n-i}$. Also let $\tilde \sigma_{A,n-i-1} =
\Tr_\HB[\ket{\psi_{n-i-1}}\bra{\psi_{n-i-1}}]$ so that
$\sigma_{A,n-i-1}=\Tr_\HM[\tilde \sigma_{A,n-i-1}]$ and 
$\sigma_{A,n-i}=\Tr_\HM[U_{A,n-i} \tilde \sigma_{A,n-i-1}U_{A,n-i}^\dagger]$
and therefore
\be
\sum_z p_i(z) f(z) &=& \Tr[\sigma_{A,n-i} f(Z_{A,n-i})]
= \Tr[\tilde \sigma_{A,n-i-1} 
f(U_{A,n-i}^\dagger  Z_{A,n-i} \otimes I_{\HM} U_{A,n-i})]
\label{eq:validproof}
\\\nonumber
&\leq&
\Tr[\tilde \sigma_{A,n-i-1} f(Z_{A,n-i-1} \otimes I_{\HM})]
= \Tr[\sigma_{A,n-i-1} f(Z_{A,n-i-1})] = \sum_z p_{i+1}(z) f(z).
\ee
Sadly, there are no sufficient conditions that can be derived just by
looking at one side of the problem. Nevertheless, it will be useful to
define the transitions that satisfy the above constraints as the valid set
of transitions, as this notion will be used as a building block when
studying the more complicated bipartite transitions.

\begin{definition}
Let $p_i(z)$ and $p_{i+1}(z)$ be two functions $[0,\infty)\rightarrow
[0,\infty)$ with finite support. We say $p_i(z)\rightarrow p_{i+1}(z)$ is
a \textbf{valid transition} if $\sum_z p_i(z) = \sum_z p_{i+1}(z)$,
and for every operator monotone function $f$ we have
\be
	\sum_z p_i(z) f(z) \leq \sum_z p_{i+1}(z) f(z).
\ee
\noindent
Furthermore, we say that the transition is \textbf{strictly valid} if
$\sum_z p_i(z) f(z) < \sum_z p_{i+1}(z) f(z)$ for every non-constant
operator monotone function $f$.
\end{definition}

\noindent
For reasons to become clear below, we do not restrict the functions
$p_i(z)$ and $p_{i+1}(z)$ to sum to one. However, we do require that their
sum be equal so that probability is conserved.

Because we have a characterization of the extremal rays of the cone of
operator monotone functions, it is sufficient when checking for valid
transitions to use the set of functions $f_\lambda(z)=\frac{\lambda
z}{\lambda+z}$ for $\lambda\in(0,\infty)$. As for the other two extremal
functions, the constraint from $f(z)=1$ is independently imposed as
conservation of probability, and the constraint from $f(z)=z$ follows from
the limit $\lambda\rightarrow\infty$ of $f_\lambda(z)$. It is worth stating
this explicitly:

\begin{lemma}
Let $p_i(z)$ and $p_{i+1}(z)$ be two functions $[0,\infty)\rightarrow
[0,\infty)$ with finite support. The transition $p_i(z)\rightarrow
p_{i+1}(z)$ is valid if and only if $\sum_z \lp(p_{i+1}(z)- p_i(z)\rp)=0$
and for every $\lambda\in(0,\infty)$ we have $\sum_z  \frac{\lambda
z}{\lambda+z}\lp(p_{i+1}(z) - p_i(z) \rp)\geq 0$.
\end{lemma}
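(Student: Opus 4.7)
The plan is to use the integral representation of operator monotone functions that was just recalled in the previous subsection. The forward direction is essentially by inspection: conservation is part of the definition of a valid transition, and for each $\lambda\in(0,\infty)$ the function $f_\lambda(z)=\frac{\lambda z}{\lambda+z}$ is operator monotone on $[0,\infty)$, so the validity hypothesis applied to $f_\lambda$ yields $\sum_z \frac{\lambda z}{\lambda+z}\bigl(p_{i+1}(z)-p_i(z)\bigr)\geq 0$.

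For the reverse direction, assume conservation together with the $\lambda$-inequalities and let $f:[0,\infty)\rightarrow[0,\infty)$ be an arbitrary operator monotone function. Write $\Delta(z)=p_{i+1}(z)-p_i(z)$, which has finite support, and invoke the integral representation
\[
f(z)=c_1+c_2 z+\int_0^\infty \frac{\lambda z}{\lambda+z}\,dw(\lambda),
\]
with $c_1,c_2\geq 0$ and $dw$ a positive measure. Then
\[
\sum_z f(z)\Delta(z)=c_1\sum_z \Delta(z)+c_2\sum_z z\,\Delta(z)+\int_0^\infty\!\Bigl[\sum_z \frac{\lambda z}{\lambda+z}\Delta(z)\Bigr]dw(\lambda).
\]
Because the sum over $z$ is finite, all interchanges of sum and integral are trivially justified. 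The first term vanishes by the conservation hypothesis. The integrand of the third term is nonnegative for every $\lambda\in(0,\infty)$ by assumption, and $dw$ is a positive measure, so the integral is $\geq 0$.

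For the middle term, I would use the pointwise limit $\frac{\lambda z}{\lambda+z}\to z$ as $\lambda\to\infty$, valid at every $z\in[0,\infty)$; since the support of $\Delta$ is finite, this passes through the sum to give
\[
\sum_z z\,\Delta(z)=\lim_{\lambda\to\infty}\sum_z \frac{\lambda z}{\lambda+z}\Delta(z)\geq 0,
\]
again using the $\lambda$-hypothesis. Combining the three nonnegative contributions yields $\sum_z f(z)\Delta(z)\geq 0$, which is exactly the required inequality for $f$, completing the proof.

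The only subtle step is the second term, where one must recognize that the coefficient $c_2$ of the linear part does not appear explicitly among the $f_\lambda$ but is recovered in the limit $\lambda\to\infty$; everything else is a direct application of the integral representation to a finitely supported difference $\Delta$, so no measure-theoretic care beyond positivity of $dw$ is needed.
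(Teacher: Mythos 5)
Your proof is correct and is essentially the paper's own (very terse) argument made explicit: the paper likewise reduces the check to the extremal rays of the cone of operator monotone functions via the integral representation $f(z)=c_1+c_2z+\int_0^\infty\frac{\lambda z}{\lambda+z}\,dw(\lambda)$, identifying the constant part with conservation of probability and recovering $f(z)=z$ as the $\lambda\to\infty$ limit of $f_\lambda$. The only point glossed over (by the paper as well) is the closed endpoint $z=0$, where an operator monotone function on $[0,\infty)$ may have $f(0)<\lim_{z\to0^+}f(z)=c_1$ and so is not literally given by the representation at $z=0$; this is harmless because the analogous limit $\lambda\to0^+$ of the hypotheses (after dividing by $\lambda$) gives $\sum_{z>0}\bigl(p_{i+1}(z)-p_i(z)\bigr)\geq0$, hence $p_{i+1}(0)-p_i(0)\leq0$, which makes the extra term $-\bigl(c_1-f(0)\bigr)\bigl(p_{i+1}(0)-p_i(0)\bigr)$ nonnegative.
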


Not surprisingly, a set of necessary conditions for bipartite transitions
constructed from UBPs is that
\be
\sum_{x,y} p_i(x,y) f(x,y) \leq \sum_{x,y} p_{i+1}(x,y)
f(x,y)
\ee
\noindent
for every bi-operator monotone functions $f$, where as usual the sums are
over the finite support of the respective probability distributions.
Unfortunately, the space of bi-operator monotone functions is not as well
characterized as the space of operator monotone functions, and therefore it
would be better to have a set of conditions that are constructed from the
latter:

\begin{definition}
Let $p_i(x,y)$ and $p_{i+1}(x,y)$ be two functions
$[0,\infty)\otimes [0,\infty)\rightarrow [0,\infty)$ with finite support.
We say $p_i(x,y)\rightarrow p_{i+1}(x,y)$ is a \textbf{valid
transition} if either
\begin{enumerate}
\item for every $c\in[0,\infty)$ the transition
$p_i(z,\underline c)\rightarrow p_{i+1}(z,\underline c)$ is valid, or
\item for every $c\in[0,\infty)$ the transition
$p_i(\underline c,z)\rightarrow p_{i+1}(\underline c,z)$ is valid,
\end{enumerate}
\noindent
where as before $p_i(z,\underline c)$ is the one-variable function obtained
by fixing the second input. We call the first case a \textbf{horizontal
transition} and the second case a \textbf{vertical transition}.
\end{definition}

\noindent
The first case occurs when Alice applies a unitary and the second case when
Bob applies a unitary. As opposed to the single variable transitions, the
bipartite condition of validity is not transitive. However, we can define
the notion of transitively valid for two functions if there is a sequence
of functions beginning with the first one and ending with the second one,
such that each transition is valid. The main object of study for this
section will be transitively valid transitions of the form
$P_B[1,0]+P_A[0,1]\rightarrow 1[\beta,\alpha]$, where we always assume
$P_A,P_B\geq 0$ are some fixed numbers such that $P_A+P_B=1$.

\begin{definition}
A \textbf{time dependent point game (TDPG)} is a sequence
$p_0(x,y),\dots,p_n(x,y)$ of functions
$[0,\infty)\otimes [0,\infty)\rightarrow
[0,\infty)$ with finite support, such that every transition
$p_i(x,y)\rightarrow p_{i+1}(x,y)$ is valid and such that the first
and last distributions have the form
\be
p_0 = P_B[1,0] + P_A[0,1],
\qquad\text{and}\qquad
p_n = 1[\beta,\alpha].
\ee
We say that $[\beta,\alpha]$ is the final point of the TDPG.
\end{definition}

\noindent
The above definition can be extended to games beyond coin-flipping which
have many possible outcomes. If outcome $i$ has honest probability $q_i$,
and pays $a_i\geq 0$ to Alice and $b_i\geq 0$ to Bob, the same formalism
applies if we use as starting state $p_0 =
\sum_i q_i [b_i,a_i]$. This paper will focus exclusively on weak
coin-flipping though.

Our first task will be to prove that given a UBP with bound
$(\beta,\alpha)$ we can build a TDPG with final point $[\beta,\alpha]$.  We
have already done most of the work by constructing the probability
distributions out of the UBP and showing that they have the right initial
and final states. What remains to be shown is that the transitions
$p_i(x,y)\rightarrow p_{i+1}(x,y)$ are valid.

We focus on the transitions when Alice applies a unitary, the other case
being nearly identical. Given a UBP, we construct the distribution
$p_i=\Prob(Z_{A,n-i},Z_{B,n-i},\ket{\psi_{n-i}})$ and the distribution
$p_{i+1}=\Prob(Z_{A,n-i-1},Z_{B,n-i-1},\ket{\psi_{n-i-1}})$. The UBP
operators satisfy the usual relations $Z_{A,n-i-1}\otimes I_\HM \geq
U_{A,n-i}^\dagger (Z_{A,n-i}\otimes I_\HM) U_{A,n-i}$,
$Z_{B,n-i-1}=Z_{B,n-i}$ and $\ket{\psi_{n-i}}= U_{A,n-i}\otimes I_B
\ket{\psi_{n-i-1}}$. We expand the state $\ket{\psi_{n-i-1}}$ as
\be
\ket{\psi_{n-i-1}} = \sum_{y} \ket{\phi_{y}}\otimes \ket{y},
\ee
\noindent
where $\ket{y}$ are normalized eigenvectors of $Z_{B,n-i}$ and
$\ket{\phi_{y}}$ are non-normalized states on $\HA\otimes\HM$.  Because
this is not necessarily a Schmidt decomposition, the vectors
$\ket{\phi_{y}}$ are not necessarily orthogonal, however this will not be a
problem. The key idea now is to note that given a fixed $y$, the function
$p_{i+1}(x,\underline{y})=\Prob(Z_{A,n-i-1},\rho_{n-i-1,y})$ where
$\rho_{n-i-1,y}\equiv\Tr_{\HM}[\ket{\phi_{y}}\bra{\phi_{y}}]$. Similarly,
the function $p_{i}(x,\underline{y})=\Prob(Z_{A,n-i},\rho_{n-i,y})$ where 
$\rho_{n-i,y}\equiv\Tr_{\HM}[U_{A,n-i}\ket{\phi_{y}}\bra{\phi_{y}}
U_{A,n-i}^\dagger]$. The relationship between these quantities is the same
as it was the analysis of one-sided probability transitions, and the proof
of Eq.~(\ref{eq:validproof}) goes through with $\rho_{n-i-1,y}$,
$\rho_{n-i,y}$ and $\ket{\phi_{y}}\bra{\phi_{y}}$ taking the place of
$\sigma_{A,n-i-1}$, $\sigma_{A,n-i}$ and $\tilde
\sigma_{A,n-i-1}$. Therefore, for every $y\in[0,\infty)$ the transition
$p_i(x,\underline{y})\rightarrow p_{i+1}(x,\underline{y})$ is valid
and therefore the full transition $p_i(x,y)\rightarrow
p_{i+1}(x,y)$ is valid as well.

We have just proven that given a UBP with bound $(\beta,\alpha)$ we can
construct a TDPG with final point $[\beta,\alpha]$. The converse is also
true in the following sense: given any TDPG with final point
$[\beta,\alpha]$ and an $\epsilon>0$ there exists a UBP with bound
$(\beta+\epsilon,\alpha+\epsilon)$. This is enough because we are only
concerned with infimums, and the infimums over both sets will be
equal. Constructing UBPs from TDPGs will require a fair amount of work to
be done in the next couple of sections. Some readers may prefer to first
study the TDPG examples in Section~\ref{sec:examples}.

\subsubsection{\label{sec:proj}Coin-flipping protocols with projections}

The description of the coin-flipping protocols built from TDPGs will be
greatly simplified if we can use measurements at intermediate steps
throughout the protocol. Of course, there is nothing special about
protocols that involve measurements, as these can always be delayed to the
last step. However, doing so requires simulating the measurement with a
unitary and keeping the simulated outcomes in some extra qubits, which
adds extra complexity to the description of the protocol. Early
measurements can result in significant improvements in the description of a
protocol and the number of qubits employed. An example of this is given in
Appendix~\ref{sec:ddb} which takes the author's original bias $1/6$
protocol requiring arbitrarily many qubits and reduces the space used to a
single qutrit per player plus a single qubit for messages.

In fact, for such simplifications we need only allow a special kind of
projective measurement: at certain steps the players will use a two outcome
POVM of the form $\{E,I-E\}$, where E is a projector (i.e.,
$E^\dagger=E^2=E$). The protocol will be set up so that if both players
play honestly the first outcome will always be obtained, and if the second
outcome is observed the players will immediately abort (at which point they
can declare themselves the winner). We will place one such projection
immediately after each unitary.

The goal of this section is to formalize the needed notion of protocols
with projections, describe their dual feasible points, and show how they
are equivalent to regular protocols. It can be safely skipped by those
familiar with the result.

\begin{definition}
A \textbf{coin-flipping protocol with projections} is a coin-flipping
protocol with the addition of $n$ projection operators $E_1,\dots,E_n$ of
the form
\be
E_i = \begin{cases}
E_{A,i}\otimes I_\HB & \text{for $i$ odd,}\\
I_\HA\otimes E_{B,i} & \text{for $i$ even,}
\end{cases}
\ee
such that $E_i\ket{\psi_{i}}=\ket{\psi_{i}}$ for every $i=1,\dots,n$
\end{definition}

\noindent
The protocol is implemented as before, except that immediately after
implementing $U_i$ the acting player measures using $\{E_i,I-E_i\}$ and
aborts on the second outcome.

To prove the equivalence of protocols with and without measurements, not
only do we need to construct a canonical unitary-based protocol for every
protocol with measurements, but we also need to show that the new protocol
does not allow for any extra cheating. This is done by constructing a
canonical map from the dual feasible points of the protocol with
measurements to the dual feasible points of the unitary protocol such that
the upper bounds are preserved (or at least come arbitrarily close to each
other). Effectively, we aim to construct a map from ``UBPs with
measurements'' to regular UBPs. As most of the constructions are fairly
standard, we will only sketch the details.

As usual we focus on the case of honest Alice and cheating Bob. The primal
SDP requires $\rho_{A,0} = \ket{\psi_{A,0}}\bra{\psi_{A,0}}$, $\rho_{A,i} =
\rho_{A,i-1}$ for $i$ even, and $P_{win} = \Tr\lp[ \Pi_{A,1} \rho_{A,n}
\rp]$ as before. However, the new element is that for $i$ odd we have
\be
\rho_{A,i} = \Tr_\HM\lp[ E_{A,i} U_{A,i} \tilde\rho_{A,i-1}
U_{A,i}^\dagger E_{A,i}\rp],
\qquad\text{for}\qquad
\Tr_\HM \tilde \rho_{A,i-1} \leq \rho_{A,i-1}.
\ee
\noindent
The trace of $\rho_{A,i}$ is no longer unity but rather it encodes the
probability that we have reached step $i$ without aborting and
$\rho_{A,i}/\Tr[\rho_{A,i}]$ is the state at step $i$ given that no aborts
have occurred. The inequality on the right equation allows for Bob to abort,
though it is certainly never optimal for him to do so.

The dual SDP has as before $Z_{A,n}= \Pi_{A,1}$, $Z_{A,i-1}  = Z_{A,i}$ for $i$
even, and $\beta=\bra{\psi_{A,0}}Z_{A,0}\ket{\psi_{A,0}}$ but now
for $i$ odd we impose the condition
\be
Z_{A,i-1}\otimes I_\HM \geq
U_{A,i}^\dagger E_{A,i} \lp(Z_{A,i}\otimes I_\HM\rp) E_{A,i} U_{A,i}.
\ee

Given a protocol with projections and a dual feasible point let us build a
unitary protocol with a matching dual feasible point. We will put primes on
all expressions of the new protocol that differ from the one with
measurements.

The number of rounds and the message space will be the same, but we will
add $n$ qubits to both $\HA$ and $\HB$ so that $\HAP =
\lp(\C^2\rp)^{\otimes n}\otimes\HA$ and $\HBP =
\HB\otimes\lp(\C^2\rp)^{\otimes n}$. These extra qubits will store the
measurement outcomes (though technically we only need $n/2$ qubits on each
side). The new initial state will set all the new qubits to zero
$\ket{\psi_{A,0}'}=\ket{0}\otimes\ket{\psi_{A,0}}$ and
$\ket{\psi_{B,0}'}=\ket{\psi_{B,0}}\otimes\ket{0}$, and when playing
honestly they will always remain zero. The final projectors only give the
victory to the other player if all the extra qubits are zero so that
$\Pi_{A,1}'=\ket{0}\bra{0}\otimes \Pi_{A,1}$ and $\Pi_{A,0}'=I-\Pi_{A,1}'$
and similarly $\Pi_{B,0}'= \Pi_{B,0}\otimes\ket{0}\bra{0}$ and
$\Pi_{B,1}'=I-\Pi_{B,0}'$. Finally, the new unitaries simply simulate a
measurement after applying the regular operation
\be
U_{A,i}' = M_{A,i} \lp(I\otimes U_{A,i}\rp)&\qquad&\text{for $i$ odd,}\\
U_{B,i}' = M_{B,i} \lp(U_{B,i}\otimes I\rp)&\qquad&\text{for $i$ even,}
\ee
\noindent
where $M_{A,i}$ and $M_{B,i}$ are controlled unitaries with target given by
the original $\HA$ (or $\HB$) and new qubit number $i$ and control given
by the other $n-1$ new qubits. The matrices acts as the identity unless all
the $n-1$ control qubits are zero, in which case they apply the operation
\be
M_{A,i}\rightarrow \mypmatrix{E_{A,i}&I-E_{A,i}\cr I-E_{A,i}&E_{A,i}},
\qquad\qquad
M_{B,i}\rightarrow \mypmatrix{E_{B,i}&I-E_{B,i}\cr I-E_{B,i}&E_{B,i}}.
\ee
\noindent
where the blocks correspond to the computational basis of new qubit $i$.

It is not hard to check that the new protocol is indeed a valid
coin-flipping protocol and that the honest probabilities of winning $P_A$
and $P_B$ are the same as in the original protocol.

Now we take a dual feasible point for the original protocol and fix
$\epsilon>0$. We will construct a dual feasible point for the new protocol
with $\beta'= \beta+n\epsilon$. For $i$ even define
\be
Z_{A,i}' = \ket{0}\bra{0} \otimes (Z_{A,i}+(n-i)\epsilon I) + \Lambda_i F_i,
\ee
\noindent
where $F_i$ is a projector onto the space such that at least one of the new
qubits labeled $i+1$ through $n$ is non-zero, and $\Lambda_i\geq 0$ is a
constant to be determined in a moment. By construction
$Z_{A,n}'=\ket{0}\bra{0} \otimes Z_{A,n} = \Pi_{A,1}'$ and $\beta' =
\bra{\psi_{A,0}'}Z_{A,0}'\ket{\psi_{A,0}'} =
\bra{\psi_{A,0}}Z_{A,0}+n\epsilon\ket{\psi_{A,0}}=\beta+n\epsilon$ as required.
We can also set $Z_{A,i-1}'=Z_{A,i}'$ for $i$ even, so that the only
constraint that remains to be checked is
\be
Z_{A,i-2}'\otimes I_\HM \geq
{U_{A,i-1}'}^\dagger \lp(Z_{A,i}'\otimes I_\HM\rp) U_{A,i-1}'.
\ee
We will describe a decomposition
$\HAP=\HH_1\oplus\HH_2\oplus\HH_3\oplus\HH_4$ such that both sides of the
above inequality are block diagonal with respect to it, and therefore we can
check the inequality on each block separately. The decomposition is
obtained by looking at the $n$ new qubits of $\HAP$ from last (qubit $n$)
to first (qubit $1$) and picking out the first non-zero qubit.
\begin{itemize}
\item $\HH_1$ contains vectors where the first non-zero qubit is one of
$i+1,\dots,n$.
\item $\HH_2$  contains vectors where the first non-zero qubit is either
$i-1$ or $i$, but excluding the vector where qubit $i-1$ is the only non-zero.
\item $\HH_3$ contains vectors where the first non-zero qubit is one of
$1,\dots,i-2$.
\item $\HH_4$ contains the space where all new qubits are zero or where
all qubits but qubit $i-1$ are zero.
\end{itemize}

On $\HH_1$ we have $F_{i-2}=F_{i}=I$ so the inequality reads
$\Lambda_{i-2}I\geq\Lambda_{i} I$, and is satisfied so long as $\Lambda_i$
is a decreasing sequence. On $\HH_2$ we have $F_{i-2}=I$ and $F_{i}=0$ so
the inequality reads $\Lambda_{i-2}I\geq 0$. On $\HH_3$ we have
$F_{i-2}=F_{i}=0$ so the inequality reads $0\geq 0$. Finally, $\HH_4$ is
the only space on which $M_{A,i-1}$ acts non-trivially. Writing
$X=Z_{A,i-2}\otimes I_\HM + (n-i+2)\epsilon I$ and $Y=Z_{A,i}\otimes I_\HM +
(n-i)\epsilon I$ and using $U\equiv U_{A,i-1}$, $E\equiv E_{A,i-1}$ we need
to check the block diagonal inequality
\be
\mypmatrix{X & 0 \cr 0 &
\Lambda_{i-2} I}
&\geq&
\mypmatrix{U^\dagger & 0 \cr 0 & U^\dagger}
\mypmatrix{E & I-E \cr I-E & E}
\mypmatrix{Y & 0 \cr 0 & 0}
\mypmatrix{E & I-E \cr I-E & E}
\mypmatrix{U & 0 \cr 0 & U}
\nonumber\\\nonumber
&& = \mypmatrix{U^\dagger E Y E U & U^\dagger E Y (I - E) U\cr
U^\dagger (I - E) Y E U & U^\dagger (I - E) Y (I - E) U}.
\ee
\noindent
From the constraints on the original dual feasible point we had
$Z_{A,i-2}\otimes I_\HM \geq U^\dagger E (Z_{A,i}\otimes I_\HM) E U =
U^\dagger E Y E U - (n-i)\epsilon E$ and therefore $X > U^\dagger E Y E
U$. In turn, this implies that for sufficiently large $\Lambda_{i-2}$ the
whole matrix inequality holds, and then we just need to make sure that
$\Lambda_{i-2}$ is also larger than $\Lambda_i$. We have not specified yet
$\Lambda_n$ but this one can be chosen to be zero as it effectively never
appears anywhere. That concludes the proof of the equivalence of protocols
with and without projective measurement.

\subsubsection{\label{sec:compiling}Compiling TDPGs into UBPs}

We had previously shown how to construct a TDPG out of a UBP. In this
section we will describe the reverse construction, thereby proving 
TDPGs and UBPs equivalent.

We assume that all transitions in the given TDPG alternate between
horizontal and vertical (i.e., if we have a sequence of two valid
horizontal transitions we can combine them into a single valid transition
by removing the middle step). We also assume that the first transition $p_0
\rightarrow p_1$ is vertical and the last one $p_{n-1}\rightarrow p_n$ is
horizontal (which can be accomplished by adding trivial transitions at the
beginning or end). All TDPGs obtained from UBPs have this form.

We will also need to assume that in the given TDPG all non-trivial
one-variable transitions are strictly valid. This is justified by the
following lemma.

\begin{lemma}
Given any TDPG $p_0,\dots,p_n$ with final point $[\beta,\alpha]$ and an
$\epsilon>0$, there exists a second TDPG $q_0,\dots,q_m$ with final point 
$[\beta+\epsilon/2,\alpha+\epsilon/2]$ such that every
non-trivial one-variable transition is strictly valid.
\end{lemma}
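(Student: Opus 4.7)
The central observation is that uniformly shifting every point on a horizontal line to the right by $\delta > 0$---replacing each $[x, y_0]$ on the line by $[x+\delta, y_0]$---is a strictly valid horizontal transition. To see this, note that every non-constant operator monotone function on $[0,\infty)$ is strictly increasing, which follows from its integral representation $f(z) = c_1 + c_2 z + \int \lambda z/(\lambda+z)\, dw(\lambda)$: either $c_2 > 0$ or $dw$ is nonzero, and in either case $f(z+\delta) > f(z)$ for $z\geq 0$. Summing against a weight with nonempty support therefore yields strict inequality. A symmetric argument handles vertical shifts. Moreover, shifting the domain of an operator monotone function on $[0,\infty)$ to the right still yields an operator monotone function on $[0,\infty)$, so applying a common uniform shift to both sides of any valid one-variable transition preserves its validity.

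My plan is to insert, after every non-trivial transition of the original TDPG, a tiny strict shift, while keeping all occupied lines synchronized under a common global offset. Maintain global horizontal and vertical shifts $\Xi, H \geq 0$, initialized to zero, with the invariant that at each moment every point of the new configuration is the corresponding original point shifted by $(\Xi, H)$. When processing a non-trivial horizontal transition of the original TDPG at height $y = c$---which lives at $y = c + H$ in the new game---I first apply the transition on that line with both sides displaced by $(\Xi, H)$, which is valid by the preservation observation. Then I choose a small $\delta_i > 0$ and uniformly shift every point on line $y = c + H$ to the right by $\delta_i$; this is strictly valid, and composed with the (merely valid) original transition, the combined transition is strictly valid. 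To restore the invariant that all occupied lines share the same $x$-shift, I uniformly shift each \emph{other} currently occupied horizontal line to the right by $\delta_i$ (one strictly valid transition per line---only finitely many lines are occupied at any moment), and update $\Xi \mapsto \Xi + \delta_i$. Vertical transitions are handled symmetrically; trivial transitions are left untouched.

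Choosing each $\delta_i$ small enough that $\sum_i \delta_i \leq \epsilon/2$ over all non-trivial horizontal transitions (and similarly over vertical ones), we end with $\Xi, H \leq \epsilon/2$. The final point has moved to $[\beta + \Xi, \alpha + H]$, and two further strict shifts---horizontal by $\epsilon/2 - \Xi$ and vertical by $\epsilon/2 - H$---bring it to $[\beta + \epsilon/2, \alpha + \epsilon/2]$. The main subtlety is the re-equalization step: without synchronizing the other lines, a future vertical transition acting on column $x = d + \Xi$ would need to touch points on several horizontal lines with different accumulated $x$-shifts, which a true vertical transition cannot accommodate. The preservation-of-validity observation ensures that the inserted equalizing shifts never spoil the validity of any subsequent original transition, and the finiteness of the TDPG bounds the total number of added transitions.
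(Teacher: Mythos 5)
Your proposal is correct and is essentially the paper's own argument: both convert valid transitions into strictly valid ones by composing each step with a small rightward/upward shift, using exactly the two facts you cite (non-constant operator monotone functions are strictly increasing, and $f(z+c)$ is again operator monotone for $c\geq 0$). The paper merely packages the synchronization more compactly---after normalizing the TDPG so that horizontal and vertical transitions alternate, it shifts the \emph{entire} configuration inside each transition, defining $q_i(x,y)=p_i\big(x-\lfloor i/2\rfloor\tfrac{\epsilon}{n},\,y-\lceil i/2\rceil\tfrac{\epsilon}{n}\big)$, so that no separate re-equalization transitions or final top-up raises are needed; your bookkeeping with $\Xi$, $H$ and inserted per-line shifts achieves the same thing (provided, as you indicate, the original transition and the $\delta_i$-shift of the active line are merged into a single transition of the new game, since the unshifted original step alone would be non-trivial but not strictly valid).
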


\begin{proof}
We construct the new TDPG by shifting up (or left) each set of points in
$q_i$ relative to its predecessor. More specifically let $\lceil i/2
\rceil$ and $\lfloor i/2 \rfloor$ be $i/2$ rounded up and down
respective. There are respectively the number of vertical and horizontal
transitions that have occurred to reach $p_i$.  Now define the new TDPG by
\be
q_i(x,y) = p_i(x-\lp\lfloor\frac{i}{2}\rp\rfloor\frac{\epsilon}{n},
y-\lp\lceil\frac{i}{2}\rp\rceil\frac{\epsilon}{n}).
\ee
The final point is $[\beta+\frac{\epsilon}{2},\alpha+\frac{\epsilon}{2}]$
as required.  Also each non-trivial transition is now strictly valid: for
instance, if $q_i\rightarrow q_{i+1}$ is a horizontal transition and
$y\in[0,\infty)$ such that $\sum_z q_{i}(z,\underline y)\neq0$, then (using
$y'=y-\lp\lceil\frac{i}{2}\rp\rceil\frac{\epsilon}{n}$)
\be
\sum_z p_{i+1}'(z,\underline y) f(z) 
&=& \sum_z p_{i+1}(z,\underline y') f(z +
\lp\lfloor\frac{i}{2}\rp\rfloor\frac{\epsilon}{n} + \frac{\epsilon}{n})
\\\nonumber
&>& \sum_z p_{i+1}(z,\underline y') f(z + 
\lp\lfloor\frac{i}{2}\rp\rfloor\frac{\epsilon}{n})
\geq \sum_z p_{i}(z,\underline y') 
f(z +\lp\lfloor\frac{i}{2}\rp\rfloor\frac{\epsilon}{n})
= \sum_z p_{i}'(z,\underline y) f(z)
\ee
\noindent
for every non-constant operator monotone $f$. The first inequality follows
because non-constant operator monotone functions are strictly monotone and
the second because $f(z +c)$ is operator monotone for $c\geq0$.
\end{proof}

\noindent
The next step in our argument relies on two lemmas which we state below
and prove in Appendix~\ref{sec:f2m}. They are the essential ingredient
which takes pairs of functions, such as those in a TDPG, and compiles them
back into the language of matrices.

\begin{lemma}
Let $p(z)$ and $q(z)$ be functions $[0,\infty)\rightarrow[0,\infty)$ with
finite support. If $p(z)\rightarrow q(z)$ is strictly valid then there
exists positive semidefinite matrices $X$ and $Y$ and an (unnormalized)
vector $\ket{\psi}$ such that $X\leq Y$, $p=\Prob(X,\ket{\psi})$ and
$q=\Prob(Y,\ket{\psi})$.
\label{lemma:f2mmain}
\end{lemma}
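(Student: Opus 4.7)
The plan is to translate the pointwise validity condition on $p, q$ into a matrix inequality via resolvents. The key observation is that if $p = \Prob(X,\ket\psi)$ and $q = \Prob(Y,\ket\psi)$, then the Stieltjes transforms $h_p(\lambda) := \sum_z p(z)/(\lambda+z) = \bra\psi(\lambda+X)^{-1}\ket\psi$ and similarly for $h_q$, and since $-1/(\lambda+z)$ is operator monotone on $[0,\infty)$ (as noted in the previous section), the desired matrix inequality $X \le Y$ automatically forces $h_p(\lambda) \ge h_q(\lambda)$ for every $\lambda \in (0,\infty)$. The lemma is essentially a converse to this observation.

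First I would reformulate strict validity analytically. Using the integral representation $f(z) = c_1 + c_2 z + \int_0^\infty \frac{\lambda z}{\lambda+z}\,dw(\lambda)$ of operator-monotone $f$, together with conservation $\sum p = \sum q$ and the identity $\frac{\lambda z}{\lambda+z} = \lambda - \frac{\lambda^2}{\lambda+z}$, strict validity of $p\to q$ is equivalent to the pair of conditions: $h_p(\lambda) > h_q(\lambda)$ for every $\lambda \in (0,\infty)$, and $\sum_z z\,p(z) < \sum_z z\,q(z)$ (the $\lambda\to\infty$ endpoint, corresponding to the extremal ray $f(z)=z$).

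Next I would pick the obvious $(X,\ket\psi)$ realizing $p$: let $X$ be diagonal on a space spanned by an orthonormal basis $\{\ket{e_i}\}$ indexed by $\mathrm{supp}(p)$, with $X\ket{e_i} = x_i\ket{e_i}$, and set $\ket\psi = \sum_i \sqrt{p(x_i)}\,\ket{e_i}$. Then $\Prob(X,\ket\psi)=p$ and $h_p$ comes out right. The substance is constructing $Y \ge X$ on a (possibly enlarged) Hilbert space with $\Prob(Y,\ket\psi)=q$. The strategy is to build $Y$ incrementally by installing its target eigenvalues one at a time: at each step, enlarge the space by one dimension and apply a rank-one positive perturbation of the current operator of the form $Y' = Y + t\,\ket\phi\bra\phi$ with $t > 0$, choosing $\ket\phi$ so that the new eigenvalue is placed in $\ket\psi$'s spectral decomposition with the prescribed weight $q(y_j)$. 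Such rank-one perturbations have explicit effects on resolvent matrix elements, and the strict gap $h_p > h_q$ on all of $(0,\infty)$ from the first step is precisely the reserve needed to keep each $t$ strictly positive, so that the accumulated $Y - X$ is positive semidefinite.

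The main obstacle is the inductive bookkeeping in the third step: after each rank-one perturbation one must verify that the analogous strict gap between the current partial Stieltjes transform and the remaining target still holds, so that the next step can again be performed with a positive perturbation while placing the next eigenvalue and weight exactly as required by $q$. This involves the interlacing structure of the poles of the rational functions $h$ under rank-one perturbations, and the quantitative slack supplied by strict validity — as opposed to mere validity — is what keeps all perturbation parameters positive throughout the iteration. Once the iteration terminates, the construction yields $X \le Y$ with $p = \Prob(X,\ket\psi)$ and $q = \Prob(Y,\ket\psi)$, as required.
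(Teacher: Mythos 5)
There is a genuine gap: the entire content of the lemma lies in the inductive step that you explicitly set aside as ``the main obstacle,'' and the one sentence offered in its place --- that the strict gap $h_p(\lambda)>h_q(\lambda)$ is ``precisely the reserve needed to keep each $t$ strictly positive'' --- is not justified and does not survive scrutiny. The slack function $\sum_z \frac{\lambda z}{\lambda+z}\bigl(q(z)-p(z)\bigr)$ is strictly positive on $(0,\infty)$ but is \emph{not} uniformly bounded below in the relevant sense: it degenerates at the ends of the parameter range (e.g.\ for exact point splits it vanishes in the $\lambda\to 0$ limit, and it always decays as $\lambda\to\infty$), and after any finite number of your perturbations the residual slack is a rational function that can acquire interior zeros, possibly of high order. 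The actual proof in the paper is organized around exactly this difficulty: strict validity is used only once, to pass from the domain $[0,\infty)$ to a compact interval (Lemma~\ref{lemma:inf2lamb}, then a rescaling to $[-1,1]$) --- without which the statement can fail for finite matrices --- and thereafter each rank-one direction $\ket{\phi}$ must be chosen from a subspace on which the perturbation's effect vanishes to sufficiently high order at every zero of the residual slack (Lemmas~\ref{lemma:poly} and~\ref{lemma:vecadd}); the existence of such a direction is a counting argument balancing the number of zeros against the support sizes (Lemma~\ref{lemma:ppq}); termination requires monotone invariants (zeros can only increase, supports only shrink); and the degenerate case of odd-order zeros at both endpoints cannot be handled inside the same space at all and forces an enlargement by an extra $\pm1$ eigenvector plus a compactness/extremality argument (Lemma~\ref{lemma:oddpolybound} and the final lemmas of Appendix~\ref{sec:f2m}). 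None of these mechanisms, nor substitutes for them, appear in your plan, and ``interlacing'' by itself does not supply them.

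A secondary problem is the specific induction you propose. A positive rank-one update generically moves \emph{every} eigenvalue and mixes all eigenvectors, so to preserve the eigenvalues and weights already ``installed'' you must take $\ket{\phi}$ orthogonal to the finalized eigenvectors; it is then not shown that, within the remaining block, one can always realize the next point of $S(q)$ with exactly the weight $q(y_j)$ by a \emph{positive} update while keeping the residual placement problem feasible --- this is an induction hypothesis you would have to formulate and prove, and it is essentially as hard as the lemma itself. You also need to be careful that points of $S(q)$ lying below the support of $p$ can only be reached by first adjoining ancillary zero eigenvalues to $X$ and raising them (this is why the paper's standard form in Lemma~\ref{lemma:stdformmain} has spectrum $\{0\}\cup S(p)$); ``enlarge the space by one dimension'' per step does not by itself address which eigenvalue the new dimension carries or how $Y\geq X$ is maintained through that bookkeeping. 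As it stands the proposal is a plausible opening move (the resolvent reformulation of strict validity and the canonical representation of $p$ are both correct and match the paper), but the proof of the lemma is missing.
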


\begin{lemma}
The matrices $X$ and $Y$ in Lemma~\ref{lemma:f2mmain} can be chosen such that
\begin{enumerate}
\item The spectrum of $X$ is equal to $\{0\}\cup S(p)$,
with all non-zero eigenvalues occurring once.
\item The spectrum of $Y$ is equal to $\{\Lambda\}\cup S(q)$,
for some large $\Lambda>0$, with all other eigenvalues occurring once.
\item The dimension of $X$ and $Y$ is no greater than 
$|S(p)|+|S(q)|-1$.
\end{enumerate}
\noindent
where $S(p)$ and $S(q)$ are respectively the supports of $p$ and $q$.
\label{lemma:stdformmain}
\end{lemma}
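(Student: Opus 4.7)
The starting point is the matrices $X_0,Y_0$ and vector $|\psi_0\rangle$ provided by Lemma~\ref{lemma:f2mmain}, which already satisfy $X_0\leq Y_0$, $\Prob(X_0,|\psi_0\rangle)=p$, and $\Prob(Y_0,|\psi_0\rangle)=q$. The task is to reduce these to a canonical form with the prescribed spectra and the bounded dimension, so the plan is a dimensional reduction followed by a spectral adjustment.

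The dimension bound is the most transparent part. For each eigenvalue $x\in\eig(X_0)$, the probability $p(x)$ depends only on the vector $|u_x\rangle=\Pi^{[x]}_{X_0}|\psi_0\rangle$, and likewise $q(y)$ depends only on $|v_y\rangle=\Pi^{[y]}_{Y_0}|\psi_0\rangle$. So the ``relevant'' subspace is $V=\vspan\{|u_x\rangle : x\in S(p)\}+\vspan\{|v_y\rangle : y\in S(q)\}$, which contains $|\psi_0\rangle$. These $|S(p)|+|S(q)|$ vectors satisfy the nontrivial linear relation $\sum_x |u_x\rangle=|\psi_0\rangle=\sum_y |v_y\rangle$, so $\dim V\leq |S(p)|+|S(q)|-1$. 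Passing to a Hilbert space $\HHP$ of this dimension, one reconstructs $|\psi\rangle$ and compressed operators $X,Y$ on $\HHP$ that reproduce the same Gram data among the eigenspace projections, and therefore the same values of $\Prob(X,|\psi\rangle)=p$ and $\Prob(Y,|\psi\rangle)=q$.

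The spectral conditions are arranged next. After compression $X$ inherits eigenvalues only in $S(p)$, but possibly with multiplicities larger than one; however, $|\psi\rangle$ now picks out a single direction inside each such eigenspace, so repeated nonzero eigenvalues can be collapsed into one-dimensional eigenspaces without changing $\Prob(X,|\psi\rangle)$. The dimensions freed by this collapse are then reassigned the eigenvalue $0$, guaranteeing $0\in\eig(X)$. An identical collapse is performed for $Y$'s eigenvalues in $S(q)$, with the freed-up dimensions reassigned the eigenvalue $\Lambda$ for a large free parameter.

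The main obstacle is to verify that $X\leq Y$ survives all of these manipulations. On the freed-up dimensions the $\Lambda$-eigenspace of $Y$ contributes a large positive term that dominates the corresponding (bounded) portion of $X$, so the inequality is automatic there once $\Lambda$ is large enough. On the remaining ``active'' dimensions the ordering must be inherited from the compression of $X_0\leq Y_0$, and it is here that the strict validity hypothesis of Lemma~\ref{lemma:f2mmain} plays its essential role: strict validity provides positive slack in the original ordering on the subspace spanned by the projections of $|\psi_0\rangle$, and this slack is what absorbs the perturbations introduced when the repeated eigenvalues are collapsed. Bookkeeping these two effects carefully, and checking that the resulting $X$ and $Y$ still satisfy $X\leq Y$ globally on $\HHP$, is the technical heart of the proof.
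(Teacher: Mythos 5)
There is a genuine gap, and it sits exactly where you place "the technical heart of the proof." Two problems. First, your claim that after compressing to $V$ the operator $X$ "inherits eigenvalues only in $S(p)$, possibly with multiplicities" is false: $V$ is not an invariant subspace of $X_0$ (it contains the vectors $\ket{v_y}$, which $X_0$ need not map back into $V$), so the compression $P_V X_0 P_V$ generically acquires new eigenvalues anywhere in the numerical range of $X_0$, not just in $S(p)$. Hence the real task is not "collapsing repeated eigenvalues" but eliminating spurious eigenvalues outside $S(p)\cup\{0\}$ (resp.\ $S(q)\cup\{\Lambda\}$) while keeping $X\leq Y$. Second, the mechanism you invoke to preserve the ordering — a "positive slack" coming from strict validity — does not exist in the form you need: Lemma~\ref{lemma:f2mmain} only delivers $X_0\leq Y_0$, with no strict operator gap on any subspace (the two matrices may well touch along common eigenvectors), and strict validity of the scalar transition does not imply one. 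So the inequality after your surgery is asserted, not proved.

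The missing idea, which is how the paper proceeds, is that no slack is needed if you only ever move $X$ \emph{down} and $Y$ \emph{up}. Writing $\ket{\psi}=\sum_z\sqrt{p(z)}\ket{z;X}$, let $\Pi_X$ project onto $\vspan\{\ket{z;X}\}$ and set $X'=\Pi_X X\Pi_X+0\cdot(I-\Pi_X)$; since $\Pi_X$ commutes with $X$ and $0$ is the smallest allowed value, $X'\leq X$, $X'$ has spectrum $\{0\}\cup S(p)$ with each element of $S(p)$ simple, and $\Prob(X',\ket{\psi})=p$. Symmetrically $Y'=\Pi_Y Y\Pi_Y+\Lambda(I-\Pi_Y)\geq Y$ for $\Lambda$ at least the top of the spectrum. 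Then $X'\leq X\leq Y\leq Y'$ is automatic, both $X'$ and $Y'$ are block diagonal with respect to the joint span of $\{\ket{z;X}\}\cup\{\ket{z;Y}\}$ (so one restricts exactly, rather than compresses), and your dimension count $\leq|S(p)|+|S(q)|-1$ — which is correct and is the same as the paper's — finishes the argument. Your compression step can in fact be salvaged (it does preserve $\Prob$ and the ordering), but only by then applying this same push-down/push-up replacement; as written, the collapse-and-perturb scheme rests on a false spectral premise and leaves the key inequality unestablished.
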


Because we are now assuming that every non-trivial one-variable transition
is strictly valid, we can use Lemma~\ref{lemma:f2mmain} to turn the
transitions into matrices from which we will extract unitaries. However,
first we need to standardize the Hilbert space on which all of these
matrices are defined.

Let us fix a finite set $S$ of non-negative numbers, and assume we are
given a strictly valid transition $p\rightarrow q$ such that the supports
of both $p$ and $q$ are contained in $S$. What we want to argue is that we
can choose $X$ and $Y$ of Lemma~\ref{lemma:f2mmain} so that
their spectrum is exactly $S$ (union zero for $X$ or union some large value
$\Lambda$ for $Y$) and such that the only degenerate eigenvalues are zero
for $X$ and $\Lambda$ for $Y$.

The argument is simple, we start with $X$ and $Y$ satisfying the
requirements of Lemma~\ref{lemma:stdformmain}. If the $\Lambda$ appearing
in $Y$ is not larger than the maximum of $S$ we can simply increase it. Now
we just start appending in the missing eigenvalues from $S$ one at a time
(increasing the dimension of the matrices by using a direct sum). If some
value $c\in S$ is in $X$ but not in $Y$ we can append it to $Y$ if at the
same time we append a zero to $X$. Similarly if $c\in S$ is in $Y$ but not
in $X$ we can append it to $X$ if at the same time we append an extra
$\Lambda$ eigenvalue to $Y$. If $c\in S$ appears in neither matrix we
append it to both at the same time. The dimension of the new matrices so
constructed is no larger than $2|S|$. The dimension can be made exactly
equal to $2|S|$ by appending in extra zeros to $X$ and $\Lambda$s to $Y$.

To extract unitaries from these matrices, note that given any basis, we can
find a unitary $U$ such that $U X U^\dagger$ is diagonal and $U\ket{\psi}$
has non-negative coefficients with respect to this basis. In particular, if
we choose the basis that diagonalizes $Y$ and such that $\ket{\psi}$ has
non-negative coefficients, then we can find such a unitary $U$ and get
\be
Y_d \geq U^\dagger X_d U,
\ee
\noindent
where $X_d$ and $Y_d$ are diagonal in the computational basis with the same
spectrum as $X$ and $Y$ respectively. Additionally, by construction the
non-negative coefficients in the computational basis must have the form
$\ket{\psi}=\sum_i \sqrt{q_i}\ket{i}$ and
$U\ket{\psi}=\sum_i\sqrt{p_i}\ket{i}$. Putting everything together, we
obtain the following rather surprising lemma.

\begin{lemma}
Let $S$ be a finite set of non-negative numbers. Let $p\rightarrow q$ be a
strictly valid transition such that the support of both $p(z)$ and $q(z)$ are
contained in $S$. Let $\HH$ be the Hilbert space spanned by
$\{\ket{i,s}:i\in\{0,1\},s\in S\}$ and define
\be
Z = \sum_{s\in S} s \ket{0,s}\bra{0,s}.
\ee
Then there exists a sufficiently large number $\Lambda>0$ and a unitary $U$
such that
\be
U \sum_{s\in S} \sqrt{q(s)} \ket{0,s} = \sum_{s\in S} \sqrt{p(s)} \ket{0,s}
\qquad\text{and}\qquad
Z + \Lambda P_1 \geq U^\dagger Z U,
\label{eq:magicU}
\ee
\noindent
where $P_1=\sum_s\ket{1,s}\bra{1,s}$ is the projector onto the space where
the first qubit is one.
\label{lemma:magicL}
\end{lemma}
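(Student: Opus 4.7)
The plan is to take the lemma as essentially a bookkeeping repackaging of what the preceding discussion has already constructed, and just organize the ingredients into the exact form demanded by the statement. The input is a strictly valid transition $p\to q$ with both supports in a fixed finite set $S\subset[0,\infty)$; the output should be a pair $(U,\Lambda)$ acting on the $2|S|$-dimensional space $\HH=\vspan\{\ket{i,s}:i\in\{0,1\},s\in S\}$, matching the spectral and vector data of the transition.

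First I would invoke Lemma~\ref{lemma:f2mmain} to obtain positive semidefinite $X\leq Y$ and a vector $\ket{\psi}$ with $p=\Prob(X,\ket{\psi})$ and $q=\Prob(Y,\ket{\psi})$, and then refine the choice by Lemma~\ref{lemma:stdformmain} so that the spectrum of $X$ is $\{0\}\cup S(p)$ and of $Y$ is $\{\Lambda_0\}\cup S(q)$, with $0$ and $\Lambda_0$ the only degenerate eigenvalues. Next I would standardize the spectra to $S$ using the direct-sum trick described just above the lemma: if an element $c\in S$ is missing from $X$ I enlarge $X$ by $\diag(c)$ while enlarging $Y$ by $\diag(\Lambda_0)$, and symmetrically for $c$ missing from $Y$, taking $\Lambda$ strictly larger than $\max S$ if necessary. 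At the end $X$ has spectrum $S\cup\{0\}$ and $Y$ has spectrum $S\cup\{\Lambda\}$, each eigenvalue in $S$ occurring exactly once, and the matrices have dimension $2|S|$. The inequality $X\leq Y$ and the probability representations $p=\Prob(X,\ket\psi)$, $q=\Prob(Y,\ket\psi)$ are preserved because the added blocks kill $\ket\psi$.

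Now I would identify $\HH$ with the ambient space by placing the eigenvalues on the promised basis vectors. Concretely, I fix an orthonormal basis that diagonalizes $Y$ with each eigenvalue $s\in S$ sitting on $\ket{0,s}$ and the $|S|$-fold eigenvalue $\Lambda$ sitting on the $\ket{1,s}$ subspace; then $Y$ is exactly $Z+\Lambda P_1$. The crucial point is that $q$ is supported on $S$ and $\Lambda\notin S$, so $\bra\psi\Pi^{[\Lambda]}\ket\psi=0$, i.e.\ $\ket\psi$ has no component on any $\ket{1,s}$; meanwhile each $s$-eigenspace of $Y$ is one-dimensional, so $|\bra{0,s}\psi|^2=q(s)$. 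After absorbing phases into the eigenvectors I may assume $\ket\psi=\sum_{s}\sqrt{q(s)}\ket{0,s}$. Finally, pick any unitary $U$ that simultaneously diagonalizes $X$ with eigenvalue $s\in S$ on $\ket{0,s}$ and eigenvalue $0$ on $\ket{1,s}$, i.e.\ $U X U^\dagger=Z$, equivalently $U^\dagger Z U=X$, and such that $U\ket\psi$ has non-negative coefficients in the computational basis; the remaining freedom consists precisely of signs on each eigenvector, so this is possible. Since $p=\Prob(X,\ket\psi)$ and $X$ has the same spectral labelling as $Z$, the vector $U\ket\psi$ must equal $\sum_s\sqrt{p(s)}\ket{0,s}$, giving the first equation of \eqref{eq:magicU}, and $Z+\Lambda P_1=Y\geq X=U^\dagger Z U$ gives the second.

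The only delicate point, which is where I would focus the writing, is verifying that the standardization-by-direct-sum step can truly be performed while keeping $\ket\psi$ unchanged and the spectral multiplicities correct; everything else is a routine rearrangement of bases. Once that bookkeeping is done the two conclusions of the lemma fall out immediately from $X\leq Y$ and from matching the spectral data of $X$ and $Y$ to those of $Z$ and $Z+\Lambda P_1$ respectively.
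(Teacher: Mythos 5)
Your proposal follows the paper's own route essentially verbatim: invoke Lemmas~\ref{lemma:f2mmain} and~\ref{lemma:stdformmain}, pad by direct sums so that $X$ has spectrum $S$ plus extra zeros and $Y$ has spectrum $S$ plus extra $\Lambda$'s (with $\Lambda>\max S$), then pick the basis diagonalizing $Y$ with non-negative coefficients for $\ket{\psi}$ so that $Y=Z+\Lambda P_1$ and $\ket{\psi}=\sum_s\sqrt{q(s)}\ket{0,s}$, and finally a unitary $U$ with $U^\dagger Z U=X\leq Y$ and $U\ket{\psi}=\sum_s\sqrt{p(s)}\ket{0,s}$. The only slip is the remark that the residual freedom in $U$ is ``precisely signs'': when $0\in S$ the zero eigenvalue of $Z$ is degenerate, and one needs (and has) the full unitary freedom on that eigenspace to put the zero-eigenspace weight of $U\ket{\psi}$ onto $\ket{0,0}$ rather than the $\ket{1,s}$ directions --- a gloss the paper itself also makes, so the argument stands.
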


\noindent
Note that the result is non-trivial. If $p\rightarrow q$ is not valid 
then no such unitary exists.

In the protocol below we will essentially be able to choose all our dual
operators equal to $\sum_{s\in S} s \ket{0,s}\bra{0,s} + \Lambda P_1$. We
then use our projective measurements to reset the eigenvalue $\Lambda$ to
zero on every transition so we can apply the above lemma. The unitaries,
though, require more care. During every bipartite transition
$p_i(x,y)\rightarrow p_{i+1}(x,y)$ we have many one-variable strictly valid
transitions $p_i(x,\underline y)\rightarrow p_{i+1}(x,\underline y)$ during
Alice's turn (or $p_i(\underline x,y)\rightarrow p_{i+1}(\underline x, y)$
during Bob's turn), each of which defines a different unitary in the above
lemma.What Alice needs to do on her turn is to apply a block diagonal
unitary with each block corresponding to a different strictly valid
transition. In other words she needs to be able to apply a controlled
unitary with control given by Bob's state. That is what the messages are
used for. Bob will store his state in an entangled subspace of
$\HM\otimes\HB$ so that Alice can use it as control but not change it too
much. Similarly, Alice will store her state in an entangled subspace of
$\HA\otimes\HM$ so that Bob can access it. This construction is similar to
one used in
\cite{KMP04}.

We are now ready to construct the protocol. Fix a TDPG by $p_0,\dots,p_n$
with strictly valid transitions and final point $[\beta,\alpha]$. The
protocol will be defined with the same $n$ representing the number of
messages.

To define the relevant Hilbert spaces let $S_A$ (resp. $S_B$) be the finite
set of $x$ coordinates (resp. $y$ coordinates) of points that are assigned
non-zero probability by $p_i(x,y)$ for some $i$. By construction
$0,1,\beta\in S_A$ and $0,1,\alpha \in S_B$. Set
\be
\HA &=& \vspan\{\ket{i,s_a}:i\in\{0,1\},s_a\in S_A\},\\
\HM &=& \vspan\{\ket{s_a,s_b}:s_a\in S_A,s_b\in S_B\},\\
\HB &=& \vspan\{\ket{s_b,i}:s_b\in S_B,i\in\{0,1\}\}.
\ee
\noindent
It will ocassionally be useful to write $\HM=\HAP\otimes\HBP$ where
$\HAP = \vspan\{\ket{s_a}:s_a\in S_A\}$ and 
$\HBP = \vspan\{\ket{s_b}:s_b\in S_B\}$.

The initial states will be
\be
\ket{\psi_{0,A}} = \ket{0,\beta},\qquad
\ket{\psi_{0,M}} = \ket{\beta,\alpha},\qquad
\ket{\psi_{0,B}} = \ket{\alpha,0}.
\ee
The projections that follow the unitaries will have the form
\be
E_{A,i} = E_A &\equiv& \sum_{s_a\in S_A} \ket{0,s_a}\bra{0,s_a}
\otimes \ket{s_a}\bra{s_a}\otimes I_\HBP,\\
\qquad
E_{B,i} = E_B &\equiv& I_{\HAP}\otimes \sum_{s_b\in S_B} \ket{s_b}\bra{s_b}
\otimes \ket{s_b,0}\bra{s_b,0},
\ee
\noindent
where $E_{A,i}$ is defined for $i$ odd and $E_{B,i}$ for $i$
even. Basically $E_A$ acts on $\HA\otimes\HM$ and ensures that 
the first qubit is $0$ and the registers $s_a$ in $\HA$ and $\HM$ agree.

The final measurement operators will have the form
\be
\Pi_{A,1} = \ket{0,1}\bra{0,1},\qquad \Pi_{B,0} = \ket{1,0}\bra{1,0}
\ee
\noindent
with $\Pi_{A,0}=I-\Pi_{A,1}$ and $\Pi_{B,1}=I-\Pi_{B,0}$.

All that remains is to describe the unitaries, which will be done in a
moment. The unitaries are to be chosen so that the honest state during the
$i$th messages is
\be
\ket{\psi_i} = \sum_{s_a\in S_A, s_b\in S_B} \sqrt{p_{n-i}(s_a,s_b)}\,
\ket{0,s_a}\otimes\ket{s_a,s_b}\otimes\ket{s_b,0},
\label{eq:TDPGpsi}
\ee
\noindent
where we remind the reader of our ``reverse time'' convention of TDPGs
relative to protocols. The above definition agrees with our choice of
initial state $\ket{\psi_0}$. It is also easy to see that the projection
operations that follow the unitaries will always succeed if both players are
honest. Finally, we have $\ket{\psi_n}= P_B\ket{0,1,1,0,0,0} +
P_A\ket{0,0,0,1,1,0}$ so Alice and Bob will agree on the coin outcome,
which will have the required probability distribution.

Before defining the unitaries, we fix a single $\Lambda>0$ larger than all
elements of $S_A$ and $S_B$ and large enough so that all the strictly valid
transitions in the given TDPG can be turned into unitaries satisfying
the constraints of Eq.~(\ref{eq:magicU}).

To construct the unitaries for Alice it is useful to define the subspace
$\HAB\subset\HA\otimes\HAP$ given by
$\HAB=\{\ket{i,s_a,s_a}:i\in\{0,1\},s_a\in S_A\}$ and let $P_\HAB^\perp$ be
the the projector onto the complement of $\HAB$ in $\HA\otimes\HAP$.  Let
$i$ be odd so that we can build $U_{A,i}$ out of the horizontal transition
$p_{n-i}\rightarrow p_{n-i+1}$. The unitary will be block diagonal of the
form
\be
U_{A,i} = \sum_{s_b\in S_B} \lp(U_{A,i}^{(s_b)} + P_\HAB^\perp \rp)
\otimes \ket{s_b}\bra{s_b},
\ee
\noindent
where $U_{A,i}^{(s_b)}$ can be viewed as a unitary operator on $\HAB$.
If $p_{n-i}(s_a,\underline{s_b})= p_{n-i+1}(s_a,\underline{s_b})$ then we
choose $U_{A,i}^{(s_b)}=I$ otherwise
$p_{n-i}(s_a,\underline{s_b})\rightarrow p_{n-i+1}(s_a,\underline{s_b})$
is strictly valid and by Lemma~\ref{lemma:magicL} we can choose
$U_{A,n-i}^{(s_b)}$ on $\HAB$ such that
\be
\bar{Z}_A + \Lambda \bar{P}_1
\geq {U_{A,i}^{(s_b)}}^\dagger \bar{Z}_A U_{A,i}^{(s_b)}
\ee
for $\bar{Z}_A = \sum_{s_a} s_a \ket{0,s_a,s_a}\bra{0,s_a,s_a}$,
$\bar{P}_1 = \sum_{s_a} \ket{1,s_a,s_a}\bra{1,s_a,s_a}$ and such that
\be
U_{A,i}^{(s_b)}\sum_{s_a\in S_A}\sqrt{p_{n-i+1}(s_a,s_b)}\ket{0,s_a,s_a}
= \sum_{s_a\in S_A}\sqrt{p_{n-i}(s_a,s_b)}\ket{0,s_a,s_a}.
\ee
We can now directly verify the equation $U_{A,i}\otimes I_\HB
\ket{\psi_{i-1}} = \ket{\psi_{i}}$ with states given by
Eq.~(\ref{eq:TDPGpsi}). The unitaries $U_{B,i}$ for Bob are defined
analogously, and otherwise we have completed the description of the
coin-flipping protocol associated to the TDPG.

What remains is to describe dual feasible points for the above protocol
that prove the bounds $P_A^*\leq\alpha$ and $P_B^*\leq\beta$. If we define
the operators
\be
Z_A &=& \sum_{s_a} s_a \ket{0,s_a}\bra{0,s_a} + \Lambda \sum_{s_a}
\ket{1,s_a}\bra{1,s_a}\\
Z_B &=& \sum_{s_b} s_b \ket{0,s_b}\bra{0,s_b} + \Lambda \sum_{s_b}
\ket{1,s_b}\bra{1,s_b}
\ee
on $\HA$ and $\HB$ respectively. The desired dual feasible points are given
by $Z_{A,i}=Z_{A}$ and $Z_{B,i}=Z_{B}$ for all $i$ except that we must set
$Z_{A,n}=Z_{A,n-1}=\Pi_{A,1}$ and $Z_{B,n}=\Pi_{B,0}$ as required by our
slightly inflexible definitions. As usual, we will only verify the case of
Alice honest and Bob cheating as the other case is nearly identical.

We trivially have $Z_{A,0}\ket{\psi_{A,0}}=\beta \ket{\psi_{A,0}}$. 
The main constraint that we need to verify is
\be
Z_{A,i-1}\otimes I_\HM \geq U_{A,i}^\dagger E_{A} \lp( 
Z_{A,i}\otimes I_\HM \rp)
E_{A} U_{A,i}
\label{eq:TDPGdual}
\ee
\noindent
for $i$ odd. The special case of $i=n-1$ will be proven if we show that the
above inequality holds with $Z_{A,n-1}=Z_A$ because $Z_A\geq \Pi_{A,1}$ and
inequalities are transitive.

First we note that in all cases
\be
E_{A} \lp( Z_{A,i}\otimes I_\HM \rp) E_{A} = \bar{Z}_A \otimes I_\HBP
\equiv \sum_{s_a} s_a \ket{0,s_a,s_a}\bra{0,s_a,s_a}\otimes I_\HBP,
\ee
\noindent
where $\bar{Z}_A$ has support on $\HAB$. The unitary $U_{A,i}$ maps
$\HAB\otimes\HBP$ to itself, so the right hand side of
Eq.~(\ref{eq:TDPGdual}) has support on $\HAB\otimes\HBP$. The operator
$Z_{A,i-1}\otimes I_\HM$ is block diagonal with respect to the
decomposition of $\HA\otimes\HM$ into $\HAB\otimes\HBP$ and its complement,
and so the inequality is trivially satisfied in the latter space. In
$\HAB\otimes\HBP$ what remains to be shown is
\beq
\lp(\bar{Z}_A + \Lambda \sum_{s_a\in S_A} \ket{1,s_a,s_a}\bra{1,s_a,s_a}\rp)
\otimes I_\HBP \geq U_{A,i}^\dagger 
\lp( \bar{Z}_A \otimes I_\HBP \rp) U_{A,i} =
\sum_{s_b\in S_B} {U_{A,i}^{(s_b)}}^\dagger \bar{Z}_A U_{A,i}^{(s_b)}
\otimes\ket{s_b}\bra{s_b} 
\eeq
\noindent
and the inequality follows from the definition of $U_{A,i}^{(s_b)}$.

What we have proven is that we can take a TDPG with strictly valid
transitions and final point $[\beta,\alpha]$ and turn into a UBP with
projections with bound $(\beta,\alpha)$. However, in the last section we
proved that UBPs with projections come arbitrarily close to regular UBPs,
and at the top of this section we proved that TDPGs with strictly valid
transitions come arbitrarily close to any arbitrary TDPGs. Therefore, we
have proven that TDPGs are equivalent to UBPs, which we state formally as
an extended version of Theorem~\ref{thm:kitmain1}:

\begin{theorem}
Let $f(\beta,\alpha):\R\times\R\rightarrow\R$ be a function such that
$f(\alpha',\beta')\geq f(\alpha,\beta)$ whenever $\alpha'\geq\alpha$ and
$\beta'\geq\beta$, then
\be
\inf_{\text{proto}} f(P_B^*,P_A^*) = \inf_{UBP} f(\beta,\alpha)
= \inf_{TDPG} f(\beta,\alpha),
\ee
\noindent
where the left optimization is carried out over all coin-flipping protocols,
the middle one is carried out over all upper-bounded protocols, and the
right one is carried out over all time dependent point games.
\label{thm:kitmain2}
\end{theorem}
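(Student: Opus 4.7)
The plan is to reduce the claim to two matching inequalities between the three infima. The first equality, $\inf_{\text{proto}} f(P_B^*,P_A^*) = \inf_{UBP} f(\beta,\alpha)$, is exactly Theorem~\ref{thm:kitmain1}, so all the new work goes into proving $\inf_{UBP} f(\beta,\alpha) = \inf_{TDPG} f(\beta,\alpha)$. For the easy direction $\inf_{TDPG} f \leq \inf_{UBP} f$, I would simply invoke the construction that was carried out explicitly in Section~\ref{sec:TDPG}: given any UBP, the sequence $p_i=\Prob(Z_{A,n-i},Z_{B,n-i},\ket{\psi_{n-i}})$ starts at $P_B[1,0]+P_A[0,1]$, ends at $1[\beta,\alpha]$, and every transition was verified to be valid by decomposing $\ket{\psi_{n-i-1}}$ along eigenstates of the untouched dual operator and applying Eq.~(\ref{eq:validproof}) slice by slice. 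Since $f$ is monotone this transfers bounds directly.

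The reverse inequality $\inf_{UBP} f \leq \inf_{TDPG} f$ is the substantive content, and I would assemble it from the three-step construction already laid out. Fix a TDPG with final point $[\beta,\alpha]$ and an $\epsilon>0$. First, apply the shifting lemma of Section~\ref{sec:compiling} to obtain a new TDPG whose every non-trivial one-variable transition is strictly valid, at the price of pushing the final point to $[\beta+\epsilon/2,\alpha+\epsilon/2]$. Second, run the compiling procedure of Section~\ref{sec:compiling}: choose $\Lambda$ uniformly large, set up the Hilbert spaces $\HA,\HM,\HB$ indexed by $S_A\times S_B$, define the honest state~(\ref{eq:TDPGpsi}), the projectors $E_A,E_B$, and the block-diagonal unitaries $U_{A,i}=\sum_{s_b}(U_{A,i}^{(s_b)}+P_{\HAB}^\perp)\otimes\ket{s_b}\bra{s_b}$ whose individual blocks are furnished by Lemma~\ref{lemma:magicL}. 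The dual feasible point $Z_A=\sum s_a\ket{0,s_a}\bra{0,s_a}+\Lambda\sum\ket{1,s_a}\bra{1,s_a}$ (and analogously $Z_B$) then satisfies the required inequality~(\ref{eq:TDPGdual}) on each $s_b$-block, exactly because Lemma~\ref{lemma:magicL} was tailored to guarantee $\bar Z_A+\Lambda\bar P_1\geq {U_{A,i}^{(s_b)}}^\dagger\bar Z_A U_{A,i}^{(s_b)}$. This yields a UBP-with-projections of bound $(\beta+\epsilon/2,\alpha+\epsilon/2)$.

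Third, invoke the projections-to-unitary reduction of Section~\ref{sec:proj}: pick $\epsilon'>0$ with $n\epsilon'<\epsilon/2$, apply the explicit $Z_{A,i}'=\ket{0}\bra{0}\otimes(Z_{A,i}+(n-i)\epsilon' I)+\Lambda_i F_i$ construction from that section (choosing the $\Lambda_i$ large enough and decreasing so that the four-block inequality on $\HH_1,\HH_2,\HH_3,\HH_4$ goes through), and obtain a bona fide UBP whose bound lies within $n\epsilon'<\epsilon/2$ of the previous one, i.e.\ within $\epsilon$ of $(\beta,\alpha)$. Monotonicity of $f$ then gives $\inf_{UBP}f\leq f(\beta+\epsilon,\alpha+\epsilon)$ for every $\epsilon>0$, which combined with the arbitrariness of $\epsilon$ and the monotonicity of $f$ delivers $\inf_{UBP}f\leq\inf_{TDPG}f$.

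The main obstacle is Step~2, the compiling of a TDPG into a UBP-with-projections. Everything else is bookkeeping: the shifting lemma is a one-paragraph change of variables, and the projection-removal reduction is mechanical block-diagonal algebra. The hard part is that one must produce, simultaneously for all $s_b$-slices of a single horizontal transition, unitaries that (i) transform the honest TDPG weights into each other as in Eq.~(\ref{eq:magicU}), (ii) respect the dual-operator inequality with a \emph{uniform} cap $\Lambda$, and (iii) are compatible with the entangled-message trick that lets Bob's register act as a control without Alice being able to disturb it. This is the step that rests on the non-trivial matrix-theoretic Lemmas~\ref{lemma:f2mmain} and~\ref{lemma:stdformmain} from Appendix~\ref{sec:f2m}; without those lemmas converting function-level strict validity into an operator inequality on a common Hilbert space, the whole bridge from points back to unitaries would collapse.
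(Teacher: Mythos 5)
Your proposal is correct and follows essentially the same route as the paper: the easy direction via the $\Prob(Z_{A,n-i},Z_{B,n-i},\ket{\psi_{n-i}})$ construction of Section~\ref{sec:TDPG}, and the hard direction by the shifting lemma to strict validity, the compilation through Lemma~\ref{lemma:magicL} (resting on Lemmas~\ref{lemma:f2mmain} and~\ref{lemma:stdformmain}) into a UBP with projections, and finally the projection-removal reduction of Section~\ref{sec:proj} with the $\epsilon$-bookkeeping. This matches the paper's own assembly of the theorem, so nothing further is needed.
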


Note that the above construction of a UBP out of a TDPG is not optimal
in terms of resources. In particular, in most cases the number
of qubits needed could be drastically reduced. It is also not in general
true that if we start with a UBP, translate it into a TDPG and then
construct from it a UBP we will end up with the same one. In fact, given
two UBPs with the same underlying protocol but different dual feasible
points, if we converted into TDPGs and then back into UBPs, the resulting
protocols will be radically different!

\section{\label{sec:examples}The illustrated guide to point games}

The purpose of this section is to build an intuition about point games. In
the first part of the section we will classify a few of the simplest valid
transitions. These moves will be part of a basic toolbox of transitions
that will be used throughout the paper.

In the second part of this section we use our basic moves to build some
simple coin-flipping protocols, all described in the language of TDPGs. The
examples include the bias $1/\sqrt{2}-1/2$ protocol of Spekkens and Rudolph
\cite{Spekkens2002} and the author's bias $1/6$ protocol \cite{me2005}.

In this section we will make extensive use of the basis for functions with
finite support introduced in the last section. In particular, we use $[z]$
to denote a one variable function that evaluates to one at a fixed point
$z$, and is zero everywhere else. We also use $[x,y]$ to denote a
two-variable function that is one at $(x,y)$ and zero everywhere else.

\subsection{Basic moves}

We aim to systematically describe all non-trivial one-variable valid
transitions of the following forms: one points to one point, two points to
one point and one point to two points. The latter two respectively
generate all transitions of the form $n$ points to one point and one point
to $n$ points.

Nevertheless, these will not form a complete basis of all valid
transitions. Even two point to two point transitions contain moves that
cannot be generated by the above set. Ultimately, the most concise
description of the set of all valid transitions is as the dual to the cone
of operator monotone functions.

\subsubsection{Point raising}

All possible one point to one point transitions have the form
\be
p[z]\rightarrow p[z'],
\ee
\noindent
where we have already imposed the constraint of probability conservation,
and we assume $p>0$. We now need to impose the constraints $p f(z) \leq p
f(z')$ for all operator monotone functions. Using $f(z)=z$ we have obtain
the necessary condition
\be
z\leq z'.
\ee
It is also sufficient because all operator monotone functions are
monotonically increasing.

When working in a bipartite case we see that $p[x,y]\rightarrow p[x',y]$ is
valid if and only if $x\leq x'$ (and similarly with $p[x,y]\rightarrow
p[x,y']$ and $y\leq y'$). More generally, $p[x,y]\rightarrow p[x',y']$ is
transitively valid if and only if $x\leq x'$ and $y\leq y'$. In simpler
words: we can always move points upwards or rightwards but not downwards or
leftwards. We will call these moves point raising (even when we are moving
rightwards).

Note that the presence of extra unmoving points does not affect any of the
one variable transitions. The transition $p[z]\rightarrow p[z']$ is valid
if and only if $p[z] + \sum_i p_i[z_i] \rightarrow p[z'] + \sum_i p_i[z_i]$
is valid (where $\sum_i p_i[z_i]$ is any other set of points with positive
probability).

\subsubsection{Point merging}

All possible two point to one point transitions have the form
\be
p_1[z_1]+p_2[z_2]\rightarrow (p_1+p_2)[z'],
\ee
\noindent
where we have already imposed the constraint of probability conservation,
and we assume $p_1>0$ and $p_2>0$. We now need to impose the constraints
$p_1 f(z_1) + p_2 f(z_2) \leq (p_1+p_2) f(z')$ for all operator monotone
functions. Using $f(z)=z$ we have obtain the necessary condition
\be
\frac{p_1 z_1 + p_2 z_2}{p_1 + p_2} \leq z'.
\label{eq:pointmerge}
\ee
It is also sufficient because operator monotone functions are concave (a
property that can be checked directly on the extremal functions
$f(z)=\frac{\lambda z}{\lambda+z}$ for $\lambda\in(0,\infty)$).

When equality holds in Eq.~(\ref{eq:pointmerge}) we call the move point
merging. The more general case is simply generated by point merging
followed by point raising). In simpler words: point merging takes two
points and replaces them with a single point carrying their combined
probability and average $z$ value.

For $n$ points merging into one point it is easy to see that we must
conserve probability and average $z$ (or strictly speaking average $z$
cannot decrease). But this exact final configuration can be achieved using
a sequence of pairwise point merges with a possible point raising at the
end.

For the bipartite case, the transition
\be
p_1[x_1,y]+p_2[x_2,y]\rightarrow 
(p_1+p_2) \lp[\frac{p_1 x_1 + p_2 x_2}{p_1 + p_2},y\rp]
\ee
is clearly valid, and similarly with $x$ and $y$ interchanged. However, it
does not follow that the transition $0.5[0,0]+0.5[1,1]\rightarrow
1[0.5,0.5]$ is transitively valid. In fact, the proof of the impossibility
of strong coin flipping is a proof that $0.5[0,0]+0.5[1,1]\rightarrow
1[z,z]$ is transitively invalid if $z<1/\sqrt{2}$.

\subsubsection{Point splitting}

All possible one point to two point transitions have the form
\be
(p_1+p_2)[z] \rightarrow p_1[z_1']+p_2[z_2'],
\ee
\noindent
where we have already imposed the constraint of probability conservation,
and we assume $p_1>0$ and $p_2>0$. We now need to impose the constraints
$(p_1+p_2) f(z) \leq p_1 f(z_1') + p_2 f(z_2')$ for all operator monotone
functions. In particular, for $\lambda\in(0,\infty)$ the inequality is
satisfied for $f(z)=\frac{\lambda z}{\lambda+z} =
\lambda(1-\frac{\lambda}{\lambda+z})$ if and only if it is satisfied for
$f(z)=-\frac{1}{\lambda+z}$. If none of the points are located at zero,
then we can take the limit $\lambda\rightarrow 0$ and the inequality must
still be satisfied. In other words, a necessary condition is
\be
-\frac{p_1+p_2}{z} \leq - \frac{p_1}{z_1'}  - \frac{p_2}{z_2'}.
\label{eq:pointsplit}
\ee
It is also sufficient. Let $w=1/z$ and assume the above inequality holds,
then verifying the original constraint with $f(z)=-\frac{1}{\lambda+z}$ is
equivalent to verifying $(p_1+p_2) g(w)\leq (p_1+p_2) g(\frac{p_1 w_1 + p_2
w_2}{p_1 + p_2}) \leq p_1 g(w_1') + p_2 g(w_2')$ with
$g(w)=-\frac{w}{1+\lambda w}$. But the first inequality holds because
$g(w)$ is monotonically decreasing and the second inequality holds because
$g(w)$ is convex. The special case of $f(z)=z$ follows by considering
the limit $\lambda\rightarrow\infty$ of $f(z)=\frac{\lambda z}{\lambda+z}$.

When equality holds in Eq.~(\ref{eq:pointsplit}) we call the move point
splitting. In simpler words: point splitting takes a point and replaces it
with two points such that the total probability and average $1/z$ is
conserved.  All one point to two points valid transitions can then be
generated by point raising followed by point splitting. Similarly, all one
point to $n$ point valid transitions can be generated by point-raising and
a sequence of one-to-two point splittings.

The above arguments hold provided none of the points is located at zero. If
either $z_1'=0$ or $z_2'=0$ it is not hard to verify that we must also have
$z=0$. If $z=0$ all valid moves can be generated first by splitting the
point into two points located at zero and then using point raising
to move them to the required destination. In fact, we allow this type of
point splitting anywhere: we can always replace a point at $z$ with
probability $p$ with two points at $z$ and probabilities that add to $p$.

\subsubsection{Summary}

\begin{lemma}
The following are valid transitions:
\begin{itemize}
\item Point raising
\be
p[z]\rightarrow p[z']\qquad\qquad\text{(for $z\leq z'$).}
\ee
\item Point merging
\be
p_1[z_1]+p_2[z_2]\rightarrow \lp(p_1+p_2\rp)
\lp[\frac{p_1 z_1+ p_2 z_2}{p_1+p_2}\rp].
\ee
\item Point splitting
\be
\lp(p_1+p_2\rp)\lp[\frac{p_1+p_2}{p_1 w_1'+ p_2 w_2'}
\rp]\rightarrow p_1\lp[\frac{1}{w_1'}\rp]+p_2\lp[\frac{1}{w_2'}\rp].
\ee
\end{itemize}
\end{lemma}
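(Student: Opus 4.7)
The plan is to verify each of the three moves against the definition of valid transition, using the preceding lemma which reduces validity to checking conservation of total probability (the $f(z)=1$ constraint) together with
\be
\sum_z \bigl(p_{i+1}(z)-p_i(z)\bigr)\frac{\lambda z}{\lambda+z}\geq 0
\qquad\text{for every }\lambda\in(0,\infty),
\ee
the remaining extremal direction $f(z)=z$ being recovered in the limit $\lambda\to\infty$. Probability conservation is immediate from the formulas in all three cases, so the entire argument reduces to verifying the $f_\lambda$ inequality with $f_\lambda(z)=\lambda z/(\lambda+z)$.

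I would handle the three cases in turn. For \textbf{point raising}, the inequality $f_\lambda(z)\leq f_\lambda(z')$ whenever $z\leq z'$ is just monotonicity of $f_\lambda$, visible from $f_\lambda'(z)=\lambda^2/(\lambda+z)^2>0$. For \textbf{point merging} at the arithmetic balance point $z'=(p_1z_1+p_2z_2)/(p_1+p_2)$, the required inequality is exactly Jensen's for the concave function $f_\lambda$; concavity itself comes from the decomposition $f_\lambda(z)=\lambda-\lambda^2/(\lambda+z)$ together with convexity of $z\mapsto 1/(\lambda+z)$ on $[0,\infty)$. For \textbf{point splitting}, I would substitute $w=1/z$, so $f_\lambda(1/w)=\lambda/(1+\lambda w)$, a function that is convex in $w$ (second derivative $2\lambda^3/(1+\lambda w)^3>0$); applying Jensen at $w_0=(p_1w_1'+p_2w_2')/(p_1+p_2)$, which corresponds to $z_0=1/w_0=(p_1+p_2)/(p_1w_1'+p_2w_2')$, delivers the inequality in the correct direction.

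The one point requiring care is the boundary case $z=0$ in point splitting, since the substitution $w=1/z$ is singular there. I would dispose of it either by continuity---$f_\lambda(0)=0$ and $\lambda/(1+\lambda w)\to 0$ as $w\to\infty$, so the Jensen argument extends to the closed interval---or, equivalently, by the observation noted in the text that any splitting producing a point at $0$ can be generated by first trivially replicating $p[0]$ as $p_1[0]+p_2[0]$ and then point raising, both of which have already been established as valid. Beyond this, no serious obstacle is anticipated: the remaining content is routine calculus on the rational function $f_\lambda$, and the real work has already been done in identifying the correct averaging variable ($z$ for merging, $1/z$ for splitting) that makes the relevant Jensen inequality point the right way.
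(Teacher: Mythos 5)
Your proof is correct and follows essentially the same route as the paper: reduce validity to the extremal functions $f_\lambda(z)=\lambda z/(\lambda+z)$ plus probability conservation, then use monotonicity of $f_\lambda$ for raising, concavity (Jensen) for merging, and convexity in the variable $w=1/z$ for splitting, with the $z=0$ boundary case handled separately just as in the text. No gaps.
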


\subsection{Basic protocols}

The simplest of all protocols are the ones when one player flips a coin
and tells the outcome to the other player. If Alice is in charge of
flipping the coin we get the TDPG
\be
\frac{1}{2}[1,0]+\frac{1}{2}[0,1] \quad\rightarrow\quad
\frac{1}{2}[1,1]+\frac{1}{2}[0,1] \quad\rightarrow\quad
1\lp[\frac{1}{2},1\rp],
\ee
\noindent
where the first move is point raising and the second is point
merging. Similarly, if Bob is in charge of flipping the coin we get
\be
\frac{1}{2}[1,0]+\frac{1}{2}[0,1] \quad\rightarrow\quad
\frac{1}{2}[1,0]+\frac{1}{2}[1,1] \quad\rightarrow\quad
1\lp[1,\frac{1}{2}\rp].
\ee
\noindent
These are graphically illustrated in Fig.~\ref{fig:prototriv}. 

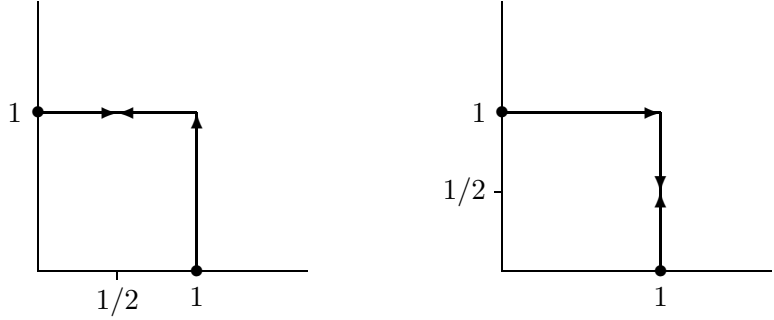
\begin{figure}[tb]
\begin{center}
\unitlength = 60pt
\begin{picture}(2.5,2.3)(-0.5,-0.5)
\put(0,0){\line(1,0){1.7}}
\put(0,0){\line(0,1){1.7}}
\put(0.5,0){\line(0,-1){0.05}}
\put(0.5,-0.1){\makebox(0,0)[t]{$1/2$}}
\put(1,-0.1){\makebox(0,0)[t]{$1$}}
\put(-0.1,1){\makebox(0,0)[r]{$1$}}
\thicklines
\put(0,1){\vector(1,0){0.5}}
\put(1,1){\vector(-1,0){0.5}}
\put(1,0){\vector(0,1){1}}
\put(1,0){\makebox(0,0){$\bullet$}}
\put(0,1){\makebox(0,0){$\bullet$}}
\end{picture}
\qquad
\begin{picture}(2.5,2.3)(-0.5,-0.5)
\put(0,0){\line(1,0){1.7}}
\put(0,0){\line(0,1){1.7}}
\put(0,0.5){\line(-1,0){0.05}}
\put(-0.1,0.5){\makebox(0,0)[r]{$1/2$}}
\put(1,-0.1){\makebox(0,0)[t]{$1$}}
\put(-0.1,1){\makebox(0,0)[r]{$1$}}
\thicklines
\put(1,0){\vector(0,1){0.5}}
\put(1,1){\vector(0,-1){0.5}}
\put(0,1){\vector(1,0){1}}
\put(1,0){\makebox(0,0){$\bullet$}}
\put(0,1){\makebox(0,0){$\bullet$}}
\end{picture}
\caption{The two trivial protocols where Alice (left) or Bob (right) flip a
coin and announce the outcome.}
\label{fig:prototriv}
\end{center}
\end{figure}

Note that in the second protocol Bob sends the first non-trivial message
(equivalently, the final move is vertical). Whereas for UBPs we enforced
the constraint that Alice always sent the first message, for TDPGs we will
let the first message be sent by whomever it is convenient.

To assign some meaning to the abstract operations we can think of point
merging as occurring when a player flips a coin and announces the outcome
to their opponent (recall the reverse time convention, in regular time
point merging occurs first and makes two points out of one). Point raising
does not correspond to any physical operation, but rather to one player
trusting or at least accepting the state provided by the other player.

\subsubsection{The Spekkens and Rudolph protocol}

Fix $x\in(1/2,1)$. Consider
\be
\frac{1}{2}[1,0]+\frac{1}{2}[0,1]
\quad\rightarrow\ &
\frac{2x-1}{2x}\bigg[x,0\bigg] + 
\frac{1-x}{2x}\bigg[\frac{x}{1-x},0\bigg] +
\frac{1}{2}[0,1]&
\\\nonumber
\quad\rightarrow\ &
\frac{2x-1}{2x}\bigg[x,0\bigg] + 
\frac{1-x}{2x}\bigg[\frac{x}{1-x},1\bigg] +
\frac{1}{2}[0,1]&
\\\nonumber
\quad\rightarrow\ &
\frac{2x-1}{2x}\bigg[x,0\bigg] + 
\frac{1}{2x}\bigg[x,1\bigg]&
\ \rightarrow\quad 
1 \bigg[x,\frac{1}{2x} \bigg].
\ee
\noindent
The TDPG is the sequence: split, raise, merge, merge. It is illustrated in
Fig.~\ref{fig:protosnr} for the case $x=1/\sqrt{2}$.  The resulting
protocol satisfies $P_B^*= x$ and $P_A^*= \frac{1}{2x}$, achieving
the tradeoff curve $P_A^* P_B^*=1/2$ from \cite{Spekkens2002}.

\begin{figure}[tb]
\begin{center}
\unitlength = 60pt
\begin{picture}(3.7,2.3)(-0.5,-0.5)
\put(0,0){\line(1,0){3.2}}
\put(0,0){\line(0,1){1.7}}
\put(0.707,0){\line(0,-1){0.05}}
\put(2.414,0){\line(0,-1){0.05}}
\put(0,0.707){\line(-1,0){0.05}}
\put(1,-0.1){\makebox(0,0)[t]{$1$}}
\put(-0.1,1){\makebox(0,0)[r]{$1$}}
\put(0.707,-0.1){\makebox(0,0)[t]{$\frac{1}{\sqrt{2}}$}}
\put(2.414,-0.1){\makebox(0,0)[t]{$1+\sqrt{2}$}}
\put(-0.1,0.707){\makebox(0,0)[r]{$\frac{1}{\sqrt{2}}$}}
\thicklines
\put(0,1){\vector(1,0){0.707}}
\put(2.414,1){\vector(-1,0){1.707}}
\put(1,0){\vector(1,0){1.414}}
\put(1,0){\vector(-1,0){0.293}}
\put(2.414,0){\vector(0,1){1}}
\put(0.707,0){\vector(0,1){0.707}}
\put(0.707,1){\vector(0,-1){0.293}}
\put(1,0){\makebox(0,0){$\bullet$}}
\put(0,1){\makebox(0,0){$\bullet$}}
\end{picture}
\caption{The Spekkens and Rudolph protocol with $x=1/\sqrt{2}$.}
\label{fig:protosnr}
\end{center}
\end{figure}
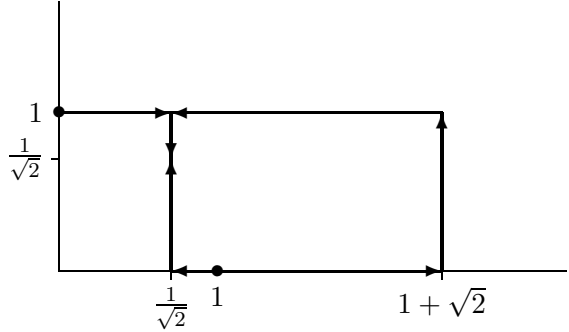

From the point of view of point games, the clever step above is the initial
split which was chosen so that, after the first merge, the remaining points
would be vertically aligned and a second merge could immediately take
place. The initial split corresponds to the cheat detection carried out at
the end of the protocol.

Another interpretation of the compromises made in the above protocol can be
understood as follows: We know that in each move the average value of $x$
and $y$ cannot decrease because $f(z)=z$ is operator monotone (and
$f(x,y)=x+y$ is bi-operator monotone). A perfect zero-bias protocol would
never increase these averages. In a non-perfect protocol every such
increase gets added to the final bias. In particular, the above protocol
has two such ``bad'' steps: the split (which increases average $x$) and the
raise (which increases average $y$). The protocol with
$P_A^*=P_B^*=1/\sqrt{2}$ balances these two effects so that they are equal.

\subsubsection{\label{sec:TDPG16}Quantum public-coin protocols}

The Spekkens and Rudolph protocol can be improved by using the TDPG
depicted by Fig.~\ref{fig:protome} (left). The idea is that we begin by
splitting the initial point on the vertical axis and use the resulting top
point to help raise the rightmost point. The operation on the rightmost
line becomes a point merging which preserves average $y$ instead of the old
point raising which increased average $y$. The cost of doing this, though,
is the split on the vertical axis (which increases average $y$) and the
point raising of the top point (which increases average $x$). Nevertheless,
when the parameters (probabilities and coordinates) are chosen properly the
above pattern results in a improvement.

Closer inspection shows that the added structure of the above protocol
relative to the Spekkens and Rudolph protocol is very similar to the added
structure of the Spekkens and Rudolph relative to the trivial protocol
where Bob announces the coin outcome. In fact, the process can be iterated
as depicted in Fig.~\ref{fig:protome} (right). The process begins by
splitting the two initial points into many points on the axes. Then point
raising is used on the rightmost point so that it is aligned with the
topmost point. The two points are merged and the resulting point ends up
lined up with the second-rightmost point. These two are again merged
producing a point that is lined up with the second-topmost point. All point
are merged in this fashion until a single point remains.

Obviously, the initial splits must be chosen with care so that all the
merges end up properly lined up. We will not describe here the precise
parameters needed to achieve this, though the details can be found in
\cite{me2005}. In fact, the paper describes an even larger family of
coin-flipping protocols which consisted of classical public-coin protocol
with quantum cheat detection. In the language of TDPGs all the protocols in
the family can be characterized as follows: First the point $P_B[1,0]$
splits horizontally into as many points as needed. Similarly the point
$P_A[0,1]$ splits vertically as needed. These steps are the cheat
detection. After that only point raising and merging are allowed, though in
any order or pattern desired. Sadly, the optimal bias that can be achieved
with protocols of this form is $1/6$, and is realized by the pattern from
Fig.~\ref{fig:protome} (right) in the limit of an arbitrarily large number
of merges.

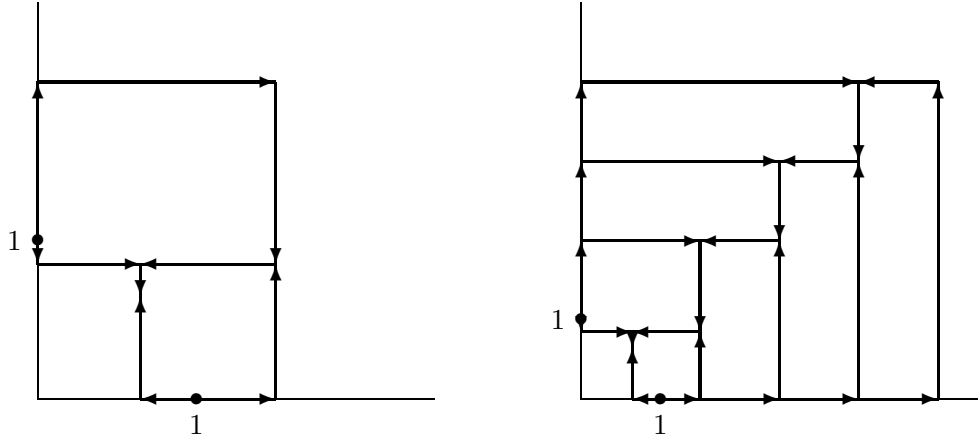
\begin{figure}[tb]
\begin{center}
\unitlength = 60pt
\begin{picture}(3,3)(-0.5,-0.5)
\put(0,0){\line(1,0){2.5}}
\put(0,0){\line(0,1){2.5}}
\put(1,-0.1){\makebox(0,0)[t]{$1$}}
\put(-0.1,1){\makebox(0,0)[r]{$1$}}
\thicklines
\put(0,0.85){\vector(1,0){0.65}}
\put(1.5,0.85){\vector(-1,0){0.85}}
\put(1,0){\vector(1,0){0.5}}
\put(1,0){\vector(-1,0){0.35}}
\put(0,2){\vector(1,0){1.5}}
\put(1.5,0){\vector(0,1){0.85}}
\put(1.5,2){\vector(0,-1){1.15}}
\put(0.65,0){\vector(0,1){0.65}}
\put(0.65,0.85){\vector(0,-1){0.20}}
\put(0,1){\vector(0,1){1}}
\put(0,1){\vector(0,-1){0.15}}
\put(1,0){\makebox(0,0){$\bullet$}}
\put(0,1){\makebox(0,0){$\bullet$}}
\end{picture}
\qquad
\unitlength = 30pt
\begin{picture}(6,6)(-1,-1)
\put(0,0){\line(1,0){5}}
\put(0,0){\line(0,1){5}}
\put(1,-0.2){\makebox(0,0)[t]{$1$}}
\put(-0.2,1){\makebox(0,0)[r]{$1$}}
\thicklines
\put(0,0.85){\vector(1,0){0.65}}
\put(1.5,0.85){\vector(-1,0){0.85}}
\put(1,0){\vector(1,0){0.5}}
\put(1,0){\vector(-1,0){0.35}}
\put(0,2){\vector(1,0){1.5}}
\put(1.5,0){\vector(0,1){0.85}}
\put(1.5,2){\vector(0,-1){1.15}}
\put(0.65,0){\vector(0,1){0.65}}
\put(0.65,0.85){\vector(0,-1){0.20}}
\put(0,1){\vector(0,1){1}}
\put(0,1){\vector(0,-1){0.15}}
\put(1,0){\makebox(0,0){$\bullet$}}
\put(0,1){\makebox(0,0){$\bullet$}}
\put(1.5,0){\vector(1,0){1}}
\put(2.5,0){\vector(0,1){2}}
\put(2.5,2){\vector(-1,0){1}}
\put(0,2){\vector(0,1){1}}
\put(0,3){\vector(1,0){2.5}}
\put(2.5,3){\vector(0,-1){1}}
\put(2.5,0){\vector(1,0){1}}
\put(3.5,0){\vector(0,1){3}}
\put(3.5,3){\vector(-1,0){1}}
\put(0,3){\vector(0,1){1}}
\put(0,4){\vector(1,0){3.5}}
\put(3.5,4){\vector(0,-1){1}}
\put(3.5,0){\vector(1,0){1}}
\put(4.5,0){\vector(0,1){4}}
\put(4.5,4){\vector(-1,0){1}}
\end{picture}
\caption{An improvement to the Spekkens and Rudolph protocol
(left) and further iterations of the improvement (right).
Figures not to scale.}
\label{fig:protome}
\end{center}
\end{figure}

An improved version of the above protocol is presented in
Appendix~\ref{sec:ddb}, where we effectively note that the initial splits
can be done gradually as the protocol progresses (or equivalently, that
cheat-detection can be done gradually). The advantage of this is a
reduction in the number of required qubits to a constant number. The bias,
though, remains fixed at $1/6$.

Below we intend to give an informal description of the bias $1/6$ protocol
in the limit of infinitely many messages, in which our standard ``finite
points with probability'' TDPGs get replaced by probability
densities. Though we will not formalize these TDPGs with probability
densities, they are occasionally useful in studying protocols. In fact, the
main result of the paper will have this form, though in the formal proof we
shall approximate the continuous distribution by a discrete finite set.

Let us imagine that we have carried out the initial point-splitting, and
that we have split into so many points that we effectively have a
continuous probability density on the axes:
\be
\frac{1}{2}\int_{z^*}^\infty p(z) [z,0] dz +
\frac{1}{2}\int_{z^*}^\infty p(z) [0,z] dz,
\ee
\noindent
where $p(z)$ is some probability distribution with $\int_{z^*}^\infty p(z)
dz=1$, and $z^*>0$ is some cutoff below which no points are located.

The continuum limit of the process depicted in Fig.~\ref{fig:protome}
(right) consists of a point moving along the diagonal and collecting the
probability density off of the axes. The point starts at $[\infty,\infty]$
with zero probability and ends at $[z^*,z^*]$ once it has collected all the
probability. What we are trying to determine is for what probability
distributions $p(z)$ is such as thing possible. In other words, for what
probability distributions $p(z)$ is
\be
\frac{1}{2}\int_{z^*}^\infty p(z) [z,0] dz +
\frac{1}{2}\int_{z^*}^\infty p(z) [0,z] dz
\quad\rightarrow\quad
1 [z^*,z^*]
\ee
transitively valid?

Let $Q(z)$ be the probability of the point that is traveling down the
diagonal. Given the point $Q(z)[z,z]$ we can move downwards and
rightwards in a two step process: first we merge with a ``point'' on the
$x$-axis (with effective probability $\frac{p(z)}{2}dz$) to get to $(Q(z) +
p(z)/2 dz)[z,z-dz]$, then we merge with a ``point'' on the $y$-axis (again
with effective probability $\frac{p(z)}{2}dz$) to get to $(Q(z)+p(z)
dz)[z-dz,z-dz]$. Conservation of probability tells us that $Q(z-dz) =
Q(z)+p(z) dz$ or
\be
\frac{dQ(z)}{dz} = -p(z).
\ee
\noindent
But these transitions are point merges and should additionally conserve
average height during the first merge (and average $x$ position during the
second merge). In particular, we get a constraint of the form $(Q(z)) z +
(\frac{p(z)}{2} dz) 0 = (Q(z)+\frac{p(z)}{2}dz) (z-dz)$. Canceling a few
terms we get $0=-Q(z)dz+\frac{p(z)}{2}z  dz$ or
\be
Q(z)=\frac{z p(z)}{2}.
\ee
\noindent
Combining the two constraints we get a differential equation in $Q(z)$
\be
\frac{dQ(z)}{dz} = - \frac{2Q(z)}{z}
\ee
\noindent
solved by $Q(z)=c/z^2$ for some constant $c$. Our original probability
distribution must have the form $p(z)=2c/z^3$, and we can now fix the
constant by the requirement $\int_{z^*}^\infty p(z) dz=1$. We get
$c=(z^*)^2$.

What our arguments above have shown is that for any $z^*>0$ the transition
\be
\frac{1}{2}\int_{z^*}^\infty \frac{2(z^*)^2}{z^3} [z,0] dz +
\frac{1}{2}\int_{z^*}^\infty \frac{2(z^*)^2}{z^3}[0,z] dz
\quad\rightarrow\quad
1 [z^*,z^*]
\label{eq:ddbmerge}
\ee
\noindent
is transitively valid.

But, of course, the true starting state is
$\frac{1}{2}[1,0]+\frac{1}{2}[0,1]$. To reach the initial state above we
must use point splitting independently on each axis. The constraints are
conservation of probability (already imposed) and a non-increasing average
$1/z$:
\be
1 \geq \int_{z^*}^\infty \frac{p(z)}{z} dz
=  \int_{z^*}^\infty \frac{2(z^*)^2}{z^4} dz = \frac{2}{3z^*}
\quad\Rightarrow\quad \frac{2}{3}\leq z^*.
\label{eq:ddbsplit}
\ee
\noindent
In other words, we can achieve bias $1/6$ but no better.

\subsubsection{\label{sec:cheat}Protocols with cheat detection}

As mentioned in the introduction, even bit commitment can be accomplished
using quantum information if we are willing to settle for a cheat-detecting
solution. These protocols with cheat detection may prove to be an important
component of quantum cryptography.

As an example of how Kitaev's formalism can be extended to study
cheat-detecting protocols we will describe in this section a simple
generalization of the above protocol.

One approach to cheat detection is as a payoff maximization
problem. Specifically, in coin flipping we would formulate the problem as
follows: winning the coin flip earns you \$1, loosing the coin flip nets
you \$0, but if you get caught cheating you will lose
\$$\Lambda$ (i.e., the cheating player wins $-\$\Lambda$ 
and we assume $\Lambda\geq 0$). Calculating the maximum expected earnings
for each $\Lambda$ is equivalent to finding the tradeoff curve between the
probability of winning by cheating and the probability of getting caught
cheating.

As most of our equations so far have been designed for positive
semidefinite matrices, it is simplest to begin by modifying our payouts so
that they are all non negative: \$$(\Lambda+1)$ for winning, \$$\Lambda$ for
losing, and \$0 for getting caught cheating. These two formulations are
equivalent in that an optimal payout for one can be found from the optimal
payout for the other by adding/subtracting $\Lambda$.

Accommodating multiple payouts only requires modifying the final projection
operators. For instance, $Z_{A,n}=\Pi_{A,1}$ (which could be though of as a
payout matrix for the non-cheat detecting problem), now needs to be replaced
with a matrix with three eigenvalues: $\Lambda+1$, $\Lambda$ and $0$ so
that the honest states in which Bob wins have the appropriate eigenvalue
\be
Z_{A,n} \lp( \Pi_{A,1}\ket{\psi_{A,n}}\rp) = 
(\Lambda+1) \lp( \Pi_{A,1}\ket{\psi_{A,n}}\rp),
\qquad\qquad
Z_{A,n} \lp( \Pi_{A,0}\ket{\psi_{A,n}}\rp) = 
\Lambda \lp( \Pi_{A,0}\ket{\psi_{A,n}}\rp),
\ee
\noindent
and the orthogonal subspace has eigenvalue zero. In particular, everything
we have done so far is still valid, but the transition of interest becomes
\be
\frac{1}{2}[\Lambda+1,\Lambda]+\frac{1}{2}[\Lambda,\Lambda+1]
\quad\rightarrow\quad
1[\beta,\alpha],
\ee
\noindent
where now $\beta$ and $\alpha$ are upper bounds on the expected payouts of
Bob and Alice respectively.

What would happen if we were now to shift back to the original payouts of
\$1, \$0 and \$$-\Lambda$? We would again return to looking for transitions
of the form $\frac{1}{2}[1,0]+\frac{1}{2}[0,1]\rightarrow 1[\beta,\alpha]$,
however our constraints for valid transitions would need to change. Rather
than looking for transitions that live in the dual to the cone of operator
monotone functions with domain $[0,\infty)$ we would look for transitions
in the dual to the cone of operator monotone functions with domain
$[-\Lambda,\infty)$.

In other words, in a setting with a penalty of $\Lambda$ for cheating, a
transition $p_i(z)\rightarrow p_{i+1}(z)$ is valid if and only if
probability is conserved and
\be
\sum_z p_i(z) \frac{\lambda z}{\lambda+z} \leq 
\sum_z p_{i+1}(z) \frac{\lambda z}{\lambda+z}
\ee
\noindent
for $\lambda\in(\Lambda,\infty)$. The old rule is simply the special case
$\Lambda=0$. 

Note that for $\Lambda>0$ we are simply removing restrictions from valid
transitions. Therefore, not surprisingly, any protocol that was valid with
no cheat detection is still valid in a cheat detecting world. However,
one can often do better in the latter case.

Returning to the coin-flipping protocol we described in the previous
section, Eq.~(\ref{eq:ddbmerge}) is still valid and essentially
optimal. However, the constraint imposed by Eq.~(\ref{eq:ddbsplit}) is no
longer necessary. It is not hard to check that in a cheat detecting
setting, the dominant function that constrains point splitting is
$f(z)=\frac{\Lambda z}{\Lambda+z}$ (or $f(z)=-\frac{1}{\Lambda+z}$ which is
equivalent when probability is conserved). Eq.~(\ref{eq:ddbsplit}) gets
replaced by
\be
-\frac{1}{\Lambda+1} \leq -\int_{z^*}^\infty \frac{p(z)}{\Lambda+z} dz
= -\int_{z^*}^\infty \frac{2(z^*)^2}{z^3(\Lambda+z)} dz
= -\lp(\frac{\Lambda-2z^*}{\Lambda^2} + 
\frac{2(z^*)^2}{\Lambda^3}\log\frac{z^*+\Lambda}{z^*}\rp).
\ee
\noindent
Asymptotically, as $\Lambda\rightarrow\infty$, one finds an optimal
expected winnings of $z^*\sim\frac{1}{2}+\frac{\log\Lambda}{4\Lambda}$. 

\section{\label{sec:Kit2}Kitaev's second coin-flipping formalism (cont.)}

In this section we shall conclude the description, that begun in
Section~\ref{sec:Kit}, of Kitaev's second coin-flipping formalism
\cite{Kit04}. The last step is the transition from points games that are
ordered in time to point games with no explicit time ordering.

\subsection{\label{sec:TIPG}Time Independent Point Games}

The new ingredient for this section is catalyst states. Given a
transitively valid transition such as $P_B[1,0]+P_A[0,1]\rightarrow
1[\beta,\alpha]$ it trivially follows that
\be
P_B[1,0]+P_A[0,1]+\sum_i w_i [x_i,y_i]
\rightarrow 1[\beta,\alpha]+\sum_i w_i [x_i,y_i]
\ee
is also transitively valid, for any ``catalyst'' state $\sum_i w_i
[x_i,y_i]$ with $w_i,x_i,y_i\geq 0$. The question we consider here is
whether the converse is true.

For one-variable transitions the converse is trivially true. The goal of
this section is to prove that the converse is also true for bipartite
transitions (including transitively valid transitions). The proof will
basically show that we can use a small amount of probability to create the
catalyst state, and then run the catalyzed transition in small enough
steps so that by comparison the catalyst state appears large
enough.

Before diving into the proof let us examine some of the surprising
consequences that will follow. The first consequence is that previously all
our probability distributions had a range contained in $[0,\infty)$, but
now we can allow ``probability'' distributions with a range of
$(-\infty,\infty)$. Negative values are simply points were we need to add
in some more probability using a catalyst state. This leads to the
following definition

\begin{definition}
A function with finite support $p:[0,\infty)\rightarrow\R$ is
\textbf{valid} if $\sum_z p(z)=0$ and  $\sum_z \lp(\frac{-1}{\lambda +
z}\rp)p(z)\geq 0$ for all $\lambda>0$.
\end{definition}

The definition implies $p$ is in the dual to the cone of operator monotone
functions. Note that because $\frac{\lambda z}{\lambda + z}= \lambda -
\frac{\lambda^2}{\lambda + z}$, and because of conservation of probability,
checking $\sum_z \lp(\frac{\lambda z}{\lambda + z}\rp)p(z)\geq 0$ for all
$\lambda>0$ is equivalent to checking $\sum_z \lp(\frac{-1}{\lambda +
z}\rp)p(z)\geq 0$ for all $\lambda>0$. The latter condition will be easier
to analyze in later sections, though.

The validity relation is essentially a partial order on functions. Instead
of saying $p\rightarrow q$ is valid we could equally write $p\prec q$.
Similarly, a valid function $p$ could be written as $0\prec p$. We use the
earlier notation because it is easier to say ``a function $p$ is valid''
than ``a function $p$ belongs to the cone dual to the operator monotone
functions.'' 

By construction any valid transition $p\rightarrow q$ can be converted into
the valid function $q-p$. We therefore immediately obtain a number of
standard valid functions (which follow from the proofs in the previous
section):

\begin{lemma}
The following are valid functions:
\begin{itemize}
\item Point raising
\be
-p[z]+p[z']\qquad\qquad\text{(for $z\leq z'$).}
\ee
\item Point merging
\be
-p_1[z_1]-p_2[z_2]+ \lp(p_1+p_2\rp)
\lp[\frac{p_1 z_1+ p_2 z_2}{p_1+p_2}\rp].
\ee
\item Point splitting
\be
-\lp(p_1+p_2\rp)\lp[\frac{p_1+p_2}{p_1 w_1'+ p_2 w_2'}
\rp]+ p_1\lp[\frac{1}{w_1'}\rp]+p_2\lp[\frac{1}{w_2'}\rp].
\ee
\end{itemize}
\label{lemma:rms}
\end{lemma}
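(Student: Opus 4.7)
The plan is to verify, for each of the three candidate functions in turn, the two conditions in the definition of a valid function: (i) that the weights sum to zero (conservation of probability), and (ii) that $\sum_z \frac{-1}{\lambda+z}p(z) \geq 0$ for every $\lambda>0$. The sum-to-zero check is immediate in all three cases by inspection. The content of the lemma therefore lies entirely in condition (ii), which amounts to showing, for each move, an inequality about the function $g_\lambda(z) := -\frac{1}{\lambda+z}$ on $[0,\infty)$. Note that this mirrors exactly the work done earlier in the ``Basic moves'' subsection, where validity of the corresponding \emph{transitions} was established against the extremal functions $f_\lambda(z)=\frac{\lambda z}{\lambda+z}$; since $f_\lambda(z) = \lambda + \lambda^2 g_\lambda(z)$ and the weights sum to zero, the two formulations are interchangeable, so I can simply re-read the earlier arguments in the negative-reciprocal picture.

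For point raising $-p[z]+p[z']$ with $z\leq z'$, condition (ii) reduces to $p\bigl(\frac{1}{\lambda+z}-\frac{1}{\lambda+z'}\bigr)\geq 0$, which holds because $z\mapsto 1/(\lambda+z)$ is decreasing. For point merging $-p_1[z_1]-p_2[z_2]+(p_1+p_2)[\bar z]$ with $\bar z=\frac{p_1z_1+p_2z_2}{p_1+p_2}$, condition (ii) becomes $\frac{p_1}{\lambda+z_1}+\frac{p_2}{\lambda+z_2}\geq \frac{p_1+p_2}{\lambda+\bar z}$, which is exactly the concavity of $g_\lambda(z)=-1/(\lambda+z)$ applied with weights $p_i/(p_1+p_2)$; concavity itself follows from $g_\lambda''(z)=-2/(\lambda+z)^3<0$.

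For point splitting, the locations are $\bar w^{-1},\,w_1'^{-1},\,w_2'^{-1}$ with $\bar w=\frac{p_1w_1'+p_2w_2'}{p_1+p_2}$, and condition (ii) is
\begin{equation*}
\frac{(p_1+p_2)\bar w}{1+\lambda\bar w}\;\geq\;\frac{p_1 w_1'}{1+\lambda w_1'}+\frac{p_2 w_2'}{1+\lambda w_2'},
\end{equation*}
after the substitution $1/(\lambda+1/w)=w/(1+\lambda w)$ and using $(p_1+p_2)\bar w=p_1w_1'+p_2w_2'$. This is the inequality Jensen gives from convexity of $h_\lambda(w):=-w/(1+\lambda w)$; convexity follows from $h_\lambda''(w)=2\lambda/(1+\lambda w)^3>0$ on $w\geq 0$. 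This is essentially the same manipulation used in the ``Point splitting'' paragraph above to prove that the corresponding transition was valid.

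I do not expect any real obstacle; every piece of this lemma is already implicit in the Section on basic moves. The only thing one should be careful about is the edge case $w=0$ (i.e.\ a point at $z=\infty$) in point splitting, but since the locations in the lemma's statement are written as $1/w_i'$ one may assume $w_i'>0$, and the argument applies verbatim. If a point happens to sit at $z=0$, the inequality is still fine because $g_\lambda(0)=-1/\lambda$ is finite; nothing blows up. Thus the lemma reduces to a short verification that is already contained in the earlier discussion, merely repackaged in the ``function'' rather than ``transition'' language.
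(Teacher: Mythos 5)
Your proposal is correct and is essentially the paper's own argument: the paper obtains these valid functions by noting that $q-p$ of a valid transition is a valid function and citing the basic-moves proofs, which rest on exactly the monotonicity, concavity, and (after the $w=1/z$ substitution) convexity facts you verify directly against the $\sum_z\frac{-1}{\lambda+z}p(z)\geq 0$ form of the definition. Your handling of the edge cases (points at $0$, $w_i'>0$) is also consistent with the paper's earlier discussion, so nothing is missing.
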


The second, and more surprising, consequence of catalyst states is that all
point games can be run using exactly two transitions: one vertical and one
horizontal. The idea is that given any point game we can move all the
horizontal transitions to the beginning and all the vertical transitions to
the end, combining each set into a single vertical or horizontal
transition. Of course, the state after the first transition but before the
second may have some negative probabilities, but as discussed above this
can be fixed with an appropriate catalyst state.  This leads to the
following definition:

\begin{definition}
A function with finite support $p:[0,\infty)\times[0,\infty)\rightarrow\R$ is
\textbf{valid} if either
\begin{itemize}
\item for every $c\in[0,\infty)$ the function $p(z,\underline c)$ is valid,
or
\item for every $c\in[0,\infty)$ the function $p(\underline c,z)$ is valid.
\end{itemize}
where as before $p(z,\underline c)$ is the one-variable function obtained
by fixing the second input. We call the first case a \textbf{valid horizontal}
function and the second case a \textbf{valid vertical} functions.
\end{definition}

Of course, we don't even need to specify whether the horizontal or the
vertical transition occurred first, and therefore we obtain a fully
time independent point game:

\begin{definition}
A \textbf{time independent point game (TIPG)} consists of a pair of
functions with finite support $h,v:[0,\infty)\times[0,\infty)\rightarrow
\R$ such that
\begin{itemize}
\item $h$ is a valid horizontal function.
\item $v$ is a valid vertical function.
\item $h+v =  1[\beta,\alpha] - P_B[1,0] - P_A[0,1]$.
\end{itemize}
We say that $[\beta,\alpha]$ is the final point of the TIPG.
\end{definition}

\subsubsection{Relating TDPGs and TIPGs}

Given a TDPG specified as $p_0,\dots,p_n$ with final point $[\beta,\alpha]$
we shall construct a TIPG with the same final point. Let $H$ (resp $V$) be
the set of indices $1,\dots,n$ such that $p_{i-1}\rightarrow p_i$ is a
valid horizontal (resp. vertical) transition. Define
\be
h = \sum_{i\in H} (p_i-p_{i-1}),
\qquad\qquad
v = \sum_{i\in V} (p_i-p_{i-1}),
\ee
then $h$ is a valid horizontal function, $v$ is a valid vertical function
and
\be
h+v = p_n-p_0 = 1[\beta,\alpha] - P_B[1,0] - P_A[0,1]
\ee
\noindent
as required.

To go the other way we begin with a TIPG specified by $h,v$ with final
point $[\beta,\alpha]$. We define $v^-(x,y) = -\min(v(x,y),0)\geq 0$ as the
magnitude of the negative part of $v$. Then consider
\be
P_B[1,0] + P_A[0,1] + v^- &\rightarrow& P_B[1,0] + P_A[0,1] + v^- +v
\\\nonumber
&\rightarrow& P_B[1,0] + P_A[0,1] + v^- + v + h \ =\  1[\beta,\alpha] + v^-.
\ee
The first is a valid vertical transition and the second is a valid
horizontal transition. Also, the intermediate state is non-negative with
finite support. Therefore, it is a proof that $P_B[1,0] + P_A[0,1] + v^-
\rightarrow 1[\beta,\alpha] + v^-$ is transitively valid. Below we will
show that we can get rid of the catalyst $v^-$ and construct for every
$\epsilon>0$ a sequence such that $P_B[1,0] + P_A[0,1]
\rightarrow 1[\beta+\epsilon,\alpha+\epsilon]$ is transitively valid. This
is the desired TDPG.

What remains to be proven is that we can discard catalyst states. We will
only prove this for the special case of coin flipping, though the general
case is also true. We first require two simple lemmas. The first lemma
shows we can construct arbitrary catalyst states (as long as we allow
extra junk) and the second lemma shows we can clean up the catalyst states
(and extra junk).

\begin{lemma}
Given a function $r:[0,\infty)\times[0,\infty)\rightarrow [0,\infty)$ with
finite support such that $r(0,0)=0$, there exists $c>0$ and a function
$r':[0,\infty)\times[0,\infty)\rightarrow [0,\infty)$ with finite support
such that
\be
c P_B[1,0]+c P_A[0,1]\rightarrow r+r'
\ee
\noindent
is transitively valid, where we additionally assume both $P_A,P_B>0$.
\label{lemma:catbegin}
\end{lemma}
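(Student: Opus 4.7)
The plan is to build $r + r'$ from $cP_B[1,0] + cP_A[0,1]$ by a sequence of the basic moves of Lemma~\ref{lemma:rms}, taking the ``junk'' $r'$ to be the unavoidable leftovers that accumulate along the way. Partition the support of $r$ as $S_x = \{(x,y) : x>0\}$ and $S_y = \{(0,y) : y>0\}$. Since $r(0,0)=0$, every point in the support lies in exactly one of these sets, and the two pieces of the construction are completely independent and symmetric: the targets in $S_x$ will be produced from $cP_B[1,0]$, and those in $S_y$ from $cP_A[0,1]$.

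Focus on $S_x$ and enumerate it as $\{(x_i, y_i)\}_{i=1}^k$ with weights $r_i = r(x_i,y_i)$. I would perform $k$ successive pairwise point splits along the $x$-axis: start with the point $(z_0,0) = (1,0)$ carrying probability $P_0 = cP_B$, and at step $i$ split the current residual $(z_{i-1},0)$ into a target $(x_i, 0)$ of weight $r_i$ and a new residual $(z_i,0)$ of weight $P_i = P_{i-1} - r_i$. The point-splitting formula of Lemma~\ref{lemma:rms} forces the residual coordinate through $P_{i-1}/z_{i-1} = r_i/x_i + P_i/z_i$. After the $k$ splits, apply vertical point raising (valid since $y_i \geq 0$) to lift each $(x_i,0)$ to $(x_i,y_i)$. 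A symmetric construction on the $y$-axis, starting from $cP_A[0,1]$ and targeting the points of $S_y$, produces the rest of $r$ together with a second residual point. The union of the two residuals is the $r'$ of the lemma.

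The one routine thing to check is that the forced residual coordinates $z_i$ are all finite and positive, so that the splits genuinely lie in the set of valid moves. Setting $Q_i = P_i / z_i$, the splitting identity telescopes to $Q_i = cP_B - \sum_{j \leq i} r_j/x_j$, and since $Q_i$ decreases in $i$, all the $z_i$ are well-defined iff $Q_k = cP_B - \sum_{(x,y) \in S_x} r(x,y)/x > 0$. A symmetric inequality is required on the $y$-axis. Both are met by any
\[
c > \max\!\left(\frac{1}{P_B}\sum_{(x,y)\in S_x}\frac{r(x,y)}{x},\ \frac{1}{P_A}\sum_{(x,y)\in S_y}\frac{r(x,y)}{y}\right),
\]
which is the sole role of the scalar (and is why the hypothesis $P_A, P_B > 0$ matters). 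I do not expect a deeper obstacle: this lemma is really a convenience result saying that a large enough scaling of the starting state can finance the construction of any finite catalyst configuration, up to harmless extra junk, and the verification boils down to the one-line telescoping argument above.
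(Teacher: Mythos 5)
Your overall strategy works and is genuinely different from the paper's proof: the paper treats each support point of $r$ separately (with a case split on whether its $x$-coordinate is $\geq 1$ or $<1$, a raise-then-split for each point, and then a sum of the resulting constants $c_i$ and junk terms $r_i'$), whereas you run a single sequential chain of splits financed by one scaled starting point and verify it with a telescoping harmonic identity. That buys a more uniform construction with a single residual per axis, at the cost of having to track the residual along the chain.

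However, the sufficiency condition you state for $c$ is not enough, and as written the construction can fail. At step $i$ the residual must carry \emph{weight} $P_i = cP_B - \sum_{j\leq i} r_j$ in addition to harmonic mass $Q_i = cP_B - \sum_{j\leq i} r_j/x_j$, and its coordinate is $z_i = P_i/Q_i$. Your condition only forces $Q_k>0$; if some $x_j$ are large, $P_i$ can go negative while $Q_i$ stays positive (e.g.\ $r = 1\,[10,0]$ with $cP_B = 1/2$ gives $Q_1 = 0.4 > 0$ but $P_1 = -1/2$), so the step is not a point split at all (negative weight, negative $z_i$). The fix is trivial and the lemma only asserts existence of some $c$: additionally require $cP_B > \sum_{(x,y)\in S_x} r(x,y)$ and $cP_A > \sum_{(x,y)\in S_y} r(x,y)$, which keeps every $P_i>0$ and $Q_i>0$, hence every $z_i>0$, and the chain of splits is then valid. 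Two minor loose ends worth one line each: if several targets share the same $x$-coordinate, split the accumulated weight at $(x,0)$ into copies at the same location before raising (the paper explicitly permits such equal-location splits); and the $S_y$ half of your construction needs no raising step at all, since those targets already lie on the $y$-axis.
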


\begin{proof}
First we prove the lemma for the special case when $r$ has support at a
single point. Let $r=q[x,y]$ with $q,x>0$ and $y\geq0$ (the case $y>0$ and
$x\geq0$ will follow by exchanging the axes). If $x\geq 1$ then
\be
\frac{q}{P_B}P_B[1,0]+ \frac{q}{P_B}P_A[0,1]\rightarrow q[x,y] + 
\frac{q}{P_B}P_A[0,1]
\ee
is transitively valid using point raisings, and the lemma is satisfied with
$c=\frac{q}{P_B}$ and $r'=\frac{q P_A}{P_B}[0,1]$. If $x<1$ the sequence
\be
P_B[1,0]+P_A[0,1]\rightarrow P_B[1,y]+P_A[0,1] \rightarrow \frac{x}{2}P_B[x,y]
+ \lp(1-\frac{x}{2}\rp)P_B[2-x,y]+P_A[0,1]
\ee 
is transitively valid by point raising followed by point splitting. Now we
use the fact that we can scale the probability in transitions. That is,
if $\sum_i w_i [x_i,y_i]\rightarrow\sum_j w_j' [x_j',y_j']$
is transitively valid then for $a>0$ so is $\sum_i a w_i
[x_i,y_i]\rightarrow\sum_j a w_j' [x_j',y_j']$. In particular, if we scale
the previous transition by $c=2q/(x P_B)$ we satisfy the lemma with
$r' = c(1-x/2)P_B[2-x,y]+c P_A[0,1]$.

Finally, for the general case where $r=\sum_{i=1}^k q_i [x_i,y_i]$ let
$c_i$ and $r'_i$ be chosen as above so that $c_i P_B[1,0]+c_i
P_A[0,1]\rightarrow q_i [x_i,y_i]+r_i'$ is transitively valid. Then
\be
\sum_{i=1}^k c_i P_B[1,0]+ \sum_{i=1}^k c_i P_A[0,1]&\rightarrow&
\sum_{i=2}^k c_i P_B[1,0]+ 
\sum_{i=2}^k c_i P_A[0,1] + q_1 [x_1,y_1] + r_1'
\nonumber\\
&\rightarrow&\cdots\rightarrow
\sum_{i=j}^k c_i P_B[1,0]+ 
\sum_{i=j}^k c_i P_A[0,1] + \sum_{i=1}^{j-1} q_i [x_i,y_i] + 
\sum_{i=1}^{j-1} r_i'
\nonumber\\
&\rightarrow&\cdots\rightarrow
r + \sum_{i=1}^{k} r_i',
\ee
where by construction each transition is transitively valid and so the
proof is completed by setting $c=\sum_i c_i$ and $r'=\sum_i r_i'$.
\end{proof}

\begin{lemma}
Given $\epsilon>0$ and a function
$r''\rightarrow:[0,\infty)\times[0,\infty)\rightarrow [0,\infty)$ with
finite support and $\sum_{x,y} r''(x,y)=1$, there exists $1>\delta>0$ such
that
\be
(1-\delta)[\beta,\alpha]+\delta\, r'' \rightarrow 
1[\beta+\epsilon,\alpha+\epsilon]
\ee
\noindent
is transitively valid.
\label{lemma:catend}
\end{lemma}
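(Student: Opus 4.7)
The plan is to build the transition out of the basic moves---raising, merging, and splitting---catalogued in Lemma~\ref{lemma:rms}, taking $\delta$ small enough (in terms of $\epsilon$, $\beta$, $\alpha$, and the support of $r''$) that every step stays within the available $\epsilon$-slack.

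First I would reduce to the case of a single catalyst point. Writing $r''=\sum_i r_i[x_i,y_i]$ with $\sum_i r_i=1$, each catalyst atom $\delta r_i[x_i,y_i]$ can be raised in two steps to the common location $[X,Y]$, where $X=\max_i x_i$ and $Y=\max_i y_i$, and then merged there trivially, producing a single catalyst $\delta[X,Y]$. It therefore suffices to establish transitive validity of $(1-\delta)[\beta,\alpha]+\delta[X,Y]\to 1[\beta+\epsilon,\alpha+\epsilon]$.

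Next I would raise the main point to $[\beta+\epsilon/2,\alpha+\epsilon/2]$, reserving the other half of the $\epsilon$-budget to accommodate the catalyst, and then orchestrate a sequence of splittings and mergings that routes mass from the catalyst down to the target. The key tool is point splitting, which (preserving the $1/x$ or $1/y$ average) turns a single mass at a high coordinate into a piece at a low coordinate plus a smaller residual at an even higher coordinate. Applied first horizontally on the line $y=Y$ and then vertically on the line $x=\beta+\epsilon$, this deposits a chunk of catalyst mass at $[\beta+\epsilon,\alpha+\epsilon]$; the residual pieces at high coordinates are then absorbed by splitting off tiny buffer pieces of the main point, raising them to meet each residual, merging, and splitting the heavy merged point back down toward the target. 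Iterating this finitely many times and then raising and merging everything that has arrived at or below $[\beta+\epsilon,\alpha+\epsilon]$ produces the target state.

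The main obstacle is the quantitative bookkeeping. Every bi-operator monotone function $f(x,y)$, including products such as $(\lambda x/(\lambda+x))\cdot(\mu y/(\mu+y))$, gives a necessary inequality on weighted averages that must be respected by the net transition. The strictest of these, coming from functions like $f(x,y)=xy$, force $\delta(XY-\beta\alpha)\leq\epsilon(\beta+\alpha+\epsilon)$, i.e., $\delta$ of order $\epsilon^2/(XY)$ in the worst case, rather than merely $\epsilon/\max(X,Y)$. The technical heart of the proof is to organize the splits and merges so that the accumulated drift in each such weighted average stays within the allowed margin; once $\delta$ is chosen small enough in terms of $\epsilon$, $\beta$, $\alpha$, and $\max(X,Y)$, the finite construction above closes and delivers the desired transitively valid transition.
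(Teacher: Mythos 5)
Your opening reduction (raise every catalyst atom to the common corner $[X,Y]$ and merge there) is fine and matches the paper, but after that the proposal has a genuine gap: the sequence of moves that actually does the work is never exhibited. The mechanism you choose --- point splitting to transport catalyst mass from $x=X$ down to $x=\beta+\epsilon$ --- cannot do this in one go, because a split conserves the average of $1/x$, so of a mass $\delta$ at $X$ at most a fraction $(\beta+\epsilon)/X$ can be deposited at $x=\beta+\epsilon$, with the remainder pushed to an even larger coordinate. This is exactly why you are driven to an iterative scheme of buffer slivers, merges and re-splits, and that iteration is where the entire proof lives: nothing in the proposal specifies it precisely, shows it terminates, or shows the parameters can be chosen so that every individual step is a valid one-variable transition. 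The closing appeal to bi-operator monotone functions does not fill this hole. Those inequalities (e.g.\ from $f(x,y)=xy$) are necessary conditions that any transitively valid sequence satisfies automatically; they can only upper-bound the admissible $\delta$, and organizing "accumulated drift" against them is neither required nor sufficient --- what must be checked is one-variable validity of each horizontal or vertical step, and that is precisely what is missing.

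The idea you are missing is that \emph{merging}, not splitting, is the move that carries mass downward, since a merge replaces two points by one at their probability-weighted average coordinate. With that observation the paper finishes in three steps after your reduction (writing $x''=X$, $y''=Y$, raised if necessary so $x''>\beta+\epsilon$, $y''>\alpha+\epsilon$): split off a sliver of weight $\delta'-\delta$ from the main point and raise it vertically to $[\beta,y'']$; then, horizontally, merge that sliver with the catalyst $\delta[x'',y'']$ so the merged point lands exactly at $[\beta+\epsilon,y'']$, while raising the remaining $(1-\delta')[\beta,\alpha]$ to $[\beta+\epsilon,\alpha]$; finally merge vertically on the line $x=\beta+\epsilon$ to obtain $1[\beta+\epsilon,\alpha+\epsilon]$. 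The two merge conditions, $\delta(x''-\beta)=\delta'\epsilon$ and $\delta'(y''-\alpha)=\epsilon$, determine $1>\delta'>\delta>0$ explicitly (indeed $\delta=\epsilon^2/\big((x''-\beta)(y''-\alpha)\big)$, consistent with the scaling you anticipated), and no iteration or bi-operator-monotone bookkeeping is needed. To rescue your route you would have to make the residual-absorption iteration fully explicit and verify each step; as written it is a sketch, not a proof.
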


\begin{proof}
Let $x''$ be the largest $x$-coordinate over all points in $r''$ and
similarly let $y''$ be the largest $y$-coordinate over all points in $r''$.
By point raising $r\rightarrow 1[x'',y'']$ is transitively valid, so we
focus on proving that we can find $1>\delta>0$ such that
\be
(1-\delta)[\beta,\alpha]+\delta[x'',y''] \rightarrow 
1[\beta+\epsilon,\alpha+\epsilon]
\ee
is transitively valid. We can also assume (possibly using further point
raisings) that $x''>\beta+\epsilon$ and $y''>\alpha+\epsilon$. Consider the
following sequence
\be
(1-\delta)[\beta,\alpha]+\delta[x'',y''] &\rightarrow&
(1-\delta')[\beta,\alpha]+(\delta'-\delta)[\beta,y'']+ \delta[x'',y'']
\nonumber\\ &\rightarrow&
(1-\delta')[\beta+\epsilon,\alpha]+
\delta'[\beta+\epsilon,y'']
\nonumber\\ &\rightarrow&
1[\beta+\epsilon,\alpha+\epsilon]
\ee
\noindent
corresponding to raise, merge, merge. To make the first merge valid
we need $(\delta'-\delta)\beta+\delta x''= \delta'(\beta+\epsilon)$
which is equivalent to $\delta(x''-\beta)=\delta'\epsilon$.
The second merge requires $\delta'(y''-\alpha)=\epsilon$. Both conditions
can be satisfied by constants such that $1>\delta'>\delta>0$.
\end{proof}

Putting the lemmas together we can prove the main result.

\begin{lemma}
Given $r:[0,\infty)\times[0,\infty)\rightarrow
[0,\infty)$ with finite support such that $r(0,0)=0$ and 
\be
P_B[1,0]+P_A[0,1]+r\rightarrow 1[\beta,\alpha] +r
\ee
\noindent
is transitively valid then for every $\epsilon>0$ 
\be
P_B[1,0]+P_A[0,1]\rightarrow 1[\beta+\epsilon,\alpha+\epsilon]
\ee
is transitively valid as well.
\end{lemma}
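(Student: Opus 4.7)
The plan is to chain together Lemma~\ref{lemma:catbegin} (to manufacture a miniature catalyst out of part of the initial mass), a repeatedly scaled version of the hypothesized catalyzed transition (to process the remainder), and Lemma~\ref{lemma:catend} (to absorb the residual catalyst and junk into the main point, paying $\epsilon$ per coordinate). The central bookkeeping observation is that producing a catalyst of size $\mu r$ via Lemma~\ref{lemma:catbegin} consumes exactly $\mu c$ units of input and leaves behind $\mu r'$ of junk, where $c$ and $r'$ are the constants that lemma supplies for our specific $r$; mass conservation inside Lemma~\ref{lemma:catbegin}'s sequence forces $r+r'$ to have total mass $c$, so $r''\equiv(r+r')/c$ is a probability distribution to which Lemma~\ref{lemma:catend} applies.

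Concretely, I would first feed $r''$ and the given $\epsilon$ into Lemma~\ref{lemma:catend} to obtain a threshold $\delta_0\in(0,1)$, and then set $\mu=\delta_0/c$. The argument then proceeds in three stages. Stage one: trivially split $\mu c$ mass off the initial state and pass it through a $\mu$-scaled copy of Lemma~\ref{lemma:catbegin} to reach the intermediate state $(1-\mu c)(P_B[1,0]+P_A[0,1])+\mu r+\mu r'$. Stage two: use the freshly produced $\mu r$ as a reusable catalyst, running the scaled transition $a(P_B[1,0]+P_A[0,1])+ar\rightarrow a[\beta,\alpha]+ar$ at each round for some $a\leq\mu$, chipping away $a$ units of input per round and leaving $\mu r$ intact; finitely many rounds (with a smaller final round to mop up) exhaust the $(1-\mu c)(P_B[1,0]+P_A[0,1])$ summand and produce $(1-\mu c)[\beta,\alpha]+\mu(r+r')=(1-\delta_0)[\beta,\alpha]+\delta_0\, r''$. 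Stage three: invoke Lemma~\ref{lemma:catend} to collapse this into $1[\beta+\epsilon,\alpha+\epsilon]$.

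The technical point that underlies stages one and two is that any transitively valid sequence can be run on top of a static background of points: validity of each single-line step is a condition on the change $p'-p$ along that line, and is therefore insensitive to whatever additional mass is sitting elsewhere (or even at the same locations, since adding the same quantity to both sides preserves the inequalities that define valid transitions). Hence the sub-sequences supplied by Lemma~\ref{lemma:catbegin} and by the catalyzed transition compose cleanly with the ambient state.

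The main obstacle is the resource bookkeeping: a single $\mu$ must simultaneously leave enough room for stage one ($\mu c<1$), provide a catalyst large enough to drive each sub-round of stage two ($a\leq\mu$, which is up to us), and match the exact residual fraction required by Lemma~\ref{lemma:catend} ($\mu c=\delta_0$). The choice $\mu=\delta_0/c$ threads all three constraints in one stroke, with $\delta_0<1$ automatically guaranteeing $\mu c<1$.
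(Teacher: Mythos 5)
Your proposal is correct and follows essentially the same route as the paper's proof: the same three-stage decomposition (a scaled application of Lemma~\ref{lemma:catbegin} to create the catalyst plus junk, repeated scaled runs of the catalyzed transition against the remaining initial mass, then Lemma~\ref{lemma:catend} to absorb $\delta\,r''$ into the final point), with your choice $\mu=\delta_0/c$ matching the paper's scaling $\delta/c$ exactly. The only cosmetic difference is that you spell out the background-composition and mass-conservation bookkeeping (and could, like the paper, note the trivial cases $P_A=0$ or $P_B=0$ needed before invoking Lemma~\ref{lemma:catbegin}), but the argument is the same.
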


\begin{proof}
Fix $\epsilon>0$. Note that by conservation of probability we must have
$P_A+P_B=1$. We also assume $P_A,P_B>0$ as otherwise the proof is trivial.
Therefore, we can use Lemma~\ref{lemma:catbegin} to get $c>0$ and $r'$
so that $c P_B[1,0]+c P_A[0,1]\rightarrow r+r'$ is transitively valid.

If we set $r''=(1/c)(r+r')$ than again by conservation of probability we
must have $\sum_{x,y}r''=1$. We can therefore use Lemma~\ref{lemma:catend}
to get $1>\delta>0$ so that $(1-\delta)[\beta,\alpha]+\delta r''
\rightarrow 1[\beta+\epsilon,\alpha_\epsilon]$ is transitively valid. Now
consider the sequence
\be
P_B[1,0]+P_A[0,1]
&\rightarrow&
(1-\delta)P_B[1,0]+(1-\delta)P_A[0,1]+\frac{\delta}{c}r + +\frac{\delta}{c}r'
\nonumber\\&\rightarrow&
(1-\delta)[\beta,\alpha]+\frac{\delta}{c}r + +\frac{\delta}{c}r'
\nonumber\\&\rightarrow&
1[\beta+\epsilon,\alpha+\epsilon].
\ee
The first transition follows from a scaled version of the transition
obtained from Lemma~\ref{lemma:catbegin}, and the third transition is the
one obtained from Lemma~\ref{lemma:catend}. The middle transition follows
from repeated applications of $a P_B[1,0]+a P_A[0,1]+a r\rightarrow
a[\beta,\alpha] + a r$ for $a\leq \delta/c$. Therefore the whole transition
is transitively valid as required.
\end{proof}

From the conditions of validity, it is easy to verify that neither $h$ nor
$v$ can be positive at the point $(0,0)$. Since their sum must be zero at
this point, individually they must be zero as well. Hence, our catalyst
state is zero at $(0,0)$ and we can apply the above lemma to complete our
argument for the equivalence of TDPGs and TIPGs. We can state the result
formally as a further extension of Theorem~\ref{thm:kitmain2}.

\begin{theorem}
Let $f(\beta,\alpha):\R\times\R\rightarrow\R$ be a function such that
$f(\alpha',\beta')\geq f(\alpha,\beta)$ whenever $\alpha'\geq\alpha$ and
$\beta'\geq\beta$, then
\be
\inf_{\text{proto}} f(P_B^*,P_A^*) = \inf_{UBP} f(\beta,\alpha)
= \inf_{TDPG} f(\beta,\alpha) = \inf_{TIPG} f(\beta,\alpha).
\ee
\label{thm:kitmain3}
\end{theorem}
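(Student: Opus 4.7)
The plan is to observe that Theorem~\ref{thm:kitmain2} already supplies the first three equalities, so the only new work is to prove $\inf_{TDPG} f(\beta,\alpha) = \inf_{TIPG} f(\beta,\alpha)$. I will establish this as two one-sided inequalities, each corresponding to one direction of the TDPG$\leftrightarrow$TIPG correspondence sketched earlier in the section.

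For the inequality $\inf_{TIPG} f \le \inf_{TDPG} f$, I would take any TDPG $p_0,\dots,p_n$ with final point $[\beta,\alpha]$, partition $\{1,\dots,n\}$ into $H$ (horizontal steps) and $V$ (vertical steps), and set $h=\sum_{i\in H}(p_i-p_{i-1})$ and $v=\sum_{i\in V}(p_i-p_{i-1})$. Each summand $p_i-p_{i-1}$ with $i\in H$ is, by definition of a valid horizontal transition, a valid horizontal function; since validity on each horizontal line is a convex-cone condition, the sum $h$ is also a valid horizontal function. The analogous argument handles $v$, and telescoping gives $h+v=p_n-p_0=1[\beta,\alpha]-P_B[1,0]-P_A[0,1]$. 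So $(h,v)$ is a TIPG with the same final point, yielding $\inf_{TIPG} f(\beta,\alpha)\le\inf_{TDPG} f(\beta,\alpha)$.

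For the harder inequality $\inf_{TDPG} f \le \inf_{TIPG} f$, I would fix a TIPG $(h,v)$ with final point $[\beta,\alpha]$ and any $\epsilon>0$, and reproduce the two-step argument the author presents: setting $v^-=-\min(v,0)\ge 0$, the sequence
\begin{equation*}
P_B[1,0]+P_A[0,1]+v^- \ \to\ P_B[1,0]+P_A[0,1]+v^-+v \ \to\ 1[\beta,\alpha]+v^-
\end{equation*}
is transitively valid (first move vertical, second horizontal), and the intermediate state is non-negative with finite support by construction of $v^-$. To conclude that the catalyst may be discarded, I would invoke the main catalyst-removal lemma already established (which combines Lemma~\ref{lemma:catbegin} and Lemma~\ref{lemma:catend}). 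Before applying it I must verify the hypothesis $v^-(0,0)=0$; this follows from the author's observation that validity forces both $h(0,0)\le 0$ and $v(0,0)\le 0$, while $h(0,0)+v(0,0)=0$, so in fact $v(0,0)=0$ and hence $v^-(0,0)=0$. The lemma then yields transitively valid $P_B[1,0]+P_A[0,1]\to 1[\beta+\epsilon,\alpha+\epsilon]$, i.e.\ a TDPG with final point $[\beta+\epsilon,\alpha+\epsilon]$. Using monotonicity of $f$, this gives $\inf_{TDPG} f \le f(\beta+\epsilon,\alpha+\epsilon)$, and taking $\epsilon\to 0$ (with the mild assumption that the infimum is approached continuously, which is automatic for $f(\beta,\alpha)=\max(\beta,\alpha)-1/2$ and more generally by replacing $f$ by its lower semicontinuous envelope for the purpose of infimization) yields $\inf_{TDPG} f(\beta,\alpha)\le \inf_{TIPG} f(\beta,\alpha)$.

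The genuinely hard part is not the theorem but the catalyst-removal lemma it depends on, which the author has already proved. In the proof proper I expect the only subtlety worth flagging is the boundary behavior at $(0,0)$: without $v^-(0,0)=0$ the catalyst-removal step fails because Lemma~\ref{lemma:catbegin} cannot generate any weight at the origin, so this verification is the one step that needs to be stated explicitly before chaining the ingredients together.
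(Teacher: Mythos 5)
Your proposal is correct and follows essentially the same route as the paper: the first three equalities are inherited from Theorem~\ref{thm:kitmain2}, the TDPG$\to$TIPG direction uses the same decomposition $h=\sum_{i\in H}(p_i-p_{i-1})$, $v=\sum_{i\in V}(p_i-p_{i-1})$, and the TIPG$\to$TDPG direction uses the catalyst $v^-$ together with the catalyst-removal lemma, including the same verification that $v^-(0,0)=0$. Your extra remark about the $\epsilon\to 0$ limit and lower semicontinuity of $f$ is a harmless refinement of a point the paper leaves implicit.
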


\section{\label{sec:zero}Towards zero bias}

In this section we will finally describe a family of protocols for
coin-flipping that achieves arbitrarily small bias. The protocols will be
described in Kitaev's second formalism using the tools of the previous
sections.

In the first part of the section we will attempt to give some intuition for
our construction. Along the way we will prove a couple of important
lemmas. In Section~\ref{sec:formal} we will simply present the
corresponding pair of functions $h$ and $v$ and prove that they satisfy the
necessary properties.

For those who have skipped ahead to this section, we review some of the key
concepts that have been defined in previous sections: A valid function
$f(z)$ has finite support and satisfies $\sum_z f(z)=0$ and $\sum_z
\frac{-1}{\lambda+z}f(z)\geq 0$ for all $\lambda>0$. These constraints are
equivalent to those discussed in the introduction. Examples of valid
functions are point raises, point merges and point splits as defined in
Lemma~\ref{lemma:rms}. A valid horizontal function $h(x,y)$ is valid as a
function of $x$ for every $y\geq 0$. Similarly, a valid vertical function
$v(x,y)$ is valid as a function of $y$ for every $x\geq 0$. A TIPG is a
valid horizontal function plus a valid vertical function such that $h+v =
1[\beta,\alpha] -P_B[1,0] -P_A[0,1]$, where $[x_0,y_0]$ denotes a function
that takes value one at $x=x_0$, $y=y_0$ and is zero everywhere else. Such
a TIPG leads to a coin-flipping protocol with $P_A^*\leq\alpha$ and
$P_B^*\leq\beta$.

\subsection{Guiding principles}

We begin our discussion with a TIPG example. We will analyze the TIPG with
bias $1/6$ first introduced in the introduction (Fig.~\ref{fig:introlad16})
and reproduced here in Fig.~\ref{fig:lad16}. The two figures are intended
to denote the same TIPG, though the latter figure has a slightly different
labeling convention. 

The new labeling convention provides some useful intuition for TIPGs:
because of probability conservation, one can associate a probability with
each arrow. The arrows carry the probability from their base to their
head. The probability associated to a point can be computed as the sum of
incoming probabilities, (which equals the sum of outgoing probabilities for
all points except the initial and final points). Furthermore, the
probability carried by arrows not associated with the initial or final
points must go around in loops, such as boxes or figure eights. It is often
easiest in a figure to label each loop with the probability that goes
around it, and this is the idea behind the labeling of
Fig.~\ref{fig:lad16}.

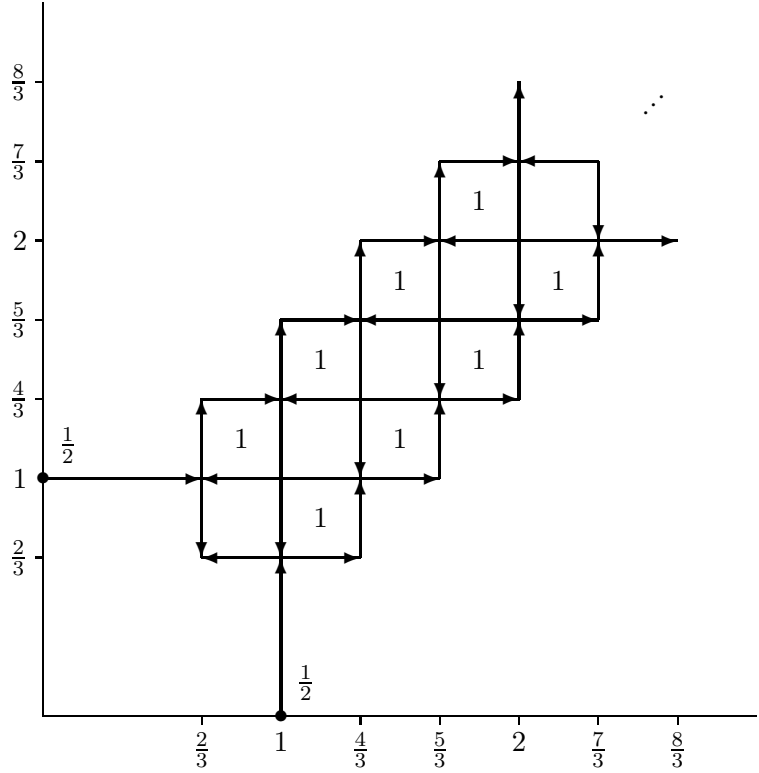
\begin{figure}[tb]
\begin{center}
\unitlength = 30pt
\begin{picture}(11,10)(-1,-1)
\put(0,0){\line(1,0){9}}
\put(0,0){\line(0,1){9}}
\put(2,0){\line(0,-1){0.1}}
\put(4,0){\line(0,-1){0.1}}
\put(5,0){\line(0,-1){0.1}}
\put(6,0){\line(0,-1){0.1}}
\put(7,0){\line(0,-1){0.1}}
\put(8,0){\line(0,-1){0.1}}
\put(0,2){\line(-1,0){0.1}}
\put(0,4){\line(-1,0){0.1}}
\put(0,5){\line(-1,0){0.1}}
\put(0,6){\line(-1,0){0.1}}
\put(0,7){\line(-1,0){0.1}}
\put(0,8){\line(-1,0){0.1}}
\put(2,-0.2){\makebox(0,0)[t]{$\frac{2}{3}$}}
\put(3,-0.2){\makebox(0,0)[t]{$1$}}
\put(4,-0.2){\makebox(0,0)[t]{$\frac{4}{3}$}}
\put(5,-0.2){\makebox(0,0)[t]{$\frac{5}{3}$}}
\put(6,-0.2){\makebox(0,0)[t]{$2$}}
\put(7,-0.2){\makebox(0,0)[t]{$\frac{7}{3}$}}
\put(8,-0.2){\makebox(0,0)[t]{$\frac{8}{3}$}}
\put(-0.2,2){\makebox(0,0)[r]{$\frac{2}{3}$}}
\put(-0.2,3){\makebox(0,0)[r]{$1$}}
\put(-0.2,4){\makebox(0,0)[r]{$\frac{4}{3}$}}
\put(-0.2,5){\makebox(0,0)[r]{$\frac{5}{3}$}}
\put(-0.2,6){\makebox(0,0)[r]{$2$}}
\put(-0.2,7){\makebox(0,0)[r]{$\frac{7}{3}$}}
\put(-0.2,8){\makebox(0,0)[r]{$\frac{8}{3}$}}
\put(3.2,0.2){\makebox(0,0)[bl]{$\frac{1}{2}$}}
\put(0.2,3.2){\makebox(0,0)[bl]{$\frac{1}{2}$}}
\put(3.5,2.5){\makebox(0,0){$1$}}
\put(4.5,3.5){\makebox(0,0){$1$}}
\put(5.5,4.5){\makebox(0,0){$1$}}
\put(6.5,5.5){\makebox(0,0){$1$}}
\put(2.5,3.5){\makebox(0,0){$1$}}
\put(3.5,4.5){\makebox(0,0){$1$}}
\put(4.5,5.5){\makebox(0,0){$1$}}
\put(5.5,6.5){\makebox(0,0){$1$}}
\thicklines
\put(3,0){\vector(0,1){2}}
\put(0,3){\vector(1,0){2}}
\put(3,2){\vector(-1,0){1}}
\put(2,3){\vector(0,-1){1}}
\put(3,2){\vector(1,0){1}}
\put(2,3){\vector(0,1){1}}
\put(4,3){\vector(-1,0){2}}
\put(3,4){\vector(0,-1){2}}
\put(2,4){\vector(1,0){1}}
\put(4,2){\vector(0,1){1}}
\put(4,3){\vector(1,0){1}}
\put(3,4){\vector(0,1){1}}
\put(5,4){\vector(-1,0){2}}
\put(4,5){\vector(0,-1){2}}
\put(3,5){\vector(1,0){1}}
\put(5,3){\vector(0,1){1}}
\put(5,4){\vector(1,0){1}}
\put(4,5){\vector(0,1){1}}
\put(6,5){\vector(-1,0){2}}
\put(5,6){\vector(0,-1){2}}
\put(4,6){\vector(1,0){1}}
\put(6,4){\vector(0,1){1}}
\put(6,5){\vector(1,0){1}}
\put(5,6){\vector(0,1){1}}
\put(7,6){\vector(-1,0){2}}
\put(6,7){\vector(0,-1){2}}
\put(5,7){\vector(1,0){1}}
\put(7,5){\vector(0,1){1}}
\put(7,6){\vector(1,0){1}}
\put(6,7){\vector(0,1){1}}
\put(7,7){\vector(-1,0){1}}
\put(7,7){\vector(0,-1){1}}
\put(7.6,7.6){\makebox(0,0){$\cdot$}}
\put(7.7,7.7){\makebox(0,0){$\cdot$}}
\put(7.8,7.8){\makebox(0,0){$\cdot$}}
\put(3,0){\makebox(0,0){$\bullet$}}
\put(0,3){\makebox(0,0){$\bullet$}}
\end{picture}
\caption{A TIPG with bias $1/6$.}
\label{fig:lad16}
\end{center}
\end{figure}

We can also use our algebraic notation to express the TIPG. The horizontal
arrows encode the function
\be
h &=&
+\frac{1}{2}\bigg[\frac{2}{3},\frac{2}{3}\bigg]
-\frac{3}{2}\bigg[1,\frac{2}{3}\bigg]
+1\bigg[\frac{4}{3},\frac{2}{3}\bigg]
\\\nonumber&&
-\frac{1}{2}\bigg[0,1\bigg]
+\frac{3}{2}\bigg[\frac{2}{3},1\bigg]
-2\bigg[\frac{4}{3},1\bigg]
+1\bigg[\frac{5}{3},1\bigg]
\\\nonumber&&
+\sum_{k=4}^\infty\lp(
-1\bigg[\frac{k-2}{3},\frac{k}{3}\bigg]
+2\bigg[\frac{k-1}{3},\frac{k}{3}\bigg]
-2\bigg[\frac{k+1}{3},\frac{k}{3}\bigg]
+1\bigg[\frac{k+2}{3},\frac{k}{3}\bigg]
\rp).
\ee

The last line of $h$ denotes a pattern that we shall call a ladder. More
generally, we shall refer to any regular pattern that heads to infinity on
the diagonal as a ladder. The problem with ladders is that they involve an
infinite number of points, and we previously required that $h$ have support
only on a finite set of points. In Section~\ref{sec:trunc} we will study
how to properly truncate these ladders, but for the moment we shall ignore
this issue.

By symmetry we can define $v(x,y)=h(y,x)$. By construction all terms in the
sum $h+v$ cancel except for the three required
\be
h+v = 1\bigg[\frac{2}{3},\frac{2}{3}\bigg]-
\frac{1}{2}\bigg[1,0\bigg]-\frac{1}{2}\bigg[0,1\bigg].
\ee

Let us verify that $h$ is a valid horizontal functions (which by symmetry
proves that $v$ is a valid vertical function). It is easy to see that for
every $y$ we have $\sum_x h(x,y)=0$. The other constraint that needs to be
checked is $\sum_x \frac{-1}{\lambda+x} h(x,y)\geq 0$ for all $\lambda>0$
and all $y\geq 0$. We begin with the case $y=k/3\geq 4/3$.
\be
\sum_{x} \frac{-1}{\lambda+x} h\lp(x,\,\frac{k}{3}\rp) &=&
\frac{1}{\lambda + \frac{k-2}{3}}
-\frac{2}{\lambda + \frac{k-1}{3}}
+\frac{2}{\lambda + \frac{k+1}{3}}
-\frac{1}{\lambda + \frac{k+2}{3}}
\nonumber\\
&=&
\frac{\frac{1}{3}}{(\lambda + \frac{k-2}{3})(\lambda + \frac{k-1}{3})}
-\frac{\frac{2}{3}}{(\lambda + \frac{k-1}{3})(\lambda + \frac{k+1}{3})}
+\frac{\frac{1}{3}}{(\lambda + \frac{k+1}{3})(\lambda + \frac{k+2}{3})}
\nonumber\\
&=&
\frac{\frac{1}{3}}{(\lambda + \frac{k-2}{3})(\lambda + \frac{k-1}{3})
(\lambda + \frac{k-1}{3})}
-\frac{\frac{1}{3}}{(\lambda + \frac{k-1}{3})(\lambda + \frac{k+1}{3})
(\lambda + \frac{k+2}{3})}
\nonumber\\
&=&
\frac{(\frac{1}{3})(\frac{4}{3})}
{(\lambda + \frac{k-2}{3})(\lambda + \frac{k-1}{3}) 
(\lambda + \frac{k-1}{3})(\lambda + \frac{k+2}{3})}
\geq 0
\label{eq:sampcons}
\ee
\noindent
for $\lambda> 0$, where the successive simplifications involves splitting
the middle terms and using relations of the form
\be
\frac{1}{\lambda+x_1}-\frac{1}{\lambda+x_2}=
\frac{x_2-x_1}{(\lambda+x_1)(\lambda+x_2)}.
\ee
\noindent
The idea is that we can interpret successive lines as follows: The first
line is the standard sum over points with a numerator corresponding to
probabilities. The second line is a sum over arrows with a numerator
corresponding to probabilities times distance traveled (which we can think
of as momentum). Finally, the third line is a sum over pairs of arrows with
zero net momentum.

The constraint at $y=1$ is roughly the same, except that the leftmost arrow
travels twice the distance and carries half the probability. More
specifically, we can write
\be
\label{eq:ladnsplit}
&&-\frac{1}{2}\bigg[0\bigg]
+\frac{3}{2}\bigg[\frac{2}{3}\bigg]
-2\bigg[\frac{4}{3}\bigg]
+1\bigg[\frac{5}{3}\bigg]
\\\nonumber
&&\qquad\qquad\qquad\qquad=
\lp(-\frac{1}{2}\bigg[0\bigg]+1\bigg[\frac{1}{3}\bigg]
- \frac{1}{2}\bigg[\frac{2}{3}\bigg]\rp)
+\lp(-1\bigg[\frac{1}{3}\bigg]+ 2\bigg[\frac{2}{3}\bigg]
-2\bigg[\frac{4}{3}\bigg]+1\bigg[\frac{5}{3}\bigg]\rp).
\ee
The right term is exactly what would be there if the ladder had been
extended to $y=1$. It is valid by Eq.~(\ref{eq:sampcons}). The left term is
the difference between the long arrow carrying probability $1/2$ and the
short arrow carrying probability $1$. It is valid because it is a point
merge. Therefore, the original expression is a sum of two valid terms and
itself is valid, as can also be checked by direct computation.

The constraint for $y=\frac{2}{3}$ can also be directly checked, though in
fact it is just a point splitting, and hence valid because
$\frac{3}{2}/1=\frac{1}{2}/\frac{2}{3}+1/\frac{4}{3}$.

Except for the fact that the ladder involves an infinite number of points,
we have completed the proof that the resulting TIPG corresponds to a
protocol with bias $1/6$. The infinite number of points, though, is a
serious problem from the point of view of our constructive description of
Kitaev's formalism: for instance, the canonical catalyst state used to
convert the TIPG into a TDPG carries an infinite amount of probability.

To proceed we therefore must truncate the ladder at some large distance
$\Gamma$. The truncation will add small extra terms to the bias, which
will go to zero as $\Gamma\rightarrow\infty$. We can think of the
different values for $\Gamma$ as a family of protocols which converges to
a bias of $1/6$. 

While the formal truncation is done in Section~\ref{sec:trunc} we will try
to paint an intuitive picture here as to why truncation is possible. Let us
imagine that we naively cut the ladder diagonally at some point in such a
way that the end looks like the top of Fig.~\ref{fig:lad16}. The ladder is
still valid (the top rung is just a point merge), however we are left with
an excess of probability in the edges and a deficit of probability in the
center. To correct the situation we need to add in a term of the form
$2[\frac{\Gamma}{3},\frac{\Gamma}{3}]
-1[\frac{\Gamma+1}{3},\frac{\Gamma-1}{3}]
-1[\frac{\Gamma-1}{3},\frac{\Gamma+1}{3}]$, which is a coin-flipping
problem!

Admittedly it is a coin-flipping problem with twice the probability and two
thirds of the distance between points but that does not make much of a
difference. The problem is located far away from the axes, though, so it
really is a coin flipping with cheat detection problem as described by
Section~\ref{sec:cheat}. Even better, the cheat detection is proportional
to $\Gamma$ which can be made arbitrarily large at no cost to us (us being
the designers of the protocols, of course there is a practical cost
involved in implementing protocols with large $\Gamma$).

As the amount of cheat detection becomes infinite, the rules of point games
become very simple: probability is conserved and average $x$ and $y$ cannot
decrease. As the problem we are carrying off to infinity has zero net
probability, and zero net average $x$ and $y$, it should be resolvable at
infinity. Sadly, even at infinity zero-bias coin flipping is impossible
(only arbitrarily small bias is allowed) so after resolving the problem at
infinity, we still need to bring back an error term down through the
ladder.

In practice, we still need to truncate the ladder at a finite distance,
resolve the coin-flipping problem at that height, and carry the error terms
back down through the ladder. There is a fairly automatic way of taking care
of all of this, but it involves more complicated ladders. The next section
will present the most important result used in building such complicated
ladders.

\subsubsection{Obtaining non-negative numerators}

Whenever we want to verify that a function $p(x)$ is valid, we need to
examine expressions of the form
\be
\sum_i \lp(\frac{-1}{\lambda+x_i}\rp) p(x_i) = 
\frac{f(-\lambda)}{\prod_i(\lambda+x_i)},
\ee
\noindent
where $x_1,\dots,x_n\geq 0$ are the finite support of $p$, and
$f(-\lambda)$ is some polynomial whose coefficients depends on the non-zero
values of $p(x)$. The reasons for making $f$ a function of $-\lambda$
rather than $\lambda$ will become clear below.

The function $p(x)$ is valid only if the above expression is non-negative
for $\lambda>0$, which in turn is true if and only if $f(-\lambda)$ is
non-negative for $\lambda>0$. The problem is that combining the terms to
find $f(-\lambda)$ is often tedious, and verifying its non-negativity can be
fairly difficult.

On the other hand, constructing a non-negative polynomial is generally easy
(for instance we can specify it as a product of its zeros). Therefore, it
is often easier to start with $f(-\lambda)$, and use it to compute an
appropriate distribution $p(x)$ over some previously selected points
$x_1,\dots,x_n$. That is the approach that we will be developing in this
section. The next two lemmas will help us prove that the desired expression
is $p(x_i) = -f(x_i)/\prod_{\substack{j\neq i}}(x_j-x_i)$, which also
satisfies probability conservation so long as $f(-\lambda)$ has degree no
greater than $n-2$.

\begin{lemma}
Let $n\geq 2$ and $x_1,\dots,x_n\in\R$ be distinct. Then
\be
\sum_{i=1}^n\prod_{\substack{j=1\cr j\neq i}}^n \frac{1}{(x_j-x_i)}=0.
\ee
\end{lemma}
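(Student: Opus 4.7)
The plan is to recognize this identity as a one-line consequence of Lagrange interpolation. I would introduce the monic polynomial $p(y) = \prod_{j=1}^n (y - x_j)$ and the associated Lagrange basis polynomials
\[
L_i(y) = \prod_{\substack{j=1\\ j\neq i}}^n \frac{y - x_j}{x_i - x_j},
\]
each of which has degree exactly $n-1$ and leading coefficient $\prod_{j\neq i} 1/(x_i - x_j)$. Because Lagrange interpolation reconstructs any polynomial of degree at most $n-1$ from its values at the $n$ distinct nodes $x_1,\dots,x_n$, applied to the constant polynomial $q(y)=1$ it yields the identity $\sum_{i=1}^n L_i(y) \equiv 1$ for all $y$.

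Next I would compare the coefficient of $y^{n-1}$ on the two sides. The right-hand side is constant and contributes $0$; the left-hand side contributes the sum of the leading coefficients of the $L_i$, giving $\sum_{i=1}^n \prod_{j\neq i} 1/(x_i - x_j) = 0$. The lemma as stated has $(x_j - x_i)$ rather than $(x_i - x_j)$ in each denominator, which differs by a global factor of $(-1)^{n-1}$, so the vanishing is preserved and the claim follows.

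There isn't really a hard step here; the only thing to check is that the hypothesis $n \geq 2$ is what guarantees the $L_i$ have nontrivial degree $n-1$ so that reading off the $y^{n-1}$ coefficient actually says something. As a sanity check I would verify the $n = 2$ base case directly ($1/(x_2 - x_1) + 1/(x_1 - x_2) = 0$), which is immediate. An alternative self-contained route would be via residues: the rational function $1/p(y)$ has denominator of degree at least two greater than the numerator, so the sum of its residues over all poles vanishes, and these residues are precisely $1/p'(x_i) = \prod_{j\neq i} 1/(x_i - x_j)$, again giving the identity up to the sign $(-1)^{n-1}$.
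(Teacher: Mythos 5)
Your proof is correct, but it takes a different route from the paper. The paper argues by induction on $n$: the base case $n=2$ is checked directly, and for $n>2$ the partial-fraction identity $\frac{1}{(x_1-x_i)(x_n-x_i)}=\frac{1}{x_n-x_1}\bigl(\frac{1}{x_1-x_i}-\frac{1}{x_n-x_i}\bigr)$ is used to split the sum into two copies of the same identity over $n-1$ points. You instead observe that the terms are (up to the global sign $(-1)^{n-1}$, which you correctly note is harmless) the leading coefficients of the Lagrange basis polynomials $L_i$, and that $\sum_i L_i\equiv 1$ since interpolation of the constant $1$ at $n$ distinct nodes is exact; comparing $y^{n-1}$ coefficients, with $n\geq 2$ guaranteeing that the right-hand side contributes $0$, finishes the argument. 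Both are valid; the paper's induction is elementary and rehearses exactly the telescoping manipulation it reuses elsewhere (e.g.\ in the validity computations such as Eq.~(\ref{eq:sampcons})), while your interpolation argument is shorter and strictly stronger: interpolating an arbitrary polynomial $f$ of degree at most $n-2$ and reading off the top coefficient immediately gives $\sum_i f(x_i)\prod_{j\neq i}(x_j-x_i)^{-1}=0$, which is precisely the subsequent lemma that the paper proves by a second, separate induction. Your residue-based alternative ($\sum_i 1/p'(x_i)$ as the total residue of $1/p(y)$, vanishing because $\deg p\geq 2$) likewise yields both statements at once, at the cost of importing complex analysis into an otherwise purely algebraic passage.
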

\begin{proof}
We proceed by induction. For $n=2$ we trivially have
\be
\frac{1}{(x_2-x_1)}+\frac{1}{(x_1-x_2)}=0.
\ee
For $n>2$ we use the identities for $1<i<n$
\be
\frac{1}{(x_1-x_i)(x_n-x_i)}=\frac{1}{(x_n-x_1)}
\lp(\frac{1}{(x_1-x_i)}-\frac{1}{(x_n-x_i)}\rp)
\ee
to expand
\be
\sum_{i=1}^n\prod_{\substack{j=1\cr j\neq i}}^n \frac{1}{(x_j-x_i)}=
\frac{1}{(x_n-x_1)}\lp(
\sum_{i=1}^{n-1}\prod_{\substack{j=1\cr j\neq i}}^{n-1} \frac{1}{(x_j-x_i)}-
\sum_{i=2}^n\prod_{\substack{j=2\cr j\neq i}}^n \frac{1}{(x_j-x_i)}\rp)
\ee
and by induction both terms inside the parenthesis are zero.
\end{proof}

\begin{lemma}
Let $n\geq 2$ and $x_1,\dots,x_n\in\R$ be distinct. Let $f(x)$ be a
polynomial of degree $k\leq n-2$. Then
\be
\sum_{i=1}^n \frac{f(x_i)}{\prod_{\substack{j\neq i}}(x_j-x_i)}=0.
\ee
\end{lemma}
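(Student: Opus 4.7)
The plan is to use the Lagrange interpolation formula and then compare leading coefficients. Recall that for any polynomial $f$ of degree at most $n-1$,
\begin{equation}
f(x) = \sum_{i=1}^n f(x_i) \prod_{\substack{j=1\cr j\neq i}}^n \frac{x - x_j}{x_i - x_j},
\end{equation}
since both sides are polynomials of degree at most $n-1$ that agree at the $n$ distinct points $x_1,\dots,x_n$. The coefficient of $x^{n-1}$ on the right is $\sum_{i=1}^n f(x_i)/\prod_{j\neq i}(x_i - x_j)$, whereas the coefficient of $x^{n-1}$ on the left vanishes because $\deg f \leq n-2$. Hence that sum is zero, and since $\prod_{j\neq i}(x_j-x_i) = (-1)^{n-1}\prod_{j\neq i}(x_i-x_j)$ with the same overall sign for every $i$, the statement of the lemma follows by pulling the constant sign out of the sum.

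An alternative route that leans more directly on the previous lemma is to use linearity in $f$ to reduce to monomials $f(x)=x^k$ for $0\leq k \leq n-2$. The case $k=0$ is precisely the previous lemma. For higher $k$ one can proceed by induction, using the same telescoping identities
\begin{equation}
\frac{1}{(x_a-x_i)(x_b-x_i)} = \frac{1}{(x_b-x_a)}\left(\frac{1}{(x_a-x_i)}-\frac{1}{(x_b-x_i)}\right)
\end{equation}
that powered the previous proof, combined with the fact that $x_i f(x_i) = (x_i - x_a)f(x_i) + x_a f(x_i)$ to peel off factors of $x_i$ from the numerator. This works but is more tedious; the Lagrange argument is cleaner and avoids the bookkeeping.

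There is essentially no obstacle here: the identity is a classical consequence of Lagrange interpolation, and the only subtlety worth flagging is the sign convention, since the lemma is stated with $(x_j-x_i)$ in the denominator while Lagrange interpolation naturally produces $(x_i-x_j)$. These differ by a global factor of $(-1)^{n-1}$ that is independent of $i$, so the vanishing of one sum is equivalent to the vanishing of the other.
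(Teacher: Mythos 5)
Your Lagrange-interpolation argument is correct, but it is a genuinely different route from the paper's. The paper proves the identity by induction on the degree $k$: it writes $f(x)=c\prod_{j=1}^{k}(x_j-x)+g(x)$ with $\deg g<k$, observes that the first piece cancels $k$ factors in the denominator and reduces to the previous lemma (the constant case, $\sum_i \prod_{j\neq i}1/(x_j-x_i)=0$) applied to the remaining points $x_{k+1},\dots,x_n$, and handles $g$ by the induction hypothesis. Your argument instead identifies the sum $\sum_i f(x_i)/\prod_{j\neq i}(x_i-x_j)$ as the coefficient of $x^{n-1}$ in the Lagrange interpolant of $f$ at $x_1,\dots,x_n$, which must vanish because $\deg f\leq n-2$, and you correctly note that the paper's sign convention $(x_j-x_i)$ differs from the interpolation-natural $(x_i-x_j)$ only by a uniform factor $(-1)^{n-1}$. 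Your route is shorter, self-contained (it does not even need the preceding constant-case lemma, which it recovers as the case $f\equiv 1$), and exposes the identity as a classical fact; the paper's route stays entirely within the elementary partial-fraction machinery it has already set up and needs no appeal to interpolation theory. Your sketched second route (induction on monomials with telescoping) is closer in spirit to the paper but, as you say, more tedious; the Lagrange argument alone suffices.
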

\begin{proof}
We proceed by induction on $k$. For $k=0$ the result follows from the
previous lemma. If $k>0$ we can write $f(x)=c\prod_{j=1}^{k}(x_j-x)+g(x)$
for some scalar $c\in\R$ and a polynomial $g(x)$ of degree less than $k$.
Then
\be
\sum_{i=1}^n \frac{f(x_i)}{\prod_{\substack{j\neq i}}(x_j-x_i)}=c
\sum_{i=k+1}^n\prod_{\substack{j=k+1\cr j\neq i}}^n \frac{1}{(x_j-x_i)}+
\sum_{i=1}^n \frac{g(x_i)}{\prod_{\substack{j\neq i}}(x_j-x_i)}
\ee
\noindent
and both terms are zero by induction.
\end{proof}

\begin{lemma}
Let $x_1,\dots,x_n$ be distinct non-negative numbers and let $f(-\lambda)$
be a polynomial in $\lambda$ of degree $k\leq n-2$ which is non-negative
for $\lambda>0$. Then
\be
p = \sum_i \lp(\frac{-f(x_i)}{\prod_{\substack{j\neq i}}(x_j-x_i)} \rp)
[x_i]
\ee
is a valid function.
\label{lemma:posnumer}
\end{lemma}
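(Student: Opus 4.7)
The plan is to verify directly that $p$ satisfies the two defining properties of a valid function: the probabilities sum to zero, and $\sum_z \frac{-1}{\lambda+z} p(z) \ge 0$ for every $\lambda > 0$. The first property will fall out immediately from the previous lemma applied to $-f$, since $-f$ is still a polynomial of degree at most $n-2$. The second property is really a partial fractions identity in disguise, and this is the core computation I would carry out.

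Concretely, I would introduce the rational function
\begin{equation}
g(\lambda) \;=\; \frac{f(-\lambda)}{\prod_{i=1}^n (\lambda + x_i)},
\end{equation}
which is a proper rational function because $\deg f \le n-2 < n$. Since the denominator has $n$ distinct simple roots at $\lambda = -x_i$, partial fractions yields $g(\lambda) = \sum_i \frac{A_i}{\lambda + x_i}$ where the residue at $\lambda = -x_i$ is
\begin{equation}
A_i \;=\; \lim_{\lambda\to -x_i}(\lambda+x_i)\, g(\lambda) \;=\; \frac{f(x_i)}{\prod_{j\neq i}(x_j - x_i)}.
\end{equation}
Therefore
\begin{equation}
\sum_i \frac{-1}{\lambda+x_i}\, p(x_i) \;=\; \sum_i \frac{A_i}{\lambda+x_i} \;=\; \frac{f(-\lambda)}{\prod_i(\lambda+x_i)},
\end{equation}
and the right hand side is non-negative for every $\lambda > 0$: the numerator is non-negative by hypothesis, and each factor $\lambda + x_i$ in the denominator is strictly positive since $x_i \ge 0$ and $\lambda > 0$. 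Conservation of probability reduces to $\sum_i \frac{-f(x_i)}{\prod_{j\neq i}(x_j - x_i)} = 0$, which is exactly the statement of the previous lemma applied to the polynomial $-f$ of degree $k \le n-2$.

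There is no real obstacle here; the proof is essentially a one-line partial fractions computation plus the already-proven combinatorial identity. The only thing to be careful about is that $\deg f \le n-2$ is needed both to ensure $g(\lambda)$ is a proper rational function (so that its partial fraction decomposition has no polynomial part) and to invoke the previous lemma for probability conservation; if $\deg f$ were $n-1$, the partial fractions argument would still give the non-negativity but conservation would fail, so this degree bound is genuinely tight.
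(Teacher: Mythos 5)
Your proposal is correct and takes essentially the same route as the paper: both hinge on the identity $\sum_i \frac{-1}{\lambda+x_i}\,p(x_i) = \frac{f(-\lambda)}{\prod_i(\lambda+x_i)}$, whose right-hand side is manifestly non-negative for $\lambda>0$. The only cosmetic differences are that the paper derives this identity by applying the preceding lemma to the $n+1$ points $x_1,\dots,x_n,-\lambda$ rather than by a residue/partial-fractions computation, and it obtains probability conservation from the $\lambda\to\infty$ limit of the same identity rather than by citing the preceding lemma directly, with the degree bound $k\le n-2$ playing the same role in both versions.
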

\begin{proof}
Using the previous lemma with an appended point $x_{n+1}=-\lambda$, we get
\be
\sum_{i=1}^n\frac{-1}{\lambda+x_i}
\lp( \frac{f(x_i)}{\prod_{\substack{j\neq i}}(x_j-x_i)} \rp) +
\frac{f(-\lambda)}{\prod_{i}(\lambda+x_i)}=0
\ee
which proves the constraints for $\lambda>0$. In fact, the above relation
holds so long as $f$ has degree $k\leq (n+1)-2$. However, we must reduce
the allowed degree by one more to get probability conservation
\be
\sum_i p(x_i) = \lim_{\lambda\rightarrow\infty} 
\sum_{i=1}^n\frac{\lambda}{\lambda+x_i}
\lp( \frac{-f(x_i)}{\prod_{\substack{j\neq i}}(x_j-x_i)} \rp)
= \lim_{\lambda\rightarrow\infty} 
\frac{-\lambda f(-\lambda)}{\prod_{i}(\lambda+x_i)}
\ee
\noindent
which converges to zero if the degree is $k\leq n-2$.
\end{proof}

\subsubsection{\label{sec:trunc}Truncating the ladder}

A single rung of our ladder has the form
\be
a\bigg[\frac{k-2}{3},\frac{k}{3}\bigg]
+b\bigg[\frac{k-1}{3},\frac{k}{3}\bigg]
+c\bigg[\frac{k+1}{3},\frac{k}{3}\bigg]
+d\bigg[\frac{k+2}{3},\frac{k}{3}\bigg]
\ee
for some constants $a,b,c,d\in\R$. Following the discussion in the last
section we want to set
\be
a=\frac{-f(\frac{k-2}{3})}{\lp(\frac{1}{3}\rp)
\lp(\frac{3}{3}\rp)\lp(\frac{4}{3}\rp)}=
-\frac{9f(\frac{k-2}{3})}{4}
\ee
\noindent
and similarly
\be
b = +\frac{9f(\frac{k-1}{3})}{2},\qquad
c = -\frac{9f(\frac{k+1}{3})}{2},\qquad
d = +\frac{9f(\frac{k+2}{3})}{4}.
\ee
The ladder for the bias $1/6$ protocol can be described this way with
$f(-\lambda)= 4/9$, which is clearly positive for $\lambda\geq 0$. But, of
course, we can allow $f$ to be other quadratic functions. More importantly,
we can allow different quadratic functions at different heights of the
ladder. That would lead to a function $f(x,y)$ with the constraint that for
every $y$ (on which the ladder is non-zero) $f(-\lambda,y)$ is a quadratic
polynomial in $\lambda$ that is non-negative for $\lambda>0$.

But there is a catch. We want to keep the symmetry of the problem so that we
can choose $v(x,y)=h(y,x)$ and still get $h+v$ to cancel on the ladder. In
other words, we want to ensure $h(x,y)=-h(y,x)$. This leads to the
conditions
\be
a_{x=\frac{k-2}{3},y=\frac{k}{3}} = -d_{x=\frac{k}{3},y=\frac{k-2}{3}}
&\qquad\Longrightarrow\qquad&
-\frac{9f(\frac{k-2}{3},\frac{k}{3})}{4} = 
-\frac{9f(\frac{k}{3},\frac{k-2}{3})}{4} 
\\
b_{x=\frac{k-1}{3},y=\frac{k}{3}} = -c_{x=\frac{k}{3},y=\frac{k-1}{3}}
&\qquad\Longrightarrow\qquad&
\frac{9f(\frac{k-1}{3},\frac{k}{3})}{2} = 
\frac{9f(\frac{k}{3},\frac{k-1}{3})}{2} 
\ee
\noindent
which are both satisfied if we enforce $f(x,y)=f(y,x)$.

Now we can choose our function $f$ to stop the ladder at a certain height
$y=\Gamma/3$ by setting
\be
f(x,y) = 
C\lp(\frac{\Gamma+1}{3}-x\rp)\lp(\frac{\Gamma+2}{3}-x\rp)
\lp(\frac{\Gamma+1}{3}-y\rp)\lp(\frac{\Gamma+2}{3}-y\rp)
\ee
for some large integer $\Gamma$ and positive constant $C$ to be determined
below.  The ladder part of $h$ becomes
\be
h_{lad}&=&\sum_{k=3}^\Lambda\bigg(
-\frac{9f(\frac{k-2}{3},\frac{k}{3})}{4}
\bigg[\frac{k-2}{3},\frac{k}{3}\bigg]
+\frac{9f(\frac{k-1}{3},\frac{k}{3})}{2}
\bigg[\frac{k-1}{3},\frac{k}{3}\bigg]
\\\nonumber&&\qquad\qquad\qquad\qquad\qquad\qquad
-\frac{9f(\frac{k+1}{3},\frac{k}{3})}{2}
\bigg[\frac{k+1}{3},\frac{k}{3}\bigg]
+\frac{9f(\frac{k+2}{3},\frac{k}{3})}{4}
\bigg[\frac{k+2}{3},\frac{k}{3}\bigg]
\bigg).
\ee
We have stopped the ladder sum at height $\Lambda/3$. Of course, we could
also have simply stopped the original ladder with $f=4/9$ at a particular
height, but we would have lost the antisymmetry of $h$. The fact that
$f(\frac{\Gamma+2}{3},\frac{\Gamma}{3})=
f(\frac{\Gamma+1}{3},\frac{\Gamma}{3})=
f(\frac{\Gamma+1}{3},\frac{\Gamma-1}{3})=0$ are all zero ensures that we
can stop the above pattern and still retain $h(x,y)=-h(y,x)$.

The next step is to verify that $h_{lad}$ is horizontally valid, but that
follows from the fact that $f(-\lambda,y)\geq0$ for $\lambda>0$ and $y\leq
\Gamma/3$, and that it is quadratic in $\lambda$. 

Finally, let us examine the bottom of the ladder. If $\Gamma$ is very
large, and $x$ and $y$ are small compared to $\Gamma$, then $f\simeq C
\Gamma^4/3^4$. If we further choose $C\simeq 36/\Gamma^4$ we end up
approximating the original constant $f=4/9$ ladder. 

The rest of this section will work out the details of merging this
truncated ladder with the structure that needs to lie at the bottom. The
discussion contains no critical new ideas and can be skipped on a first
reading.

Putting everything together we end up with a structure of the form
\be
h &=& h_{lad} 
\\\nonumber
&&- \frac{1}{2}\bigg[0,1\bigg] + 1\bigg[\frac{1}{3},1\bigg]
- \frac{1}{2}\bigg[\frac{2}{3},1\bigg]
\\\nonumber
&&+\frac{1}{2}\bigg[\frac{2+\delta}{3},\frac{2}{3}\bigg]
+\lp(\frac{1}{2}-h_{lad}\lp(\frac{2}{3},1\rp)\rp)
\bigg[1,\frac{2}{3}\bigg]
-h_{lad}\lp(\frac{2}{3},\frac{4}{3}\rp)\bigg[\frac{4}{3},\frac{2}{3}\bigg]
\\\nonumber
&&-\frac{1}{2}\bigg[\frac{2}{3},\frac{2+\delta}{3}\bigg]
+\frac{1}{2}\bigg[\frac{2+\delta}{3},\frac{2+\delta}{3}\bigg]
\ee
\noindent
for some small $\delta>0$ to be determined in a moment.

Note that $h_{lad}$ runs up to $y=1$, which produces a point at $x=1/3$,
$y=1$. We must exactly cancel the amplitude in this point with the term on
the second line of $h$. We therefore choose
\be
C= \frac{4}{9}\lp(\frac{3}{\Gamma}\rp) \lp(\frac{3}{\Gamma+1}\rp)
 \lp(\frac{3}{\Gamma-2}\rp)  \lp(\frac{3}{\Gamma-1}\rp)
\ee
\noindent
so that $h_{lad}(\frac{1}{3},1)=-1$. Note that this $C$ has the right
behavior as $\Gamma\rightarrow \infty$. The validity of $h$ at $y=1$ then
follows because it is a sum of two valid terms (one coming from $h_{lad}$),
just as it was in the original ladder in Eq.~(\ref{eq:ladnsplit}).

The difficult line is $y=2/3$ where we have coefficients that are constrained
by the symmetry $h(x,y)=-h(y,x)$. The coefficients are
\be
\frac{1}{2},\qquad\qquad
\frac{1}{2}-h_{lad}\lp(\frac{2}{3},1\rp)= \frac{1}{2}-2
\lp(\frac{\Gamma-1}{\Gamma+1}\rp)
\qquad\text{and}\qquad
-h_{lad}\lp(\frac{2}{3},\frac{4}{3}\rp)=\frac{\Gamma-3}{\Gamma+1}.
\ee
Conservation of probability follows trivially, and the line will be valid
if it satisfies the point splitting constraint $\sum p_i/x_i=0$
\be
\frac{3}{2(2+\delta)}+\frac{1}{2}-2
\lp(\frac{\Gamma-1}{\Gamma+1}\rp)
+\frac{3}{4}\lp(\frac{\Gamma-3}{\Gamma+1}\rp)=0
\ee
\noindent
which implicitly defines $\delta=8/(3\Gamma-1)$.

Finally, without the last line we would have a protocol of the form
\be
\frac{1}{2}[1,0]+\frac{1}{2}[0,1]\rightarrow
\frac{1}{2}\bigg[\frac{2+\delta}{3},\frac{2}{3}\bigg]+
\frac{1}{2}\bigg[\frac{2}{3},\frac{2+\delta}{3}\bigg].
\ee
The last line uses point raising to merge the two final points at
$[\frac{2+\delta}{3},\frac{2+\delta}{3}]$, giving us a protocol with
$P_A^*=P_B^*=\frac{2+\delta}{3}$ where $\delta\rightarrow 0$ as
$\Gamma\rightarrow \infty$.

\subsubsection{\label{sec:alternate}Building better ladders}

Having completed our analysis of the original bias $1/6$ ladder, our task
now is to apply the knowledge gained to the building of better ladders.

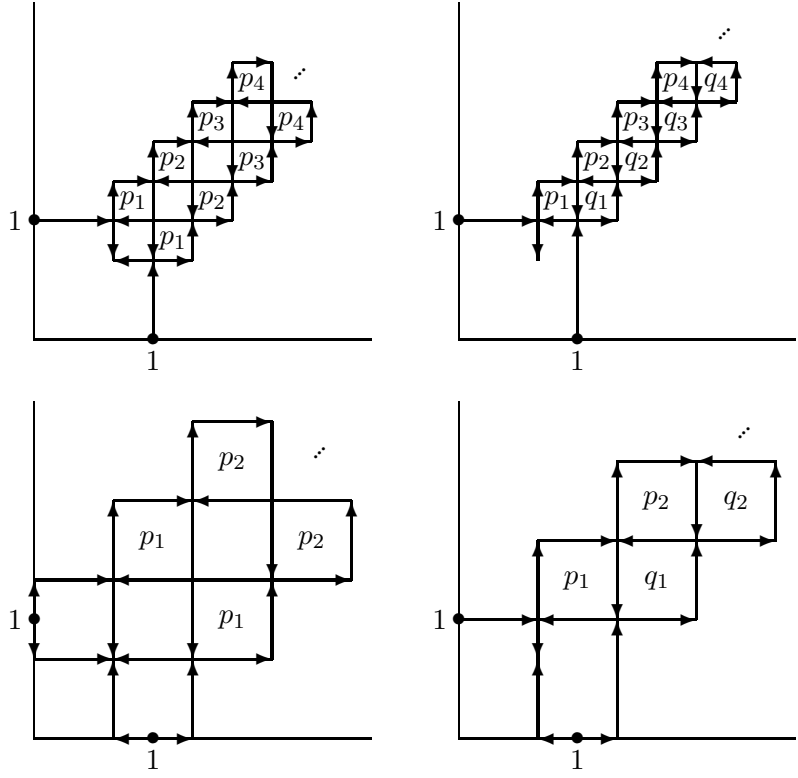
\begin{figure}[tbp]
\begin{center}
\unitlength = 15pt
\begin{picture}(9,9.5)(-0.5,-1)
\put(0,0){\line(1,0){8.5}}
\put(0,0){\line(0,1){8.5}}
\put(3,-0.3){\makebox(0,0)[t]{$1$}}
\put(-0.3,3){\makebox(0,0)[r]{$1$}}
\put(3.5,2.5){\makebox(0,0){$p_1$}}
\put(4.5,3.5){\makebox(0,0){$p_2$}}
\put(5.5,4.5){\makebox(0,0){$p_3$}}
\put(6.5,5.5){\makebox(0,0){$p_4$}}
\put(2.5,3.5){\makebox(0,0){$p_1$}}
\put(3.5,4.5){\makebox(0,0){$p_2$}}
\put(4.5,5.5){\makebox(0,0){$p_3$}}
\put(5.5,6.5){\makebox(0,0){$p_4$}}
\thicklines
\put(3,0){\vector(0,1){2}}
\put(0,3){\vector(1,0){2}}
\put(3,2){\vector(-1,0){1}}
\put(2,3){\vector(0,-1){1}}
\put(3,2){\vector(1,0){1}}
\put(2,3){\vector(0,1){1}}
\put(4,3){\vector(-1,0){2}}
\put(3,4){\vector(0,-1){2}}
\put(2,4){\vector(1,0){1}}
\put(4,2){\vector(0,1){1}}
\put(4,3){\vector(1,0){1}}
\put(3,4){\vector(0,1){1}}
\put(5,4){\vector(-1,0){2}}
\put(4,5){\vector(0,-1){2}}
\put(3,5){\vector(1,0){1}}
\put(5,3){\vector(0,1){1}}
\put(5,4){\vector(1,0){1}}
\put(4,5){\vector(0,1){1}}
\put(6,5){\vector(-1,0){2}}
\put(5,6){\vector(0,-1){2}}
\put(4,6){\vector(1,0){1}}
\put(6,4){\vector(0,1){1}}
\put(6,5){\vector(1,0){1}}
\put(5,6){\vector(0,1){1}}
\put(7,6){\vector(-1,0){2}}
\put(6,7){\vector(0,-1){2}}
\put(5,7){\vector(1,0){1}}
\put(7,5){\vector(0,1){1}}
\put(6.6,6.6){\makebox(0,0){$\cdot$}}
\put(6.7,6.7){\makebox(0,0){$\cdot$}}
\put(6.8,6.8){\makebox(0,0){$\cdot$}}
\put(3,0){\makebox(0,0){$\bullet$}}
\put(0,3){\makebox(0,0){$\bullet$}}
\end{picture}
\qquad
\begin{picture}(9,9.5)(-0.5,-1)
\put(0,0){\line(1,0){8.5}}
\put(0,0){\line(0,1){8.5}}
\put(3,-0.3){\makebox(0,0)[t]{$1$}}
\put(-0.3,3){\makebox(0,0)[r]{$1$}}
\put(3.5,3.5){\makebox(0,0){$q_1$}}
\put(4.5,4.5){\makebox(0,0){$q_2$}}
\put(5.5,5.5){\makebox(0,0){$q_3$}}
\put(6.5,6.5){\makebox(0,0){$q_4$}}
\put(2.5,3.5){\makebox(0,0){$p_1$}}
\put(3.5,4.5){\makebox(0,0){$p_2$}}
\put(4.5,5.5){\makebox(0,0){$p_3$}}
\put(5.5,6.5){\makebox(0,0){$p_4$}}
\thicklines
\put(3,0){\vector(0,1){3}}
\put(0,3){\vector(1,0){2}}
\put(2,3){\vector(0,-1){1}}
\put(2,3){\vector(0,1){1}}
\put(3,3){\vector(-1,0){1}}
\put(3,4){\vector(0,-1){1}}
\put(2,4){\vector(1,0){1}}
\put(3,3){\vector(1,0){1}}
\put(3,4){\vector(0,1){1}}
\put(4,4){\vector(-1,0){1}}
\put(4,5){\vector(0,-1){1}}
\put(3,5){\vector(1,0){1}}
\put(4,3){\vector(0,1){1}}
\put(4,4){\vector(1,0){1}}
\put(4,5){\vector(0,1){1}}
\put(5,5){\vector(-1,0){1}}
\put(5,6){\vector(0,-1){1}}
\put(4,6){\vector(1,0){1}}
\put(5,4){\vector(0,1){1}}
\put(5,5){\vector(1,0){1}}
\put(5,6){\vector(0,1){1}}
\put(6,6){\vector(-1,0){1}}
\put(6,7){\vector(0,-1){1}}
\put(5,7){\vector(1,0){1}}
\put(6,5){\vector(0,1){1}}
\put(6,6){\vector(1,0){1}}
\put(7,6){\vector(0,1){1}}
\put(7,7){\vector(-1,0){1}}
\put(6.6,7.6){\makebox(0,0){$\cdot$}}
\put(6.7,7.7){\makebox(0,0){$\cdot$}}
\put(6.8,7.8){\makebox(0,0){$\cdot$}}
\put(3,0){\makebox(0,0){$\bullet$}}
\put(0,3){\makebox(0,0){$\bullet$}}
\end{picture}
\\
\begin{picture}(9,10)(-0.5,-1)
\put(0,0){\line(1,0){8.5}}
\put(0,0){\line(0,1){8.5}}
\put(3,-0.3){\makebox(0,0)[t]{$1$}}
\put(-0.3,3){\makebox(0,0)[r]{$1$}}
\put(5,3){\makebox(0,0){$p_1$}}
\put(3,5){\makebox(0,0){$p_1$}}
\put(7,5){\makebox(0,0){$p_2$}}
\put(5,7){\makebox(0,0){$p_2$}}
\thicklines
\put(3,0){\vector(1,0){1}}
\put(3,0){\vector(-1,0){1}}
\put(0,3){\vector(0,1){1}}
\put(0,3){\vector(0,-1){1}}
\put(0,2){\vector(1,0){2}}
\put(2,0){\vector(0,1){2}}
\put(0,4){\vector(1,0){2}}
\put(4,0){\vector(0,1){2}}
\put(2,4){\vector(0,-1){2}}
\put(4,2){\vector(-1,0){2}}
\put(2,4){\vector(0,1){2}}
\put(4,2){\vector(1,0){2}}
\put(2,6){\vector(1,0){2}}
\put(6,2){\vector(0,1){2}}
\put(6,4){\vector(-1,0){4}}
\put(4,6){\vector(0,-1){4}}
\put(4,6){\vector(0,1){2}}
\put(6,4){\vector(1,0){2}}
\put(4,8){\vector(1,0){2}}
\put(8,4){\vector(0,1){2}}
\put(8,6){\vector(-1,0){4}}
\put(6,8){\vector(0,-1){4}}
\put(7.1,7.1){\makebox(0,0){$\cdot$}}
\put(7.2,7.2){\makebox(0,0){$\cdot$}}
\put(7.3,7.3){\makebox(0,0){$\cdot$}}
\put(3,0){\makebox(0,0){$\bullet$}}
\put(0,3){\makebox(0,0){$\bullet$}}
\end{picture}
\qquad
\begin{picture}(9,10)(-0.5,-1)
\put(0,0){\line(1,0){8.5}}
\put(0,0){\line(0,1){8.5}}
\put(3,-0.3){\makebox(0,0)[t]{$1$}}
\put(-0.3,3){\makebox(0,0)[r]{$1$}}
\put(3,4){\makebox(0,0){$p_1$}}
\put(5,6){\makebox(0,0){$p_2$}}
\put(5,4){\makebox(0,0){$q_1$}}
\put(7,6){\makebox(0,0){$q_2$}}
\thicklines
\put(3,0){\vector(1,0){1}}
\put(3,0){\vector(-1,0){1}}
\put(2,0){\vector(0,1){2}}
\put(2,3){\vector(0,-1){1}}
\put(0,3){\vector(1,0){2}}
\put(4,0){\vector(0,1){3}}
\put(4,3){\vector(-1,0){2}}
\put(4,3){\vector(1,0){2}}
\put(2,5){\vector(1,0){2}}
\put(6,5){\vector(-1,0){2}}
\put(6,5){\vector(1,0){2}}
\put(2,3){\vector(0,1){2}}
\put(4,5){\vector(0,-1){2}}
\put(6,3){\vector(0,1){2}}
\put(4,5){\vector(0,1){2}}
\put(6,7){\vector(0,-1){2}}
\put(8,5){\vector(0,1){2}}
\put(4,7){\vector(1,0){2}}
\put(8,7){\vector(-1,0){2}}
\put(7.1,7.6){\makebox(0,0){$\cdot$}}
\put(7.2,7.7){\makebox(0,0){$\cdot$}}
\put(7.3,7.8){\makebox(0,0){$\cdot$}}
\put(3,0){\makebox(0,0){$\bullet$}}
\put(0,3){\makebox(0,0){$\bullet$}}
\end{picture}
\caption{A few simple ladder protocols: Symmetric (left) and asymmetric 
(right), with initial point split (bottom) and without (top).}
\label{fig:simplads}
\end{center}
\end{figure}

We begin by studying a few simple variants of the $1/6$ ladder which are
depicted in Fig.~\ref{fig:simplads}. Whereas the ladders discussed thus far
have been symmetric (by reflection across the diagonal), all the TDPGs
discussed in Section~\ref{sec:examples} were asymmetric. Because the space
of TIPGs is a cone, an asymmetric TIPG can always be made symmetric by
taking a combination of itself and its reflection. The advantage of working
with symmetric TIPGs is that the validity of $h$ implies the validity of
$v$ so there are less constraints to check. The disadvantage is that the
expressions are generally more complicated (as we shall see below).
Nevertheless, in this paper we will use symmetric ladders to express the
main result and only study asymmetric ladders for comparison.

There is also the possibility of starting the ladders with a point split on
the axes. Numerical optimizations of the ladders depicted in
Fig.~\ref{fig:simplads} using a variable ladder spacing indicate that the
optimal TIPGs with no initial point split can achieve $P_A^*=P_B^*\approx
0.64$ whereas those with an initial point split can achieve
$P_A^*=P_B^*\approx 0.57$. From this perspective, constructing TIPGs with
an initial point split may be better. On the other hand, TIPGs with no
initial split tend to have simpler analytical expressions.

Unfortunately, one can also analytically prove that none of the forms
depicted in Fig.~\ref{fig:simplads} can achieve arbitrarily small bias. We
will not cover the proof here and instead directly proceed to studying more
complicated ladders that have more than four points across a horizontal
section.

A horizontal rung of an asymmetric ladder with $2k$ points across and
constant lattice spacing $\epsilon$ has the form
\be
\sum_{i=1}^{2k} \frac{-f(x+i\epsilon)}{\prod_{j\neq i} 
(j \epsilon - i \epsilon)}
\bigg[x+i\epsilon\bigg]=
\sum_{i=1}^{2k} \frac{(-1)^i f(x+i\epsilon)}{\epsilon^{2k-1}(i-1)!(2k-i)!}
\bigg[x+i\epsilon\bigg].
\ee
A symmetric ladder is similar, except that the center point is always
missing. Therefore a rung with $2k$ points can be written as
\be
\sum_{\substack{i=-k\cr i\neq 0}}^{k} \frac{-f(x+i\epsilon)}{\prod_{j\neq
i,j\neq 0} (j \epsilon - i \epsilon)} \bigg[x+i\epsilon\bigg]=
\sum_{\substack{i=-k\cr i\neq 0}}^{k} \frac{(-1)^{k+i} (i) f(x+i\epsilon)}
{\epsilon^{2k-1}(k+i)!(k-i)!}
\bigg[x+i\epsilon\bigg].
\ee
A complete symmetric ladder has the form
\be
h_{lad}= \sum_{j=j_0}^{\Gamma}
\sum_{\substack{i=-k\cr i\neq 0}}^{k}
\frac{(-1)^{k+i}(i) f((j+i)\epsilon,j\epsilon)}
{\epsilon^{2k-1}(k+i)!(k-i)!}
\bigg[(j+i)\epsilon,j\epsilon\bigg].
\ee
\noindent
The ladder has been truncated at $y/\epsilon=\Gamma$ which can be done if
we pick
\be
f(x,y) = g(x,y) \lp(\prod_{i=1}^{k} \lp(\Gamma\epsilon+i\epsilon-x\rp)\rp)
\lp(\prod_{i=1}^{k} \lp(\Gamma\epsilon+i\epsilon-y\rp)\rp)
\ee
\noindent
and then we are still free to choose a symmetric polynomial $g(x,y)=g(y,x)$
so long as $g(-\lambda,y)$ is non-negative for $\lambda>0$ and $y>0$, and
has degree at most $k-2$ in $\lambda$.

Because we only have $k-2$ zeros to play with in $g$ we can't fully
truncate the bottom of the ladder as we did the top. But that is not a
problem. After all, our goal is to attach the bottom of the ladder to our
coin-flipping problem. 

It is still useful to truncate as much of the bottom of the ladder as we
can in order to have fewer points to deal with at the bottom. We
can partially truncate at some height $y=j_0\epsilon$ by setting
\be
g(x,y) = C (-1)^k \lp(\prod_{i=1}^{k-2} \lp(j_0\epsilon-i\epsilon-x\rp)\rp)
\lp(\prod_{i=1}^{k-2} \lp(j_0\epsilon-i\epsilon-y\rp)\rp).
\ee
\noindent
The overall sign is chosen so that $g(-\lambda,y)$ is positive for
$\lambda>0$ and $j_0 \epsilon\leq y \leq \Gamma\epsilon$.

\begin{figure}[tbp]
\begin{center}
\unitlength = 20pt
\begin{picture}(17,17)(-3.5,-3.5)
\dashline{0.25}(-3,1)(13,1)
\dashline{0.25}(-3,0)(13,0)
\dashline{0.25}(1,-3)(1,13)
\dashline{0.25}(0,-3)(0,13)
\dashline{0.25}(9,-3)(9,13)
\dashline{0.25}(10,-3)(10,13)
\dashline{0.25}(11,-3)(11,13)
\dashline{0.25}(12,-3)(12,13)
\dashline{0.25}(-3,9)(13,9)
\dashline{0.25}(-3,10)(13,10)
\dashline{0.25}(-3,11)(13,11)
\dashline{0.25}(-3,12)(13,12)
\thicklines
\put(3,2){\vector(1,0){1}}
\put(2,3){\vector(0,1){1}}
\put(4,3){\vector(-1,0){2}}
\put(3,4){\vector(0,-1){2}}
\put(2,4){\vector(1,0){1}}
\put(4,2){\vector(0,1){1}}
\put(4,3){\vector(1,0){1}}
\put(3,4){\vector(0,1){1}}
\put(5,4){\vector(-1,0){2}}
\put(4,5){\vector(0,-1){2}}
\put(3,5){\vector(1,0){1}}
\put(5,3){\vector(0,1){1}}
\put(5,4){\vector(1,0){1}}
\put(4,5){\vector(0,1){1}}
\put(6,5){\vector(-1,0){2}}
\put(5,6){\vector(0,-1){2}}
\put(4,6){\vector(1,0){1}}
\put(6,4){\vector(0,1){1}}
\put(6,5){\vector(1,0){1}}
\put(5,6){\vector(0,1){1}}
\put(7,6){\vector(-1,0){2}}
\put(6,7){\vector(0,-1){2}}
\put(5,7){\vector(1,0){1}}
\put(7,5){\vector(0,1){1}}
\put(7,6){\vector(1,0){1}}
\put(6,7){\vector(0,1){1}}
\put(8,7){\vector(-1,0){2}}
\put(7,8){\vector(0,-1){2}}
\put(6,8){\vector(1,0){1}}
\put(8,6){\vector(0,1){1}}
\put(5,2){\vector(-1,0){1}}
\put(5,3){\vector(0,-1){1}}
\put(6,3){\vector(-1,0){1}}
\put(6,4){\vector(0,-1){1}}
\put(7,4){\vector(-1,0){1}}
\put(7,5){\vector(0,-1){1}}
\put(8,5){\vector(-1,0){1}}
\put(8,6){\vector(0,-1){1}}
\put(2,5){\vector(0,-1){1}}
\put(3,5){\vector(-1,0){1}}
\put(3,6){\vector(0,-1){1}}
\put(4,6){\vector(-1,0){1}}
\put(4,7){\vector(0,-1){1}}
\put(5,7){\vector(-1,0){1}}
\put(5,8){\vector(0,-1){1}}
\put(6,8){\vector(-1,0){1}}
\put(5,2){\vector(1,0){1}}
\put(6,2){\vector(0,1){1}}
\put(6,3){\vector(1,0){1}}
\put(7,3){\vector(0,1){1}}
\put(7,4){\vector(1,0){1}}
\put(8,4){\vector(0,1){1}}
\put(2,5){\vector(0,1){1}}
\put(2,6){\vector(1,0){1}}
\put(3,6){\vector(0,1){1}}
\put(3,7){\vector(1,0){1}}
\put(4,7){\vector(0,1){1}}
\put(4,8){\vector(1,0){1}}
\put(3,-1){\vector(0,1){3}}
\put(2,3){\vector(0,-1){4}}
\put(2,-2){\vector(0,1){1}}
\put(2,-1){\vector(1,0){1}}
\put(2,-1){\vector(-1,0){0.5}}
\put(-1,3){\vector(1,0){3}}
\put(3,2){\vector(-1,0){4}}
\put(-2,2){\vector(1,0){1}}
\put(-1,2){\vector(0,1){1}}
\put(-1,2){\vector(0,-1){0.5}}
\put(2,-2){\makebox(0,0){$\bullet$}}
\put(-2,2){\makebox(0,0){$\bullet$}}
\put(1.5,-1){\makebox(0,0){$\bullet$}}
\put(-1,1.5){\makebox(0,0){$\bullet$}}
\end{picture}
\caption{A ladder with $2k=8$ points across terminated at top and partially
terminated at bottom (with an added point split). The dashed lines indicate
the zeros of $f(x,y)$.}
\label{fig:step}
\end{center}
\end{figure}

With this truncation $h_{lad}$ has exactly three points to the left of the
of the first truncation band $(j_0-k+2)\epsilon\leq x\leq
(j_0-1)\epsilon$. The three points are located at
$[(j_0-k+1)\epsilon,(j_0+1)\epsilon]$, $[(j_0-k+1)\epsilon,j_0\epsilon]$
and $[(j_0-k)\epsilon,j_0\epsilon]$. If we add in an extra point split at
$y=j_0-k+1$, and use $v(x,y)=h(y,x)$ we end up with the situation depicted
in Fig.~\ref{fig:step}.

With some further tuning we can end up with
\be
h+v = -\frac{1}{2}[1,0]- \frac{1}{2}[0,1]
+\frac{1}{2}[1-\epsilon'',\epsilon']+ \frac{1}{2}[\epsilon',1-\epsilon'']
\ee
for some $0<\epsilon''<\epsilon'$, which is effectively a step along the
diagonal connecting $[1,0]$ and $[0,1]$ towards the center
$[\frac{1}{2},\frac{1}{2}]$. The slope $\epsilon'/\epsilon''$ of the step
should converge to one as $k\rightarrow\infty$, and therefore one can use a
sequence of such steps to merge the two points in the center and obtain a
coin-flipping protocol with arbitrarily small bias. We shall not explore
this construction further, though, and instead will focus our efforts on a
procedure that works is a single big step.

\subsubsection{Mixing ladders with points on the axes}

In this section we will present a family of protocols that converges to
zero bias. The corresponding TIPGs will mix ladders with probability
located on the axes. These TIPGs will be generalizations of the protocol
from Section~\ref{sec:TDPG16}, which is the simplest example of the
constructions used in this section. The discussion in this section will be
informal and the formal proofs will be deferred to the next section.

The complete protocol can be thought of as a three step process
\be
\label{eq:threestep}
\frac{1}{2}[1,0]+\frac{1}{2}[0,1]&\rightarrow& 
\frac{1}{2}\lp(\sum_{j=z^*/\epsilon}^\Gamma p(j\epsilon) [j\epsilon,0]\rp)
+ \frac{1}{2}\lp(\sum_{j=z^*/\epsilon}^\Gamma p(j\epsilon) [0,j\epsilon]\rp)
\\\nonumber
&\rightarrow& \frac{1}{2}[z^*,z^*-k\epsilon]+\frac{1}{2}[z^*-k\epsilon,z^*]
\rightarrow 1[z^*,z^*]
\ee
\noindent
and depends on the usual parameters: integer $k\geq 1$, small $\epsilon>0$,
large integer $\Gamma>0$ and a final point $1/2<z^*<1$. It also involves a
function $p(z)$ to be chosen below. There are a few obvious constraints
such as $k\epsilon<z^*$ and $z^*/\epsilon\in\Z$ which will all be resolved
in the limits $\epsilon\rightarrow 0$ and $\Gamma\rightarrow\infty$ for
fixed $k$.  Our goal will be to prove that the above process is valid for
some $z^*$, and then to find the minimum valid $P_A^*=P_B^*=z^*$ for a
given $k$.

The first transition of the above process is intended to be a series of
point splits along the axes, the second transition is the difficult step
involving ladders, and the third transition is trivially valid by
point-raising. The first transition is also valid given the following 
simple constraints
\be
1= \sum_{j=z^*/\epsilon}^\Gamma p(j\epsilon)\qquad\text{and}\qquad
1\geq \sum_{j=z^*/\epsilon}^\Gamma p(j\epsilon)/z
\label{eq:psdisc}
\ee
which correspond to probability conservation and the point-splitting
constraint (i.e., non-increasing average $1/z$). Note that, as opposed to
Section~\ref{sec:TDPG16}, we are using here a discrete function $p(z)$ with
finite support.

The second transition is the interesting step and will consist of a process
involving a ladder that slowly collects the amplitude on the axes and
deposits it near $x=z^*$ and $y=z^*$. The complete transition will be
described as usual by a valid horizontal function $h(x,y)$ and a valid vertical
function $v(x,y)=h(y,x)$ such that
\beq
h+v = - \frac{1}{2}\lp(\sum_{j=z^*/\epsilon}^\Gamma p(j\epsilon)
[j\epsilon,0]\rp) - \frac{1}{2}\lp(\sum_{j=z^*/\epsilon}^\Gamma
p(j\epsilon) [0,j\epsilon]\rp) +
\frac{1}{2}[z^*,z^*-k\epsilon]+\frac{1}{2}[z^*-k\epsilon,z^*].
\eeq

The new element is that when constructing the coefficients
$-f(x_i,y_i)/\prod_{j\neq i}(x_j-x_i)$ for the ladder, one of the
coordinates that appears in the product in the denominator is $x_j=0$. Why
is this different? So far we have exploited the fact that the product of
distances for a point was the same (up to sign) whether we computed its
vertical or horizontal neighbors. But now the expression includes an extra
factor of $(0-x_i)$ or $(0-y_i)$, which means the two computations will be
different. In other words, using a symmetric $f(x,y)=f(y,x)$ will not yield
an antisymmetric $h(x,y)=-h(y,x)$. The solution is to pull out an extra
factor of $1/y$ out of $f(x,y)$, which will once again allow us to use
symmetric functions $f(x,y)$, and will not affect the computation of
horizontal validity (since $1/y$ is positive and order zero as a polynomial
in $x$). We therefore set
\be
h= \sum_{j=z^*/\epsilon}^{\Gamma}\lp(
- \frac{p(j\epsilon)}{2} [0,j\epsilon]
+\sum_{\substack{i=-k\cr i\neq 0}}^{k}
\frac{-f((j+i)\epsilon,j\epsilon)}
{(j\epsilon)\prod_{x_\ell\neq x_i} (x_\ell-x_i)}
\bigg[(j+i)\epsilon,j\epsilon\bigg]\rp),
\ee
\noindent
where $\prod_{x_\ell\neq x_i} (x_\ell-x_i)$ still includes a factor of
$(0-x_i)=-(j+i)\epsilon$.  

In order to use Lemma~\ref{lemma:posnumer}, though, we must 
write the coefficient of $[0,j\epsilon]$ is the standard form, which
imposes relation between $p(z)$ and $f(x,y)$
\be
\frac{p(j\epsilon)}{2}= \frac{f(0,j\epsilon)}
{(j\epsilon)\prod_{x_\ell\neq 0} (x_\ell-0)}=
\frac{f(0,j\epsilon)}
{\epsilon^{2k+1} \prod_{\ell=j-k}^{j+k}\ell}.
\ee
We now pick $f$ as usual to fully truncate the ladder at the top and to
truncate as much as possible of the ladder at the bottom
\be
f(x,y) &=& C (-1)^{k-1}\lp(\prod_{i=1}^{k-1} \lp(z^*-i\epsilon-x\rp)\rp)
\lp(\prod_{i=1}^{k-1} \lp(z^*-i\epsilon-y\rp)\rp)
\\\nonumber && \qquad\qquad \times
\lp(\prod_{i=1}^{k} \lp(\Gamma\epsilon+i\epsilon-x\rp)\rp)
\lp(\prod_{i=1}^{k} \lp(\Gamma\epsilon+i\epsilon-y\rp)\rp).
\ee
We have the usual symmetry properties $f(x,y)=f(y,x)$. Furthermore,
$f(-\lambda,y)$ is positive for $\lambda>0$ and $z^*\leq y \leq
\Gamma\epsilon$ (provided $\epsilon$ is small enough so that
$k\epsilon<z^*$). Finally, $f(-\lambda,y)$ has degree $2k-1$ in $\lambda$, 
which is allowed because we have $2k+1$ points across, once we include the
point on the axis.

Because the truncation at the bottom of the ladder uses $k-1$ zeros, $h$
will only have a single point (excluding those on the axis) to the left of
the first truncation band $z^*-(k-1)\epsilon\leq x\leq z^*-\epsilon$. The
point will be located at $[z^*-k\epsilon,z^*]$ and by probability
conservation must have exactly the same amount of probability as was
originally located on the axes.

All that remains undone is to figure out what values of $z^*$ are
allowed. For the moment we will only compute this in the limit of
$\epsilon\rightarrow0$ and $\Gamma\rightarrow\infty$ in which 
\be
f(0,z)= C' \lp(\prod_{i=1}^{k-1} \lp(z^*-i\epsilon-z\rp)\rp)
\lp(\prod_{i=1}^{k} \lp(\Gamma\epsilon+i\epsilon-z\rp)\rp)
\rightarrow C'' (z-z^*)^{k-1}
\ee
\noindent
for some $k$-dependent constants $C'$ and $C''$ where we used
$(\Gamma-z)/\Gamma\rightarrow 1$. We then have
\be
p(z) = \frac{2f(0,j\epsilon)}
{z\prod_{l\neq 0} (x_l-0)}
\rightarrow C''' \frac{(z-z^*)^{k-1}}{z^{2k+1}}
\ee 
\noindent
and the two constants $z^*$ and $C'''$ are fixed by the constraints of 
Eq.~(\ref{eq:psdisc}), now in integral form
\be
1=\int_{z^*}^\infty p(z)dz, \qquad\qquad
1=\int_{z^*}^\infty \frac{p(z)}{z}dz,
\label{eq:betabeg}
\ee
\noindent
where we we have imposed equality in the second constraint to obtain the
smallest $z^*$ for a given $k$. Using $w=z^*/z$, we can transform the
following integral into a representation of the Beta function to get
\be
\int_{z^*}^\infty \frac{(z-z^*)^j}{z^\ell} dz
&=& \lp(z^*\rp)^{j-\ell+1} \int_0^1 w^{\ell-j-2}(1-w)^j d w
= \lp(z^*\rp)^{j-\ell+1} B(\ell-j-1,j+1) 
\nonumber\\
&=& \lp(z^*\rp)^{j-\ell+1} \frac{\Gamma(\ell-j-1)\Gamma(j+1))}{\Gamma(\ell)}
= \lp(z^*\rp)^{j-\ell+1} \frac{(\ell-j-2)!(j)!}{(l-1)!}.
\ee
\noindent
We then set our two constraint integrals equal to each other to cancel $C'''$
and solve for $z^*$
\be
\lp(z^*\rp)^{-(k+1)} \frac{(k)!(k-1)!}{(2k)!}=
\lp(z^*\rp)^{-(k+2)} \frac{(k+1)!(k-1)!}{(2k+1)!}
\qquad\Longrightarrow\qquad z^* = \frac{k+1}{2k+1}.
\label{eq:betaend}
\ee
In other words, we have constructed a coin-flipping protocol with
$P_A^*=P_B^*=(k+1)/(2k+1)$. The construction is valid for all $k\geq 1$.
The next section will formalize the protocol. All the main ideas will be
the same, though some of the functions will be redefined by constant
factors.

\subsection{\label{sec:formal}Formal proof}

\begin{definition}
Fix an integer $k>0$, a small $\epsilon>0$ satisfying $k\epsilon<1/2$, a
large integer $\Gamma>4k$ and a parameter $z^*\in(\frac{1}{2},1)$ such that
$z^*/\epsilon$ is an integer. Let $\Upsilon=\{k,\epsilon,\Gamma,z^*\}$ and
define
\be
g_{\Upsilon}(z) &=& 
\lp(\prod_{j=1}^{k-1} \lp(\frac{z^*-j\epsilon-z}{z^*-j\epsilon}\rp)\rp)
\lp(\prod_{j=1}^{k} \lp(\frac{(\Gamma+j)\epsilon-z}{(\Gamma+j)\epsilon}\rp)
\rp),
\\
p_{\Upsilon}(z) &=& (-1)^{k-1}g_{\Upsilon}(z)
\prod_{j=-k}^{k}\lp(\frac{1}{z+j\epsilon}\rp),
\\
C_{\Upsilon} &=& 1\bigg/\sum_{j=z^*/\epsilon}^\Gamma 
p_{\Upsilon}(j\epsilon),
\\
D_{\Upsilon}(i) &=& \epsilon^{2k-1}
\prod_{\substack{\ell=-k\cr\ell\neq i}}^{k} \lp(\ell-i\rp),
\\
2h_{\Upsilon} &=& -1[1,0] + C_{\Upsilon}\sum_{j=z^*/\epsilon}^\Gamma 
p_{\Upsilon}(j\epsilon)[j\epsilon,0]
\label{eq:h}
\\\nonumber
&&-1[z^*-k\epsilon,z^*]+1[z^*,z^*]
\\\nonumber
&&+C_{\Upsilon} \sum_{j=z^*/\epsilon}^\Gamma\lp(
-p_{\Upsilon}(j\epsilon)[0,j\epsilon]+
\sum_{\substack{i=-k\cr i\neq 0}}^{k}
\frac{(-1)^k g_{\Upsilon}((j+i)\epsilon)g_{\Upsilon}(j\epsilon)}
{(j\epsilon)((j+i)\epsilon)D_{\Upsilon}(i)}
[(j+i)\epsilon,j\epsilon]
\rp)
\ee
and $v_{\Upsilon}(x,y)=h_{\Upsilon}(y,x)$.
\label{def:main}
\end{definition}

\begin{lemma}
Given $\Upsilon$ as in Definition~\ref{def:main}, if
\be
1\geq C_\Upsilon
\sum_{j=z^*/\epsilon}^\Gamma 
\frac{p_{\Upsilon}(j\epsilon)}{j\epsilon}
\ee
then $h_{\Upsilon}$ is a valid horizontal function and $v_{\Upsilon}$
is a valid vertical function.
\end{lemma}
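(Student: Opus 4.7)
The plan is to exploit the symmetry $v_\Upsilon(x,y) = h_\Upsilon(y,x)$ to reduce the claim to showing that $h_\Upsilon$ is a valid horizontal function, since reflecting a valid horizontal function across the diagonal produces a valid vertical function. Concretely, for every $y \geq 0$ I must check that the one-variable slice $h_\Upsilon(\cdot, y)$ sums to zero and satisfies $\sum_x \frac{-h_\Upsilon(x, y)}{\lambda + x} \geq 0$ for all $\lambda > 0$. The $y$-support of $h_\Upsilon$ breaks into three kinds of rows: $y = 0$ (from the first line of the definition), pure ladder rows $y = j\epsilon$ for $z^*/\epsilon < j \leq \Gamma$, and the special row $y = z^*$ which collects both the middle-line contribution and the $j = z^*/\epsilon$ ladder row.

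The $y = 0$ slice reads $h_\Upsilon(\cdot, 0) = -\frac{1}{2}[1] + \frac{C_\Upsilon}{2}\sum_j p_\Upsilon(j\epsilon)[j\epsilon]$, which is (half of) a one-to-many split from $x = 1$ to the grid points $\{j\epsilon\}$. Probability conservation holds by the definition $C_\Upsilon = 1/\sum_j p_\Upsilon(j\epsilon)$. For the operator-monotone condition, the point-splitting analysis of Section~\ref{sec:examples}---using monotonicity and convexity of $w \mapsto -w/(1+\lambda w)$ together with Jensen's inequality---shows that for any one-to-many split it suffices to verify the $\lambda = 0$ instance $1/x_0 \geq \sum w_i/x_i$. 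Here that becomes exactly $1 \geq C_\Upsilon \sum_j p_\Upsilon(j\epsilon)/(j\epsilon)$, which is precisely the hypothesis.

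For each ladder row $y = j\epsilon$ with $j \geq z^*/\epsilon$, I plan to invoke Lemma~\ref{lemma:posnumer} on the $2k+1$ point set $\{0\} \cup \{(j+i)\epsilon : i = \pm 1, \dots, \pm k\}$. Lemma~\ref{lemma:posnumer} turns a polynomial $\tilde f$ of degree at most $2k-1$ with $\tilde f(-\lambda) \geq 0$ for $\lambda > 0$ into a valid function with coefficients $-\tilde f(x_i)/\prod_{m \neq i}(x_m - x_i)$. The natural candidate $\tilde f(x) \propto g_\Upsilon(x) g_\Upsilon(j\epsilon)$ has degree $2k-1$, and each linear factor of $g_\Upsilon$ is a ratio of positive quantities when $x = -\lambda < 0$ (this is where the constraint $k\epsilon < z^*$ is used), so $\tilde f$ is non-negative on the negative axis after absorbing the overall sign into the normalization. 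For $y = z^*$ the row additionally contains the middle-line pair $-\frac{1}{2}[z^* - k\epsilon] + \frac{1}{2}[z^*]$, which is a pure point-raise (since $z^* - k\epsilon < z^*$) and is valid on its own; since valid functions form a cone, adding it to the ladder slice preserves validity.

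The main obstacle is matching the explicit coefficient $\frac{(-1)^k g_\Upsilon((j+i)\epsilon) g_\Upsilon(j\epsilon)}{(j\epsilon)((j+i)\epsilon) D_\Upsilon(i)}$ in Definition~\ref{def:main} against the $-\tilde f(x_i)/\prod_{m \neq i}(x_m - x_i)$ form of Lemma~\ref{lemma:posnumer}. This requires unpacking $D_\Upsilon(i) = \epsilon^{2k-1}(-1)^{k+i}(k+i)!(k-i)!$, relating it to $\prod_{m \neq i, m \in [-k, k]}((j+m)\epsilon - (j+i)\epsilon)$ at the ladder points, and tracking the $1/(j\epsilon)$ factor that arises from pulling $1/y$ out of the informal two-variable polynomial $f(x,y) \propto g_\Upsilon(x) g_\Upsilon(y)$ so that Lemma~\ref{lemma:posnumer} can be applied to a polynomial in $x$ alone. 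Once this identification is made, the degree bound $\deg \tilde f = 2k - 1 \leq (2k+1) - 2$ is exactly what gives probability conservation at each ladder row, and the non-negativity of $\tilde f$ on the negative axis gives the operator-monotone constraint.
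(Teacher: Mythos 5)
Your proposal is correct and follows essentially the same route as the paper's proof: reduce to $h_\Upsilon$ by the symmetry $v_\Upsilon(x,y)=h_\Upsilon(y,x)$, treat the axis line as a point split whose validity is exactly the stated hypothesis, treat the pair at $y=z^*$ as a point raise, and handle each ladder row by applying Lemma~\ref{lemma:posnumer} to $f_j(x)\propto g_\Upsilon(x)\,g_\Upsilon(j\epsilon)/(j\epsilon)$, using $g_\Upsilon(-\lambda)>0$ and the sign $(-1)^{k-1}g_\Upsilon(j\epsilon)\geq 0$ together with the degree bound $2k-1\leq(2k+1)-2$. The coefficient bookkeeping against $D_\Upsilon(i)$ that you flag as the remaining obstacle is exactly the (also terse) "among other relations" step in the paper, so nothing essential is missing.
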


\begin{proof}
By symmetry, it suffices to prove that $h_{\Upsilon}$ is a valid horizontal
function. It is also sufficient to check the conditions independently on
each of the three lines of Eq.~(\ref{eq:h}). Given the constraint in the
definition of the lemma, the first line is a valid point split and the
second line is a valid point raise (see Lemma~\ref{lemma:rms}).

Also $p_\upsilon(z)\geq 0$ for $z^*/\epsilon\leq z\leq\Gamma$ so
$C_\Upsilon\geq 0$ and it can be canceled as we focus on the term in
parenthesis in the third line for each integer $j\in[z^*/\epsilon,\Gamma]$.
If we define
\be
f_j(x) = \frac{(-1)^{k-1} 
g_{\Upsilon}(x)g_{\Upsilon}(j\epsilon)}
{(j\epsilon)}
\ee
then we can write the line at $y=j\epsilon$ as
\be
\sum_i \lp(\frac{-f_j(x_i)}{\prod_{\substack{j\neq i}}(x_j-x_i)} \rp)
[x_i,y],
\ee
\noindent
where $x_i\in\{0\}\cup
\{(j-k)\epsilon,\dots,(j-1)\epsilon,(j+1)\epsilon,\dots,(j+k)\epsilon\}$
and we used $g_{\Upsilon}(0)=1$ among other relations.

Now we can apply Lemma~\ref{lemma:posnumer} which tells us that the above
function is valid so long as $f_j(-\lambda)$ is a polynomial in $\lambda$
of order no greater than $2k-1$ and positive for all $\lambda>0$. The first
condition is trivial and the second follows because
$g_{\Upsilon}(-\lambda)>0$ for $\lambda>0$ and
$(-1)^{k-1}g_{\Upsilon}(j\epsilon)\geq 0$ for $z^*\leq
j\epsilon\leq\Gamma\epsilon$.
\end{proof}

\begin{lemma}
Given $\Upsilon$ as in Definition~\ref{def:main} we have
\be
h_{\Upsilon}+v_{\Upsilon} = 
-\frac{1}{2}[1,0]-\frac{1}{2}[0,1]+1[z^*,z^*].
\ee
\end{lemma}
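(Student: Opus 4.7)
The plan is to verify the identity $h_\Upsilon+v_\Upsilon = -\tfrac12[1,0]-\tfrac12[0,1]+1[z^*,z^*]$ by checking cancellations class-by-class in the joint support. The relation $v_\Upsilon(x,y)=h_\Upsilon(y,x)$ and the symmetry of the target under $(x,y)\leftrightarrow(y,x)$ mean every orbit of the swap involution can be treated at once. I would organise the support into four classes: (i) the axis points $[1,0]$, $[0,1]$, $[j\epsilon,0]$, $[0,j\epsilon]$ for $j\in\{z^*/\epsilon,\dots,\Gamma\}$; (ii) the target point $[z^*,z^*]$; (iii) the ``boundary hinge'' $[z^*-k\epsilon,z^*]$ and its transpose; and (iv) all other off-axis ladder points.

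Classes (i) and (ii) are immediate. At $[1,0]$ (respectively $[0,1]$) only line~1 of $h_\Upsilon$ (respectively $v_\Upsilon$) contributes and yields $-\tfrac12$. At $[j\epsilon,0]$ the $+\tfrac{C_\Upsilon}{2}p_\Upsilon(j\epsilon)$ from line~1 of $h_\Upsilon$ is cancelled by the $-\tfrac{C_\Upsilon}{2}p_\Upsilon(j\epsilon)$ coming from the axis term $-p_\Upsilon(j\epsilon)[0,j\epsilon]$ in line~3 of $h_\Upsilon$ once the swap is applied; the point $[0,j\epsilon]$ is symmetric. At $[z^*,z^*]$ the $+\tfrac12$ from line~2 of each of $h_\Upsilon$ and $v_\Upsilon$ sum to the required $+1$, with no ladder contribution since $i=0$ is excluded from the sum.

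For a generic ladder point $[(j+i)\epsilon,j\epsilon]$ with $i\neq 0$, the behaviour depends on whether the transposed height $y'=(j+i)\epsilon$ is itself a valid rung height. If $j+i\in[z^*/\epsilon,\Gamma]$, then $h_\Upsilon$ and $v_\Upsilon$ each contribute one ladder term (with parameters $i$ and $-i$ respectively), and these cancel by the antisymmetry $D_\Upsilon(-i)=-D_\Upsilon(i)$: the defining product over $\ell\in\{-k,\dots,k\}\setminus\{0,i\}$ has $2k-1$ factors, each flipping sign under $\ell\mapsto-\ell$. If instead $j+i\notin[z^*/\epsilon,\Gamma]$, then $v_\Upsilon$ contributes nothing, and the $h_\Upsilon$ coefficient carries a factor $g_\Upsilon((j+i)\epsilon)$ which vanishes by the construction of $g_\Upsilon$, whose zeros are precisely $z^*-\epsilon,\dots,z^*-(k-1)\epsilon$ and $(\Gamma+1)\epsilon,\dots,(\Gamma+k)\epsilon$. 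These zeros account for every out-of-range $j+i$ except the single case $j+i=z^*/\epsilon-k$, which is precisely class (iii).

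The remaining class (iii) is the only delicate case and will be the main obstacle. The total at $[z^*-k\epsilon,z^*]$ is the explicit $-\tfrac12$ from line~2 plus the single surviving ladder term $\tfrac{C_\Upsilon}{2}\cdot\frac{(-1)^k g_\Upsilon(z^*-k\epsilon)g_\Upsilon(z^*)}{z^*(z^*-k\epsilon)D_\Upsilon(-k)}$. Using the definition $C_\Upsilon^{-1}=\sum_j p_\Upsilon(j\epsilon)$, cancellation of these two terms reduces to the identity
\[
\sum_{j=z^*/\epsilon}^{\Gamma} p_\Upsilon(j\epsilon)\;=\; \frac{(-1)^k g_\Upsilon(z^*-k\epsilon)\,g_\Upsilon(z^*)}{z^*(z^*-k\epsilon)\,D_\Upsilon(-k)}.
\]
I expect a double-counting argument to establish it: by the previous lemma, each rung of line~3 has total weight zero, so $p_\Upsilon(j\epsilon)$ equals the sum of its ladder coefficients, giving
\[
\sum_{j} p_\Upsilon(j\epsilon) \;=\; \sum_{(j,j')}\frac{(-1)^k g_\Upsilon(j'\epsilon)\,g_\Upsilon(j\epsilon)}{(j\epsilon)(j'\epsilon)\,D_\Upsilon(j'-j)},
\]
where $(j,j')=(j,j+i)$ ranges over pairs with $j\in[z^*/\epsilon,\Gamma]$ and $1\le|j-j'|\le k$. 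Interior pairs (both $j,j'\in[z^*/\epsilon,\Gamma]$) cancel in $(j,j')\leftrightarrow(j',j)$ dyads by the same $D_\Upsilon(-i)=-D_\Upsilon(i)$ antisymmetry, and every boundary pair vanishes by the zeros of $g_\Upsilon$, leaving only the single surviving pair $(z^*/\epsilon,z^*/\epsilon-k)$, whose contribution is exactly the right-hand side of the identity.
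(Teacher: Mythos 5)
Your verification is correct, but it takes a more computational route than the paper's own proof. The paper dispatches the lemma in two sentences: after the routine cancellations you list in classes (i), (ii) and (iv), the only points in doubt are $[z^*-k\epsilon,z^*]$ and $[z^*,z^*-k\epsilon]$, and by the symmetry $v_\Upsilon(x,y)=h_\Upsilon(y,x)$ these carry the \emph{same} coefficient in $h_\Upsilon+v_\Upsilon$; since every line of $h_\Upsilon$ conserves probability, $h_\Upsilon+v_\Upsilon$ has total weight zero, and since the target $-\tfrac12[1,0]-\tfrac12[0,1]+1[z^*,z^*]$ also has weight zero, the common leftover coefficient must vanish --- no evaluation of it is ever needed. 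You instead evaluate the hinge coefficient explicitly, reducing the cancellation against the $-\tfrac12$ of line~2 to the identity $\sum_j p_\Upsilon(j\epsilon)=(-1)^k g_\Upsilon(z^*-k\epsilon)g_\Upsilon(z^*)/\bigl(z^*(z^*-k\epsilon)D_\Upsilon(-k)\bigr)$, proved by double counting: row-by-row conservation, cancellation of interior dyads via $D_\Upsilon(-i)=-D_\Upsilon(i)$, and the zeros of $g_\Upsilon$ killing every boundary term but one. These are the same ingredients the paper uses (row conservation, transposition antisymmetry, the zero pattern of $g_\Upsilon$), just repackaged: the paper's global-conservation trick is shorter and never needs the hinge weight, while your computation buys an explicit bonus fact (the surviving ladder weight at the hinge equals $C_\Upsilon^{-1}$), which double-checks the normalization. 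Two small caveats to make explicit in a write-up: first, the antisymmetry $D_\Upsilon(-i)=-D_\Upsilon(i)$ holds for the product over $\ell\in\{-k,\dots,k\}\setminus\{0,i\}$ (an odd number, $2k-1$, of factors), which is the reading consistent with the $\epsilon^{2k-1}$ prefactor and with the validity proof; as literally printed in Definition~\ref{def:main} the product excludes only $\ell=i$ and would give $D_\Upsilon(-i)=+D_\Upsilon(i)$, so you should note you are using the intended (corrected) definition. Second, the row conservation you quote from ``the previous lemma'' in fact holds unconditionally, coming from the polynomial identity behind Lemma~\ref{lemma:posnumer} rather than from the hypothesis $1\geq C_\Upsilon\sum_j p_\Upsilon(j\epsilon)/(j\epsilon)$; this matters because the present lemma carries no such hypothesis.
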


\begin{proof}
The only nontrivial cancellation is on the points $[z^*-k\epsilon,z^*]$ and
$[z^*,z^*-k\epsilon]$ which have by symmetry the same coefficient (with the
same sign). Because each line of $h_{\Upsilon}$ conserves probability,
$h_{\Upsilon}+v_{\Upsilon}$ must have a net zero probability, and so the
coefficients of these points must also be zero.
\end{proof}

\begin{corollary}
Given $\Upsilon$ as in Definition~\ref{def:main}, if
\be
1\geq C_\Upsilon
\sum_{j=z^*/\epsilon}^\Gamma 
\frac{p_{\Upsilon}(j\epsilon)}{j\epsilon}
\ee
then $h_{\Upsilon}$ and $v_{\Upsilon}$ is a TIPG with final point
$[z^*,z^*].$
\end{corollary}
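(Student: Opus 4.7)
The plan is to observe that this corollary is an immediate synthesis of the two preceding lemmas together with the definition of a TIPG. Recall from Section~\ref{sec:TIPG} that a TIPG with final point $[\beta,\alpha]$ consists of a pair $(h,v)$ where $h$ is a valid horizontal function, $v$ is a valid vertical function, and $h+v = 1[\beta,\alpha]-P_B[1,0]-P_A[0,1]$. So the proof reduces to verifying these three bullets for $(h_\Upsilon,v_\Upsilon)$ with $\beta=\alpha=z^*$ and $P_A=P_B=1/2$.

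First I would invoke the first lemma directly: under the hypothesis $1\geq C_\Upsilon\sum_{j=z^*/\epsilon}^\Gamma p_\Upsilon(j\epsilon)/(j\epsilon)$, it states that $h_\Upsilon$ is a valid horizontal function and $v_\Upsilon$ is a valid vertical function. This takes care of the first two bullets of the TIPG definition. Second, I would invoke the second lemma (whose conclusion is unconditional, i.e. independent of the point-splitting inequality) to obtain
\begin{equation}
h_\Upsilon+v_\Upsilon = -\tfrac{1}{2}[1,0]-\tfrac{1}{2}[0,1]+1[z^*,z^*],
\end{equation}
which is exactly the required form with $P_B=P_A=1/2$ and final point $[z^*,z^*]$. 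The three conditions of the TIPG definition are now all satisfied, completing the proof.

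There is no real obstacle here: the corollary is essentially a bookkeeping statement that packages together the two lemmas into the language of Definition~3 of Section~\ref{sec:TIPG}. The only thing worth flagging is that the hypothesis on $C_\Upsilon$ is consumed entirely in the first lemma (it is what makes the point-split on the first line of Eq.~(\ref{eq:h}) valid), while the cancellation in $h_\Upsilon+v_\Upsilon$ is automatic from the antisymmetric design $v_\Upsilon(x,y)=h_\Upsilon(y,x)$ combined with probability conservation line-by-line, as used in the second lemma. The real content of the construction—verifying that the hypothesis can actually be met, and that the resulting $z^*$ approaches $(k+1)/(2k+1)$ in the limit $\epsilon\to 0$, $\Gamma\to\infty$—is the analysis carried out informally in the preceding subsection around Eqs.~(\ref{eq:betabeg})--(\ref{eq:betaend}), and presumably will be revisited afterwards; but it lies outside the scope of the corollary itself.
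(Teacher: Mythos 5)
Your proof is correct and matches the paper's intent exactly: the corollary is stated without proof precisely because it is the immediate combination of the two preceding lemmas (validity of $h_\Upsilon$ and $v_\Upsilon$ under the stated inequality, and the unconditional cancellation $h_\Upsilon+v_\Upsilon=-\frac{1}{2}[1,0]-\frac{1}{2}[0,1]+1[z^*,z^*]$) with the definition of a TIPG, taking $P_A=P_B=1/2$ and final point $[z^*,z^*]$. Your remark that the hypothesis is consumed entirely by the first lemma while the cancellation is automatic from the antisymmetry is also accurate.
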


\begin{lemma}
Given $\Upsilon$ as in Definition~\ref{def:main}, there exists a family of
solutions to 
\be
1\geq C_\Upsilon
\sum_{j=z^*/\epsilon}^\Gamma 
\frac{p_{\Upsilon}(j\epsilon)}{j\epsilon}
\ee
\noindent
such that $\epsilon\rightarrow0$, $\Gamma\rightarrow\infty$ and
\be
z^*\rightarrow \frac{k+1}{2k+1}.
\ee
\end{lemma}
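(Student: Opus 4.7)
The plan is to reduce the discrete inequality of the lemma to the continuum Beta-function computation already sketched in Eqs.~(\ref{eq:betabeg})--(\ref{eq:betaend}) by a Riemann-sum argument. Since $C_\Upsilon = 1/\sum_j p_\Upsilon(j\epsilon)$, the inequality to be proven is equivalent to
$$\sum_{j=z^*/\epsilon}^{\Gamma}\frac{p_\Upsilon(j\epsilon)}{j\epsilon}\;\le\;\sum_{j=z^*/\epsilon}^{\Gamma}p_\Upsilon(j\epsilon),$$
and after multiplying by $\epsilon$ both sides become Riemann sums of mesh $\epsilon$ on the interval $[z^*,\Gamma\epsilon]$ for the functions $\tilde p(z)/z$ and $\tilde p(z)$ respectively.

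First I would identify the pointwise limit. For fixed $z\in(z^*,\infty)$, as $\epsilon\to0$ and $\Gamma\epsilon\to\infty$, the denominator factors as $\prod_{\ell=-k}^{k}(z+\ell\epsilon)=z\prod_{\ell=1}^k(z^2-\ell^2\epsilon^2)\to z^{2k+1}$, the lower half of $g_\Upsilon$ tends to $((z^*-z)/z^*)^{k-1}$ and the upper half, $\prod_{j=1}^k(1-z/((\Gamma+j)\epsilon))$, tends to $1$. Hence $p_\Upsilon(z)\to \tilde p(z):=(z-z^*)^{k-1}/[(z^*)^{k-1}z^{2k+1}]$. The Beta-function identity used in Eq.~(\ref{eq:betaend}) then yields the exact limit ratio
$$\frac{\int_{z^*}^{\infty}\tilde p(z)/z\,dz}{\int_{z^*}^{\infty}\tilde p(z)\,dz}=\frac{k+1}{(2k+1)\,z^*},$$
which is strictly less than $1$ precisely when $z^*>(k+1)/(2k+1)$.

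To lift the pointwise limit to convergence of the Riemann sums I would produce an integrable majorant for $p_\Upsilon$ on $[z^*,\infty)$ that is independent of $(\epsilon,\Gamma)$ in the regime $k\epsilon<z^*/2$ and $z\le\Gamma\epsilon$. Bounding each factor $|z^*-j\epsilon-z|/|z^*-j\epsilon|\le (z-z^*+k\epsilon)/(z^*-k\epsilon)=O(z/z^*)$, bounding each truncation factor $|1-z/((\Gamma+j)\epsilon)|$ by $1$, and bounding the denominator below by $c\,z^{2k+1}$, gives $|p_\Upsilon(z)|\le C/z^{k+2}$ and $|p_\Upsilon(z)/z|\le C/z^{k+3}$, both integrable on $[z^*,\infty)$. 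Dominated convergence then yields $\epsilon\sum_j p_\Upsilon(j\epsilon)\to\int_{z^*}^\infty\tilde p$ and $\epsilon\sum_j p_\Upsilon(j\epsilon)/(j\epsilon)\to\int_{z^*}^\infty\tilde p(z)/z\,dz$, so the discrete ratio converges to $(k+1)/[(2k+1)z^*]$.

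To build the family, I would pick any sequence $z_n^*\downarrow(k+1)/(2k+1)$, choose $\epsilon_n=z_n^*/N_n$ with $N_n\in\Z_{>0}$ so that $z_n^*/\epsilon_n=N_n$ is an integer and $N_n\to\infty$, and choose $\Gamma_n\to\infty$ large enough that $\Gamma_n\epsilon_n\to\infty$ and each Riemann sum is within, say, $[z_n^*(2k+1)-(k+1)]/[2(2k+1)]$ of its limit. For each $n$ the discrete ratio is then strictly below $1$, giving the required solution, and by construction $\epsilon_n\to0$, $\Gamma_n\to\infty$, $z_n^*\to(k+1)/(2k+1)$. The main obstacle I expect is the uniform majorant step: the $\Gamma$-truncation factors make $p_\Upsilon$ vanish near $z=\Gamma\epsilon$ but are absent from $\tilde p$, so the majorant must be valid on all of $[z^*,\infty)$ independently of $\Gamma$; once this is established the remainder is standard.
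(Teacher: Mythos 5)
Your proposal is correct, and it reaches the paper's conclusion by the same underlying reduction: rewrite the constraint as $\sum_j p_\Upsilon(j\epsilon)/(j\epsilon)\le\sum_j p_\Upsilon(j\epsilon)$, show that the ratio of the two sides tends to $\frac{k+1}{(2k+1)z^*}$ via the Beta-function computation, conclude that any fixed $z^*>\frac{k+1}{2k+1}$ admits suitable $(\epsilon,\Gamma)$, and finish with a diagonal choice of $(z^*_n,\epsilon_n,\Gamma_n)$. Where you differ from the paper is in how the discrete-to-continuum passage is controlled. The paper sandwiches $p_\Upsilon(z)$ pointwise between two explicit expressions (each factor of $g_\Upsilon$ replaced by its extreme value over the shift range, the denominator by $(z\mp k\epsilon)^{2k+1}$, the truncation factors kept or dropped), and then takes the limits sequentially: first $\Gamma\to\infty$ at fixed $\epsilon$, comparing against a $\Gamma$-independent series, then $\epsilon\to0$, comparing that series against the integrals of Eqs.~(\ref{eq:betabeg})--(\ref{eq:betaend}); monotone comparison does all the analytic work and in principle gives quantitative control of how large $\Gamma$ and how small $\epsilon$ must be for a given $z^*$. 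You instead take the joint limit $\epsilon\to0$, $\Gamma\epsilon\to\infty$ using the pointwise limit $p_\Upsilon\to\tilde p$ together with a single $(\epsilon,\Gamma)$-independent integrable majorant $C/z^{k+2}$ (valid once $k\epsilon<z^*/2$) and dominated convergence of the Riemann sums; this is arguably cleaner, since one majorant replaces the two-sided bounds and the two-stage limit, and it makes explicit the sum-to-integral step that the paper leaves largely implicit. Two small points to tighten in a full write-up: the Riemann sums are of the $(\epsilon,\Gamma)$-dependent functions $p_\Upsilon$, not of $\tilde p$, so the step-function construction and the local uniformity of $p_\Upsilon\to\tilde p$ should be stated when invoking dominated convergence; and the closeness margin you quote should be phrased so that both the numerator and the denominator sums are controlled relative to the (strictly positive) limiting denominator --- any sufficiently small margin works, so this is cosmetic rather than a gap.
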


\begin{proof}
For a given $\epsilon$ and $\Gamma$, the best $z^*$ is constrained by
$z^*/\epsilon\in\Z$ and 
\be
\sum_{j=z^*/\epsilon}^\Gamma 
p_{\Upsilon}(j\epsilon)
\geq
\sum_{j=z^*/\epsilon}^\Gamma 
\frac{p_{\Upsilon}(j\epsilon)}{j\epsilon},
\ee
\noindent
where we have expanded the definition of $C_\Upsilon$. We then use
\be
\lp(\frac{z-(z^*-k\epsilon)}{z^*-k\epsilon}\rp)^{k-1}
\lp(\frac{1}{z-k\epsilon}\rp)^{2k+1} \geq &p_{\Upsilon}(z)&
\geq 
\lp(\frac{z-z^*}{z^*}\rp)^{k-1}\lp(\frac{\Gamma-z}{\Gamma}\rp)^k
\lp(\frac{1}{z+k\epsilon}\rp)^{2k+1}
\ee
\noindent
to note that if we choose $z^*$ in accordance with the strict inequality
\be
\sum_{j=z^*/\epsilon}^\infty
\lp(\frac{j\epsilon-z^*}{z^*}\rp)^{k-1}
\lp(\frac{1}{j\epsilon+k\epsilon}\rp)^{2k+1}
>
\sum_{j=z^*/\epsilon}^\infty
\lp(\frac{j\epsilon-(z^*-k\epsilon)}{z^*-k\epsilon}\rp)^{k-1}
\lp(\frac{1}{j\epsilon-k\epsilon}\rp)^{2k+2} 
\ee
then we can always find a large enough integer $\Lambda$ so that the
original inequality is satisfied (that is because the new right-hand side is
greater than or equal to the original right-hand side, whereas the original
left-hand side will converge as $\Lambda\rightarrow\infty$ to an
expression greater than or equal to the new left-hand side).

Similarly, if we choose $z^*$ in accordance with the strict inequality
\be
\int_{z^*}^\infty
\lp(\frac{z-z^*}{z^*}\rp)^{k-1}
\lp(\frac{1}{z}\rp)^{2k+1} dz
&>&
\int_{z^*}^\infty
\lp(\frac{z-z^*}{z^*}\rp)^{k-1}
\lp(\frac{1}{z}\rp)^{2k+2} dz
\ee
we can find an appropriate $\epsilon>0$ so that the original constraints
are satisfied. The argument for solving the above inequality is the same as
the one used for Eq.~(\ref{eq:betabeg}) through Eq.~(\ref{eq:betaend}). The
constraint becomes
\be
z^*> \frac{k+1}{2k+1}
\ee
and therefore for any such $z^*$ we can find appropriate $\epsilon$ and
$\Gamma$ that satisfy the original constraints.
\end{proof}

\begin{corollary}
For every integer $k>0$ there is a family of coin-flipping protocols 
that converges to
\be
P_A^*=P_B^*=\frac{k+1}{2k+1}.
\ee
\end{corollary}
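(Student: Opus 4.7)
The corollary is essentially a packaging result: the previous lemma has already done the hard analytic work of producing valid TIPG parameters with $z^*$ approaching $(k+1)/(2k+1)$, and what remains is to feed these into the general machinery developed earlier in the paper.

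The plan is as follows. First, I would invoke the preceding lemma to extract, for the fixed integer $k$, a sequence of parameter tuples $\Upsilon_n=\{k,\epsilon_n,\Gamma_n,z_n^*\}$ with $\epsilon_n\to 0$, $\Gamma_n\to\infty$, and $z_n^*\to (k+1)/(2k+1)$, each of which satisfies the point-splitting inequality
\[
1\geq C_{\Upsilon_n}\sum_{j=z_n^*/\epsilon_n}^{\Gamma_n}\frac{p_{\Upsilon_n}(j\epsilon_n)}{j\epsilon_n}.
\]
The corollary just above the lemma then tells me that for each $n$ the pair $(h_{\Upsilon_n},v_{\Upsilon_n})$ constitutes a bona fide TIPG with final point $[z_n^*,z_n^*]$.

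Next, I would apply Theorem~\ref{thm:kitmain3}, which equates the infimum over coin-flipping protocols of any monotone $f(P_B^*,P_A^*)$ with the infimum over TIPGs of $f(\beta,\alpha)$. Choosing $f(\beta,\alpha)=\max(\beta,\alpha)$ (which is monotone in the required sense), the existence of a TIPG with final point $[z_n^*,z_n^*]$ implies that for every $\delta>0$ there is an actual coin-flipping protocol $P_n^{(\delta)}$ with $\max(P_B^*,P_A^*)\leq z_n^*+\delta$. Taking $\delta=\delta_n\to 0$ diagonally against the sequence $z_n^*$, I obtain a family of genuine coin-flipping protocols with $\max(P_A^*,P_B^*)\to (k+1)/(2k+1)$. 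Since both $P_A^*$ and $P_B^*$ are at least $1/2$ in any protocol with $P_A=P_B=1/2$ (a cheating player can always simply play honestly), and since $(k+1)/(2k+1)>1/2$, the two cheating probabilities must each individually converge to $(k+1)/(2k+1)$, yielding the claim.

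The main obstacle is really invisible here because it has been handled elsewhere: it lies in Theorem~\ref{thm:kitmain3}, whose proof required the full chain UBP $\leftrightarrow$ TDPG $\leftrightarrow$ TIPG, including the catalyst-state arguments and the matrix-reconstruction lemmas of Appendix~\ref{sec:f2m}. At the level of this corollary the remaining work is purely organizational: unpack the sequence from the lemma, apply the theorem, and take limits. No further analytic estimates or constructions are needed, since the constants $\epsilon_n,\Gamma_n$ are chosen exactly to make all validity inequalities hold and the degeneration $z_n^*\to(k+1)/(2k+1)$ is built into the lemma's statement.
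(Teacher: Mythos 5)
Your route is precisely the paper's (implicit) one: the preceding lemma supplies parameters with $z^*_n\rightarrow\frac{k+1}{2k+1}$, the corollary just before gives the corresponding TIPGs with final point $[z^*_n,z^*_n]$, and Theorem~\ref{thm:kitmain3} with the monotone choice $f(\beta,\alpha)=\max(\beta,\alpha)$ converts these into genuine protocols with $\max(P_A^*,P_B^*)\leq z^*_n+\delta_n$, which is all the paper intends (it offers no separate proof since this is immediate). One small caveat: your final inference that $P_A^*$ and $P_B^*$ must \emph{each converge exactly} to $\frac{k+1}{2k+1}$ because both are at least $1/2$ does not follow from the stated bounds (they are only trapped in $[1/2,\,z^*_n+\delta_n]$), but this overreach is harmless, since the upper bounds converging to $\frac{k+1}{2k+1}$ is exactly what the corollary is used for in deducing arbitrarily small bias.
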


\begin{corollary}
There exists protocols for quantum weak coin flipping with
arbitrarily small bias.
\end{corollary}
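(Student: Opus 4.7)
The plan is to derive this corollary immediately from the previous one by taking $k$ large. For each fixed integer $k>0$, the previous corollary gives a family of coin-flipping protocols whose cheating probabilities converge to $P_A^*=P_B^*=(k+1)/(2k+1)$. The bias of the idealized limit is
\[
\frac{k+1}{2k+1}-\frac{1}{2}=\frac{1}{2(2k+1)},
\]
which tends to $0$ as $k\to\infty$. So I would simply fix $\delta>0$, choose $k$ large enough that $1/(2(2k+1))<\delta/2$, and then select from the corresponding family a concrete protocol whose cheating probabilities are within $\delta/2$ of the limit value; that protocol has bias strictly less than $\delta$.

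To make this fully rigorous I would unwind the chain of reductions used to produce the families. Concretely, for each $k$ I would invoke the lemma preceding the previous corollary to pick parameters $\Upsilon=\{k,\epsilon,\Gamma,z^*\}$ with $z^*$ as close as desired to $(k+1)/(2k+1)$ and with the point-splitting inequality satisfied. This yields a TIPG $(h_\Upsilon,v_\Upsilon)$ with final point $[z^*,z^*]$. Applying Theorem~\ref{thm:kitmain3} (the equivalence of TIPGs, TDPGs, UBPs, and protocols under infima of monotone functionals), I get a coin-flipping protocol with $P_A^*\leq z^*+\eta$ and $P_B^*\leq z^*+\eta$ for any prescribed $\eta>0$, so the bias is at most $z^*-\tfrac{1}{2}+\eta$. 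Choosing $k,z^*,\eta$ so that $z^*-\tfrac{1}{2}+\eta<\delta$ finishes the argument.

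There is no real obstacle here: all the heavy lifting has been done in the previous lemmas. The main care needed is simply to compose the three successive approximations cleanly (the discreteness $z^*/\epsilon\in\Z$, the truncation at $\Gamma$, and the $\eta$-loss in the TIPG-to-protocol reduction), ensuring that each of the three error terms can be made smaller than $\delta/3$ independently. Since $\delta$ is arbitrary, we obtain protocols of arbitrarily small bias, completing the proof.
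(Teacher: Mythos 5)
Your argument is exactly the paper's intended (and essentially unwritten) proof: the corollary follows immediately from the preceding one by letting $k\to\infty$, since $\frac{k+1}{2k+1}-\frac{1}{2}=\frac{1}{2(2k+1)}\to 0$, and picking a member of the $k$th family close enough to its limit. Your care in composing the three approximations ($z^*$ discreteness, truncation at $\Gamma$, and the $\eta$-loss in passing from TIPG to protocol) is sound and matches the machinery already established in the paper.
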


\section{\label{sec:end}Conclusions}

We have constructively proven the existence of protocols for quantum weak
coin flipping with arbitrarily small bias. In the end, it appears that
quantum information has fulfilled at least a small part of its promise in
the area of two-party secure computation.

We have also tried to provide a primer on how to use Kitaev's formalism to
build interesting protocols. Hopefully the present result will be the
first of many to be obtained by viewing quantum games as dual to the cone
of bi-operator monotone functions.

\subsection{Open problems}

\begin{enumerate}

\item Improvements and extensions to the proof:

\begin{enumerate}
\item We did not explicitly complete the proof that if no coin-flipping
protocols with arbitrarily small bias exists, then the bound can be proven
using a bi-operator monotone function. Completing the proof would show that
the cone of TIPGs and the cone of bi-operator monotone functions are dual.

\item We have only constructed protocols for the case
$P_A=P_B=1/2$. Protocols for other cases can be constructed using serial
composition of this protocol. Nevertheless, it should also be
straightforward to explicitly describe a TIPG for such cases.

\item The alternative construction from Section~\ref{sec:alternate} needs
to be fleshed out. It may lead to some elegant TIPGs.

\item Appendix~\ref{sec:f2m} needs to be simplified/made more elegant.

\end{enumerate}

\item Improvements and extensions to the protocol:

\begin{enumerate}

\item Can we find a simple unitary description (such as the one in
Appendix~\ref{sec:ddb}) of a family of protocols that achieves arbitrarily
small bias?

\item Can we optimize the resources (messages, storage qubits, complexity
of unitaries?) needed to implement such a protocol. 

\begin{enumerate}

\item What are the asymptotic costs of achieving arbitrarily small bias?

\item What are the practical costs of achieving small bias? How hard would 
it be to make a protocol with bias $0.001$?

\end{enumerate}

\item What can be said about multiparty weak coin flipping?

\end{enumerate}

\item Beyond weak coin flipping 

\begin{enumerate}

\item Weak coin flipping with arbitrarily small bias leads trivially to a
new protocol for strong coin flipping with bias (arbitrarily close to)
$1/4$. Is this the best that can be done?

\item More generally, what is the optimal protocol for strong coin flipping
with cheat detection? While the formalism in this paper can be adapted to
strong coin flipping, it probably cannot be used unmodified. The difference
is that in strong coin flipping we need to simultaneously bound four
quantities (Alice and Bob's probabilities each of obtaining zero and
one). Even classically one can construct protocols that achieve
$P_{A0}^*P_{B0}^*=1/2$ (so long as one is willing to tolerate
$P_{A1}=P_{B1}=1$).

\item Strong coin flipping with cheat detection can be used to bound bit
commitment with cheat detection. Do they share an optimal protocol? Can it
be used as a building block for all secure two-party computations with
cheat detection?

\item Kitaev's first formalism can also be used in the study of specific 
oracle problems. In this context the second formalism could be used to
study all oracles simultaneously, which may prove useful in identifying
optimal oracles in some sense (for instance for proving separations between
quantum and classical computation).

\item What else?

\end{enumerate}

\end{enumerate}


\section*{Acknowledgments}
I would like to thank Alexei Kitaev for teaching me his new formalism,
without which this result would not have been possible. I would also like
to thank Dave Feinberg and Debbie Leung for their help and useful
discussions. Research at Perimeter Institute for Theoretical Physics is
supported in part by the Government of Canada through NSERC and by the
Province of Ontario through MRI.




\appendix

\newpage
\section{\label{sec:ddb}Dip-Dip-Boom and the bias 1/6 protocol}

We present in this section a reformulation of the bias $1/6$ protocol from
\cite{me2005}. The new version of the protocol is simpler and uses
measurements throughout the protocol in order to keep the total storage
space small: one qutrit for each of the players and one qubit for sending
messages. In fact, the message qubit can be discarded and reinitialized
after each message, so that only the qutrits need to be kept coherent for
the length of the protocol.

The new protocol can be described as a quantum version of the ancient game
of Dip-Dip-Boom, whose classical version is played as follows: two players
sequentially say either ``Dip'' or ``Boom''. In the first case the game
proceeds whereas in the second case the game immediately ends and the
person who said ``Boom'' is declared the winner. There are no bonus points
for longer games and a player can begin and immediately win a game by
saying ``Boom''. Why this game is played at MIT, and how it was accepted as
a major component of the author's doctoral thesis, are questions beyond the
scope of this paper.

What we shall do is build a coin-flipping protocol out of the above game.
First, we introduce honest and cheating players. Honest players will have
to output ``Dip'' vs ``Boom'' according to some previously fixed
probability distribution, whereas cheating players are still free to say
whatever they want at each round. A game is now specified by a set of
numbers $p_1, p_2, p_3, \dots$ so that $p_i$ is the probability that the
$i$th word is ``Boom''. Note that $p_1,p_3,\dots$ apply to the first player
whom we call Alice and $p_2,p_4,\dots$ apply to the second player whom we
call Bob. For convenience we shall fix an $n\geq1$ and assume that the game
ends after $n$ messages by setting $p_n = 1$.

We now define the quantities $P_A(i)$, $P_B(i)$ and $P_U(i)$ which are
respectively the probabilities that after $i$ messages Alice has won the
game, Bob has won the game, or the game remains undecided. These quantities
can be inductively calculated by
\be
P_A(i) &=& 
\begin{cases}
P_A(i-1) & \text{for $i$ even,}\cr
P_A(i-1) + p_i P_U(i-1) & \text{for $i$ odd,}
\end{cases}
\\
P_B(i) &=& 
\begin{cases}
P_B(i-1) + p_i P_U(i-1) & \text{for $i$ even,}\cr
P_B(i-1) & \text{for $i$ odd,}
\end{cases}
\\
P_U(i) &=& (1-p_i) P_U(i-1),
\ee
with initial conditions of $P_U(0)=1$ and $P_A(0)=P_B(0)=0$.

The quantum version of the protocol is simply a coherent version of the
above, with an additional cheat detection step. Alice and Bob will each
hold a qutrit
\be
\HA = \HB = \vspan\{\ket{A}, \ket{B}, \ket{U}\}
\ee
which encodes the state of the game: ``Alice has won'', ``Bob has won'' and
``undecided'' respectively. The message space is just a qubit
\be
\HM = \vspan\{\ket{\text{DIP}},\ket{\text{BOOM}}\}
\ee
comprising the two possible messages.

Amplitude will be moved coherently using the unitary operator $\Rot$
defined by
\be
\Rot(\ket{\alpha},\ket{\beta},\epsilon)
= \mypmatrix{\ket{\alpha} & \ket{\beta}}
\mypmatrix{\sqrt{1-\epsilon} & -\sqrt{\epsilon}\cr
\sqrt{\epsilon} & \sqrt{1-\epsilon}}
\mypmatrix{\bra{\alpha} \cr \bra{\beta}}
+
\bigg(I - \ket{\alpha}\bra{\alpha} - \ket{\beta}\bra{\beta}  \bigg),
\ee
\noindent
which simply effectuates a rotation in the $\ket{\alpha}$, $\ket{\beta}$
plane. For instance, consider the classical step where Alice says ``Boom''
with probability $p$ given that her state is $\ket{U}$. In the quantum protocol
this is described by the rotation $\Rot(\ket{U}\otimes\ket{\text{DIP}},
\ket{A}\otimes\ket{\text{BOOM}}, p)$. Assuming that previously
$\ket{A}\otimes\ket{\text{BOOM}}$ had zero amplitude, the operation will
move into this state an amplitude of $\sqrt{p}$ times the prior amplitude
of $\ket{U}\otimes\ket{\text{DIP}}$. We are now ready to state the quantum
protocol:

\begin{protocol}[Weak coin flipping with bias 1/6, simplified]\ \\
Fix $n\geq 1$ and numbers $p_1,\dots,p_n\in[0,1]$. Define $\HA$, $\HB$,
$\HM$, $P_A(i)$, $P_B(i)$ and $P_U(i)$ as above. 
The protocol has the following steps

\begin{enumerate}

\item Initialization: Alice prepares $\ket{U}\otimes\ket{\text{DIP}}$ 
in $\HA\otimes\HM$. Bob prepares $\ket{U}$ in $\HB$.

\item For $i = 1$ to $n$ execute the following steps:\\
(where we use $X$ to denote the player that would choose the $i$th message
in the classical protocol, and $Y$ is the other player. That is
$X=A$ and $Y=B$ if $i$ is odd or $X=B$ and $Y=A$ if $i$ is even).
\label{item:loop}

\begin{enumerate}

\item $X$ applies the operator
\be
R_i \equiv \Rot\lp(\ket{U}\otimes\ket{\text{DIP}},\,
\ket{X}\otimes\ket{\text{BOOM}},\, p_i\rp).
\ee

\item $X$ sends the qubit $H_M$ which is received by $Y$.
\label{item:mes}

\item $Y$ applies the operator
\be
\tilde R_i = \Rot\lp(\ket{U}\otimes\ket{\text{BOOM}},\,
\ket{X}\otimes\ket{\text{DIP}},\, \frac{p_i\,P_U(i-1)}{P_X(i)}\rp).
\ee

\item $Y$ measures the message qubit $\HM$ in the computational basis.\\
If the outcome is $\ket{\text{BOOM}}$ then $Y$ aborts and outputs $Y$.
\label{item:meas}

\end{enumerate}

\item Alice and Bob each measure their qutrit in the computational basis.
If the outcome is $U$ they declare themselves the winner, otherwise they
output the measurement outcome as the winner.

\end{enumerate}
\end{protocol}

\noindent
It is not hard to see that when both players are honest the state of the
system at the end of the $i$th iteration of Step~\ref{item:loop} can be
written as
\be
\lp(\sqrt{P_A(i)} \ket{A}\otimes\ket{A} 
+ \sqrt{P_B(i)} \ket{B}\otimes\ket{B} 
+ \sqrt{P_U(i)} \ket{U}\otimes\ket{U}\rp) \otimes \ket{\text{DIP}}.
\ee
To obtain a standard coin-flipping protocol we must therefore restrict the
choices of $p_1,\dots,p_n$ to values such that $P_A(n)=P_B(n)=1/2$. For
simplicity, we will also assume that for $i<n$, the other probabilities
$p_i$ are neither zero nor one. In particular, this requires the previously
discussed condition $p_n=1$.

Step~\ref{item:meas} is the new element of the quantum protocol, and serves
as a cheat detecting step. Without this measurement the protocol is exactly
equivalent to the original classical protocol.

It is important to note that when both players are honest, neither will
ever abort in Step~\ref{item:meas}. However, when one player is cheating
then the honest player will abort at Step~\ref{item:meas} with some
non-zero probability. Roughly speaking, the measurement compares the ratio
of amplitude in ``Boom'' (tensored with $\ket{U}$) in the present message,
to the total amplitude of ``Boom'' from previous messages. The classical
strategy of declaring with probability one ``Boom'' on the first message is
thwarted because this ratio will effectively be zero for all subsequent
messages.  The optimal quantum cheating strategy involves a small amount
of cheating on each message, but that gives an honest player an opportunity
to declare victory too.

In the next section we will sketch the computation of $P_A^*$ and $P_B^*$
for the above protocol and find that they match the expressions for the
optimal protocols from \cite{me2005}. This is the first step in proving the
equivalence of the two protocols. A complete proof of equivalence would
take us too far afield, though one possible route follows from the
discussion in Section~\ref{sec:TDPG16}.

In practice, the main difference between the protocols is that in the one
presented above the cheat detection is done gradually as the protocol
progresses rather than in one big measurement at the end. This simplifies
the description of the protocol, and reduces the resources required for a
physical implementation. Alas, it does not make it any more cheat
resistant.  In the limit $n\rightarrow\infty$ and with suitably chosen
values for $p_1,\dots,p_n$ (almost any choice so long as they go smoothly
to zero as $n\rightarrow\infty$) the above protocol achieves a bias of
$1/6$.

\subsection*{Analysis}

In this section we will make use of Kitaev's first formalism to compute
$P_B^*$ ($P_A^*$ can be obtained by a similar argument). In other words we
will find a feasible point of the dual SDP. Of course, this only proves an
upper bound on $P_B^*$, but the resulting expressions are in fact optimal,
and a matching lower bound can also be constructed.

The final result for this section will be $P_B^*$ as a function of the
variables $p_1,\dots,p_n$. We will not attempt to find the optimal values
for these parameters as that task is already done in \cite{me2005}.
We will also not attempt to describe the result as a TDPG, though this is a
simple task using the dual feasible point constructed below.

Henceforth we will consider the case of honest Alice and cheating Bob. The
analysis will be done from the perspective of Alice's qubits which must
satisfy the following SDP:
\be
\rho_1 &=& \Tr_{\HM}[R_1\lp(\ket{U}\bra{U}\otimes\ket{\text{DIP}}
\bra{\text{DIP}} \rp) R_1^\dagger],\\
\Tr_{\HM}[\rho_i] &=& \rho_{i-1} 
\qquad\qquad\qquad\qquad\qquad\qquad\qquad\qquad\qquad\ 
\text{for $i$ even,}\\
\rho_i &=& \Tr_{\HM}\lp[ R_i \Pi_{\text{DIP}} \tilde R_{i-1}
\rho_{i-1} \tilde R_{i-1}^\dagger \Pi_{\text{DIP}}  R_i^\dagger \rp]
 \qquad\qquad\text{for $i>1$ odd,}\\
\rho_\text{f} &=& \begin{cases}
\rho_n & \text{$n$ odd,}\\
\Tr_{\HM}\lp[\tilde R_n\rho_n \tilde R_n\dagger \rp] & \text{$n$ even,}
\end{cases}\\
P_{win} &=& \bra{B} \rho_{f} \ket{B},
\ee
\noindent
where $\rho_i$ is the state of Alice's qubits immediately after the $i$th
message (i.e., after Step~\ref{item:mes}). The state $\rho_i$ for even $i$
is an operator on $\HA\otimes\HM$ and is unknown but constrained by
$\rho_{i-1}$. The state $\rho_i$ for odd $i$ is an operator on $\HA$ and
can be computed from $\rho_{i-1}$ by applying the operators that Alice
would use.

Note that $\Pi_{\text{DIP}}\equiv I\otimes\ket{\text{DIP}}\bra{\text{DIP}}$
is a projector corresponding to a successful measurement in
Step~\ref{item:meas}. The normalization of $\rho_i$ will therefore be the
probability that Alice has not aborted yet. The final state $\rho_f$ on
$\HA$ will have a similar normalization and $P_{win}$ will be the
probability that Alice outputs a victory for Bob. In particular, Bob's
maximum probability of winning by cheating, $P_B^*$, is given by the
maximization of $P_{win}$ over positive semidefinite matrices satisfying
the above constraints.

The dual to the above SDP can be written in the following form:
\be
P_B^* &\leq& \bra{U}Z_0\ket{U},\\
Z_{i-1} \otimes\ket{\text{DIP}}\bra{\text{DIP}} &=&
\Pi_{\text{DIP}} R_{i}^\dagger \lp(Z_{i}\otimes I_{\HM}\rp) 
R_{i} \Pi_{\text{DIP}} \qquad\qquad\ \, \text{for $i$ odd,}
\label{eq:zodd2}\\
Z_{i-1} \otimes I_{\HM} &\geq& \tilde R_{i}^\dagger
\lp( Z_{i} \otimes\ket{\text{DIP}}\bra{\text{DIP}} \rp) \tilde R_{i}
\qquad\qquad\quad\ 
\text{for $i$ even,}
\label{eq:zeven2}\\
Z_n &\geq& \ket{B}\bra{B},
\label{eq:zlast}
\ee
\noindent
where the variables $Z_0,\dots,Z_n$ are semidefinite operators on
$\HA$. Any assignment of these variables consistent with the above
constraints provides an upper bound on $P_B^*$. The infimum of
$\bra{U}Z_0\ket{U}$ will in fact be equal to $P_B^*$.

Now comes the crucial bit of guesswork, where we choose a diagonalizing
basis for the operators $Z_i$. Because any assignment to $Z_0,\dots,Z_n$
consistent with the constraints provides an upper bound on $P_B^*$,
choosing a bad basis will at worse give us a non-tight upper bound. Based
on experience from past protocols, we choose to restrict ourselves to
studying matrices that are diagonal in the computational basis, and we
write
\be
Z_i = \mypmatrix{a_i & 0 & 0 \cr 0 & b_i & 0 \cr 0 & 0 & u_i}. 
\ee
We can now express Eq.~(\ref{eq:zodd2}) as
\be
\mypmatrix{a_{i-1} & 0 & 0 \cr 0 & b_{i-1} & 0 \cr 0 & 0 & u_{i-1}}
= \mypmatrix{a_i & 0 & 0 \cr 0 & b_i & 0 \cr 0 & 0 & 
(1-p_i) u_{i} + p_i a_i}
\ee
\noindent
valid for $i$ odd. We can also write Eq.~(\ref{eq:zeven2}) as
\be
\mypmatrix{
a_{i-1} &       0 &       0 &       0 &       0 &       0 \cr
      0 & a_{i-1} &       0 &       0 &       0 &       0 \cr
      0 &       0 & b_{i-1} &       0 &       0 &       0 \cr
      0 &       0 &       0 & b_{i-1} &       0 &       0 \cr
      0 &       0 &       0 &       0 & u_{i-1} &       0 \cr
      0 &       0 &       0 &       0 &       0 & u_{i-1}}
\geq
\mypmatrix{
    a_i &       0 &       0 &       0 &       0 &       0 \cr
      0 &       0 &       0 &       0 &       0 &       0 \cr
      0 &       0 &     (1-\tilde p_i) b_i &       0 &       0 &
\sqrt{\tilde p_i (1-\tilde p_i)} b_i \cr
      0 &       0 &       0 &       0 &       0 &       0 \cr
      0 &       0 &       0 &       0 &     u_i &       0 \cr
      0 &       0 & \sqrt{\tilde p_i (1-\tilde p_i)} b_i &       
0 &       0 &       \tilde p_i b_i}
\ee
\noindent
valid for $i$ even, where we introduced $\tilde p_i = p_i P_U(i-1)/P_B(i)$
as a notation for the rotation parameter of $\tilde R_i$. The above
inequality can be simplified to $a_{i-1}\geq a_i$, $u_{i-1}\geq u_i$, plus
the positivity of the $2\times2$ matrix
\be
\mypmatrix{ b_{i-1} - (1-\tilde p_i) b_i & 
-\sqrt{\tilde p_i (1-\tilde p_i)} b_i \cr
-\sqrt{\tilde p_i (1-\tilde p_i)} b_i &
u_{i-1} - \tilde p_i b_i
}\geq 0 
\label{eq:2x2cond}
\ee
\noindent
which is satisfied if its determinant and its diagonal elements are
non-negative. 

It is not hard to see that we can choose $a_i=0$ for all $i$. We are after
all trying to minimize $\bra{U}Z_0\ket{U} = u_0$. We now get the relation
$u_{i-1}=(1-p_i) u_i$ for $i$ odd, and we know that the best that we can
hope for is $u_{i-1}=u_i$ for $i$ even. If this were true we would have
\be
u_i = u_0 \prod_{\substack{j=1\cr\text{$j$ odd}}}^i \frac{1}{1-p_i}.
\ee
\noindent
Let us be optimistic, and assume the above holds and then check whether the
remaining constraints can be satisfied. Surprisingly, we shall find that the
answer is yes.

The intuition for the $b_i$ variables is that $b_n=1$ and $b_i$ increases
as $i$ decreases. In fact, the larger $b_0$ is, the better the bound we
will find, and the infimum is attained for $b_0=\infty$. With a little
care, we can directly handle this infinity and get $b_1=b_0=\infty$ and
$b_2 = u_1/\tilde p_2$ which satisfies Eq.~(\ref{eq:2x2cond}) for $i=2$.

The remaining conditions for $b_i$ will no longer involve infinities and
are obtained by minimizing the determinant Eq.~(\ref{eq:2x2cond}), which
leads to
\be
\frac{1}{b_i} = \frac{\tilde p_i}{u_{i-1}} + \frac{1-\tilde p_i}{b_{i-1}}
\qquad\qquad\text{for $i$ even,}
\ee
\noindent
which also guarantees that the constraint on the diagonal elements of
Eq.~(\ref{eq:2x2cond}) is satisfied. By induction we can write
\be
\frac{1}{b_i} &=& 
\sum_{\substack{j=2\cr\text{$j$ even}}}^i \frac{\tilde p_j}{u_{j-1}}
\prod_{\substack{k=j+2\cr\text{$k$ even}}}^i (1-\tilde p_k)
=\sum_{\substack{j=2\cr\text{$j$ even}}}^i \frac{\tilde p_j}{u_{j-1}}
\prod_{\substack{k=j+2\cr\text{$k$ even}}}^i \frac{P_B(k-1)}{P_B(k)}
=\sum_{\substack{j=2\cr\text{$j$ even}}}^i \frac{\tilde p_j}{u_{j-1}}
\frac{P_B(j+1)}{P_B(i)}\\
&=& \sum_{\substack{j=2\cr\text{$j$ even}}}^i 
\frac{p_j P_U(j-1)}{u_{j-1} P_B(i)}
= \frac{1}{u_0 P_B(i)} \sum_{\substack{j=2\cr\text{$j$ even}}}^i 
p_j P_U(j-1) \prod_{\substack{k=1\cr\text{$k$ odd}}}^{j-1} (1-p_k),
\ee

\noindent
where in the second equality we used $P_B(k)- p_k P_U(k-1) = P_B(k-1)$ for
$k$ even, and in the third equality we used $P_B(k+1)/P_B(k)=1$ for $k$
even.

What we have done is solve for all the dual variables in terms of $u_0$. 
We have also satisfied all constraints except for those from
Eq.~(\ref{eq:zlast}), which tells to set $b_n=1$ and allows us to solve for
$u_0$ to obtain our upper bound:
\be
P_B^* \leq u_0 = 2 
\sum_{\substack{j=2\cr\text{$j$ even}}}^n p_j
\lp(\prod_{k=1}^{j-1} (1-p_k)\rp)
\lp(\prod_{\substack{k=1\cr\text{$k$ odd}}}^{j-1} (1-p_k)\rp),
\ee
\noindent
where we used $P_B(n)=1/2$.

The above expression is equivalent to the result found in
\cite{me2005}. For a simple example, take the Spekkens and Rudolph
\cite{Spekkens2002} protocol with $P_A^*=P_B^*=1/\sqrt{2}$, which can be 
described in the above formalism by setting $p_1 = 1-1/\sqrt{2}$,
$p_2=1/\sqrt{2}$ and $p_3=1$. The above bound then becomes $u_0 = 2 p_2
(1-p_1)^2 = 1/\sqrt{2}$ as expected.

\newpage

\section{\label{sec:strong}Proof of strong duality}

We say strong duality holds when the maximum of the primal SDP and the
infimum of the dual SDP are equal. Though strong duality does not hold in
general, it does hold in most cases. A number of lemmas which provide
sufficient conditions for strong duality can be found in convex
optimization books such as \cite{convex} or \cite{convex2}.
In this section, however, we aim to give a direct proof of strong duality
for the particular case of the coin-flipping SDP. This appendix follows the
notation of Section~\ref{sec:kit1}

More specifically, the goal for this section is to prove the existence of
arbitrarily good upper bound certificates. Mathematically, we aim to show
that $\inf \bra{\psi_{A,0}} Z_{A,0} \ket{\psi_{A,0}} = P_B^*$, where the
infimum is taken over all dual feasible points.

Let us begin by defining $R_i$ as the set of density matrices $\rho_{A,i}$
on $\HA$ that are attainable by a cheating Bob after $i$ messages. We will
focus only on even $i$. These sets satisfy the properties:
\begin{itemize}
\item $R_0 = \lp\{\ket{\psi_{A,0}}\bra{\psi_{A,0}}\rp\}$.
\item $R_{i+2} = \lp\{\Tr_\HM[ U_{A,i+1} \tilde\rho_{A,i} U_{A,i+1}^\dagger]
: \tilde \rho_{A,i}\geq 0 \text{ and }
\Tr_\HM \tilde \rho_{A,i}\in R_i\rp\}$.
\item $R_i$ is convex for all $i$.
\end{itemize}
\noindent
The last property follows from the previous two by induction.

Though in general we cannot find a dual feasible point such that
$\bra{\psi_{A,0}} Z_{A,0} \ket{\psi_{A,0}} = P_B^*$, we can get arbitrarily
close. Specifically, we can prove that for every $\epsilon>0$ we can pick
the dual variables one by one, starting with $Z_{n-2}$ and working our way
backwards towards $Z_0$, such that
\begin{itemize}
\item $Z_{A,i}\otimes I_\HM \geq
U_{A,i+1}^\dagger \lp(Z_{A,i+2}\otimes I_\HM\rp) U_{A,i+1}$
\item $\max_{\rho_{A,i}\in R_i} \Tr[Z_{A,i} \rho_{A,i}] \leq 
\max_{ \rho_{A,i+2}\in R_{i+2}} \Tr[Z_{A,i+2} \rho_{A,i+2}] + 2 \epsilon$
\end{itemize}
\noindent
for even $i$, where as usual $Z_{A,n} = \Pi_{A,1}$. The first condition
guarantees that the constructed solution is indeed a dual feasible point
(the operators for $i$ odd can be found from the equality $Z_{A,i} =
Z_{A,i+1}$). The second condition gives us
\be
\bra{\psi_{A,0}} Z_{A,0} \ket{\psi_{A,0}} = 
\max_{\rho_{A,0}\in R_0} \Tr[Z_{A,0} \rho_{A,0}] \leq
\max_{\rho_{A,n}\in R_n} \Tr[Z_{A,n} \rho_{A,n}] + n\epsilon
= P_B^* + n\epsilon
\ee
\noindent
which is the desired result since $\epsilon>0$ is arbitrary and we already
know by weak duality that $\bra{\psi_{A,0}} Z_{A,0} \ket{\psi_{A,0}}\geq
P_B^*$.

Let us assume that $\epsilon>0$ has been given and that
$Z_{A,i+2},\dots,Z_{A,n}$ have been constructed according to the above
criteria. We shall find a $Z_{A,i}$ satisfying the criteria as well.

Let $\Gamma = U_{A,i+1}^\dagger \lp(Z_{A,i+2}\otimes I_\HM\rp) U_{A,i+1}$,
define $\Pos(\HA)$ to be the set of positive semidefinite operators on $\HA$
and define the function $f:\Pos(\HA)\rightarrow \R$ by 
\be
f(\rho) = \max_{\substack{\tilde \rho\in\Pos(\HA\otimes\HM) 
\cr \Tr_\HM [\tilde \rho]=\rho}} \Tr[\Gamma \tilde \rho],
\ee 
\noindent
where we are maximizing over positive semidefinite operators $\tilde \rho$
on on $\HA\otimes\HM$ whose partial trace is $\rho$.  The function $f$ has
a number of simple to verify properties
\begin{itemize}
\item $\max_{\rho\in R_n} f(\rho) = 
\max_{ \rho_{A,i+2}\in R_{i+2}} \Tr[Z_{A,i+2} \rho_{A,i+2}]\equiv \gamma$.
\item $f$ is continuous.
\item $f$ is concave (equivalently $f(\rho)+f(\rho')\leq f(\rho+\rho')$).
\end{itemize}
\noindent
where we used the first equality to define $\gamma$.

We now aim to construct a convex set out of $f$ by using the space of
points below its graph. More specifically let $V$ be the vector space of
Hermitian operators on $\HA$. We want to think of $V$ as a real Hilbert
space of dimension $(\dim\HA)^2$ with inner product
$\braket{H}{H'}=\Tr[H^\dagger H']$. Let $W = V\times \R$, which can be
parametrized by ordered pairs $(H,a)$ where $H$ is a Hermitian operator on
$\HA$ and $a\in\R$. Now define the set $X\subset W$ by
\be
X = \Big\{(H,a) : H\in\Pos(\HA), \Tr H \leq 2 \, 
\text{ and } -1\leq a \leq f(H)
\Big\}.
\ee
\noindent
The numbers $2$ and $-1$ are arbitrary, and are mainly there to make $X$
compact. It is easy to check that $X$ is a convex and has non-empty
interior.

Now we define a second compact convex set $Y$ which will be disjoint from
$X$ but will sit above it (see Fig.~\ref{fig:strongdual}). We use the fact
that $f$ is continuous to find a $\delta>0$ such that
\be
\max_{\rho\in\Pos(\HA), \dist(\rho,R_i)\leq \delta} f(\rho) \leq
\max_{\rho\in R_i} f(\rho) +
\epsilon \equiv \gamma + \epsilon,
\ee
\noindent
where $\dist(\rho,R_i)$ is the $\ell_2$ distance. The idea is that we want
$Y$ to sit atop the set $R_i$ but we also want $Y$ to have non-empty
interior. Therefore we expand $R_i$ to a small (closed) neighborhood of
$R_i$ still consisting of positive semidefinite matrices $R_i^\delta =
\{\rho\in\Pos(\HA):\dist(\rho,R_i)\leq \delta\}$.  Now we define
$Y\subset W$ as
\be
Y = \Big\{(H,a): H\in R_i^\delta 
\text{ and } \gamma+2\epsilon\leq a \leq \gamma+2\epsilon+1
\Big\}.
\ee
\noindent
The set $Y$ is convex because $R_i$ was convex and the distance function is
convex. $Y$ is also compact with non-empty interior and disjoint from $X$.

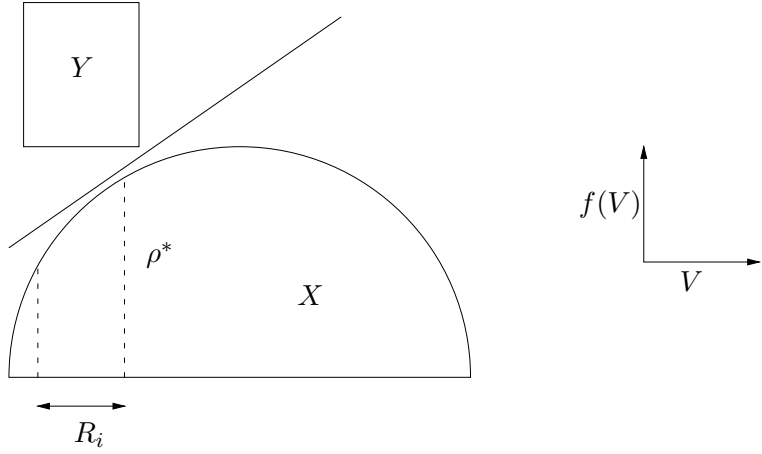
\begin{figure}[tb]
\begin{center}
\setlength{\unitlength}{0.0005in}
\begin{picture}(7824,4602)(0,-10)
\put(2412.000,675.000){\arc{4800.000}{3.1416}{6.2832}}
\dashline{60.000}(312,675)(312,1800)
\dashline{60.000}(1212,675)(1212,2700)
\blacken\path(432.000,405.000)(312.000,375.000)(432.000,345.000)(432.000,405.000)
\path(312,375)(1212,375)
\blacken\path(1092.000,345.000)(1212.000,375.000)(1092.000,405.000)(1092.000,345.000)
\path(6612,1875)(7812,1875)
\blacken\path(7692.000,1845.000)(7812.000,1875.000)(7692.000,1905.000)(7692.000,1845.000)
\path(6612,1875)(6612,3075)
\blacken\path(6642.000,2955.000)(6612.000,3075.000)(6582.000,2955.000)(6642.000,2955.000)
\path(162,3075)(1362,3075)(1362,4575)
	(162,4575)(162,3075)
\path(12,675)(4812,675)
\path(12,2025)(3462,4425)
\put(687,0){$R_i$}
\put(1437,1875){$\rho^*$}
\put(6987,1575){$V$}
\put(5937,2400){$f(V)$}
\put(650,3800){$Y$}
\put(3012,1425){$X$}
\end{picture}
\caption{The sets $X$ and $Y$. The horizontal axis is a cross section
through the density operators given by $\rho =
x\ket{0}\bra{0}+(1-x)\ket{1}\bra{1}$. The curve is given by $f(x) =
(\sqrt{x}+\sqrt{1-x})^2/2$ corresponding to $\Gamma$ being a projector onto
a bell state. The optimal state $\rho^*$ maximizes $f$ on the feasible set
$R_i$. The ideal hyperplane is the tangent to $f$ at
$\rho^*$. Unfortunately, it is quite common that along some cross sections
$R_i$ corresponds to a single boundary point where the slope of $f$ is
infinite. The non-zero width of $Y$ guarantees a finite slope for our
(non-optimal) hyperplane even in this case.}
\label{fig:strongdual}
\end{center}
\end{figure}

Now we use the separating hyperplane theorem (see for instance
\cite{convex} \S 2.5) which says that given two disjoint compact
convex sets there exists a hyperplane such that each set is on one side
(a proof sketch is let $x\in X$ and $y\in Y$ attain the minimum distance
between $X$ and $Y$, then define the hyperplane by the equation 
$(x-y)\cdot (w -(x+y)/2)) = 0$ for $w\in W$).

Because $R_i^\delta$ is of non-empty interior in $V$, and $X$ and $Y$
respectively provide lower and upper bounds on the hyperplane is this
region, the hyperplane cannot be vertical and its normal vector can be
written in the form $(M,1)$ where $M$ is Hermitian. The hyperplane itself
can be parametrized as $(M,1)\cdot(H,a) = c$ for $(H,a)\in W$ and some
parameter $c\in\R$. As the hyperplane separates the sets $X$ and $Y$ we can
write
\begin{itemize}
\item $(H,a)\in X \ \Rightarrow\   c - \Tr M H \geq a,$
\item $(H,a)\in Y \ \Rightarrow\   c - \Tr M H \leq a.$
\end{itemize}
\noindent
We are now ready to choose $Z_{A,i} = c I - M$. The two properties above
tell us that
\begin{itemize}
\item For all density operators $\rho$ on $\HA$ we have 
$\Tr[Z_{A,i}\rho]\geq f(\rho)$.\\
$\Longrightarrow$  For all density operators $\tilde \rho$ on 
$\HA\otimes\HM$ we have
$\Tr\lp[(Z_{A,i}\otimes I) \tilde\rho\rp]
\geq \Tr[\Gamma \tilde \rho]$.\\
$\Longrightarrow Z_{A,i}\otimes I\geq U_{A,i+1}^\dagger
\lp(Z_{A,i+2}\otimes I_\HM\rp) U_{A,i+1}$.
\item For all $\rho\in R_i$ we have $\Tr[Z_{A,i}\rho]\leq \gamma+2\epsilon$.\\
$\Longrightarrow \max_{\rho_{A,i}\in R_i} \Tr[Z_{A,i} \rho_{A,i}] \leq 
\max_{ \rho_{A,i+2}\in R_{i+2}} \Tr[Z_{A,i+2} \rho_{A,i+2}] + 2 \epsilon$.
\end{itemize}

Therefore our chosen $Z_{A,i}$ satisfies the two required properties and
we have completed the proof of strong duality for coin flipping.

\newpage

\section{\label{sec:f2m}From functions to matrices}

TDPGs are specified in terms of pairs of functions constituting valid
transitions. In order to compile them back into the language of quantum
mechanics and semidefinite programming, we need to have a way of extracting
matrices (states and unitaries) out of these transitions. That is our goal
below. As a corollary we shall prove Lemmas~\ref{lemma:f2mmain}
and~\ref{lemma:stdformmain}. This section uses the notation from
Section~\ref{sec:TDPG}.

We begin our discussion by considering the following alternate condition on
transitions:

\begin{definition}
Let $p(z)$ and $q(z)$ be two functions $[0,\infty)\rightarrow
[0,\infty)$ with finite support. We say $p(z)\rightarrow q(z)$ is
\textbf{expressible by matrices} if there exists positive semidefinite
operators $X$ and $Y$, and an (unnormalized) vector $\ket{\psi}$ such that
$X\leq Y$ and $p(z)=\Prob(X,\ket{\psi})$ and $q(z)=\Prob(Y,\ket{\psi})$.
\end{definition}

\noindent
It is not hard to verify that all transitions expressible by matrices are
also valid, justifying our notation. It is also true that all transitions
constructed from UBPs are expressible by matrices (the only non-trivial
step is to expand the Hilbert space so as to purify the density
operator). What we will end up showing by the end of this section is
essentially the converse: that all strictly valid transitions are
expressible by matrices. This is the content of Lemma~\ref{lemma:f2mmain}.

Given that valid transitions and transitions expressible by matrices are
essentially equivalent concepts, we could have used the latter in our
definition of TDPGs. However, that would diminish one of the main
accomplishments of TDPGs: moving the problem of coin flipping outside the
traditional realm of quantum-mechanics/matrices/SDPs.

We note that there do exist valid (but not strictly valid) transitions that
are not expressible by matrices unless we allow infinite
eigenvalues. However, all operators in this paper are assumed to have finite
dimension and finite eigenvalues.

The restriction to strictly valid transitions can be lifted if we work with
functions defined on a compact domain rather than our usual domain of
$[0,\infty)$. For this section it will be useful to work with such compact
domains, and we extend to them our definitions of valid and expressible by
matrices:

\begin{definition}
Fix a compact interval $[a,b]$. Given two functions
$p(z),q(z):[a,b]\rightarrow[0,\infty)$ with finite support we say
$p\rightarrow q$
\begin{itemize}
\item is \textbf{valid on \boldmath $[a,b]$} if $\sum_z p(z)=\sum_z q(z)$
and $\sum_z \frac{\lambda z}{\lambda+z}(q(z) - p(z))\geq 0$ for all
$-\lambda\in\R\setminus I$ .
\item is \textbf{expressible by matrices in \boldmath $[a,b]$} if there
exists matrices $X$, $Y$ with spectrum in $[a,b]$, and a vector
$\ket{\psi}$ such that $X\leq Y$ and $p(z)=\Prob(X,\ket{\psi})$ and
$q(z)=\Prob(Y,\ket{\psi})$.
\end{itemize}
\end{definition}

\noindent
These definitions become useful with the following lemma:

\begin{lemma}
Given two functions $p(z),q(z):[0,\infty)\rightarrow[0,\infty)$ with finite
support such that $p\rightarrow q$ is strictly valid, there exists
$\Lambda>0$, larger than the maximum of the supports of $p$ and $q$, such
that $p\rightarrow q$ is valid on $[0,\Lambda]$.
\label{lemma:inf2lamb}
\end{lemma}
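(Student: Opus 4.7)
The plan is to identify the one additional family of constraints that distinguishes validity on $[0,\Lambda]$ from validity on $[0,\infty)$, and then to show that strict validity provides just enough slack to absorb them once $\Lambda$ is taken large enough. Unpacking the definition, validity on the compact interval $[0,\Lambda]$ demands $\sum_z \frac{\lambda z}{\lambda+z}(q(z)-p(z))\geq 0$ whenever $-\lambda\notin[0,\Lambda]$, which splits into two ranges: the range $\lambda\in(0,\infty)$, which is already covered by the given validity on $[0,\infty)$; and the genuinely new range $\lambda\in(-\infty,-\Lambda)$. Conservation of total probability is the same condition in both settings and is inherited automatically.

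Setting $\mu=-\lambda>\Lambda$, the new constraint reads $\sum_z \frac{\mu z}{\mu-z}(q(z)-p(z))\geq 0$ for every $\mu>\Lambda$. Provided $\Lambda$ exceeds the maximum $R$ of the combined supports of $p$ and $q$, the algebraic identity
\[
\frac{\mu z}{\mu-z} \;=\; z \;+\; \frac{z^2}{\mu-z}
\]
holds throughout the support, so I would decompose the sum as $\alpha + E(\mu)$, where $\alpha=\sum_z z\bigl(q(z)-p(z)\bigr)$ and the error term $E(\mu)=\sum_z \frac{z^2}{\mu-z}\bigl(q(z)-p(z)\bigr)$ is bounded in absolute value by $R^2 M/(\mu-R)$, with $M=\sum_z|q(z)-p(z)|$.

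The critical step, and the sole reason the hypothesis of \emph{strict} validity is needed, is the conclusion that $\alpha>0$ rather than merely $\alpha\geq 0$: this is simply strict validity applied to the non-constant operator monotone function $f(z)=z$. With $\alpha>0$ in hand, the bound on $E(\mu)$ makes it negligible for $\mu$ large, and I would choose $\Lambda$ to be any number larger than, for example, $R + R^2 M/\alpha$, so that $|E(\mu)|<\alpha$ for every $\mu>\Lambda$, yielding the desired inequality.

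The only real obstacle is the strict-positivity step, which is what forces the hypothesis of strict validity in the statement; the rest is a routine one-term geometric expansion combined with an absolute-value bound, and the bookkeeping condition $\Lambda>R$ ensures that the support constraint $\mathrm{supp}(p)\cup\mathrm{supp}(q)\subset[0,\Lambda]$ is satisfied automatically.
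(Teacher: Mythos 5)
Your proposal is correct and follows essentially the same route as the paper: both arguments reduce the new constraints to $\lambda<-\Lambda$, invoke strict validity at the non-constant operator monotone function $f(z)=z$ to get $\sum_z z\,(q(z)-p(z))>0$, and then observe that $\sum_z \frac{\lambda z}{\lambda+z}(q(z)-p(z))$ stays positive for $|\lambda|$ large. The only difference is cosmetic: the paper phrases the last step as a limit-plus-continuity argument, while you make it quantitative via the expansion $\frac{\mu z}{\mu-z}=z+\frac{z^2}{\mu-z}$ and an explicit choice of $\Lambda$, which is a perfectly fine (indeed slightly more explicit) rendering of the same idea.
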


\begin{proof}
Because $p\rightarrow q$ is strictly valid $\sum_z z (q(z) - p(z))>0$,
which implies
\be
\lim_{\lambda\rightarrow-\infty} \sum_z \frac{\lambda z}{\lambda+z} 
(q(z) - p(z))>0,
\ee
\noindent
where the expression inside the limit is only defined for $|\lambda|$
larger than the maximum of the support of $p$ and $q$. The expression
inside the limit is continuous as a function of $\lambda$, so there must
exists a finite $\Lambda>0$ such that for $\lambda\leq-\Lambda$ we also
have satisfy the inequality $\sum_z \frac{\lambda z}{\lambda+z} (q(z) -
p(z))>0$.
\end{proof}

\noindent
We can now restrict our attention to probability distributions and operator
monotone functions with domain $[0,\Lambda]$ and matrices with eigenvalues
in $[0,\Lambda]$, for some large $\Lambda>0$. 

One approach to proving that valid on $[0,\Lambda]$ implies expressible by
matrices in $[0,\Lambda]$ is as follows: define $K$ to be the set of
functions with finite support of the form $g(z)\equiv
q(z)-p(z):[0,\Lambda]\rightarrow \R$ where $p\rightarrow q$ is expressible
by matrices with eigenvalues in $[0,\Lambda]$. The set $K$ is a convex
cone, and we can define its dual cone $K^*$ in the space of functions with
arbitrary support $f(z):[0,\Lambda]\rightarrow\R$. The inner product
between the two spaces is defined by $\braket{f}{g}=\sum_z f(z) g(z)$ which
is well defined because $g$ has finite support. It is easy to check that
$K^*$ is exactly the set of operator monotone functions with support
$[0,\Lambda]$. The dual $K^{**}$ of $K^*$ is the set of functions with
finite support of the form $g(z)\equiv q(z)-p(z):[0,\Lambda]\rightarrow \R$
where $p\rightarrow q$ is valid. Proving $K^{**}=\closure(K)$ covers
most of what we want to prove.  The two difficulties with this approach is
that it requires some fairly advanced analysis to properly place $K$ and
$K^*$ into a pair of locally convex dual topological vector spaces (see for
instance \cite{convex2} \S IV.4), and that in the end we prove something
slightly weaker than needed: mainly we find a transition expressible by
matrices $p'\rightarrow q'$ so that $q'(z)-p'(z)=q(z)-p(z)$ rather than the
stronger conditions $p'(z)=p(z)$ and $q'(z)=q(z)$.

Instead we will use a more constructive approach to completing the proof:
Given $p\rightarrow q$ valid on $[a,b]$, we will construct a
perturbation $p'$ of $p$ so that $p'\rightarrow q$ is still valid on
$[a,b]$ and such that proving that this new transition is expressible
by matrices in $[a,b]$ will also prove that the original transition
is expressible by matrices in $[a,b]$. Alternatively, we may find a
perturbation $q'$ of $q$ and reduce the problem to studying $p\rightarrow
q'$. A sequence of such transformations can be used until we end up with a
valid transition that is also trivially expressible by matrices.

To formalize the perturbations we need to introduce some notation.  Given
$p:[a,b]\rightarrow[0,\infty)$ with finite support we define a
\textit{canonical representation} $p=\Prob(X,\ket{\psi})$ by choosing $X$
diagonal with a non-degenerate set of eigenvalues equal to $S(p)$, the
support of $p$, and then choosing $\ket{\psi}=\sum_{z\in S(p)}
\sqrt{p(z)}\ket{z}$. In particular, the dimension of $X$ is the size of the
support of $p$, denoted by $|S(p)|$. Similarly we can construct a
canonical representation for $q:[a,b]\rightarrow[0,\infty)$ as
$q=\Prob(Y,\ket{\xi})$ with the spectrum of $Y$ is equal to $S(q)$. Note,
however, that even if $p\rightarrow q$ is valid on $[a,b]$ the matrices $X$
and $Y$ so constructed have no a priori relation, and in general are of
different dimensions.

All perturbations will be of the following form: let $c>0$ be a constant
and $\ket{\phi}$ a non-zero vector. Set $X'=X+c\ket{\phi}\bra{\phi}$ and
(assuming $X'$ has eigenvalues in $[a,b]$) set
$p'=\Prob(X',\ket{\psi})$. Then $p\rightarrow p'$ is trivially
constructable by matrices in $[a,b]$.  Similarly, we can set
$q'=\Prob(Y',\ket{\xi})$ for $Y'=Y+c\ket{\phi}\bra{\phi}$ where now we want
$c<0$. Ensuring that the perturbations can be chosen so that $p'\rightarrow
q$ or $p\rightarrow q'$ are valid will take up most of the rest of the
section.

Our first simple result bounds the dimension of the space needed when
studying general transitions that are expressible by matrices. It also
standardizes the spectra of the matrices involved.

\begin{lemma}
Let $p\rightarrow q$ be expressible by matrices in $[a,b]$, then we can
find matrices $X\leq Y$ and a vector $\ket{\phi}$ such that $p=
\Prob(X,\ket{\psi})$ and $q=\Prob(Y,\ket{\psi})$ and additionally:
\begin{enumerate}
\item The spectrum of $X$ is equal to $\{a\}\cup S(p)$,
with all eigenvalues (excluding $a$) occurring once.
\item The spectrum of $Y$ is equal to $\{b\}\cup S(q)$,
with all eigenvalues (excluding $b$) occurring once.
\item The dimension of $X$ and $Y$ is no greater than 
$|S(p)|+|S(q)|-1$.
\end{enumerate}
\label{lemma:stdmattrans}
\end{lemma}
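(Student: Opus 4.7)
My plan is to start from any representation of the transition, meaning matrices $X_0\le Y_0$ on some Hilbert space $\HH_0$ and a vector $\ket{\psi}$ with $p=\Prob(X_0,\ket{\psi})$ and $q=\Prob(Y_0,\ket{\psi})$, and refine it in three stages: first reshape $Y_0$, then reshape $X_0$, then compress.

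For the first two stages I would perform symmetric eigenspace surgeries. For each $y\in S(q)$, set $\ket{\phi_y}=\Pi^{[y]}_{Y_0}\ket{\psi}/\|\Pi^{[y]}_{Y_0}\ket{\psi}\|$; these are orthonormal eigenvectors of $Y_0$ whose span $\HH_q$ contains $\ket{\psi}$. Define
\begin{equation*}
\tilde Y \;=\; \sum_{y\in S(q)} y\,\ket{\phi_y}\bra{\phi_y} \;+\; b\,\Pi_{\HH_q^\perp}.
\end{equation*}
Both $\HH_q$ and $\HH_q^\perp$ are $Y_0$-invariant; on $\HH_q$ we have $\tilde Y=Y_0$, and on $\HH_q^\perp$ the restriction of $Y_0$ has spectrum in $[a,b]\subseteq(-\infty,b]$ while $\tilde Y$ equals $bI$. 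Hence $\tilde Y\ge Y_0\ge X_0$, $\Prob(\tilde Y,\ket{\psi})=q$, and the spectrum of $\tilde Y$ is $\{b\}\cup S(q)$ with $S(q)\setminus\{b\}$ simple. The mirror construction on $X_0$ uses $\ket{\chi_x}\propto \Pi^{[x]}_{X_0}\ket{\psi}$, $\HH_p=\vspan\{\ket{\chi_x}\}_{x\in S(p)}$, and
\begin{equation*}
\tilde X \;=\; \sum_{x\in S(p)} x\,\ket{\chi_x}\bra{\chi_x} \;+\; a\,\Pi_{\HH_p^\perp},
\end{equation*}
producing $\tilde X\le X_0\le \tilde Y$ with $\Prob(\tilde X,\ket{\psi})=p$ and the prescribed spectrum $\{a\}\cup S(p)$ for $\tilde X$.

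For the compression stage I would pass to $\HH':=\HH_p+\HH_q$. Since both summands contain $\ket{\psi}$, the intersection $\HH_p\cap\HH_q$ is at least one-dimensional, so $\dim\HH'\le|S(p)|+|S(q)|-1$. Set $\hat X=\Pi_{\HH'}\tilde X\Pi_{\HH'}$ and $\hat Y=\Pi_{\HH'}\tilde Y\Pi_{\HH'}$ as operators on $\HH'$; monotonicity of compression preserves $\hat X\le\hat Y$. Since each $\ket{\chi_x}$ lies in $\HH'$, we have $\hat X\ket{\chi_x}=x\ket{\chi_x}$, recovering every $x\in S(p)\setminus\{a\}$ as a simple eigenvalue of $\hat X$; crucially, any vector of $\HH'$ orthogonal to the whole family $\{\ket{\chi_{x'}}:x'\in S(p)\setminus\{a\}\}$ automatically lies in the ambient $a$-eigenspace of $\tilde X$ (by the very definition of $\tilde X$) and is therefore fixed by $\hat X$ up to scalar $a$, producing the missing eigenvalue $a$. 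Symmetric reasoning pins down the spectrum of $\hat Y$, and decomposing $\ket{\psi}=\sum_x\sqrt{p(x)}\,\ket{\chi_x}$ and its $\ket{\phi_y}$-analogue recovers the two probability distributions eigenspace by eigenspace.

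The main obstacle I anticipate lies in verifying the spectrum count in degenerate configurations where $\HH_p\cap\HH_q$ strictly contains $\mathbb{C}\ket{\psi}$---for instance when $|S(q)|=1$ forces $\HH_q\subseteq\HH_p$ and $\dim\HH'=|S(p)|$, leaving no room for the eigenvalue $a$ when $a\notin S(p)$. The structural argument still produces the correct non-$a$ eigenstructure of $\hat X$, so what remains is a short case analysis, resolved if necessary by appending a single padding direction orthogonal to $\ket{\psi}$ carrying eigenvalue pair $(a,b)$; this bookkeeping is precisely where any slack in the inequality $\dim(\HH_p\cap\HH_q)\ge 1$ gets consumed.
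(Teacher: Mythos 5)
Your proposal is correct and takes essentially the same route as the paper: your $\tilde X$ and $\tilde Y$ are exactly the paper's $X'=\Pi X\Pi + a(I-\Pi)$ and $Y'=\Pi Y\Pi + b(I-\Pi)$ built from the normalized projections of $\ket{\psi}$ onto the relevant eigenspaces, and your final compression to $\HH_p+\HH_q$ with the count $\dim(\HH_p+\HH_q)\leq|S(p)|+|S(q)|-1$ (via $\ket{\psi}\in\HH_p\cap\HH_q$) is precisely the paper's closing step, the compression agreeing with restriction because both operators are block diagonal with respect to this subspace. The degenerate case you flag (eigenvalue $a$ or $b$ possibly absent when one span contains the other) is glossed over in the paper as well and is harmless downstream, since the later uses of the lemma append missing eigenvalues by direct sums anyway.
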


\begin{proof}
The fact that $p\rightarrow q$ is expressible by matrices in $[a,b]$,
guarantees the existence of matrices $X\leq Y$ with spectrum in $[a,b]$
and a vector $\ket{\phi}$ such that $p= \Prob(X,\ket{\psi})$ and
$q=\Prob(Y,\ket{\psi})$. What we need to prove that we can modify the
given matrices to satisfy the additional properties.

We begin by working with $X$ and $\ket{\psi}$. We can write
$\ket{\psi}=\sum_z \sqrt{p(z)} \ket{z;X}$ where $z$ ranges over the support
of $p$ and $\ket{z;X}$ is a normalized eigenvector of $X$ with eigenvalue
$z$. Let $\Pi=\sum_z\ket{z;X}\bra{z;X}$ be the projector onto the spaced
spanned by these eigenvalues. If we define $X'=\Pi X \Pi + a (I-\Pi)$ we
obtain a new matrix with eigenvalues in $[a,b]$ that is diagonal in the
same basis as $X$, but may have some eigenvalues changed to $a$. Therefore,
$X' \leq X \leq Y$. Furthermore, $X'$ has the desired spectrum and
$p=\Prob(X',\ket{\psi})$.

We can do something similar with $Y$. We can again write $\ket{\psi}=\sum_z
\sqrt{q(z)} \ket{z;Y}$ where $z$ ranges over the support of $q$ and 
$\ket{z;Y}$ is a normalized eigenvector of $Y$ with eigenvalue $z$. Note
that even if $p$ and $q$ both have support on $z$ we may have
$\ket{z;X}\neq\ket{z;Y}$. We now set $\Pi=\sum_z\ket{z;Y}\bra{z;Y}$ and
define $Y'=\Pi Y\Pi + b (I-\Pi)$. The new matrix satisfies $X\leq Y\leq
Y'$, $q=\Prob(Y',\ket{\psi})$, and has the desired spectrum.

Everything is correct except for the dimension. Now let $\Pi$ be the
projector onto the space spanned by both $\{\ket{z;X}\}$ and
$\{\ket{z;Y}\}$. The dimension of this space is no greater than
$|S(p)|+|S(q)|-1$ (the minus one occurring because $\ket{\psi}$ is in the
span of both sets of vectors). Clearly $\Pi\ket{\psi}=\ket{\psi}$ and both
$X'$ and $Y'$ are block diagonal with respect to $\Pi$ and
$I-\Pi$. Therefore, the required objects are $X'$, $Y'$ and $\ket{\psi'}$
restricted to the support of $\Pi$.
\end{proof}

\begin{corollary}
Expressible by matrices in $[a,b]$ is a transitive relation.
\end{corollary}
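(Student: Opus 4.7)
The plan is to apply the preceding Lemma~\ref{lemma:stdmattrans} to put both given transitions into standard form and then glue the two realizations along a common subspace that canonically carries the intermediate distribution $q$.

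Concretely, I would first use Lemma~\ref{lemma:stdmattrans} to obtain $X_1\leq Y_1$ on $\HH_1$ with vector $\ket{\psi_1}$ realizing $p\to q$, where $Y_1$ is diagonal with spectrum $\{b\}\cup S(q)$ and all non-$b$ eigenvalues simple; and similarly $X_2\leq Y_2$ on $\HH_2$ with vector $\ket{\psi_2}$ realizing $q\to r$, where $X_2$ is diagonal with spectrum $\{a\}\cup S(q)$ and all non-$a$ eigenvalues simple. After absorbing phases into the eigenbasis choice, both $\ket{\psi_i}$ take the canonical form $\sum_{z\in S(q)}\sqrt{q(z)}\ket{z}_i$ on $|S(q)|$-dimensional subspaces $V_i\subset\HH_i$, where $\ket{z}_i$ for $z=b$ (respectively $z=a$) is the distinguished unit vector carrying the $\ket{\psi_1}$- (resp.\ $\ket{\psi_2}$-) component inside the degenerate eigenspace. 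The natural map $U:V_1\to V_2$ sending $\ket{z}_1\mapsto\ket{z}_2$ is then a unitary that simultaneously identifies $\ket{\psi_1}$ with $\ket{\psi_2}$ and $Y_1|_{V_1}$ with $X_2|_{V_2}$, since both act as $\sum_{z\in S(q)}z\ket{z}\bra{z}$.

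Using this identification I would form the combined Hilbert space $\HH=V\oplus V_1^\perp\oplus V_2^\perp$, where $V\cong V_1\cong V_2$ is the identified subspace, so that $\HH_1$ and $\HH_2$ embed as $V\oplus V_1^\perp$ and $V\oplus V_2^\perp$ and share exactly $V$. Define $X=X_1\oplus a I_{V_2^\perp}$ (with no cross-terms between $\HH_1$ and $V_2^\perp$), $Y=b I_{V_1^\perp}\oplus Y_2$ (with no cross-terms between $V_1^\perp$ and $\HH_2$), and let $\ket{\Psi}\in V\subset\HH$ be the common image of $\ket{\psi_1}$ and $\ket{\psi_2}$. Checking $\Prob(X,\ket{\Psi})=p$ and $\Prob(Y,\ket{\Psi})=r$ is immediate because $\ket{\Psi}$ lies in $V\subset\HH_1\cap\HH_2$ while the $a I_{V_2^\perp}$ and $b I_{V_1^\perp}$ pieces act on subspaces orthogonal to $\ket{\Psi}$, contributing nothing to the measurement distributions. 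The spectra of $X$ and $Y$ clearly remain inside $[a,b]$.

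The main obstacle will be verifying $X\leq Y$ on the full $\HH$. My plan is to write $Y-X=A+B$, where $A$ is $Y_1-X_1$ extended by zero onto $V_2^\perp$ and $B$ is $Y_2-X_2$ extended by zero onto $V_1^\perp$. For this identity to hold on the $(V,V)$ block we need $(Y_1|_V-X_1|_V)+(Y_2|_V-X_2|_V)=Y_2|_V-X_1|_V$, i.e.\ the crucial cancellation $Y_1|_V=X_2|_V$, which is exactly what the unitary $U$ provides via the standard form. The off-diagonal blocks between $V$ and $V_1^\perp$ (resp.\ $V$ and $V_2^\perp$) match up because $Y_1$ is block-diagonal across $V\oplus V_1^\perp$ in standard form (so $A$ contributes only $-X_1|_{V\leftrightarrow V_1^\perp}$ there) and similarly $X_2$ is block-diagonal across $V\oplus V_2^\perp$. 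Both $A$ and $B$ are positive semidefinite by the hypotheses $X_1\leq Y_1$ and $X_2\leq Y_2$, and extending a PSD operator by zero preserves positivity, so $Y-X\geq 0$, completing the proof.
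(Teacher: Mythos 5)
Your proof is correct and is essentially the paper's own argument: both rely on Lemma~\ref{lemma:stdmattrans} to standardize the two realizations and then merge them on a common space where the middle operators agree --- your gluing of $\HH_1$ and $\HH_2$ along $V$ with the trivial extensions by $aI_{V_2^\perp}$ and $bI_{V_1^\perp}$ is exactly the paper's padding of $Y_1$ with extra $a$'s and $X_2$ with extra $b$'s followed by a unitary identification, and your decomposition $Y-X=A+B$ with $A,B\geq 0$ is the same content as the paper's chain $X_1\oplus aI\leq Y_1\oplus aI=U(X_2\oplus bI)U^\dagger\leq U(Y_2\oplus bI)U^\dagger$. The only step worth stating explicitly is that the block identity $Y-X=A+B$ also requires $Y_1|_{V_1^\perp}=bI$ and $X_2|_{V_2^\perp}=aI$ on the remaining diagonal blocks; this is immediate from the standard-form spectra (all eigenvalues of $Y_1$ other than $b$, and of $X_2$ other than $a$, are simple and their eigenvectors lie in $V_1$, resp.\ $V_2$), so the argument goes through.
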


\begin{proof}
Let $p\rightarrow r$ and $r\rightarrow q$ be expressible by matrices in
$[a,b]$. We use $X_1$, $Y_1$ and $\ket{\psi_1}$ for the first transition
and $X_2$, $Y_2$ and $\ket{\psi_2}$ for the second transition, all chosen
in accordance with the conditions of Lemma~\ref{lemma:stdmattrans}. The
matrices $Y_1$ and $X_2$ have the same spectrum, except that the second one
has extra $a$ eigenvalues and the first one has extra $b$ eigenvalues. We
can append eigenvalues to both of them using a direct sum so that
\be
\mypmatrix{Y_1&0\cr0&a I}\qquad\text{and}\qquad
\mypmatrix{X_2&0\cr0&b I}
\ee
have the same spectrum and dimension, where the blocks may be different
sized. We can also map $\ket{\psi_1}$ and $\ket{\psi_2}$ into this
enlarged space by using a direct sum with a zero vector.

Because the two unitaries have the same spectrum, there exists a unitary
that maps the second into the first by conjugation. We can further ask 
that the unitary satisfy $U\ket{\psi_2}=\ket{\psi_1}$ because up to a phases
they assign the same coefficient to each eigenvector. Then
\be
\mypmatrix{X_1&0\cr0&a I}\leq
\mypmatrix{Y_1&0\cr0&a I}=
U\mypmatrix{X_2&0\cr0&b I}U^\dagger\leq
U\mypmatrix{Y_2&0\cr0&b I}U^\dagger.
\ee
We can construct $p$ out of the first matrix and the extended
$\ket{\psi_1}\bra{\psi_1}$ and we can construct $q$ out of the last matrix
and $\ket{\psi_1}\bra{\psi_1}$ as well. Therefore $p\rightarrow q$ is
expressible by matrices in $[a,b]$.
\end{proof}

Before studying the perturbations, we need one final simplification. Though
the domain $[0,\Lambda]$ is good, the domain $[-1,1]$ is even better
because we can write $\frac{\lambda z}{\lambda+z}=\frac{z}{1+\gamma
z}$ and $-\lambda\in\R\setminus[-1,1]$ is equivalent to $\gamma\in(-1,1)$,
which is a connected set. Note that the value $\gamma=0$ corresponds to the
function $f(z)=z$, and by continuity in $\gamma$ we can always
include/exclude it among our conditions. The following lemma follows by a
simple rescaling argument.

\begin{lemma}
If every valid transition on $[-1,1]$ is expressible by matrices in $[-1,1]$
then every valid transition on $[0,\Lambda]$ is expressible by matrices in
$[0,\Lambda]$.
\label{lemma:lamb21}
\end{lemma}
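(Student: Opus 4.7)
The plan is to exhibit the affine increasing bijection $\phi\colon[-1,1]\to[0,\Lambda]$ given by $\phi(w)=\Lambda(w+1)/2$ and show that both validity and expressibility by matrices transport cleanly through $\phi$. Given any $p\to q$ valid on $[0,\Lambda]$, I define $\tilde p(w)=p(\phi(w))$ and $\tilde q(w)=q(\phi(w))$; these are functions $[-1,1]\to[0,\infty)$ with finite support. The first task is to check that $\tilde p\to\tilde q$ is valid on $[-1,1]$. Probability conservation is immediate from the bijection of supports, and the operator-monotone inequality on $[-1,1]$ follows from its counterpart on $[0,\Lambda]$ via the general fact that composition with an affine increasing map preserves operator monotonicity: if $f(w)=\tfrac{\lambda w}{\lambda+w}$ with $-\lambda\notin[-1,1]$, then $g(z):=f(\phi^{-1}(z))$ is operator monotone on $[0,\Lambda]$, so by validity of $p\to q$ together with the integral representation of operator monotone functions we obtain
$$\sum_w \tfrac{\lambda w}{\lambda+w}(\tilde q(w)-\tilde p(w))=\sum_z g(z)(q(z)-p(z))\geq 0,$$
with the constant and linear parts of $g$ absorbed by probability conservation and the $\lambda\to\infty$ limit of the generating family for $[0,\Lambda]$ respectively.

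Once $\tilde p\to\tilde q$ is known to be valid on $[-1,1]$, the hypothesis of the lemma produces matrices $X'\leq Y'$ with spectra in $[-1,1]$ and a vector $\ket{\psi}$ such that $\tilde p=\Prob(X',\ket{\psi})$ and $\tilde q=\Prob(Y',\ket{\psi})$. I would then set $X=\phi(X')=(\Lambda/2)(X'+I)$ and $Y=\phi(Y')=(\Lambda/2)(Y'+I)$. Affine monotonicity of $\phi$ gives $X\leq Y$ with spectra in $[0,\Lambda]$, and because the eigenspace of $X$ at eigenvalue $z$ coincides with the eigenspace of $X'$ at eigenvalue $\phi^{-1}(z)$, one reads off $\Prob(X,\ket{\psi})(z)=\tilde p(\phi^{-1}(z))=p(z)$ and analogously $\Prob(Y,\ket{\psi})=q$. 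This exhibits $p\to q$ as expressible by matrices in $[0,\Lambda]$, which is what is to be shown.

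There is no substantive obstacle in this argument; it is essentially bookkeeping that records the invariance of both relations under affine relabelings of the domain. The only step deserving explicit verification is the preservation of operator monotonicity (and hence the validity cone) under the change of variable $w=\phi^{-1}(z)$, since this is precisely what makes the pullback of the validity cone on $[-1,1]$ coincide with the validity cone on $[0,\Lambda]$. If one preferred to avoid invoking the integral representation, the same conclusion can be reached by direct algebraic manipulation, expressing $\tfrac{\lambda w}{\lambda+w}$ in terms of $z=\phi(w)$ as an affine combination of the constant function and a function $\tfrac{\lambda'z}{\lambda'+z}$ with $-\lambda'\notin[0,\Lambda]$ and non-negative coefficient.
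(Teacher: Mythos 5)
Your proof is correct and is exactly the ``simple rescaling argument'' the paper invokes for this lemma: pull the transition back through the affine bijection $\phi(w)=\Lambda(w+1)/2$, check that the validity cone transports because each $f_\mu(w)=\frac{\mu w}{\mu+w}$ with $-\mu\notin[-1,1]$ becomes, in the variable $z=\phi(w)$, a constant plus a non-negative multiple of some $\frac{\lambda z}{\lambda+z}$ with $-\lambda\notin[0,\Lambda]$, and then push the matrices $X'\leq Y'$ forward via $X=\phi(X')$, $Y=\phi(Y')$, which preserves the order, the spectra's location, and the eigenspace decomposition used in $\Prob$. Nothing further is needed.
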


It will also be convenient to work with functions $p$ and $q$ with support
in $(-1,1)$, though we still keep a domain of $[-1,1]$.

\begin{lemma}
If every valid transition on $[-1,1]$ involving functions with support on
$(-1,1)$ is expressible by matrices in $[-1,1]$ then every valid transition
on $[-1,1]$ is expressible by matrices in $[-1,1]$.
\end{lemma}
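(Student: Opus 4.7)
The plan is to reduce the conclusion to the hypothesis by a finite chain of matrix-expressible transitions and invoke transitivity of matrix-expressibility (the corollary of Lemma~\ref{lemma:stdmattrans}). Concretely, if I can produce a chain $p \to p' \to q' \to q$ in which the outer two links are matrix-expressible on $[-1,1]$ and the middle link $p' \to q'$ has both supports strictly in $(-1,1)$, then the hypothesis applies to the middle link, transitivity closes the chain, and $p \to q$ is expressible by matrices in $[-1,1]$.

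The starting observation is a singularity analysis of the validity conditions. Because $f_\gamma(z) = z/(1+\gamma z)$ satisfies $f_\gamma(-1) = -1/(1-\gamma) \to -\infty$ as $\gamma \to 1^{-}$ while $f_\gamma(z)$ stays bounded for $z > -1$, validity of $p \to q$ on $[-1,1]$ forces $p(-1) \geq q(-1)$; symmetrically $q(+1) \geq p(+1)$. This splits each endpoint mass into a ``hard'' part ($q(-1)$ at $-1$ and $p(+1)$ at $+1$) and an ``easy'' residual ($p(-1)-q(-1)$ and $q(+1)-p(+1)$). The basic perturbation I would use is a \emph{coupled shift}: move equal amounts of mass $m$ from the endpoint into a fresh interior point (e.g.\ $-1 \to -1+\delta$ with $\delta$ small and $-1+\delta$ not already in the supports of $p$ or $q$). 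A direct calculation shows that shifting the same amount $m$ in both $p$ and $q$ leaves $\sum_z(q-p)(z) f_\gamma(z)$ invariant for every $\gamma \in (-1,1)$, so validity is preserved exactly. Taking $m = q(-1)$ annihilates the hard $q$-mass at $-1$ (using exactly the slack provided by the singularity analysis), and the symmetric move at $+1$ annihilates the hard $p$-mass there.

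To realize each coupled shift as a matrix transition, I enlarge the Hilbert space by appending a one-dimensional subspace at the target eigenvalue of $X$ (resp.\ $Y$) with $\ket{\psi}$-amplitude zero, rotate within the resulting two-dimensional degenerate eigenspace so that $\ket{\psi}$ splits its amplitude $\sqrt{p(-1)}$ into a ``static'' piece $\sqrt{p(-1)-m}$ and a ``shifted'' piece $\sqrt{m}$, and then raise the eigenvalue of the shifted vector from $-1$ to $-1+\delta$; this produces $X' = X + \delta\ket{v}\bra{v} \geq X$, and the matching rank-one perturbation on $Y$ uses the same enlarged $\ket{\psi}$. The main obstacle will be disposing of the residual ``easy'' masses $p(-1)-q(-1)$ and $q(+1)-p(+1)$ left over after the coupled shifts: an uncoupled shift of these masses does not automatically preserve validity at interior operator-monotone constraints that happen to be tight, and so requires additional care. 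I expect either a further matrix-expressible move from the basic palette of Lemma~\ref{lemma:rms} (a point split exploiting interior slack) to work directly, or, failing that, a limiting argument: approximate $p, q$ by a sequence $(p_n, q_n)$ with supports strictly in $(-1,1)$, apply the hypothesis to get matrices $X_n \leq Y_n$ of dimension uniformly bounded by Lemma~\ref{lemma:stdformmain}, and extract a convergent subsequence inside the compact space of positive-semidefinite matrices with spectrum in $[-1,1]$.
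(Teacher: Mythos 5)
Your overall plan---interpose a chain $p \to p' \to q' \to q$ with matrix-expressible outer links and an interior-supported middle link, then invoke transitivity---cannot work, and your own singularity analysis shows why. Since every transition expressible by matrices is in particular valid, the endpoint monotonicity you derived applies to the outer links themselves: any valid link $p \to p'$ forces $p'(1) \geq p(1)$, and any valid link $q' \to q$ forces $q'(-1) \geq q(-1)$. So whenever $p(1)>0$ or $q(-1)>0$ (exactly your ``hard'' masses), no admissible $p'$ or $q'$ can have support in $(-1,1)$, and the hypothesis can never be brought to bear on the middle link. The coupled shift does not rescue this: eliminating the $q$-mass at $-1$ is a move $q \to \tilde q$ (raising $q$'s mass from $-1$ to $-1+\delta$), which points the \emph{wrong way} relative to the chain. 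What you would need is the reverse link $\tilde q \to q$, i.e.\ lowering mass from $-1+\delta$ back to $-1$, which is not valid and hence not expressible. Knowing that $p \to \tilde q$ and $q \to \tilde q$ are both expressible by matrices gives you nothing about $p \to q$; and the alternative of transferring witnesses backwards---taking $\tilde X \leq \tilde Y$ expressing the shifted pair and subtracting $\delta$ times rank-one projectors from each to undo the shift---fails because $\tilde X \leq \tilde Y$ does not survive subtracting \emph{different} rank-one terms from the two sides, and the eigenvectors in question are eigenvectors of different matrices with no reason to coincide.

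Your fallback limiting argument is the viable route, and it is essentially what the paper does, but as stated it is missing the two points that carry the proof. First, you must say how to produce interior-supported approximants $(p_n,q_n)$ for which $p_n \to q_n$ is still \emph{valid}; perturbing only the endpoint masses runs into exactly the tight-interior-constraint problem you flagged, whereas the paper sidesteps it by scaling the whole pair, $p_c(z)=p(z/c)$, $q_c(z)=q(z/c)$ with $c<1$, which preserves validity because it merely reparametrizes the family of constraint functions. Second, compactness of bounded-dimension witnesses with spectrum in $[-1,1]$ is not by itself enough: you must normalize so that the limit triple actually represents $(p,q)$. The paper does this by choosing the witnesses in the standardized form of Lemma~\ref{lemma:stdmattrans} and applying basis changes so that $X_c$ and $\ket{\psi_c}$ converge to the canonical representation of $p$ (padded with $-1$ eigenvalues); then any limit point of $Y_c$ finishes the argument. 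If you replace your chain construction with this scaling-plus-normalized-limit argument, you recover the paper's proof; without it, both branches of your proposal have gaps.
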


\begin{proof}
Fix $p,q:[-1,1]\rightarrow[0,\infty)$ with finite support so that
$p\rightarrow q$ is valid in $[-1,1]$. For any $0<c<1$
we define functions $[-1,1]\rightarrow[0,\infty)$ by
\be
p_c(z) = \begin{cases}
p(\frac{z}{c}) & z\in [-c,c],\cr
0 & \text{otherwise,}
\end{cases}
\qquad\qquad
q_c(z) = \begin{cases}
q(\frac{z}{c}) & z\in [-c,c],\cr
0 & \text{otherwise,}
\end{cases}
\ee
which have support in $(-1,1)$ and furthermore $p_c\rightarrow q_c$ is
valid in $[-1,1]$. Therefore, the transition is expressible by matrices in
$[-1,1]$ and and we can choose $X_c$, $Y_c$ and $\ket{\psi_c}$ in
accordance with the constraints of
Lemma~\ref{lemma:stdmattrans}. Furthermore, by changes of basis we can
ensure that $X_c$ and $\ket{\psi_c}$ converge as $c\rightarrow 1$ to the
canonical representation of $p$ (appended with $-1$ eigenvalues). Any
limit point of $Y_c$ as $c\rightarrow 1$ will complete the proof.
\end{proof}

For the rest of this section we will be concerned with the interval
$[-1,1]$. When clear from context we will use the terms valid and
expressible by matrices to refer to valid on $[-1,1]$ and expressible by
matrices in $[-1,1]$.

We now begin studying perturbations on $p$ and $q$ which will have
constraints arising from rational functions in $\gamma$ of the form $\sum_z
\frac{z}{1+\gamma z}(q(z)-p(z))$. The main difficulty will be near the
zeros of the function, which we want to approximate by simple expressions
of the form $a|\gamma-\gamma_0|^{k}$. The following is a standard result.

\begin{lemma}
Let $N(\gamma)$ and $D(\gamma)$ be two non-negative polynomials for
$\gamma\in[-1,1]$. If $N$ has a zero of order $k$ at
$\gamma=\gamma_0\in[-1,1]$ and $D(\gamma_0)\neq 0$ then for any $k'\geq k\geq
k''>0$ there exists $\epsilon,a,b>0$ such that
\be
a|\gamma-\gamma_0|^{k'} \leq \frac{N(\gamma)}{D(\gamma)} \leq
b|\gamma-\gamma_0|^{k''}\qquad\text{for $\gamma\in[-1,1]$ satisfying
$|\gamma-\gamma_0|<\epsilon$}.
\ee
\label{lemma:poly}
\end{lemma}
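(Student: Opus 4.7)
The plan is to factor the zero of $N$ at $\gamma_0$ explicitly and then reduce the claim to a straightforward inequality between powers of $|\gamma-\gamma_0|$ on a small neighborhood. Since $N$ has a zero of order exactly $k$ at $\gamma_0$, I would write $N(\gamma)=(\gamma-\gamma_0)^k M(\gamma)$ where $M$ is a polynomial with $M(\gamma_0)\neq 0$. The non-negativity of $N$ on $[-1,1]$ forces $k$ to be even whenever $\gamma_0\in(-1,1)$ (otherwise $N$ would change sign across $\gamma_0$), so in the interior case $M(\gamma_0)>0$; when $\gamma_0\in\{-1,+1\}$ only a one-sided neighborhood matters, and the non-negativity still forces $|M|$ to be strictly positive on that one-sided neighborhood of $\gamma_0$. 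In all cases, after possibly replacing $M$ by $|M|$ on the relevant side, we obtain $N(\gamma)=|\gamma-\gamma_0|^k \tilde M(\gamma)$ with $\tilde M$ continuous and strictly positive at $\gamma_0$.

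Next, because $D$ is continuous and $D(\gamma_0)>0$, there is a neighborhood $|\gamma-\gamma_0|<\epsilon_0$ on which $D$ stays bounded below by a positive constant. On possibly shrinking this neighborhood (to keep $\tilde M$ close to $\tilde M(\gamma_0)$), I would obtain positive constants $c_1,c_2$ with
\be
c_1 \;\leq\; \frac{\tilde M(\gamma)}{D(\gamma)} \;\leq\; c_2
\qquad\text{whenever }|\gamma-\gamma_0|<\epsilon_0,
\ee
and hence $c_1|\gamma-\gamma_0|^k \leq N(\gamma)/D(\gamma)\leq c_2|\gamma-\gamma_0|^k$ on this neighborhood.

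Finally, I would use the fact that for $|\gamma-\gamma_0|\leq 1$ the map $s\mapsto |\gamma-\gamma_0|^s$ is monotonically decreasing in $s$. Taking $\epsilon=\min(\epsilon_0,1)$, the assumed inequality $k''\leq k\leq k'$ then gives $|\gamma-\gamma_0|^{k'}\leq|\gamma-\gamma_0|^k\leq|\gamma-\gamma_0|^{k''}$ throughout $|\gamma-\gamma_0|<\epsilon$, so setting $a=c_1$ and $b=c_2$ yields both desired bounds.

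There is essentially no obstacle here; the only subtlety is ensuring that the factorization $N=(\gamma-\gamma_0)^k M$ is compatible with the non-negativity of $N$, which is handled case-by-case depending on whether $\gamma_0$ lies in the interior or at the endpoint of $[-1,1]$. Everything else is just continuity of polynomials and monotonicity of powers on $[0,1]$.
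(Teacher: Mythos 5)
Your proof is correct and takes essentially the same elementary local-analysis route as the paper, which simply chooses $\epsilon$ small and defines $a$ and $b$ as the infimum and maximum of $N(\gamma)/\bigl(|\gamma-\gamma_0|^{k'}D(\gamma)\bigr)$ and $N(\gamma)/\bigl(|\gamma-\gamma_0|^{k''}D(\gamma)\bigr)$ on the punctured neighborhood; the positivity and finiteness of those constants rest on exactly the factorization $N(\gamma)=|\gamma-\gamma_0|^{k}\,|M(\gamma)|$ with $|M(\gamma_0)|>0$ that you spell out. Your version just makes that implicit step (and the endpoint parity issue at $\gamma_0=\pm1$, which the paper only remarks on) explicit.
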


\begin{proof}
Choose $\epsilon>0$ so that $N(\gamma)$ and $D(\gamma)$ are non-zero for
$0<|\gamma-\gamma_0|\leq \epsilon$. Let $a$ be the infimum of
$\frac{N(\gamma)}{|\gamma-\gamma_0|^{k'} D(\gamma)}$ and $b$ be the
maximum of $\frac{N(\gamma)}{|\gamma-\gamma_0|^{k''} D(\gamma)}$ for
$\gamma\in[-1,1]$ satisfying $0<|\gamma-\gamma_0|\leq \epsilon$. Note that
the cases $\gamma_0=-1$ and $\gamma_0=1$ are special in that $k$ can be
odd, and the inequalities will only hold inside the region $[-1,1]$.
\end{proof}

For the next set of lemmas let $\HH$ be a real $n$-dimensional Hilbert
space, let $X$ be an operator on $\HH$, let $\ket{\psi}$ and $\ket{\phi}$
be non-zero vectors in $\HH$ and let $c\neq 0$ be a real constant. Also let
$f_\gamma(x) = \frac{x}{1+\gamma x}$ for $x\in[-1,1]$ and $\gamma\in(-1,1)$.

\begin{lemma}
Let $X$, $\ket{\phi}$, $c$ and $f_\gamma(x)$ be as above. If both $X$ and
$X+c\ket{\phi}\bra{\phi}$ have eigenvalues in $[-1,1]$, then
\be
f_\gamma(X+c\ket{\phi}\bra{\phi}) - f_\gamma(X) =
\lp(\frac{c}{1+\gamma c \bra{\phi}
\lp(I+\gamma X\rp)^{-1}\ket{\phi}}\rp)
\lp(I+\gamma X\rp)^{-1}\ket{\phi}\bra{\phi}
\lp(I+\gamma X\rp)^{-1}.
\label{eq:fgbound}
\ee
\end{lemma}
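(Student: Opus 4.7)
The plan is to reduce the identity to the Sherman--Morrison rank-one inversion formula. The main algebraic observation is that $f_\gamma$ has a very clean representation in terms of the resolvent of $X$: for $\gamma \neq 0$,
\[
 f_\gamma(x) = \frac{x}{1+\gamma x} = \frac{1}{\gamma}\!\left(1 - \frac{1}{1+\gamma x}\right),
\]
and since $X$ commutes with $I+\gamma X$, the same relation lifts to matrices:
\[
 f_\gamma(X) = \frac{1}{\gamma}\bigl(I - (I+\gamma X)^{-1}\bigr),
\]
provided $I+\gamma X$ is invertible. First I would check this invertibility: the eigenvalues of $X$ lie in $[-1,1]$ and $\gamma\in(-1,1)$, so the eigenvalues of $I+\gamma X$ are bounded below in absolute value by $1-|\gamma|>0$. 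The same reasoning applies to $X + c\ket{\phi}\bra{\phi}$ since by hypothesis it too has spectrum in $[-1,1]$.

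Next I would subtract the two resolvent identities to obtain
\[
 f_\gamma(X+c\ket{\phi}\bra{\phi}) - f_\gamma(X) = \frac{1}{\gamma}\!\left[(I+\gamma X)^{-1} - \bigl(I+\gamma X + \gamma c\ket{\phi}\bra{\phi}\bigr)^{-1}\right].
\]
Now I would apply the Sherman--Morrison formula $(A+uv^\dagger)^{-1} = A^{-1} - \frac{A^{-1}uv^\dagger A^{-1}}{1+v^\dagger A^{-1}u}$ with $A = I+\gamma X$, $u = \gamma c\ket{\phi}$, $v^\dagger = \bra{\phi}$. This gives
\[
 \bigl(I+\gamma X + \gamma c\ket{\phi}\bra{\phi}\bigr)^{-1} = (I+\gamma X)^{-1} - \frac{\gamma c\,(I+\gamma X)^{-1}\ket{\phi}\bra{\phi}(I+\gamma X)^{-1}}{1+\gamma c\,\bra{\phi}(I+\gamma X)^{-1}\ket{\phi}}.
\]
Substituting back, the $\gamma$ from the denominator in the resolvent identity cancels the $\gamma$ in the Sherman--Morrison numerator, and Eq.~(\ref{eq:fgbound}) drops out immediately.

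The case $\gamma = 0$ is handled separately (or by continuity): both sides reduce to $c\ket{\phi}\bra{\phi}$ when $\gamma = 0$, and both sides are rational in $\gamma$ with no singularity there, so the identity at $\gamma = 0$ follows by taking the limit. There is no real obstacle in this lemma; the only subtlety worth flagging is the need to verify that the denominator $1+\gamma c\,\bra{\phi}(I+\gamma X)^{-1}\ket{\phi}$ is nonzero, which is automatic from the Sherman--Morrison hypothesis and equivalently from the invertibility of $I+\gamma(X+c\ket{\phi}\bra{\phi})$ established above.
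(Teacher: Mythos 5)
Your proof is correct and follows essentially the same route as the paper: both rewrite $f_\gamma(X)=\frac{1}{\gamma}\bigl(I-(I+\gamma X)^{-1}\bigr)$ and reduce the identity to the rank-one update of the resolvent, the only difference being that you invoke the Sherman--Morrison formula by name while the paper derives it inline via the $Z^{1/2}$ conjugation and the two-eigenvalue observation. Your handling of the $\gamma=0$ case and of the nonvanishing denominator is also sound.
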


\begin{proof}
The proof for $\gamma=0$ is trivial. Otherwise $f_\gamma(x) = \frac{1}{\gamma}
\lp(1-\frac{1}{1+\gamma x}\rp)$. Let $Z=I+\gamma X$ and $c'=\gamma c$.
Because $Z$ is positive definite and
$Z+c'\ket{\phi}\bra{\phi}$ is also positive definite, then
\be
\big(Z + c'\ket{\phi}\bra{\phi}\big)^{-1}&=&
\lp( Z^{1/2} \lp( I + c' Z^{-1/2}\ket{\phi}\bra{\phi}Z^{-1/2}\rp)
Z^{1/2}\rp)^{-1}
\nonumber\\
&=& Z^{-1/2} \lp( I + c' Z^{-1/2}\ket{\phi}\bra{\phi}Z^{-1/2}\rp)^{-1}
Z^{-1/2}
\nonumber\\
&=& Z^{-1/2} \lp( I + \lp( -1 + \frac{1}{1+c'\bra{\phi}Z^{-1}\ket{\phi}}\rp)
\frac{Z^{-1/2}\ket{\phi}\bra{\phi}Z^{-1/2}}{\bra{\phi}Z^{-1}\ket{\phi}}\rp)
Z^{-1/2}
\nonumber\\
&=& Z^{-1/2} \lp( I - \lp(\frac{c'}{1+c'\bra{\phi}Z^{-1}\ket{\phi}}\rp)
Z^{-1/2}\ket{\phi}\bra{\phi}Z^{-1/2}\rp)
Z^{-1/2}
\nonumber\\
&=& Z^{-1} - \lp(\frac{c'}{1+c'\bra{\phi}Z^{-1}\ket{\phi}}\rp)
Z^{-1}\ket{\phi}\bra{\phi}Z^{-1},
\ee
\noindent
where the third equality follows because $I + c'
Z^{-1/2}\ket{\phi}\bra{\phi}Z^{-1/2}$ is a matrix with only two
eigenvalues. The main result follows because its LHS equals 
$-\frac{1}{\gamma}((Z+c'\ket{\phi}\bra{\phi})^{-1}-Z^{-1})$.
\end{proof}

We are ready to start imposing constraints on the transitions $p\rightarrow
p'$ for $p=\Prob(X,\ket{\psi})$ and $p'=\Prob(X',\ket{\psi})$, where
$X'=X+c\ket{\phi}\bra{\phi}$. In particular, we want to place an upper
bound on $\lp|\sum_z \frac{z}{1+\gamma z} (p'(z)-
p(z))\rp|=\lp|\bra{\psi}\Big(f_\gamma(X+c\ket{\phi}\bra{\phi}) -
f_\gamma(X)\Big)\ket{\psi}\rp|$ in the form of a polynomial with a zero of
order $2k$.  The next lemma will show that there are many choices for
$\ket{\phi}$ that satisfy the bound. In fact, there is an $n-k$ subspace of
such vectors.

\begin{lemma}
Let $\HH$, $n$, $X$, $\ket{\psi}$ and $f_\gamma(x)$ be as above, with the
eigenvalues of $X$ restricted to $(-1,1)$. Given a polynomial
$b(\gamma-\gamma_0)^{2k}$ with constants $b>0$, integer $k>0$ and
$\gamma_0\in[-1,1]$, then we can construct an $(n-k)$-dimensional subspace
$\HHP\subset\HH$ such that for every $\ket{\phi}\in\HHP$ there exists $c>0$
and $\epsilon>0$ satisfying
\beq
\lp|\bra{\psi}\Big(f_\gamma(X+c\ket{\phi}\bra{\phi}) -
f_\gamma(X)\Big)\ket{\psi}\rp|
\leq b(\gamma-\gamma_0)^{2k}
\qquad\text{for $\gamma\in(-1,1)$ satisfying $|\gamma-\gamma_0|<\epsilon$,}
\eeq
where additionally $|c|$ must be small enough so that
$X+c\ket{\phi}\bra{\phi}$ also has eigenvalues in $(-1,1)$.
The same conditions can also be satisfied while demanding that in every
case $c<0$.
\label{lemma:vecadd}
\end{lemma}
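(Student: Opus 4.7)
The plan is to use Eq.~(\ref{eq:fgbound}) from the preceding lemma to reduce the problem to controlling a single scalar rational function of $\gamma$, and then to impose $k$ linear conditions on $\ket{\phi}$ that force this function to have a zero of order at least $k$ at $\gamma_0$. First I would sandwich Eq.~(\ref{eq:fgbound}) between $\bra{\psi}$ and $\ket{\psi}$ to get
\beq
\bra{\psi}\bigl(f_\gamma(X+c\ket{\phi}\bra{\phi})-f_\gamma(X)\bigr)\ket{\psi}
=\frac{c\,g(\gamma)^{2}}{1+\gamma c\,\bra{\phi}(I+\gamma X)^{-1}\ket{\phi}},
\qquad g(\gamma):=\bra{\psi}(I+\gamma X)^{-1}\ket{\phi}.
\eeq
(I use here that everything is real so the cross term is a square.) Since the eigenvalues of $X$ lie in $(-1,1)$, the operator $(I+\gamma X)^{-1}$ is smooth and uniformly bounded in a neighbourhood of $\gamma_0\in[-1,1]$, and so long as $|c|$ is small enough, the denominator is bounded away from zero on some interval $|\gamma-\gamma_0|<\epsilon$.

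Next I would choose $\ket{\phi}$. Define the linear functionals $L_j(\ket{\phi})=\bra{\psi}\bigl[(I+\gamma_0 X)^{-1}X\bigr]^{j}(I+\gamma_0 X)^{-1}\ket{\phi}$ for $j=0,1,\dots,k-1$. Using $\frac{d}{d\gamma}(I+\gamma X)^{-1}=-(I+\gamma X)^{-1}X(I+\gamma X)^{-1}$, one sees that $g^{(j)}(\gamma_0)=(-1)^{j}j!\,L_j(\ket{\phi})$. Hence the common kernel $\HHP:=\bigcap_{j=0}^{k-1}\ker L_j$ is a subspace of codimension at most $k$, so $\dim\HHP\ge n-k$, and every $\ket{\phi}\in\HHP$ forces $g(\gamma)$ to vanish to order at least $k$ at $\gamma_0$. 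Since $g$ is a rational function in $\gamma$ with no pole near $\gamma_0$, Lemma~\ref{lemma:poly} (applied, after clearing denominators, to numerator/denominator of $g^{2}$) gives a constant $B>0$ and $\epsilon_0>0$ with $g(\gamma)^{2}\le B(\gamma-\gamma_0)^{2k}$ for $|\gamma-\gamma_0|<\epsilon_0$.

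Combining these two steps, for $|c|$ small and $|\gamma-\gamma_0|<\min(\epsilon_0,\epsilon)$ the RHS of the first displayed equation is bounded in absolute value by $|c|B/(1-\delta)\cdot(\gamma-\gamma_0)^{2k}$ for some small $\delta$. Choosing $|c|$ sufficiently small one pushes this coefficient below the prescribed $b$, simultaneously guaranteeing that $X+c\ket{\phi}\bra{\phi}$ still has its spectrum in $(-1,1)$ (a continuity argument in $c$). The sign of $c$ is not used in any of the above estimates, because the bound we need is on the absolute value, so we may enforce either $c>0$ or $c<0$; both yield the result. The only mildly subtle point is that at the boundary cases $\gamma_0=\pm1$ the zero of $g^2$ is one-sided, but Lemma~\ref{lemma:poly} was stated precisely to cover this, so no new idea is needed there. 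The main obstacle is really just the bookkeeping of choosing $|c|$ small enough to simultaneously (i) keep the spectrum inside $(-1,1)$, (ii) keep the denominator bounded below, and (iii) make the constant in front of $(\gamma-\gamma_0)^{2k}$ at most $b$; all three conditions are compatible because each is an open condition on $c$ at $c=0$.
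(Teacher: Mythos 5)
Your proposal is correct and follows essentially the same route as the paper: both start from Eq.~(\ref{eq:fgbound}), control the scalar prefactor by taking $|c|$ small, and reduce the problem to forcing $g(\gamma)=\bra{\psi}(I+\gamma X)^{-1}\ket{\phi}$ to vanish to order $k$ at $\gamma_0$, which is imposed by $k$ linear conditions on $\ket{\phi}$ (your derivative functionals $L_j$ are exactly the coefficients of the paper's series expansion of $(I+\gamma X)^{-1}$ about $\gamma_0$), before invoking Lemma~\ref{lemma:poly} and shrinking $|c|$ and $\epsilon$. The only cosmetic difference is that you keep the exact rational expression with $g(\gamma)^2$ in the numerator while the paper immediately bounds the prefactor by $2|c|$; the substance is identical.
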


\begin{proof}
Because we are only considering matrices $X$ with eigenvalues in $(-1,1)$,
the matrix $|X|$ has a maximum eigenvalue $\lambda_{max}<1$ and if for a
given $\ket{\phi}$ we restrict $|c|<(1-\lambda_{max})/(2\braket{\phi}{\phi})$
then Eq.~(\ref{eq:fgbound}) implies
\be
\lp|\bra{\psi}\Big(f_\gamma(X+c\ket{\phi}\bra{\phi}) -
f_\gamma(X)\Big)\ket{\psi}\rp|
\leq
2|c|\lp| \bra{\psi}\lp(I+\gamma X\rp)^{-1} \ket{\phi}\rp|^2.
\label{eq:fgbound1}
\ee
The expression $\bra{\psi}(I+\gamma X)^{-1} \ket{\phi}$ is a rational
function in $\gamma$ (recall we are working in a real vector space) with no
poles in $[-1,1]$. If we can choose $\ket{\phi}$ such that it has a zero of
order at least $k$ at $\gamma_0$ then by Lemma~\ref{lemma:poly} we can
choose small enough $c>0$ (or large enough $c<0$) and $\epsilon>0$ such
that
\be
2|c|\lp| \bra{\psi}\lp(I+\gamma X\rp)^{-1} \ket{\phi}\rp|^2 
\leq b(\gamma-\gamma_0)^{2k}
\qquad\text{for $\gamma\in(-1,1)$ satisfying $|\gamma-\gamma_0|<\epsilon$,}
\ee
thereby proving the constraint. What remains to be shown is that there
exists an $n-k$ dimensional space $\HHP$ such that $\ket{\phi}\in\HHP$
implies that $\bra{\psi}(I+\gamma X)^{-1} \ket{\phi}$ has a zero of order
at least $k$ at $\gamma_0$.

If we choose $\delta>0$ such that $\delta|I+\gamma_0 X|^{-1}<I$ then we can
write
\be
\bra{\psi}(I+\gamma X)^{-1} &=& 
\bra{\psi}(I+\gamma_0 X + (\gamma-\gamma_0) X)^{-1} 
\\\nonumber
&=& \sum_{j=1}^{\infty}
(\gamma-\gamma_0)^j \bra{\psi} (I+\gamma_0 X)^{-1} 
\lp(\frac{X}{I+\gamma_0 X} \rp)^j
\qquad\text{for $|\gamma-\gamma_0|<\delta$,}
\ee
\noindent
where all matrices are diagonal in the eigenbasis of $X$, which justifies
the unordered products. We have shown that the requirement that
$\bra{\psi}(I+\gamma X)^{-1} \ket{\phi}$ have a zero of order at least $k$
at $\gamma_0$ is equivalent to $k$ linear constraints on $\ket{\phi}$, and
therefore there exists a subspace of dimension $n-k$ that simultaneously
satisfies all of them. The lemma follows so long as we ensure to pick the
constants $\epsilon$ small enough so that $\epsilon\leq\delta$.
\end{proof}

The following fact will also be useful: given $c>0$ and $\epsilon>0$
satisfying the inequality in the above lemma (for a given $\ket{\phi}$),
then it is also satisfied by any $c'$ and $\epsilon'$ such that $0<c'<c$
and $0<\epsilon'<\epsilon$, the former property arising because $f_\gamma$
is operator monotone and hence $\bra{\psi}f_\gamma(X+c\ket{\phi}\bra{\phi})
\ket{\psi}$ is monotone as a function of $c$. In the next lemma we will use
this to deal with multiple simultaneous constraints.

\begin{lemma}
Let $p$ and $q$ be functions with support in $(-1,1)$ such that
$p\rightarrow q$ is valid. Let $S(p)$ and $S(q)$ be respectively the
supports of $p$ and $q$, and let $m$ be the number of zeros (including
multiplicities) for $\gamma\in[-1,1]$ of the rational function $\sum_z
\frac{z}{1+\gamma z} (q(z)- p(z))$.
\begin{itemize}
\item If $m+2< 2|S(p)|$ there exists a vector $\ket{\phi}\neq0$ and
constant $c>0$ such that $p'\rightarrow q$ is valid,
where $p'=\Prob(X',\ket{\psi})$, $X'= X+c\ket{\phi}\bra{\phi}$ and 
$p$ is canonically represented by $\Prob(X,\ket{\psi})$.
\item If $m+2< 2|S(q)|$ there exists a vector $\ket{\phi}\neq0$ and
constant $c<0$ such that $p\rightarrow q'$ is valid,
where $q'=\Prob(Y',\ket{\xi})$, $Y'= Y+c\ket{\phi}\bra{\phi}$ and 
$q$ is canonically represented by $\Prob(Y,\ket{\xi})$.
\end{itemize}
\label{lemma:ppq}
\end{lemma}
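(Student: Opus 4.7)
The plan is to turn validity of $p\to q$ into a non-negativity statement about a single rational function, and then show that when the support of $p$ is large compared to the number of zeros of that function, there is enough freedom to perturb $X$ without spoiling non-negativity. Concretely, set $F(\gamma)=\sum_z f_\gamma(z)(q(z)-p(z))$, which is a rational function of $\gamma$ whose poles all lie outside $[-1,1]$ since $S(p),S(q)\subset(-1,1)$. Validity of $p\to q$ on $[-1,1]$ is equivalent to $F(\gamma)\ge 0$ for $\gamma\in(-1,1)$, and after perturbing $X$ to $X'=X+c\ket{\phi}\bra{\phi}$ the new function becomes $F'(\gamma)=F(\gamma)-\bra{\psi}(f_\gamma(X')-f_\gamma(X))\ket{\psi}$. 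Since $f_\gamma$ is operator monotone on $[-1,1]$ for every $\gamma\in(-1,1)$, the subtracted term is non-negative when $c>0$, so the only danger is at (and very near) the zeros of $F$.

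First I would catalog the zeros of $F$ in $[-1,1]$: interior zeros must have even multiplicity because $F\ge 0$ on $(-1,1)$, while the two endpoint zeros at $\gamma=\pm 1$ can have any multiplicity. Writing the multiplicities as $\mu_1,\dots,\mu_r$ with $\sum_i\mu_i=m$, set $k_i=\lceil\mu_i/2\rceil$, so that $\sum_i k_i\le (m+2)/2$, where the $+2$ absorbs the at most two endpoints where $\mu_i$ may be odd. Invoking Lemma~\ref{lemma:vecadd} at each zero $\gamma_i$ with the polynomial $b_i(\gamma-\gamma_i)^{2k_i}$ gives, for each $i$, a subspace $\HHP_i\subset\HH$ of dimension at least $n-k_i$ (where $n=|S(p)|$). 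Under the hypothesis $m+2<2|S(p)|=2n$ we get $\sum_i k_i<n$, so the intersection $\bigcap_i\HHP_i$ has positive dimension and contains some non-zero vector $\ket{\phi}$.

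Now I would pick $c>0$ small enough to simultaneously (i) keep the spectrum of $X+c\ket{\phi}\bra{\phi}$ inside $(-1,1)$, (ii) realize each of the local bounds $|\bra{\psi}(f_\gamma(X')-f_\gamma(X))\ket{\psi}|\le b_i(\gamma-\gamma_i)^{2k_i}$ on a neighborhood $|\gamma-\gamma_i|<\epsilon_i$ supplied by Lemma~\ref{lemma:vecadd}, and (iii) bound the same quantity uniformly by $\epsilon_0/2$ on the compact complement of these neighborhoods inside $[-1,1]$, where Lemma~\ref{lemma:poly} gives a positive lower bound $F\ge\epsilon_0$. Inside each neighborhood, Lemma~\ref{lemma:poly} also gives $F(\gamma)\ge a_i|\gamma-\gamma_i|^{\mu_i}$, and since $2k_i\ge\mu_i$ one has $|\gamma-\gamma_i|^{2k_i}\le|\gamma-\gamma_i|^{\mu_i}$ on $(-1,1)\cap\{|\gamma-\gamma_i|<1\}$, so choosing the $b_i$ smaller than the corresponding $a_i$ (which is possible because $b_i$ is controlled by $c$) yields $F'\ge 0$ on that neighborhood. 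Assembling the local and global estimates gives $F'(\gamma)\ge 0$ everywhere on $(-1,1)$, which is exactly the validity of $p'\to q$. The second half of the lemma is symmetric: replace $X$ by the canonical $Y$ for $q$, reverse the sign of $c$, and use operator monotonicity to push $f_\gamma(Y')\le f_\gamma(Y)$ so that the perturbation again pushes the candidate function in the safe direction.

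The main obstacle will be the bookkeeping in step three: the dimension count $\sum_i k_i<n$ only works when we are careful about endpoint zeros, where the local lower bound from Lemma~\ref{lemma:poly} has exponent $\mu_i$ but our perturbation bound from Lemma~\ref{lemma:vecadd} has exponent $2k_i\ge\mu_i+1$. The fact that $(\gamma-\gamma_i)^{2k_i}$ decays strictly faster than $(\gamma-\gamma_i)^{\mu_i}$ on the correct side of the endpoint is what lets the $\lceil\cdot\rceil$ absorb the parity mismatch, and it is what forces the ``$+2$'' and the strict inequality in the hypothesis $m+2<2|S(p)|$.
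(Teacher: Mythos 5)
Your proposal is correct and is essentially the paper's own argument: decompose $q-p'=(q-p)-(p'-p)$, lower-bound $F(\gamma)=\sum_z\frac{z}{1+\gamma z}(q(z)-p(z))$ near each zero via Lemma~\ref{lemma:poly}, use Lemma~\ref{lemma:vecadd} to obtain codimension-$k_i$ subspaces whose total codimension $\sum_i\lceil\mu_i/2\rceil\le\lfloor(m+2)/2\rfloor<|S(p)|$ leaves a nonzero $\ket{\phi}$, and then shrink $c$ so the bound also holds on the compact complement of the neighborhoods and the spectrum of $X'$ stays in $(-1,1)$. One small remark: in the symmetric case your monotonicity aside is backwards (with $c<0$ the perturbation lowers the candidate function, i.e.\ moves it in the dangerous direction), but this is immaterial because the two-sided bound of Lemma~\ref{lemma:vecadd}, which explicitly allows $c<0$, is what carries the argument there, exactly as in the paper.
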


\begin{proof}
We will prove the first case, the second case being nearly identical.
The key idea is that $q(z)-p'(z)= (q(z)-p(z))-(p'(z)-p(z))$, and therefore
$p'\rightarrow q$ will be valid if
\be
\bra{\psi}\Big(f_\gamma(X+c\ket{\phi}\bra{\phi}) -
f_\gamma(X)\Big)\ket{\psi}
\leq \sum_z \frac{z}{1+\gamma z} (q(z)- p(z))
\qquad\text{for all $\gamma\in[-1,1]$.}
\label{eq:ppq}
\ee
By construction the right-hand side has no poles and exactly $m$ zeros
(including multiplicities) for $\gamma\in[-1,1]$. For each zero we can
find, by Lemma~\ref{lemma:poly}, a lower bound for the right-hand side of
the form $a(\gamma-\gamma_0)^{2k}$ valid in some neighborhood of the zero.
Since all multiplicities for zeros in $(-1,1)$ are even we can chooses $2k$
to equal the multiplicity of the respective zero. Zeros occurring at the
end points $-1$ and $1$ can have odd multiplicities, in which case we can
choose $2k$ to be at most one plus the multiplicity of the zero. By
Lemma~\ref{lemma:vecadd} for each neighborhood there is an $S(p)-k$
subspace of vectors $\ket{\phi}$ that satisfies the constraint, possibly in
a smaller neighborhood, for some $c>0$. The sum of the constants $2k$ over
each of the zeros is therefore at most $m+2$ (and this can only occur if
there are zeros of odd order at both $\gamma_0=-1$ and $\gamma_0=1$).
Therefore, the number of linear constraints needed to specify all the
subspaces is at most $\lfloor\frac{m+2}{2}\rfloor<|S(p)|$. Therefore, the
intersection of all these subspaces is non-trivial and we can find a
non-zero $\ket{\phi}$ and a $c>0$ (chosen as the smallest of the given $c$
constants) such that the inequality of Eq.~(\ref{eq:ppq}) is satisfied in
the union of all the neighborhoods.

We also need to ensure that $c>0$ is chosen small enough so that
$X'=X+c\ket{\phi}\bra{\phi}$ has eigenvalues in $(-1,1)$, but because $X$
has eigenvalues in $(-1,1)$ that is always possible. What remains to be
checked is that the inequality is satisfied outside the neighborhoods
surrounding the zeros. But the complement of these neighborhoods in
$[-1,1]$ is a compact set on which the inequality becomes a strict
inequality. Therefore, for any vector $\ket{\phi}$ we can find a $c>0$ such
that the inequality holds. The lemma is proven by choosing $c>0$ small
enough to satisfy all the preceding conditions.
\end{proof}

To make use of the above lemma we note that the degree of the
numerator of the rational function of $\gamma$ given by
\be
\sum_z \frac{z}{1+\gamma z} (q(z)- p(z))
\ee
is at most $|S(p)|+|S(q)|-1$ because we are summing at most $|S(p)|+|S(q)|$
terms. In fact, the degree of the numerator is at most $|S(p)|+|S(q)|-2$
because the coefficient of $\gamma^{|S(p)|+|S(q)|-1}$ is $(\prod_z
z)\sum_z(q(z)-p(z))=0$.

In the following discussion \textit{the number of zeros of $p\rightarrow
q$} refers to the number of zeros (including multiplicities) for
$\gamma\in[0,1]$ of the numerator of the rational function $\sum_z
\frac{z}{1+\gamma z} (q(z)- p(z))$. The argument from the last paragraph
shows that the number of zeros of $p\rightarrow q$ is at most
$|S(p)|+|S(q)|-2$.

As a consequence, if $|S(p)|>|S(q)|$, then $2|S(p)|>|S(p)|+|S(q)|\geq m+2$
where $m$ is the number of zeros of $p\rightarrow q$. Therefore, we can
always find a $\ket{\phi}$ as in the above lemma to perturb $p$. Similarly,
if $|S(p)|<|S(q)|$ there exists a $\ket{\phi}$ that can be used as a
perturbation on $q$. Both perturbations can be found if $|S(p)|=|S(q)|$ and
the number of zeros of $p\rightarrow q$ is less than its maximal value of
$2|S(p)|-2$. The special case of $|S(p)|=|S(q)|$ with maximal zeros will
be dealt with separately later in the section

Let us now discuss what happens once we have found a valid $\ket{\phi}$ and
$c$, and start increasing the value of $c$. For concreteness, we take the
first case of the preceding lemma so that $p'\rightarrow q$ is valid,
$p'=\Prob(X',\ket{\psi})$, $p=\Prob(X,\ket{\psi})$ is a canonical
representation so that $X$ has dimension $|S(p)|$, and
$X'=X+c\ket{\phi}\bra{\phi}$.

As we increase $c$ there are two constraints that can force us to stop:
either $X'(c)\equiv X+c\ket{\phi}\bra{\phi}$ gets an eigenvalues larger
than $1$, or $p'\rightarrow q$ is no longer valid. Let $c_0$ be the largest
allowed value of $c$ and let $p'_0 =\Prob(X'(c_0),\ket{\psi})$. Note that
$p\rightarrow p_0'$ is still expressible by matrices and $p_0'\rightarrow
q$ is still valid because the respective constraints are defined using
non-strict inequalities.

If the stopping condition for increasing $c$ is that the eigenvalues get
too large, then $X'(c_0)\leq I$ but $X'(c_0)$ has $1$ as an eigenvalue.
Because $q$ has support in $(-1,1)$, the value of $\sum_z
\frac{z}{1+\gamma z} q(z)$ at $\gamma=-1$ is finite, so as $c\rightarrow
c_0$ the amplitude of $\ket{\psi}$ on the largest eigenvector of $X'(c)$
must be going to zero, and must be zero at $c=c_0$. The first consequence
of this is that $p'_0(z)$ ultimately does not have support on $z=1$, and
therefore its support is still contained in $(-1,1)$. The second
consequence is that the size of the support of $p'_0$ is smaller
than the dimension of the space on which $X$ is defined, which equals the
support of $p$. In other words $|S(p'_0)|<|S(p)|$.

On the other hand lets assume that for $c>c_0$ the transition to $q$ is no
longer valid. In such a case the rational function $\sum_z \frac{z}{1+\gamma
z} (q(z)- p'_0(z))$ must have an extra zero in $[-1,1]$ that $\sum_z
\frac{z}{1+\gamma z} (q(z)- p(z))$ did not have, or one of the existing
zeros must have a higher degree. That is, the number of zeros in
$p_0'\rightarrow q$ is greater than the number of zeros in $p\rightarrow
q$. Note that in this case, by construction $|S(p'_0)|\leq|S(p)|$, whereas
in the previous case (where we proved $|S(p'_0)|<|S(p)|$) the number of
zeros cannot decrease.

Let $p\rightarrow q$ be valid with $|S(p)|>|S(p)|$. We can repeatedly apply
the above perturbation process to obtain a sequence of valid transitions
$p\equiv p_0\rightarrow p_1\rightarrow\cdots\rightarrow p_l\rightarrow q$,
such that, as functions of $i$, the number of zeros of $p_i\rightarrow q$
is monotonically increasing along the chain and $|S(p_i)|$ is monotonically
decreasing along the chain. Furthermore, in each transition either the
number of zeros increases or $|S(p_i)|$ decreases. The number of zeros is
upper bounded by $|S(p)|+|S(q)|-2$, and therefore after a finite number of
steps we must get to a $p_\ell$ such that either $|S(p_\ell)|<|S(q)|$ or
$|S(p_\ell)|=|S(q)|$ and the number of zeros is maximal.  In the former case,
we can continue the chain by perturbing $q$. Repeatedly working on both
sides we can end up with a chain
\be
p \equiv p_0\rightarrow p_1\rightarrow\cdots\rightarrow p_\ell\equiv p'
\rightarrow q'\equiv q_{\ell'}\rightarrow\cdots\rightarrow q_1
\rightarrow q_0\equiv q
\ee
\noindent
such that all transitions are valid, and all but $p'
\rightarrow q'$ are known to be expressible by matrices. Furthermore,
$|S(p')|=|S(q')|$ and $p'\rightarrow q'$ has $2|S(p')|-2$ zeros.  If we
can prove that $p'\rightarrow q'$ is expressible by matrices then by
transitivity we will have also proven that $p\rightarrow q$ is expressible
by matrices.

To complete the result of this section, we now need to study the remaining
special case of $|S(p')|=|S(q')|$ with maximal zeros. From the proof of
Lemma~\ref{lemma:ppq} we can see that in fact the only case when we can't
find a perturbation is when there are zeros of odd order at both
$\gamma_0=-1$ and $\gamma_0=1$. In all other cases we can continue the
above chain until $p'=q'$.

To deal with zeros of odd order at $\gamma_0=-1$ and $\gamma_0=1$ we need
to enlarge the vector space on which we are working. Rather than using a
canonical representation $p=\Prob(X,\ket{\psi})$ we append to $X$ an extra
eigenvalue $-1$, so that we end up with a new matrix $\bar X$ having
dimension $|S(p)+1|$ and eigenvalues in $[-1,1)$. Similarly, we can request
a representation $q=\Prob(\bar Y,\ket{\xi})$ so that $\bar Y$ has dimension
$|S(q)|+1$ including a single eigenvalue $1$. Note that it is pointless to
append eigenvalues of $1$ to $X$ (or $-1$ to $Y$) as we can't add
(resp. subtract) in any vectors $\ket{\phi}\bra{\phi}$ to that subspace
without ending up with eigenvalues outside $[-1,1]$.

The following discussion will concern the first case, where the matrix
$\bar X$ has a single eigenvector $\ket{-1}$ with eigenvalue $-1$. It will
be useful to define $\HHB$ as the space containing $\bar X$, and
$\HH\subset\HHB$ as the subspace orthogonal to $\ket{-1}$. We want to keep
$n$ as the size of the support of $p$, so $\HH$ will have dimension $n$ and
$\HHB$ will have dimension $n+1$. We still want $p=\Prob(\bar
X,\ket{\psi})$ to have support in $(-1,1)$, though, so
$\braket{-1}{\psi}=0$ or equivalently $\ket{\psi}\in\HH$. We will consider
perturbations of the form $\bar X+c\ket{\bar \phi}\bra{\bar \phi}$ where
$\ket{\bar\phi}=\ket{\phi}+a\ket{-1}$.

Our immediate goal is to extend Lemma~\ref{lemma:vecadd} to deal with
matrices $X$ with eigenvalues in $[-1,1)$ for $c>0$. For
$\gamma_0\in[-1,1)$ it will be a straightforward extension: We want to find
an $(n-k)$-dimensional subspace $\HHP\subset\HH$ of vectors $\ket{\phi}$ so
that for any $a\in\R$ the vector $\ket{\bar\phi}=\ket{\phi}+a\ket{-1}$
satisfies the inequality
\beq
\lp|\bra{\psi}\Big(f_\gamma(\bar X+c\ket{\bar \phi}\bra{\bar \phi}) -
f_\gamma(\bar X)\Big)\ket{\psi}\rp|
\leq b(\gamma-\gamma_0)^{2k}
\qquad\text{for $\gamma\in(-1,1)$ satisfying $|\gamma-\gamma_0|<\epsilon$}
\eeq
for some $c>0$ and $\epsilon>0$.

As the proof of the above is nearly identical to the proof of
Lemma~\ref{lemma:vecadd}, we will only discuss the differences. Our main
tool is Eq.~(\ref{eq:fgbound}) which also applies to our extend matrix $\bar
X$. For convenience we rewrite it here
\be
f_\gamma(\bar X+c\ket{\bar \phi}\bra{\bar \phi}) - f_\gamma(\bar X) =
\lp(\frac{c}{1+\gamma c \bra{\bar \phi}
\lp(I+\gamma \bar X\rp)^{-1}\ket{\bar \phi}}\rp)
\lp(I+\gamma \bar X\rp)^{-1}\ket{\bar \phi}\bra{\bar \phi}
\lp(I+\gamma \bar X\rp)^{-1}.
\label{eq:fgboundbis}
\ee
\noindent
Given $\ket{\bar \phi}=\ket{\phi}+a\ket{-1}$ we want to choose
$c>0$ small enough so that
\be
\lp|\bra{\psi}\Big(f_\gamma(\bar X+c\ket{\bar \phi}\bra{\bar \phi}) -
f_\gamma(\bar X)\Big)\ket{\psi}\rp|
\leq
2|c|\lp| \bra{\psi}\lp(I+\gamma \bar X\rp)^{-1} \ket{\bar \phi}\rp|^2
\ee
\noindent
for $\gamma\in(-1,1)$ satisfying $|\gamma-\gamma_0|<\epsilon$. To
accomplish this let $\lambda_{max}$ be the largest eigenvalue of $\bar X$
and restrict $c<(1-\lambda_{max}')
/(2\braket{\bar\phi}{\bar\phi})$. We then have $\gamma c \bra{\bar \phi}
\lp(I+\gamma \bar X\rp)^{-1}\ket{\bar
\phi}>-1/2$ as required. We then note that $\bra{\psi}\lp(I+\gamma \bar
X\rp)^{-1} \ket{\bar \phi} = \bra{\psi}\lp(I+\gamma X\rp)^{-1}
\ket{\phi}$ where $X$ is the restriction of $\bar X$ to $\HH$.
We end up with an equation identical to Eq.~(\ref{eq:fgbound1}) in the
proof of Lemma~\ref{lemma:vecadd}. The rest of the proof carries through.

The interesting extension of Lemma~\ref{lemma:vecadd} occurs at $\gamma_0=1$. For $\gamma\in(-1,1)$ we have
\be
|a|^2(1-\gamma)^{-1}
\leq \bra{\bar \phi}\lp(I+\gamma \bar X\rp)^{-1} \ket{\bar \phi}
\ee
\noindent
because $1+\gamma \bar X$ is positive definite and the left-hand side drops
some positive terms. If $a\neq 0$ we can further write for $\gamma\in(0,1)$
\be
\lp(\frac{c}{1+\gamma c \bra{\bar\phi}
\lp(I+\gamma \bar X\rp)^{-1}\ket{\bar \phi}}\rp)\leq
\frac{c}{1+\gamma c |a|^2(1-\gamma)^{-1}}
= \frac{c(1-\gamma)}{1-\gamma(1- c |a|^2)}
\leq
\frac{1-\gamma}{|a|^2},
\ee
\noindent
where in the last step we assumed $c<1/|a|^2$ so that the
denominator is minimized by $\gamma\rightarrow 1$. Combined with
Eq.~(\ref{eq:fgboundbis}) we get
\be
\lp|\bra{\psi}\Big(f_\gamma(\bar X+c\ket{\bar \phi}\bra{\bar \phi}) -
f_\gamma(\bar X)\Big)\ket{\psi}\rp|
\leq
\frac{1-\gamma}{|a|^2}
\lp| \bra{\psi}\lp(I+\gamma \bar X\rp)^{-1} \ket{\bar \phi}\rp|^2
\label{eq:gamma01}
\ee
\noindent
for $\gamma\in(1-\epsilon,1)$ so long as we choose $\epsilon<1$. Therefore,
at $\gamma_0=1$ we can prove something stronger than
Lemma~\ref{lemma:vecadd}. The constructed $(n-k)$-dimensional subspace will
satisfy an upper bound of $b(\gamma-\gamma_0)^{2k+1}$ rather than just
$b(\gamma-\gamma_0)^{2k}$. The caveat is that the bound will only hold when
$|a|$ is large enough:

\begin{lemma}
Let $\HHB$, $\HH$, $n$, $\bar X$, $\ket{\psi}$ and $f_\gamma(x)$ be as
above. In particular, $\bar X$ has eigenvalues in $[-1,1)$ with a unique
eigenvector $\ket{-1}$ with eigenvalue $-1$ and $\braket{-1}{\psi}=0$.
Given a polynomial $b(1-\gamma)^{2k+1}$ with constants $b>0$ and integer
$k\geq 0$, we can find an $(n-k)$-dimensional subspace $\HHP\subset\HH$ and
a number $\Theta>0$ such that for any $\ket{\phi}\in\HHP$ and $a\geq \Theta
\braket{\phi}{\phi}$ there exists
$c>0$ and $\epsilon>0$ satisfying $\bar X+c\ket{\bar\phi}\bra{\bar\phi}<I$
and
\be
\lp|\bra{\psi}\Big(f_\gamma(\bar X+c\ket{\bar \phi}\bra{\bar \phi}) -
f_\gamma(\bar X)\Big)\ket{\psi}\rp|
\leq b(1-\gamma)^{2k+1}
\qquad\qquad\text{for $\gamma\in(1-\epsilon,1)$,}
\ee
\noindent
where $\ket{\bar\phi}= \ket{\phi}+a\ket{\-1}$.
\label{lemma:oddpolybound}
\end{lemma}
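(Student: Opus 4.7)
My approach will parallel that of Lemma~\ref{lemma:vecadd}, with the new phenomenon being that an extra factor of $(1-\gamma)$ appears naturally in the estimate near $\gamma=1$; this factor is what upgrades the power $(1-\gamma)^{2k}$ from the analogous bound in Lemma~\ref{lemma:vecadd} to $(1-\gamma)^{2k+1}$. I would take as my starting point the sharpened inequality derived in the paragraph immediately preceding the lemma statement,
\[
\lp|\bra{\psi}\Big(f_\gamma(\bar X+c\ket{\bar \phi}\bra{\bar \phi}) - f_\gamma(\bar X)\Big)\ket{\psi}\rp|
\leq \frac{1-\gamma}{|a|^2}\lp|\bra{\psi}(I+\gamma\bar X)^{-1}\ket{\bar\phi}\rp|^2,
\]
which holds whenever $c<1/|a|^2$ and $\gamma$ lies in a left neighborhood of $1$. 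Since $\braket{-1}{\psi}=0$ and $\bar X$ is block-diagonal with respect to $\HHB=\HH\oplus\R\ket{-1}$, the resolvent $(I+\gamma\bar X)^{-1}$ preserves this decomposition, so $\bra{\psi}(I+\gamma\bar X)^{-1}\ket{\bar\phi}=\bra{\psi}(I+\gamma X)^{-1}\ket{\phi}$, where $X$ denotes the restriction of $\bar X$ to $\HH$. In particular, the relevant inner product no longer depends on $a$.

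Next I would construct $\HHP$ exactly as in Lemma~\ref{lemma:vecadd}: since $X$ has spectrum inside $(-1,1)$, the scalar function $\gamma\mapsto\bra{\psi}(I+\gamma X)^{-1}\ket{\phi}$ is analytic at $\gamma=1$, and I would impose the vanishing of its first $k$ Taylor coefficients there as $k$ linear conditions on $\ket{\phi}\in\HH$. Their common kernel is the required $(n-k)$-dimensional subspace. Lemma~\ref{lemma:poly} then provides for each such $\ket{\phi}$ a local bound $|\bra{\psi}(I+\gamma X)^{-1}\ket{\phi}|^2\le B_\phi (1-\gamma)^{2k}$ valid near $\gamma=1$; substituting into the opening inequality gives
\[
\lp|\bra{\psi}\Big(f_\gamma(\bar X+c\ket{\bar \phi}\bra{\bar \phi}) - f_\gamma(\bar X)\Big)\ket{\psi}\rp|\le \frac{B_\phi}{|a|^2}(1-\gamma)^{2k+1},
\]
which is the target shape once I control the constant $B_\phi$.

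The remaining work is to make $B_\phi$ uniform over $\HHP$ so that a single $\Theta$ works. I would handle this by a compactness-plus-homogeneity argument: the rescaled quantity $G(\gamma,\phi):=(1-\gamma)^{-k}\bra{\psi}(I+\gamma X)^{-1}\ket{\phi}$ extends continuously to $\gamma=1$ for every $\ket{\phi}\in\HHP$ (that is precisely what the $k$ linear conditions defining $\HHP$ buy us), so $|G|$ attains a finite maximum $C$ on the compact set $\{\gamma\in[1-\eta,1]\}\times\{\phi\in\HHP:\|\phi\|=1\}$ for any fixed $\eta\in(0,1)$, and homogeneity in $\phi$ promotes this to the $\phi$-uniform estimate $B_\phi\le C^2\braket{\phi}{\phi}$. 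Choosing $\Theta$ large enough then makes the hypothesis $a\ge\Theta\braket{\phi}{\phi}$ force $C^2\braket{\phi}{\phi}/|a|^2\le b$; simultaneously, since $\lambda_{\max}(\bar X)<1$, I can pick $c>0$ below both $1/|a|^2$ and $(1-\lambda_{\max}(\bar X))/\braket{\bar\phi}{\bar\phi}$ to respect both the sharpened estimate and the constraint $\bar X+c\ket{\bar\phi}\bra{\bar\phi}<I$. The main obstacle throughout will be bookkeeping the $\ket{\phi}$-dependence of these constants carefully enough to extract a single $\Theta$; the key realization is that the defining conditions of $\HHP$ are exactly what make $G$ jointly continuous up to the endpoint $\gamma=1$, which is what drives the uniform bound.
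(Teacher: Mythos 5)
Your proposal follows essentially the same route as the paper's proof: it starts from the sharpened bound with prefactor $(1-\gamma)/|a|^2$, uses $\braket{-1}{\psi}=0$ to reduce the matrix element to $\bra{\psi}(I+\gamma X)^{-1}\ket{\phi}$ (independent of $a$), imposes the same $k$ linear vanishing conditions at $\gamma=1$ to obtain the $(n-k)$-dimensional subspace $\HHP$, and invokes Lemma~\ref{lemma:poly} to absorb the remaining factor into $b(1-\gamma)^{2k+1}$ while keeping $c$ small enough that $\bar X+c\ket{\bar\phi}\bra{\bar\phi}<I$. Your compactness-plus-homogeneity argument for a $\phi$-uniform constant is just a more explicit rendering of the paper's choice of $\Theta$ as a maximum over the unit sphere of $\HHP$, so the two proofs coincide in substance.
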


\begin{proof}
The proof starts from Eq.~(\ref{eq:gamma01}). As before,
$\braket{-1}{\psi}=0$ implies $\bra{\psi}\lp(I+\gamma \bar
X\rp)^{-1}\ket{\bar \phi}=\bra{\psi}\lp(I+\gamma X\rp)^{-1} \ket{\phi}$
where $X$ is the restriction of $\bar X$ to $\HH$. The argument from the
proof of Lemma~\ref{lemma:vecadd} gives us an $(n-k)$-dimensional subspace
$\HHP\subset\HH$ of vectors $\ket{\phi}\in\HHP$ such that the numerator of
the rational function $\bra{\psi}\lp(I+\gamma X\rp)^{-1} \ket{\phi}$ has a
zero of order $k$ at $\gamma=1$. Given $\ket{\phi}\in\HHP$ we can find, by
Lemma~\ref{lemma:poly}, a small enough $\epsilon>0$ and large enough $a>0$
so that
\be
\frac{1-\gamma}{|a|^2}
\lp| \bra{\psi}\lp(I+\gamma X\rp)^{-1} \ket{\phi}\rp|^2
\leq b(\gamma-\gamma_0)^{2k+1}
\qquad\qquad\text{for $\gamma\in(1-\epsilon,1)$.}
\ee
\noindent
To complete the proof choose $\Theta$ to be the maximum of such choices of 
$|a|^2$ over the compact set of $\ket{\phi}\in\HHP$  that additionally satisfy
$\braket{\phi}{\phi}=1$.
\end{proof}

A similar result holds for $\bar Y$ with eigenvalues in $(-1,1]$ and
$c<0$. We now prove a special version of Lemma~\ref{lemma:ppq}.

\begin{lemma}
Let $p\neq q$ be functions with support in $(-1,1)$ such that
$p\rightarrow q$ is valid, $|S(p)|=|S(q)|$ and the number of zeros of
$p\rightarrow q$ is maximal with odd order at both $\gamma=-1$ and
$\gamma=1$. Then 
\begin{itemize}
\item There exists a vector $\ket{\bar \phi}=\ket{\phi}+a\ket{-1}$ and
constant $c>0$ such that $p'\rightarrow q$ is valid,\\ where
$p'=\Prob(\bar X',\ket{\psi})$, $\bar X'= \bar X+c\ket{\bar \phi}\bra{\bar
\phi}\leq I$, $p=\Prob(\bar X,\ket{\psi})$, the dimension of $\bar X$ is
$|S(p)|+1$ including a unique eigenvector $\ket{-1}$ with eigenvalue $-1$,
$\braket{\phi}{-1}=0$ and $\ket{\phi}\neq0$.
\item There exists a vector $\ket{\bar \phi}=\ket{\phi}+a\ket{1}$ and
constant $c<0$ such that $p\rightarrow q'$ is valid,\\ where
$q'=\Prob(\bar Y',\ket{\xi})$, $\bar Y'= \bar Y+c\ket{\bar \phi}\bra{\bar
\phi}\geq -I$, $q=\Prob(\bar Y,\ket{\xi})$, the dimension of $\bar Y$ is
$|S(q)|+1$ including a unique eigenvector $\ket{1}$ with eigenvalue $1$,
$\braket{\phi}{1}=0$ and $\ket{\phi}\neq0$.
\end{itemize}
\end{lemma}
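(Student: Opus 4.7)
My plan is to repeat the argument of Lemma~\ref{lemma:ppq}, but to work in the enlarged ambient space and to invoke Lemma~\ref{lemma:oddpolybound} at the endpoint $\gamma=+1$ in order to save exactly the one degree of freedom that would otherwise disappear in this maximal-zero case. I will describe the first case (the perturbation of $p$ with $c>0$); the second case follows by an entirely symmetric argument using the analogue of Lemma~\ref{lemma:oddpolybound} at $\gamma=-1$, which is proved using the eigenvector $\ket{1}$ of $\bar Y$ and the $(1+\gamma)^{-1}$ blow-up of $(I+\gamma\bar Y)^{-1}$ in place of the $(1-\gamma)^{-1}$ blow-up used at the right endpoint.

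As in Lemma~\ref{lemma:ppq}, the required validity of $p'\rightarrow q$ reduces to the pointwise inequality
\beq
\bra{\psi}\bigl(f_\gamma(\bar X+c\ket{\bar\phi}\bra{\bar\phi})-f_\gamma(\bar X)\bigr)\ket{\psi}\leq N(\gamma), \qquad N(\gamma)\equiv\sum_z\frac{z}{1+\gamma z}(q(z)-p(z)),
\eeq
for $\gamma\in[-1,1]$. By hypothesis $N$ is a nonnegative rational function with exactly $m=2|S(p)|-2$ zeros (counted with multiplicities) in $[-1,1]$: even multiplicities $2k_i$ at interior points, and odd multiplicities $2k_-+1$ at $\gamma=-1$ and $2k_++1$ at $\gamma=+1$. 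Around each zero $\gamma_0$ I will use Lemma~\ref{lemma:poly} to lower-bound $N$ locally by a monomial in $\gamma-\gamma_0$, turning each local constraint into a polynomial inequality.

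At each interior zero and at $\gamma_0=-1$, the straightforward extension of Lemma~\ref{lemma:vecadd} to $\bar X$ (available because $\braket{-1}{\psi}=0$) produces, for any fixed $a$, an $(n-k_i)$- or $(n-k_--1)$-dimensional subspace of acceptable $\ket{\phi}\in\HH$, corresponding to bounds of order $2k_i$ and $2(k_-+1)$ respectively. At $\gamma_0=+1$, Lemma~\ref{lemma:oddpolybound} with $k=k_+$ supplies an $(n-k_+)$-dimensional subspace with the sharper odd-order bound $2k_++1$, matching the multiplicity exactly. Summing codimensions yields $\sum_i k_i+(k_-+1)+k_+ = m/2 = |S(p)|-1$ linear constraints in the $|S(p)|$-dimensional space $\HH$, so a nonzero $\ket{\phi}$ lies in the common intersection.

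Having fixed such $\ket{\phi}$, I would then fix $a\geq\Theta\braket{\phi}{\phi}$ large enough for Lemma~\ref{lemma:oddpolybound}, and finally pick $c>0$ small enough that each local polynomial bound holds in its neighborhood, that $\bar X'\equiv\bar X+c\ket{\bar\phi}\bra{\bar\phi}\leq I$, and that on the compact complement of the neighborhoods of the zeros---where $N(\gamma)$ is uniformly positive---the $O(c)$ perturbation stays below $N(\gamma)$; this last compactness step is identical to the one in Lemma~\ref{lemma:ppq}. The main obstacle is the razor-thin dimension count: applying the bare extension of Lemma~\ref{lemma:vecadd} at both endpoints would cost $k_-+1$ and $k_++1$ constraints, totaling $m/2+1=|S(p)|$ and annihilating $\HH$ entirely. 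Lemma~\ref{lemma:oddpolybound} is designed precisely to trade the extra eigenvector $\ket{-1}$ for one additional power of $(1-\gamma)$ at the right endpoint, thereby recovering the missing constraint; once that is in hand, checking that large $a$ and small $c$ can be chosen simultaneously is routine by continuity and the monotonicity of $f_\gamma$ in $c$.
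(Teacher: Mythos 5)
Your proposal is correct and follows essentially the same route as the paper's own proof: reduce validity of $p'\rightarrow q$ to the pointwise bound against $N(\gamma)$, handle the interior zeros and $\gamma=-1$ with the even-order extension of Lemma~\ref{lemma:vecadd}, use Lemma~\ref{lemma:oddpolybound} at $\gamma=+1$ to spend only $k_+$ rather than $k_++1$ constraints, and conclude from the count $\sum_i k_i+(k_-+1)+k_+=|S(p)|-1<|S(p)|$ that a nonzero $\ket{\phi}$ survives, before choosing $a$ large and $c$ small and finishing with the compactness argument off the zeros. Your explicit codimension bookkeeping and the symmetric treatment of the second case (odd-order savings at $\gamma=-1$ via the eigenvector $\ket{1}$ of $\bar Y$) match the paper's argument.
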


\begin{proof}
We prove the first case, the second being nearly identical. As in the proof
of Lemma~\ref{lemma:ppq} the main goal is to ensure
\be
\bra{\psi}\Big(f_\gamma(\bar X+c\ket{\bar \phi}\bra{\bar \phi}) -
f_\gamma(\bar X)\Big)\ket{\psi}
\leq \sum_z \frac{z}{1+\gamma z} (q(z)- p(z)).
\qquad\text{for all $\gamma\in[-1,1].$}
\ee
By assumption, the number of zeros of the right-hand side is $2|S(p)|-2$,
with odd order at both $\gamma=-1$ and $\gamma=1$. The problem with the
original proof of Lemma~\ref{lemma:ppq} is that it would seek vectors
$\ket{\bar \phi}$ such that the left-hand side had an even number of zeros
at $\gamma=-1$ and $\gamma=1$ and the total number of zeros was $2|S(p)|$.
Therefore, the total number of linear constraints on $\ket{\bar\phi}$ would
be $|S(p)|$ and the only vector $\ket{\bar\phi}$ that satisfies all the
constraints is $\ket{-1}$.

However, Lemma~\ref{lemma:oddpolybound} allows us to satisfy the bound
while placing only an odd number of zeros at $\gamma=1$ (though still an
even number of zeros at $\gamma=-1$), and therefore requiring only
$|S(p)|-1$ constraints (so long as the coefficient of $\ket{-1}$ is large
enough). In particular, there exists a non-zero vector $\ket{\phi}$, a
large enough $a>0$ and small enough $c>0$ so that $\ket{\bar\phi}=
\ket{\phi}+a\ket{\-1}$ satisfies the above inequality in a neighborhood of
each of the zeros of the right-hand side.

The proof is then completed by picking a potentially smaller $c>0$ to
ensure that the inequality is satisfied outside the neighborhoods and 
that $\bar X+c\ket{\bar \phi}\bra{\bar\phi}\leq I$.
\end{proof}

\begin{lemma}
Let $p\neq q$ be functions with support in $(-1,1)$ such that $p\rightarrow
q$ is valid, $|S(p)|=|S(q)|$ and the number of zeros of $p\rightarrow q$ is
maximal with odd order at both $\gamma=-1$ and $\gamma=1$. Then
\begin{itemize}
\item There exists $p':(-1,1)\rightarrow[0,\infty)$ such that
$p'\neq p$, $|S(p')|=|S(p)|$,
$p\rightarrow p'$ is expressible by matrices and
$p'\rightarrow q$ is valid.
\item There exists $q':(-1,1)\rightarrow[0,\infty)$ such that
$q'\neq q$, $|S(q')|=|S(q)|$,
$q'\rightarrow q$ is expressible by matrices and
$p\rightarrow q'$ is valid.
\end{itemize}
\end{lemma}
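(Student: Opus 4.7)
The preceding special lemma already does most of the work: it produces a vector $\ket{\bar\phi}=\ket{\phi}+a\ket{-1}$ with $\ket{\phi}\ne 0$ and a constant $c_0>0$ such that $\bar X'(c_0):=\bar X+c_0\ket{\bar\phi}\bra{\bar\phi}\le I$, and such that $p'_{c_0}:=\Prob(\bar X'(c_0),\ket{\psi})$ makes $p'_{c_0}\rightarrow q$ valid. Because $c_0>0$ and $\ket{\phi}\ne 0$, we immediately get $p'_{c_0}\ne p$, and $p\rightarrow p'_{c_0}$ is expressible by matrices by construction. The only condition of the target lemma not yet secured is $|S(p'_{c_0})|=|S(p)|$: generically the rank-one perturbation tilts the former $\ket{-1}$-eigenvector of $\bar X$ into $\HH$, and $\ket{\psi}$ then has nonzero projection on it, so typically $|S(p'_{c_0})|=|S(p)|+1$.

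To eliminate the extra support point, I would push the perturbation parameter to the boundary of the feasible region. Let $c^*:=\sup\{c\ge c_0:\bar X'(c)\le I \text{ and } p'_c\rightarrow q \text{ valid}\}$. The top eigenvalue of $\bar X'(c)$ goes to $+\infty$ with $c$, so $c^*<\infty$. By continuity at $c=c^*$ one of two mutually exhaustive events must occur: either (i) $\bar X'(c^*)$ has an eigenvalue exactly equal to $1$, or (ii) the rational function $\sum_z\frac{z}{1+\gamma z}(q(z)-p'_{c^*}(z))$ acquires an extra zero in $[-1,1]$ compared with the small-$c$ regime.

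In case (i), since $q$ has support in $(-1,1)$ the quantity $\sum_z\frac{z}{1+\gamma z}q(z)$ stays finite as $\gamma\to -1$, whereas $\sum_z\frac{z}{1+\gamma z}p'_{c^*}(z)$ would diverge if $p'_{c^*}(1)>0$; hence $\ket{\psi}$ must have vanishing projection on the $1$-eigenvector of $\bar X'(c^*)$. The remaining $|S(p)|$ eigenvalues of $\bar X'(c^*)$ stay distinct (Cauchy interlacing keeps the spectrum of a rank-one perturbation of a matrix with distinct spectrum distinct, and continuity carries this through up to the first collision with the boundary), so $|S(p'_{c^*})|=|S(p)|$ exactly and I set $p'=p'_{c^*}$.

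In case (ii), the number of zeros $m$ of $p'_{c^*}\rightarrow q$ is at least $2|S(p)|-1$ while $|S(p'_{c^*})|=|S(p)|+1$, so $m+2<2|S(p'_{c^*})|$ and the direct hypothesis of Lemma~\ref{lemma:ppq} is met (now with a generic $\ket{\phi}$, no $\ket{-1}$ component needed). I would apply that lemma and iterate the boundary-pushing argument: each round either terminates in case (i), yielding the desired support drop from $|S(p)|+1$ to $|S(p)|$, or strictly increases the zero count of the current transition; since the zero count is bounded by $|S(p'_c)|+|S(q)|-2$, finitely many rounds suffice. The statement for $q'$ follows by the symmetric construction, using $\ket{1}$ in place of $\ket{-1}$ and $c<0$. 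The main obstacle is ensuring that when an eigenvalue is forced to $1$ the associated amplitude in $\ket{\psi}$ really vanishes in the limit rather than $p'_c\rightarrow q$ simply failing first; this is precisely what the odd-order endpoint bound $b(1-\gamma)^{2k+1}$ of the preceding Lemma~\ref{lemma:oddpolybound} buys us, since it provides enough slack at $\gamma=\pm 1$ to accommodate an eigenvalue reaching the boundary without destroying validity.
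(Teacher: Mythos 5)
Your skeleton is the paper's: take the perturbation supplied by the preceding lemma, push $c$ to the boundary of the feasible region, and split according to whether an eigenvalue reaches $1$ or validity becomes tight. Two of your steps, however, do not hold as written. First, the Cauchy-interlacing justification of $|S(p')|=|S(p)|$ in case (i) is not sound: interlacing does not prevent eigenvalue collisions when $\ket{\bar\phi}$ has zero overlap with some eigenvectors (e.g.\ perturbing $\diag(-1,0)$ by $c\ket{\phi}\bra{\phi}$ with $\ket{\phi}$ the first basis vector and $c=1$ gives $\diag(0,0)$), and, more to the point, distinctness of eigenvalues says nothing about whether $\ket{\psi}$ retains nonzero amplitude on each of them, which is what $|S(p')|=|S(p)|$ actually requires. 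The argument that does the job is zero counting: since $f_\gamma$ is operator monotone the increment $\bra{\psi}\big(f_\gamma(\bar X')-f_\gamma(\bar X)\big)\ket{\psi}$ is nonnegative, so validity forces every zero of $\sum_z\frac{z}{1+\gamma z}(q(z)-p(z))$ to persist for $p'_{c}\rightarrow q$; if the support of $p'_{c^*}$ dropped below $|S(p)|$, those $2|S(p)|-2$ persisting zeros would exceed the degree bound $|S(p'_{c^*})|+|S(q)|-2$ of the numerator, forcing $p'_{c^*}=q$, impossible here because $|S(q)|=|S(p)|$.

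Second, your case (ii) dichotomy (``terminate in case (i) or strictly increase the zero count'') is not exhaustive. After the first new zero appears, the count equals $|S(p'_{c^*})|+|S(q)|-2$, its maximum, so on the next round neither a further zero nor a drop in support size is possible unless the numerator vanishes identically, i.e.\ unless the perturbed function \emph{is} $q$. This is precisely how the paper closes the branch: one more application of Lemma~\ref{lemma:ppq}, whose boundary can only be reached at $p''=q$, whence $p\rightarrow q$ is expressible by matrices by transitivity and the lemma holds with the choice $p'=q$. Your iteration never records this outcome, so at its endpoint you have not verified the stated conclusions ($p'\neq p$, $|S(p')|=|S(p)|$, $p'\rightarrow q$ valid). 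A small further point: Lemma~\ref{lemma:oddpolybound} is what makes a nontrivial perturbation direction exist at all (it removes one linear constraint at $\gamma=1$); it is not what protects validity when an eigenvalue reaches the boundary — that is just closedness of the non-strict constraints, as you in fact use when passing to $c^*$. (The paper also certifies $p'\neq p$ in case (i) by a trace bound, using that $\ket{\bar\phi}$ is not proportional to $\ket{-1}$, rather than taking it as immediate.)
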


\begin{proof}
We shall prove the first case. Take $p'$ from the previous lemma and
increase $c$ until either $p'\rightarrow q$ gets a new zero or until
$\bar X+c\ket{\bar \phi}\bra{\bar\phi}$ gets an eigenvalue of $1$.

In the first case, either we either end up with $p'=q$, or we have
$|S(p')|=|S(q)|+1$ and $p'\rightarrow q$ has maximal zeros, in which case
we can use Lemma~\ref{lemma:ppq} to create a second perturbation to get
$p''\rightarrow q$. Increasing this second $c>0$ cannot increase the number
of zeros or decrease the support size of the support of $p''$ unless we end
up with $p''=q$. In either case we have proven that $p\rightarrow q$ is
expressible by matrices and we can choose $p'=q$ to satisfy the lemma.

Alternatively if $\bar X+c\ket{\bar \phi}\bra{\bar\phi}$ gets an eigenvalue
of $1$, then we end up with $|S(p')|=|S(p)|$ and by construction
$p\rightarrow p'$ is expressible by matrices and $p'\rightarrow q$ is
valid. To prove that $p'\neq p$ we note that because $\ket{\bar\phi}$ is
not proportional to $\ket{-1}$ the maximum $c$ must be less than $2$, so
the trace of the canonical matrix expressing $p'$ is smaller than the
trace of the canonical matrix expressing $p$.
\end{proof}

\begin{lemma}
Let $p$ and $q$ be functions with support in $(-1,1)$ such that
$p\rightarrow q$ is valid, $|S(p)|=|S(q)|$ and the number of zeros of
$p\rightarrow q$ is maximal with odd order at both $\gamma=-1$ and
$\gamma=1$. Then $p\rightarrow q$ is expressible by matrices.
\end{lemma}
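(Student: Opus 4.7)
The strategy is to iterate the perturbation lemma just proved so as to reduce the problem to a case already handled. Start with $p_0 = p$ and apply the previous lemma to produce $p_1$ with $p_0\to p_1$ expressible by matrices and $p_1\to q$ still valid. Three things can happen to $p_1\to q$: either (i) it falls in a case already covered (different support sizes, non-maximal zero count, or zeros of even order at the endpoints), in which case transitivity of ``expressible by matrices'' together with Lemma~\ref{lemma:ppq} and the case analysis preceding the present lemma finishes the proof; or (ii) $p_1 = q$, in which case we are trivially done; or (iii) $p_1\to q$ again satisfies the hypotheses of the present lemma with $|S(p_1)|=|S(p)|$ and maximal zeros with odd order at $\pm 1$.

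In the surviving case (iii) we iterate. The key monotone quantity is the trace of the canonical matrix of $p_i$, namely $\tau_i=\sum_{z\in S(p_i)}z$. The proof of the previous lemma shows that the perturbation strictly decreases this quantity ($\tau_{i+1}<\tau_i$) because the maximum allowable $c$ in the chosen perturbation is strictly bounded and $|\bar{\phi}\rangle$ is not proportional to $|{-1}\rangle$. The quantity $\tau_i$ is bounded below by $-|S(p)|$ since all supports lie in $(-1,1)$ and have the same size, so the sequence either terminates after finitely many iterations (at which point some escape into case (i) or (ii) must occur), or it produces an infinite sequence whose trace converges.

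In the infinite case, I would apply a compactness argument. The functions $p_i$ have fixed total mass, fixed support size, and support contained in $(-1,1)$; combined with $p_i\to q$ valid for every $i$ (so $\sum_z\frac{z}{1+\gamma z}p_i(z)$ is pinched between two bounds in $\gamma\in[-1,1]$), one can extract a subsequence along which the support points and probabilities converge to a limit $p_\infty$. The transition $p\to p_\infty$ is expressible by matrices: take the canonical representations $\Prob(X_i,|\psi_i\rangle)=p_i$ together with matrices $X_i\leq Y_i$ certifying $p\to p_i$, each standardized as in Lemma~\ref{lemma:stdmattrans} so that dimensions remain bounded by $2|S(p)|-1$, and pass to a convergent subsequence. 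Similarly $p_\infty\to q$ remains valid by continuity, and the strict monotonicity of $\tau_i$ forces $p_\infty$ to be a fixed point of the perturbation construction, which is possible only when $p_\infty=q$ or when $p_\infty\to q$ actually lies in case (i).

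The main obstacle is handling the limit rigorously: ensuring that in the limit $p_\infty$ the support points do not leave $(-1,1)$ (which the bound $|\bar\phi|$ and the stopping condition on eigenvalues should prevent, but needs to be verified) and that the standardized matrices $X_i,Y_i$ converge to matrices $X_\infty\leq Y_\infty$ of bounded dimension. A subtle point is that individual support points of $p_i$ may collide in the limit, reducing $|S(p_\infty)|$ below $|S(p)|$; but this is actually helpful, since such a collision lands us back in case (i) with $|S(p_\infty)|<|S(p)|=|S(q)|$, where Lemma~\ref{lemma:ppq} and the prior analysis conclude. Thus in every subcase we obtain a matrix expression for $p\to q$, possibly as a limit/concatenation of expressions for $p\to p_i$ and $p_i\to q$ whose chains close up by transitivity.
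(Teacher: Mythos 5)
There is a genuine gap at the point where you resolve the possibly infinite iteration. Extracting a convergent subsequence does give you a limit $p_\infty$ with $p\rightarrow p_\infty$ expressible by matrices and $p_\infty\rightarrow q$ valid (the compactness bookkeeping you describe can be made to work, since each composite transition $p\rightarrow p_i$ fits in dimension $2|S(p)|-1$). But your key claim — that the strict monotonicity of $\tau_i$ "forces $p_\infty$ to be a fixed point of the perturbation construction" — does not follow. A bounded strictly decreasing sequence converges, but its increments may shrink to zero while a fixed-size improvement is still available at the limit: the perturbation lemma asserts the \emph{existence} of some improving $p''$, not a canonical map with any continuity, and the improvements actually chosen along your chain need not be comparable to the improvement available at $p_\infty$. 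So if $p_\infty\neq q$ and $p_\infty\rightarrow q$ is still in the hard case, you get no contradiction; the argument simply does not close. (The monotone quantity itself is also delicate — its strict decrease in the previous lemma is tied to the specific stopping condition and to $\ket{\bar\phi}$ not being proportional to $\ket{-1}$ — but even granting it, the termination/limit step fails.)

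The paper avoids the iteration entirely by a one-shot variational argument, which is exactly the repair your plan needs. Fix once and for all the representation $p=\Prob(X,\ket{\psi})$ with $X$ of dimension $2|S(p)|-1$, padded with $|S(p)|-1$ eigenvalues $-1$; then minimize the continuous progress functional $\int_{-1}^1\sum_z \frac{z}{1+\gamma z}\lp(q(z)-p'(z)\rp)d\gamma$ over $p'=\Prob(X',\ket{\psi})$ with $X\leq X'\leq I$ and $p'\rightarrow q$ valid. This feasible set of matrices is compact, so the infimum is attained at some $p'$, and by construction $p\rightarrow p'$ is expressible by matrices. If $p'\neq q$, the preceding lemma produces $p''\neq p'$ with $p'\rightarrow p''$ expressible by matrices and $p''\rightarrow q$ valid, which strictly decreases the functional; and — this is the step your scheme has no analogue of — because the fixed space is large enough, $p''$ is again of the form $\Prob(X'',\ket{\psi})$ with $X\leq X''\leq I$, so it competes in the same optimization and contradicts optimality. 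Optimality over a compact set is what substitutes for your unjustified fixed-point claim; without something of this kind, the iterative route remains incomplete.
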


\begin{proof}
Let $p=\Prob(X,\ket{\psi})$ with $X$ a $2|S(p)|-1$ dimensional matrix which
includes $|S(p)|-1$ orthogonal eigenvectors with eigenvalue $-1$. We seek
the infimum of $\int_{-1}^1 \sum_z \frac{z}{1+\gamma z} (q(z)-p'(z))
d\gamma$ over $p'=\Prob(X',\ket{\psi})$ satisfying $p'\rightarrow q$ valid
and $X\leq X'\leq I$. Because we are optimizing $X'$ over a compact set the
infimum is achievable, and the resulting $p'$ will satisfy $p\rightarrow
p'$ is expressible by matrices and $p'\rightarrow q$ is valid.

If $p'=q$ the lemma is proven. Otherwise, by the preceding theorem there
exists $p''\neq p$ such that $p'\rightarrow p''$ is expressible by matrices
and $p''\rightarrow q$ is valid. Then $\int_{-1}^1 \sum_z \frac{z}{1+\gamma
z} (p''(z)-p'(z)) d\gamma>0$ and hence $\int_{-1}^1 \sum_z
\frac{z}{1+\gamma z} (q(z)-p'(z)) d\gamma>\int_{-1}^1 \sum_z
\frac{z}{1+\gamma z} (q(z)-p''(z)) d\gamma$. But also, by transitivity,
$p\rightarrow p''$ is expressible by matrices and in fact, because the
dimension of $X$ is large enough, we can write
$p''=\Prob(X'',\ket{\psi})$ for $X\leq X''\leq I$. That is a contradiction
with the optimality of $p'$.
\end{proof}

\begin{corollary}
If $p\rightarrow q$ is valid in $[-1,1]$ then it is expressible by matrices
in $[-1,1]$.
\end{corollary}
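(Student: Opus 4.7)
The plan is to assemble the preceding machinery into a single termination argument, using transitivity of ``expressible by matrices'' (the corollary following Lemma~\ref{lemma:stdmattrans}) as the glue that lets us decompose the eventual chain of perturbations into one overall statement.

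First I would reduce to the strict interior: the lemma immediately preceding Lemma~\ref{lemma:poly} shows that it suffices to prove the corollary for $p,q$ whose supports lie in $(-1,1)$. With that in hand, the plan is to construct a finite chain
\[
p \equiv p_0 \rightarrow p_1 \rightarrow \cdots \rightarrow p_\ell \equiv p' \rightarrow q' \equiv q_{\ell'} \rightarrow \cdots \rightarrow q_1 \rightarrow q_0 \equiv q
\]
in which each single-arrow transition is either expressible by matrices (by construction, as a rank-one perturbation of a canonical representation) or else lies in the special balanced case handled by the final lemma. Then transitivity finishes the job. To drive the chain, I would use the monovariant $(\,|S(p_i)|+|S(q_j)|,\ -z_{ij}\,)$ in lexicographic order, where $z_{ij}$ is the number of zeros (with multiplicity) of the rational function $\sum_z \tfrac{z}{1+\gamma z}(q_j(z)-p_i(z))$ on $[-1,1]$. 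The discussion after Lemma~\ref{lemma:ppq} shows that $z_{ij}$ is bounded above by $|S(p_i)|+|S(q_j)|-2$, and the ``push $c$ to its maximum'' analysis shows that every perturbation strictly decreases one of these coordinates without increasing the other, so the process must halt.

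When the chain halts before $p'=q'$, we must be in the balanced configuration $|S(p')|=|S(q')|$ with $z$ maximal; if further one of the endpoint zeros has even order we may still invoke Lemma~\ref{lemma:ppq} to continue, so the only truly terminal case is $|S(p')|=|S(q')|$ with maximal zeros of odd order at both $\gamma=-1$ and $\gamma=+1$. This is exactly the configuration addressed by the final lemma, where the extension of the matrix by an extra eigenvector $\ket{-1}$ (or $\ket{+1}$) together with the sharper bound of Lemma~\ref{lemma:oddpolybound} produces a valid rank-one perturbation $\ket{\bar\phi}=\ket{\phi}+a\ket{-1}$ with $a$ large. An optimality argument (minimizing $\int_{-1}^{1}\sum_z \tfrac{z}{1+\gamma z}(q(z)-p'(z))\,d\gamma$ over matrices $X\leq X'\leq I$ with $p'=\Prob(X',\ket{\psi})$ and $p'\to q$ still valid) shows the infimum is attained at $p'=q$; otherwise the last lemma would produce a further strict improvement, contradicting optimality.

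The main obstacle, as the paper itself emphasizes, is the endpoint-odd-order case: the naive counting in Lemma~\ref{lemma:ppq} just barely fails because enforcing an even number of zeros at each of $\gamma=\pm 1$ would impose $|S(p)|$ linear constraints on an $|S(p)|$-dimensional subspace, forcing the perturbation to be trivial. The trick that makes the argument go through is the asymmetry in Eq.~(\ref{eq:gamma01}): by enlarging the Hilbert space to include an extra eigenvector $\ket{-1}$ of $\bar X$ and taking $\ket{\bar\phi}$ with a large component along it, one gains an extra factor of $(1-\gamma)$ for free, converting an even-order upper bound into an odd-order one and saving exactly the one linear constraint needed. Once this case is handled, the termination is routine and transitivity closes the argument.
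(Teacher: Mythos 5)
Your proposal is correct and follows essentially the same route as the paper: reduce to supports in $(-1,1)$, drive a finite chain of rank-one perturbations using the zero-count/support-size monovariant from Lemma~\ref{lemma:ppq}, resolve the terminal balanced case with odd-order zeros at $\gamma=\pm1$ via the extended eigenvector and Lemma~\ref{lemma:oddpolybound} together with the compactness/optimality argument, and close with transitivity of expressibility by matrices. No gaps beyond the level of detail the paper itself leaves implicit.
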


The results used in Section~\ref{sec:compiling} can be proven as follows:
Lemma~\ref{lemma:f2mmain} follows from the above corollary,
Lemmas~\ref{lemma:inf2lamb} and~\ref{lemma:lamb21}, and the definition of
expressible by matrices. Lemma~\ref{lemma:stdformmain} follows from
Lemma~\ref{lemma:stdmattrans}.


\end{document}